\newcommand{\F}{\ensuremath{\mathbb{F}}}
\newcommand{\defined}{\ensuremath{\stackrel{def}{=}}}
\newcommand{\Order}{\ensuremath{\mathcal{O}}}
\newcommand{\starF}{\ensuremath{{F}^{\star}}}
\newcommand{\ckt}{\ensuremath{\mathsf{cir}}}
\newcommand{\z}{\ensuremath{\mathfrak{z}}}
\newcommand{\committed}{\ensuremath{\mathsf{committed}}}
\newcommand{\false}{\ensuremath{\mathsf{false}}}
\newcommand{\true}{\ensuremath{\mathsf{true}}}
\newcommand{\Coin}{\ensuremath{\mathsf{Coin}}}
\newcommand{\D}{\ensuremath{\mathsf{D}}}
\newcommand{\Adv}{\ensuremath{\mathsf{Adv}}}
\newcommand{\Partyset}{\ensuremath{\mathcal{P}}}
\newcommand{\PartySet}{\ensuremath{\mathcal{P}}}
\newcommand{\Honest}{\ensuremath{\mathcal{H}}}
\newcommand{\Bad}{\ensuremath{\mathcal{C}}}
\newcommand{\WCORE}{\ensuremath{\mathcal{W}}}
\newcommand{\R}{\ensuremath{\mathcal{R}}}
\newcommand{\CSet}{\ensuremath{\mathcal{C}}}
\newcommand{\ESet}{\ensuremath{\mathcal{E}}}
\newcommand{\FSet}{\ensuremath{\mathcal{F}}}
\newcommand{\Support}{\ensuremath{\mathcal{SS}}}
\newcommand{\CoreSet}{\ensuremath{\mathcal{CS}}}
\newcommand{\Star}[1]{\ensuremath{(n, #1){-}\mbox{star}}}
\newcommand{\OK}{\ensuremath{\texttt{OK}}}
\newcommand{\NOK}{\ensuremath{\texttt{NOK}}}
\newcommand{\init}{\ensuremath{\texttt{init}}}
\newcommand{\echo}{\ensuremath{\texttt{echo}}}
\newcommand{\ready}{\ensuremath{\texttt{ready}}}
\newcommand{\Sender}{\ensuremath{\mathsf{S}}}
\newcommand{\flag}{\ensuremath{\mathsf{flag}}}
\newcommand{\Vote}{\ensuremath{\Pi_{\mathsf{Vote}}}}
\newcommand{\VSS}{\ensuremath{\Pi_{\mathsf{VSS}}}}
\newcommand{\HSh}{\ensuremath{\Pi_{\mathsf{VSS}}}}
\newcommand{\RecPriv}{\ensuremath{\Pi_{\mathsf{RecPriv}}}}
\newcommand{\BatchBeaver}{\ensuremath{\Pi_{\mathsf{Beaver}}}}
\newcommand{\TripTrans}{\ensuremath{\Pi_{\mathsf{TripTrans}}}}
\newcommand{\TripExt}{\ensuremath{\Pi_{\mathsf{TripExt}}}}
\newcommand{\TripSh}{\ensuremath{\Pi_{\mathsf{TripSh}}}}
\newcommand{\Offline}{\ensuremath{\Pi_{\mathsf{PreProcessing}}}}
\newcommand{\SBA}{\ensuremath{\Pi_{\mathsf{SBA}}}}
\newcommand{\BCAST}{\ensuremath{\Pi_\mathsf{BC}}}
\newcommand{\PiACast}{\ensuremath{\Pi_\mathsf{ACast}}}
\newcommand{\ABA}{\ensuremath{\Pi_\mathsf{ABA}}}
\newcommand{\HBA}{\ensuremath{\Pi_\mathsf{BA}}}
\newcommand{\PiABA}{\ensuremath{\Pi_{\mathsf{ABA}}}}
\newcommand{\ACS}{\ensuremath{\Pi_{\mathsf{ACS}}}}
\newcommand{\PiMPC}{\ensuremath{\Pi_{\mathsf{CirEval}}}}
\newcommand{\OEC}{\ensuremath{\mathsf{OEC}}}
\newcommand{\RSDec}{\ensuremath{\mathsf{RSDec}}}
\newcommand{\WPS}{\ensuremath{\Pi_\mathsf{WPS}}}
\newcommand{\CoinFlip}{\ensuremath{\Pi_{\mathsf{CoinFlip}}}}
\newcommand{\StarAlgo}{\ensuremath{\mathsf{AlgStar}}}
\newcommand{\PiAsynWPS}{\ensuremath{\Pi_{\mathsf{AWPS}}}}
\newcommand{\PiSynWPS}{\ensuremath{\Pi_{\mathsf{SWPS}}}}
\newcommand{\BGP}{\ensuremath{\Pi_{\mathsf{BGP}}}}
\newcommand{\TimeABA}{\ensuremath{T_\mathsf{ABA}}}
\newcommand{\TimeSBA}{\ensuremath{T_\mathsf{SBA}}}
\newcommand{\TimeHBA}{\ensuremath{T_\mathsf{BA}}}
\newcommand{\TimeBCAST}{\ensuremath{T_\mathsf{BC}}}
\newcommand{\TimeHSh}{\ensuremath{T_\mathsf{VSS}}}
\newcommand{\TimeWPS}{\ensuremath{T_\mathsf{WPS}}}
\newcommand{\TimeOffline}{\ensuremath{T_\mathsf{{TripGen}}}}
\newcommand{\TimeTripSh}{\ensuremath{T_\mathsf{{TripSh}}}}
\newcommand{\TimeACS}{\ensuremath{T_\mathsf{ACS}}}
\newcommand{\TimeVSS}{\ensuremath{T_\mathsf{VSS}}}
\newcommand{\TimeBGP}{\ensuremath{T_\mathsf{BGP}}}
\newenvironment{myitemize}
{\begin{list}{$\bullet$}{ 
\itemindent=-0.1in
\itemsep=0.0in
\parsep=0.0in
\topsep=0.0in
\partopsep=0.0in}}{\end{list}}
\newcounter{itemcount}
\newenvironment{mydescription}
{\setcounter{itemcount}{0}\begin{list}
{\arabic{itemcount}.}{\usecounter{itemcount} \itemindent=-0.5cm
\itemsep=0.0in
\parsep=0.0in
\topsep=5pt
\partopsep=0.0in}}{\end{list}}
\DeclareRobustCommand*\cal{\@fontswitch\relax\mathcal}
\newtheorem{theorem}{Theorem}[section]
\newtheorem{lemma}[theorem]{Lemma}
\theoremstyle{definition}
\newtheorem{definition}[theorem]{Definition}
\begin{document}

\title{Perfectly-Secure Synchronous MPC with Asynchronous Fallback Guarantees\footnote{A preliminary version of this article was
 published as an extended abstract in PODC 2022 \cite{ACC22}. This is the full and elaborate version, with complete proofs.}}
\author{Ananya Appan\footnote{The work was done when the author was a student at the International Institute of Information Technology, Bangalore India.} \and Anirudh Chandramouli\footnote{The work was done when the author was a student at the International Institute of Information Technology, Bangalore India. The author would like to thank Google Research for travel support
 to attend and present the preliminary version of the paper at PODC 2022.}
 \and  Ashish Choudhury\footnote{International Institute of Information Technology, Bangalore India.
 Email: {\tt{ashish.choudhury@iiitb.ac.in}}. This research is an outcome of the R \& D work undertaken in the project under the Visvesvaraya PhD  Scheme of  
 Ministry of Electronics \& Information Technology, Government of India, being implemented by Digital India Corporation
  (formerly Media Lab Asia). The author is also
   thankful to the Electronics, IT \& BT Government of Karnataka for supporting this work under the CIET project.} 
 }


\date{}
\maketitle

\begin{abstract}
 Secure {\it multi-party computation} (MPC) is a fundamental problem in secure distributed computing.
 An MPC protocol allows a set of $n$ mutually distrusting parties to carry out any joint computation of their private inputs,
  without disclosing any additional information about their inputs. 
   MPC with {\it information-theoretic} security (also called {\it unconditional security}) provides the strongest security guarantees
   and remains secure even against {\it computationally unbounded} adversaries. {\it Perfectly-secure} MPC protocols is a class
   of information-theoretically secure
    MPC protocols, which provides all the security guarantees in an {\it error-free} fashion.
   The focus of this work is perfectly-secure MPC. Known protocols are designed {\it assuming} either a {\it synchronous}
   or {\it asynchronous} communication network. It is well known that perfectly-secure {\it synchronous} MPC protocol is possible
   as long as adversary can corrupt any $t_s < n/3$ parties. 
   On the other hand, perfectly-secure {\it asynchronous} MPC protocol can tolerate up to $t_a < n/4$ corrupt parties. 
   A natural question is does there exist a {\it single} MPC protocol for the setting where the parties are {\it not aware}
   of the exact network type and which can tolerate up to $t_s < n/3$ corruptions 
    in a synchronous network and up to $t_a < n/4$ corruptions
    in an {\it asynchronous} network. We design such a {\it best-of-both-worlds}
    perfectly-secure MPC protocol, provided $3t_s + t_a < n$ holds.
    
    For designing our protocol, we design two important building blocks, which are of independent interest.
    The first building block is a best-of-both-worlds {\it Byzantine agreement} (BA) protocol tolerating
    $t < n/3$ corruptions and which remains secure, {\it both} in a synchronous as well as asynchronous
    network. The second building block is a polynomial-based best-of-both-worlds 
    {\it verifiable secret-sharing} (VSS) protocol, which can tolerate up to 
      $t_s$ and $t_a$ corruptions in a {\it synchronous} and in an {\it asynchronous} network respectively. \\~\\
\noindent {\bf Keywords:} Perfect security, MPC, Verifiable Secret Sharing, Byzantine Agreement, Synchronous Network, 
 Asynchronous Network.
\end{abstract}

\section{Introduction}
\label{sec:intro}
  Consider a set of $n$ mutually distrusting parties  $\Partyset = \{P_1, \ldots, P_n \}$, where each $P_i$ has some private input. 
   The distrust among the parties is modeled as a centralized adversary, who can control 
   any $t$ out of the $n$ parties in a {\it Byzantine} (malicious)
    fashion and force them to behave arbitrarily during the execution of any protocol. 
    An MPC protocol \cite{Yao82,GMW87,BGW88,CCD88,RB89} allows the parties to securely compute any known function of their 
   private inputs, such that the {\it honest} parties (who are not under adversary's control) obtain the correct output,
   irrespective of the behaviour of the adversary. Moreover, adversary does not learn any additional
   information about the inputs of the honest parties, beyond what can be revealed by the
   function output and the inputs of the corrupt parties.
   If the adversary is {\it computationally bounded} then the notion of security achieved is called 
   {\it conditional security} (also known as {\it cryptographic security}) \cite{Yao82,GMW87,GRR98}.
   On the other hand, {\it unconditionally secure} protocols (also known as {\it information-theoretically} secure protocols) 
   provide security against {\it computationally unbounded} adversaries \cite{BGW88,CCD88}.
   Unconditionally secure protocols provide {\it ever-lasting} security, as their security is {\it not} based on
   any computational-hardness assumptions. Moreover, compared to conditionally secure protocols, the protocols are simpler
   and faster by several order of magnitude, as they are based on very simple operations, such as polynomial interpolation and
   polynomial evaluation over finite fields.
   Unconditionally secure protocols can be further categorized as {\it perfectly-secure} MPC 
   protocols \cite{BGW88,GRR98,DN07,BH08,GLS19,AAY21}, where all security properties are achieved in an {\it error-free} fashion.
   On the other hand, {\it statistically-secure} MPC protocols \cite{RB89,CDDHR99,BH06,BFO12,GLS19}
   allow for a negligible error in the achieved security properties.

    Traditionally, MPC protocols are designed {\it assuming} either a {\it synchronous} or {\it asynchronous} communication model.
   In {\it synchronous} MPC (SMPC) protocols, parties are assumed to be synchronized with respect to a global clock and there is a
   {\it publicly-known} upper bound on message delays.
   Any SMPC protocol operates as a sequence of communication {\it rounds}, where in each round,
   every party performs some computation, sends messages to other parties and receives messages sent by the other parties, in that order.
   Consequently, if during a round a receiving party {\it does not} receive an expected message from a designated sender party by
    the end of that round,
    then the receiving party has
   the assurance that the sender party is definitely {\it corrupt}. 
      Though synchronous communication model is highly appealing in terms of its simplicity, 
   in practice, it might be very difficult to guarantee such strict time-outs over the channels in
    real-world networks like the Internet. 
    Such networks are better modeled through the {\it asynchronous} communication model \cite{CanettiThesis}. 
    
    An {\it asynchronous} MPC (AMPC) protocol operates over an {\it asynchronous} network, 
   where the messages can be arbitrarily, yet finitely delayed. 
   The only guarantee in the model is that every sent message is {\it eventually} delivered. 
   Moreover, the messages {\it need not} be delivered in the same order in which they were sent.
   Furthermore, to model the worst case scenario, the sequence of message delivery is assumed to be under the control
   of the adversary. Unlike SMPC protocols, the protocol execution in an AMPC protocol occurs as a sequence of
   {\it events}, which depend upon the order in which the parties receive messages.
   Comparatively, AMPC protocols are more challenging to design than SMPC protocols. This is because inherently, in any AMPC
   protocol, a receiving party {\it cannot}
    distinguish between a {\it slow} sender party (whose messages are arbitrarily delayed in the network) and a {\it corrupt} 
   sender party (who does not send any messages). 
   Consequently, in any AMPC protocol with up to $t_a$ corruptions, at any stage of the protocol, no party can afford to 
   receive messages from {\it all} the parties. This is because the corrupt parties may never send their messages and hence the wait
   could turn out be an endless wait. Hence, as soon as a party receives messages from any subset of $n - t_a$ parties, it has to proceed
   to the next stage of the protocol. However, in this process, messages from up to $t_a$ potentially slow, but honest parties, may get
   ignored. In fact, in {\it any} AMPC protocol, it is  {\it impossible} to ensure that the inputs of all  {\it honest} parties are considered for
    the computation and inputs of up to
   $t_a$ (potentially honest) parties may have to be ignored, since waiting for all $n$ inputs  may turn out to be an endless wait. 
   The advantage of AMPC protocols over SMPC protocols
    is that the time taken to produce the output depends upon the {\it actual speed} of the underlying network.
    In more detail, for an SMPC protocol, the participants have to {\it pessimistically} set the global
    delay $\Delta$ on the message delivery to a large value to ensure that the messages sent by every party
     at the beginning of a round reach to their destination within time $\Delta$.
      But if the actual delay $\delta$ in the network is such that $\delta < < \Delta$, then the
     protocol {\it fails} to take advantage of the faster network and its running time will be still proportional to $\Delta$.
      
      The focus of this work is {\it perfectly-secure} MPC. It is well known that 
       perfectly-secure SMPC is possible if and only if adversary can corrupt up to
       $t_s < n/3$ parties \cite{BGW88}. On the other hand, 
       perfectly-secure AMPC is possible if and only if adversary can corrupt up to
       $t_a < n/3$ parties \cite{BCG93}.
\paragraph{\bf Our Motivation and Our Results:}
 As discussed above, known SMPC and AMPC protocols are designed under the {\it assumption} that the parties are
  {\it aware} of the exact network type. 
       We envision a scenario where the parties {\it are not} aware of the {\it exact} network type and
       aim to design a single MPC protocol, which remains secure, {\it both} in a synchronous, {\it as well as} in an
       asynchronous network. We call such a protocol as
          a {\it best-of-both-worlds} protocol, since it offers the 
         best security properties, both in the synchronous and the asynchronous communication model.
         While there exist best-of-both-worlds {\it conditionally-secure} MPC
          protocols \cite{BZL20,DHL21}, to the best of our knowledge, {\it no} prior work has ever addressed the problem of
          getting a best-of-both-worlds {\it perfectly-secure} MPC protocol. Motivated by this,
         we ask the following question:
        \begin{center}
        {\it Is there a best-of-both-worlds perfectly-secure MPC protocol, that remains
         secure under $t_s$ corruptions in a synchronous network, and 
        under $t_a$ corruptions in an asynchronous network, where $t_a < t_s$}?        
        \end{center}
        We show the existence of a perfectly-secure MPC protocol 
         with the above guarantees, provided $3t_s + t_a < n$ holds.\footnote{This automatically implies that $t_s < n/3$ and $t_a < n/4$
         holds, which are necessary for designing perfectly-secure MPC protocol in a synchronous and an asynchronous network respectively.} 
         Note that we are interested in the case where $t_a < t_s$, as otherwise the
         question is trivial to solve.         
          More specifically, if $t_s = t_a$, then the necessary condition of AMPC implies
           that $t_s < n/4$ holds. Hence, one can use {\it any existing} perfectly-secure
	AMPC protocol, which will be secure under $t_s$ corruptions {\it even}
         in a synchronous network. Moreover, by ensuring appropriate time-outs, it can be guaranteed
          that in the protocol, the inputs of {\it all} honest parties are considered for the computation,
          if the network is {\it synchronous}.
           Our goal is to achieve a resilience {\it strictly greater} than $n/4$
            and {\it close} to $n/3$, if the underlying network is {\it synchronous}. For
	 example, if $n = 8$, then {\it existing} perfectly-secure SMPC protocols
	can tolerate up to $2$ corrupt parties, while {\it existing} perfectly-secure
	AMPC protocols can tolerate up to $1$ fault. On the other hand, using our best-of-both-worlds protocol, 
       one can tolerate up to $2$ faults in a {\it synchronous} network and up to $1$ fault in an
       {\it asynchronous} network, {\it even} if the parties are {\it not aware} of the exact network type.         
\subsection{Technical Overview}
 We assume that the function to be securely computed
 is represented by some arithmetic circuit $\ckt$ over a finite field $\F$, consisting of linear and non-linear (multiplication) gates.
   Following \cite{BGW88}, the goal is then 
   to securely ``evaluate" $\ckt$ in a {\it secret-shared} fashion, such that all the values during the circuit-evaluation 
  are $t$-shared, as per the Shamir's secret-sharing scheme \cite{Sha79}, where $t$ is the maximum number of corrupt 
  parties.\footnote{A value $s \in \F$ is said to be $t$-shared, if there is some
   $t$-degree polynomial $f_s(\cdot)$ over $\F$ with $f_s(0) = s$  and every (honest) $P_i$ has a distinct point on $f_s(\cdot)$, which is
   called $P_i$'s share of $s$.}
     Intuitively, this guarantees that an adversary controlling up to $t$ parties {\it does not} learn any additional information
      during the circuit-evaluation, as the shares of the corrupt parties does not reveal anything additional about the actual shared values.
      The {\it degree-of-sharing} $t$ is set to $t < n/3$ and $t<n/4$ in SMPC and AMPC protocols respectively.
     Since, in our best-of-both-worlds protocol, the parties will {\it not} be aware of the {\it exact} network type, 
    we need to ensure that {\it all} the values during circuit-evaluation
   are {\it always} secret-shared with the degree-of-sharing being $t = t_s$, {\it even} if the network is {\it asynchronous}.
  
  For shared circuit-evaluation, 
  we follow the Beaver's paradigm \cite{Bea91}, 
  where multiplication gates are evaluated using random $t_s$-shared {\it multiplication-triples} of the form
   $(a, b, c)$, where $c = a \cdot b$ (due to the linearity of Shamir's secret-sharing, linear gates can be
   evaluated {\it non-interactively}).
    The shared multiplication-triples are generated 
    in a {\it circuit-independent} preprocessing phase,
    using
    the framework of \cite{CP17}, which shows how to use any polynomial-based {\it verifiable secret-sharing} (VSS)  \cite{CGMA85} and a
    {\it Byzantine agreement} (BA) protocol \cite{PSL80} to generate shared random multiplication-triples.
    The framework works both in a synchronous as well as in an asynchronous network, where 
     the parties are {\it aware} of the exact network type.  
     However, there are several challenges to adapt the framework if the parties are
     {\it unaware} of the exact network type, which we discuss next.
       \paragraph{\bf First Challenge --- A Best-of-Both-Worlds Byzantine Agreement (BA) Protocol:}
     Informally, a BA protocol \cite{PSL80} allows the parties with private inputs 
     to reach agreement on a common output ({\it consistency}), where the output is the input of the honest parties, if all honest parties
     participate in the protocol with the {\it same} input ({\it validity}).
               {\it Perfectly-secure} BA protocols can be designed tolerating  $t < n/3$ corruptions, 
               both in a {\it synchronous} network \cite{PSL80}, as well as in an {\it asynchronous} network \cite{CR93,ADH08,BCP20}.            
            However, the {\it termination} (also called {\it liveness}) guarantees are {\it different} for {\it synchronous} BA (SBA)
            and {\it asynchronous} BA (ABA). (Deterministic) SBA protocols ensure that all honest parties obtain their output after some 
            {\it fixed} time
             ({\it guaranteed liveness}). On the other hand, to circumvent the FLP impossibility result \cite{FLP85},
             ABA protocols are randomized and provide what is called as {\it almost-surely liveness} \cite{ADH08,BCP20}.
             Namely, the parties obtain an output, {\it asymptotically} with probability $1$, if they continue running the protocol. 
              SBA protocols become {\it insecure} when executed in an {\it asynchronous} network, if even a single expected message
              from an {\it honest} party is delayed. On the other hand, ABA protocols 
              when executed in a {\it synchronous} network, can provide {\it only} almost-surely liveness, instead of guaranteed liveness.
             
             The {\it first} challenge to adapt the framework of \cite{CP17} in the best-of-both-worlds setting 
                          is to get a {\it perfectly-secure} BA protocol, which provides security {\it both} in a synchronous as well
                          as in an asynchronous network.
                          Namely, apart from providing the consistency and validity properties in both
                          types of network, the protocol should provide guaranteed liveness in a {\it synchronous} network
                          and almost-surely liveness in an {\it asynchronous} network. 
         We are {\it not} aware of any BA protocol with the above properties.
            Hence, we 
            present a perfectly-secure BA protocol tolerating $t < n/3$ corruptions, with the above properties.
            Since our BA protocol is slightly technical, we defer the details to Section \ref{sec:HBA}.

    \paragraph{\bf Second Challenge --- A best-of-both-worlds VSS Protocol:}
     Informally, in a polynomial based VSS protocol,
     there exists a designated {\it dealer} $\D$ with a $t$-degree polynomial, where $t$ is the maximum number
     of {\it corrupt} parties, possibly including $\D$. The protocol allows 
    $\D$ to distribute points on this polynomial to the parties in a ``verifiable" fashion, such that 
        the view of the adversary remains independent of $\D$'s polynomial for an
         {\it honest} $\D$.\footnote{Hence the protocol allows
    $\D$ to generate a $t$-sharing of the constant term of the polynomial, which is also called as $\D$'s {\it secret}.}
    In a {\it synchronous} VSS (SVSS) protocol,
    every party has the correct point after some known time-out, say $T$ ({\it correctness} property).
    The {\it verifiability} guarantees that even a {\it corrupt} $\D$
    is bound to distribute points on some $t$-degree polynomial within time $T$  ({\it strong-commitment} property).
    Perfectly-secure SVSS is possible if and only if $t < n/3$ \cite{DDWY93}. 
    For an {\it asynchronous} VSS (AVSS) protocol, the {\it correctness} property guarantees that for an {\it honest} 
    $\D$, the honest parties {\it eventually} receive points on $\D$'s polynomial.
    However, a {\it corrupt}
    $\D$ may not invoke the protocol in the first place and the parties {\it cannot} distinguish this scenario from the case when $\D$'s
     messages are arbitrarily delayed.
    This is unlike the {\it strong-commitment} of SVSS where, if 
     the parties do not obtain an output within time $T$, then the parties {\it publicly} conclude that $\D$ is corrupt.
     Hence, the {\it strong-commitment} of AVSS guarantees that if $\D$ is {\it corrupt} {\it and}
     if some honest party obtains a point on $\D$'s polynomial, then all honest parties eventually obtain
     their respective points on this polynomial.  Perfectly-secure AVSS 
    is possible if and only if $t < n/4$ \cite{BCG93,ADS20}.

    Existing SVSS protocols \cite{GIKR01,FGGRS06,KKK09,CCP21}
    become completely insecure in an {\it asynchronous} network, even if a single expected message
    from an {\it honest} party is delayed. On the other hand, existing AVSS protocols \cite{BCG93,BH07,PCR15,CCP21}
    only work when $\D$'s polynomial has degree $t < n/4$ and become insecure
    if there are {\it more} than $n/4$ corruptions (which can happen in our context, if the network is {\it synchronous}).
    
       The {\it second} challenge to adapt the framework of \cite{CP17} in our setting
    is to get a perfectly-secure VSS protocol, which provides security against $t_s$ and $t_a$ corruptions in a synchronous and in
     an asynchronous network respectively,
     where $\D$'s polynomial is {\it always} a
      $t_s$-degree polynomial, {\it irrespective} of the network type. 
    We are not aware of any VSS protocol with these guarantees.
     We present a best-of-both-worlds 
     perfectly-secure VSS protocol satisfying the above properties, provided $3t_s + t_a < n$ holds. 
    Our VSS protocol satisfies the {\it correctness} requirement of SVSS and AVSS in a {\it synchronous} and an {\it asynchronous}
    network respectively. However, it {\it only} satisfies the {\it strong-commitment} requirement of AVSS, {\it even} if the network is 
    {\it synchronous}. This is because 
   a potentially {\it corrupt} $\D$ {\it may not} invoke the protocol and the parties will {\it not} be aware of the exact network type.
   We stress that this does not hinder us from deploying our VSS
   protocol in the framework of \cite{CP17}.
   Since our VSS protocol is slightly technical, we defer the details to Section \ref{sec:HVSS}.

      \subsection{Related Work}
       best-of-both-worlds protocols
         have been studied very recently. 
         The work of  \cite{BKL19} shows that the condition $2t_s + t_a < n$ is necessary and sufficient for 
         best-of-both-worlds {\it conditionally-secure}
        BA, tolerating {\it computationally bounded} adversaries.
        Using the same condition, the works of \cite{BZL20,DHL21} 
        present  {\it conditionally-secure} MPC protocols. 
        Moreover, the same condition has been used in \cite{BKL21} to design a best-of-both-worlds
        protocol for atomic broadcast (a.k.a. state machine replication). 
         Furthermore, the same condition has been used recently
        in \cite{GLW22} to design a best-of-both-worlds approximate agreement protocol against computationally-bounded
        adversaries. 
                 
         A common principle used in \cite{BKL19,BZL20,DHL21} to design best-of-both-worlds protocol for a specific task $T$, which
         could be either BA or MPC,
          is the following: the parties first run
         a {\it synchronous} protocol for task $T$ with threshold $t_s$ assuming a synchronous network, which 
         {\it also} provides certain security guarantees in an asynchronous environment, tolerating $t_a$ corruptions. 
         After the known ``time-out" of the synchronous protocol, the parties
          run an {\it asynchronous} protocol for $T$ with threshold $t_a$, which also provides certain security guarantees in the presence of 
          $t_s$ corruptions. The input for the asynchronous protocol is decided based
         on the output the parties receive after the time-out of the synchronous protocol. The overall output  is then decided based
         on the output parties receive from the asynchronous protocol. If the task $T$ is
          MPC, then this means that the parties need to evaluate the circuit {\it twice}.
         We also follow a similar design principle as above, for our BA protocol. 
         However, for MPC, we {\it do not} require the parties to run two protocols and evaluate the circuit twice.
          Rather the parties need to evaluate the circuit {\it only once}.

\section{Preliminaries and Definitions}
\label{sec:Prelims}
We follow the pairwise secure-channel model, where the parties in $\PartySet = \{P_1, \ldots, P_n \}$ are connected by
 pairwise private and authentic channels. 
  The distrust in the system is modeled by a {\it computationally unbounded} Byzantine (malicious) adversary $\Adv$, who can corrupt
   a subset of the parties and force them to behave in any {\it arbitrary} fashion, during the execution of a protocol. We assume a
   {\it static} adversary, who decides the set of corrupt parties at the beginning of the protocol execution.
   The underlying network can be synchronous or asynchronous, with parties being {\it unaware} about the exact type.
   In a {\it synchronous} network, every sent message  is delivered in the same order, within some known fixed time
   $\Delta$. The adversary $\Adv$ can control up to $t_s$ parties in a synchronous network.

    In an {\it asynchronous} network, messages are sent with an arbitrary, yet finite delay,
      and {\it need not} be delivered in the same order. The only guarantee is that every sent message 
      is {\it eventually} delivered. The exact sequence of message delivery is decided by a {\it scheduler} and
      to model the worst case scenario, the scheduler is assumed to be under the control of $\Adv$. 
      The adversary can control up to $t_a$ parties in an asynchronous network. 
      
      We assume that $t_a < t_s$ and $3t_s + t_a < n$ holds. This automatically implies that
      $t_s < n/3$ and $t_a < n/4$ holds, which are necessary for any SMPC and AMPC protocol respectively.
      All computations in our protocols are done over a finite field $\F$, where $|\F| > 2n$  and where
      $\alpha_1, \ldots, \alpha_n, \beta_1, \ldots, \beta_n$ are publicly-known, distinct, non-zero elements
      from $\F$. For simplicity and without loss of generality, we assume that each $P_i$ has a private input $x^{(i)} \in \F$,
      and the parties want to securely compute a function $f:\F^n \rightarrow \F$.
      Without loss of generality, $f$ is represented by an arithmetic circuit $\ckt$ over $\F$, 
      consisting of linear and non-linear (multiplication) gates \cite{Gol04}, where
      $\ckt$ has $c_M$ number of multiplication gates and has a multiplicative depth of $D_M$.     
\paragraph{\bf Termination Guarantees of Our Sub-Protocols:}
 For simplicity, we will {\it not} be specifying any termination criteria for our sub-protocols. And the parties will keep on participating
  in these
  sub-protocol instances, even after receiving their outputs. The termination criteria of our MPC protocol will ensure that once a party terminates
   the MPC protocol, it terminates
  all underlying sub-protocol instances.
   We will use existing {\it randomized} ABA protocols 
   which ensure that the honest parties (eventually) obtain their respective output {\it almost-surely}.
   This means that the probability that an honest party obtains its output after participating for {\it infinitely} many rounds approaches $1$ {\it asymptotically} \cite{ADH08,MMR15,BCP20}.
   That is:
   \[ \underset{T \rightarrow \infty}{\mbox{lim}} \mbox{Pr}[\mbox{An honest } P_i \mbox{ obtains its output by local time } T] = 1,\]
   where the probability is over the random coins of the honest parties and the adversary in the 
   protocol. 
  The  property of {\it almost-surely} obtaining the output carries over to the ``higher" level protocols, where 
  ABA is used as a building block. 
  We will say that the ``{\it honest parties obtain some output almost-surely from (an asynchronous) protocol $\Pi$}" to mean that
  every {\it honest} $P_i$ {\it asymptotically} obtains its output in $\Pi$ with probability $1$, in above the sense.

      We next discuss the properties of polynomials over $\F$, which are used in our protocols.
       \paragraph{\bf Polynomials Over a Field:}
     A $d$-degree {\it univariate polynomial} over $\F$ is of the form
     \[ f(x) = a_0  + \ldots + a_d x^d,\]
      where each $a_i \in \F$. 
     An $(\ell, \ell)$-degree {\it symmetric bivariate polynomial} over $\F$
     is of the form
    \[ F(x, y) =  \sum_{i, j = 0}^{i = \ell, j = \ell}r_{ij}x^i y^j,\]
     where each $r_{ij} \in \F$ and
    where $r_{ij} = r_{ji}$ holds for all $i, j$. This automatically implies that
     $F(\alpha_j, \alpha_i) = F(\alpha_i, \alpha_j)$ holds, for all
    $\alpha_i, \alpha_j$. Moreover, $F(x, \alpha_i) = F(\alpha_i, y)$ also holds, for every $\alpha_i$.
    Given an $i \in \{1, \ldots, n \}$ and an $\ell$-degree polynomial $F_i(x)$, we say that $F_i(x)$ {\it lies} on 
   an $(\ell, \ell)$-degree symmetric bivariate polynomial $F(x, y)$, if $F(x, \alpha_i) = F_i(x)$ holds.
   
   We now state some standard known results, related to polynomials over $\F$. 
    It is a well known fact that there always exists a unique $d$-degree univariate polynomial, passing through
   $d+1$ distinct points. A generalization of this result for bivariate polynomials is that
    if there are ``sufficiently many" univariate polynomials which are ``pair-wise consistent",
   then together they lie on a unique bivariate polynomial. Formally:  
    \begin{lemma}[\bf \cite{CD05,AL17}]
  \label{lemma:bivariateI}
  Let $f_{i_1}(x), \ldots, f_{i_{q}}(x)$ be $\ell$-degree univariate polynomials over $\F$, where
 $q \geq \ell + 1$ and
 $i_1, \ldots, i_{q} \in \{1, \ldots, n \}$, such that
  $f_i(\alpha_j) = f_j(\alpha_i)$ holds for all
  $ i, j \in  \{i_1, \ldots, i_{q} \}$.  Then 
  $f_{i_1}(x), \ldots, f_{i_q}(x)$  lie on a unique
  $(\ell, \ell)$-degree symmetric bivariate polynomial, say $\starF(x, y)$.
  \end{lemma}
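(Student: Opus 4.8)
The plan is to build an explicit candidate from Lagrange interpolation along the $y$-axis and then use the pairwise-consistency hypothesis to upgrade it to a symmetric polynomial on which \emph{all} $q$ of the given univariate polynomials lie. After relabelling, take the $\ell+1$ polynomials $f_{i_1}(x), \ldots, f_{i_{\ell+1}}(x)$ and define
\[ \starF(x,y) \defined \sum_{k=1}^{\ell+1} f_{i_k}(x) \cdot \prod_{\substack{m = 1 \\ m \neq k}}^{\ell+1} \frac{y - \alpha_{i_m}}{\alpha_{i_k} - \alpha_{i_m}}. \]
Since the $\alpha_i$ are distinct and non-zero this is well defined, has degree at most $\ell$ in $x$ (each $f_{i_k}$ does) and at most $\ell$ in $y$ (the Lagrange basis polynomials in $y$ do), and by the Kronecker-delta property of the Lagrange basis it satisfies $\starF(x,\alpha_{i_k}) = f_{i_k}(x)$ for every $k \in \{1,\ldots,\ell+1\}$; so $\starF$ is an $(\ell,\ell)$-degree bivariate polynomial on which the first $\ell+1$ polynomials already lie.

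Next I would prove that $\starF$ is symmetric. Consider $G(x,y) \defined \starF(x,y) - \starF(y,x)$, which has degree at most $\ell$ in each variable. For $a,b \in \{1,\ldots,\ell+1\}$ we have $\starF(\alpha_{i_a},\alpha_{i_b}) = f_{i_b}(\alpha_{i_a})$ and $\starF(\alpha_{i_b},\alpha_{i_a}) = f_{i_a}(\alpha_{i_b})$, so the hypothesis $f_{i_a}(\alpha_{i_b}) = f_{i_b}(\alpha_{i_a})$ gives $G(\alpha_{i_a},\alpha_{i_b}) = 0$. Fixing any $b$, the univariate polynomial $G(x,\alpha_{i_b})$ has degree at most $\ell$ and vanishes at the $\ell+1$ distinct points $\alpha_{i_1},\ldots,\alpha_{i_{\ell+1}}$, hence is identically zero; so $G(x_0,\alpha_{i_b}) = 0$ for every $x_0$ and every $b \in \{1,\ldots,\ell+1\}$. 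Then, fixing an arbitrary $x_0$, the degree-at-most-$\ell$ polynomial $G(x_0,y)$ has $\ell+1$ roots and is identically zero, so $G \equiv 0$, i.e. $\starF(x,y) = \starF(y,x)$; written in coefficients this is exactly $r_{ij} = r_{ji}$, so $\starF$ is an $(\ell,\ell)$-degree \emph{symmetric} bivariate polynomial. This double application of the "too-many-roots" fact, successively along the two axes, is the only place where the pairwise-consistency hypothesis is genuinely used to couple the two variables, and it is where I expect the only real care to be needed.

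It then remains to extend this to the remaining polynomials and to argue uniqueness. For $k \in \{\ell+2,\ldots,q\}$, set $h(x) \defined \starF(x,\alpha_{i_k}) - f_{i_k}(x)$, of degree at most $\ell$; for $m \in \{1,\ldots,\ell+1\}$, using symmetry and then $\starF(x,\alpha_{i_m}) = f_{i_m}(x)$ we get $\starF(\alpha_{i_m},\alpha_{i_k}) = \starF(\alpha_{i_k},\alpha_{i_m}) = f_{i_m}(\alpha_{i_k})$, which by pairwise consistency equals $f_{i_k}(\alpha_{i_m})$, so $h(\alpha_{i_m}) = 0$ for all such $m$ and hence $h \equiv 0$, i.e. $\starF(x,\alpha_{i_k}) = f_{i_k}(x)$. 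Finally, if $\widehat F$ is any other $(\ell,\ell)$-degree bivariate polynomial on which $f_{i_1},\ldots,f_{i_{\ell+1}}$ lie, then $\starF - \widehat F$ has degree at most $\ell$ in each variable and restricts to the zero polynomial on each of the lines $y = \alpha_{i_1},\ldots,y = \alpha_{i_{\ell+1}}$; the same "fix $x_0$, count $\ell+1$ roots in $y$" argument forces $\starF - \widehat F \equiv 0$, giving uniqueness and completing the proof.
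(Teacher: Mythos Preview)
Your proof is correct and complete. The paper itself does not give a proof of this lemma: it is stated as a ``standard known result'' with a citation to \cite{CD05,AL17}, so there is no in-paper argument to compare against. Your approach---Lagrange interpolation in $y$ from $\ell+1$ of the given polynomials, followed by the two-axis ``too-many-roots'' argument to force symmetry, then extension to the remaining indices, then uniqueness---is precisely the standard proof one finds in the cited sources, and every step is handled cleanly.
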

   In existing (as well as our) VSS protocol, $\D$ on having a $t$-degree polynomial $q(\cdot)$ as input, 
    embeds $q(\cdot)$
   into a random $(t, t)$-degree symmetric bivariate polynomial $F(x, y)$ at $x = 0$. 
   And each party $P_i$ then receives the $t$-degree univariate polynomial  $f_i(x) = F(x, \alpha_i)$. 
   Here $t$ is the maximum number of parties which can be under the control of $\Adv$.   
   This ensures 
   that $\Adv$ by learning at most $t$ polynomials lying on $F(x, y)$, 
   does not learn anything about $F(0, 0)$. Intuitively, this is because $\Adv$ will fall short of
   at least one point on $F(x, y)$ to uniquely interpolate it. 
     In fact, it can be shown that for every pair of $t$-degree polynomials $q_1(\cdot), q_2(\cdot)$ 
 such that $q_1(\alpha_i) = q_2(\alpha_i) = f_i(0)$ holds for every $P_i \in \Bad$ (where $\Bad$ is the set of parties
  under $\Adv$), the distribution of the polynomials
 $\{f_i(x) \}_{P_i \in \Bad}$ when $F(x, y)$ is chosen based on $q_1(\cdot)$, 
 is identical to the distribution when $F(x, y)$ is chosen based on $q_2(\cdot)$. Formally:
     
   \begin{lemma}[\bf \cite{CD05,AL17}]
  \label{lemma:bivariateII}
 Let $\Bad \subset \Partyset$ and $q_1(\cdot) \neq q_2(\cdot)$ be 
   $d$-degree polynomials where $d \geq |\Bad|$ such that $q_1(\alpha_i) = q_2(\alpha_i)$
     for all $P_i \in \Bad$. Then the probability distributions $\Big \{ \{F(x, \alpha_i) \}_{P_i \in \Bad} \Big \}$ and $\Big \{ \{F'(x, \alpha_i) \}_{P_i \in \Bad} \Big \}$ are identical, 
    where $F(x, y)$ and $F'(x, y)$ are random $(d, d)$-degree symmetric bivariate polynomials, such that
  $F(0, y) = q_1(\cdot)$ and $F'(0, y) = q_2(\cdot)$ holds.  
  \end{lemma}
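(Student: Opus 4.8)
The plan is to exhibit an explicit bijection between the two sample spaces that leaves the tuple $\{F(x,\alpha_i)\}_{P_i \in \Bad}$ unchanged, and then invoke the elementary fact that a bijection between finite sets carries the uniform distribution to the uniform distribution. Write $k = |\Bad|$ and, without loss of generality, $\Bad = \{P_1,\ldots,P_k\}$; by hypothesis $k \le d$. Let $\mathcal{F}_1$ (resp.\ $\mathcal{F}_2$) denote the set of $(d,d)$-degree symmetric bivariate polynomials $H$ over $\F$ with $H(0,y)=q_1(y)$ (resp.\ $H(0,y)=q_2(y)$); the claim is that the restriction-tuple of a uniformly random element of $\mathcal{F}_1$ and that of a uniformly random element of $\mathcal{F}_2$ have the same law. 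The key reduction is: it suffices to produce a \emph{single} $(d,d)$-degree symmetric bivariate polynomial $G(x,y)$ with (i) $G(0,y)=q_2(y)-q_1(y)$ and (ii) $G(x,\alpha_i)=0$ for every $P_i \in \Bad$. Indeed, given such a $G$, the translation $\psi:F\mapsto F+G$ maps $\mathcal{F}_1$ into $\mathcal{F}_2$ (since $(F+G)(0,y)=q_1(y)+(q_2(y)-q_1(y))=q_2(y)$, and the sum of two $(d,d)$-degree symmetric bivariate polynomials is again one), it is a bijection with inverse $F'\mapsto F'-G$, and by (ii) it satisfies $(F+G)(x,\alpha_i)=F(x,\alpha_i)$ for all $P_i\in\Bad$. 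Hence for $F$ uniform on $\mathcal{F}_1$, the polynomial $\psi(F)$ is uniform on $\mathcal{F}_2$ and has the \emph{same} restriction tuple as $F$, which is exactly the asserted equality of distributions.

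The technical heart is therefore the construction of $G$, and this is where I expect the only real obstacle to lie: $G$ must simultaneously be symmetric, have degree at most $d$ in each variable, reproduce the prescribed row at $x=0$, and vanish on the rows indexed by $\Bad$. The hypothesis $q_1(\alpha_i)=q_2(\alpha_i)$ for $P_i\in\Bad$ says the $d$-degree polynomial $q(y):=q_2(y)-q_1(y)$ has $\alpha_1,\ldots,\alpha_k$ among its roots, so (using $\alpha_\ell\neq 0$) we may write
\[
q(y) \;=\; \Bigl(\prod_{\ell=1}^{k}(-\alpha_\ell)\Bigr)\cdot \widetilde h(y)\cdot \prod_{\ell=1}^{k}(y-\alpha_\ell)
\]
for a unique polynomial $\widetilde h$ of degree at most $d-k$; here $d\ge k$ is precisely what guarantees that $\widetilde h$ is a genuine polynomial. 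Writing $\widetilde h(y)=\sum_{j=0}^{d-k} b_j y^j$, set
\[
G(x,y) \;:=\; \bigl(\widetilde h(x)+\widetilde h(y)-b_0\bigr)\cdot \prod_{\ell=1}^{k}(x-\alpha_\ell)(y-\alpha_\ell).
\]
This $G$ is visibly symmetric, has degree $(d-k)+k=d$ in each variable, and vanishes whenever $x=\alpha_i$ with $i\le k$ because of the factor $\prod_{\ell}(x-\alpha_\ell)$; evaluating at $x=0$ and using $\widetilde h(0)=b_0$ gives $G(0,y)=\widetilde h(y)\cdot\prod_{\ell=1}^{k}(-\alpha_\ell)\cdot\prod_{\ell=1}^{k}(y-\alpha_\ell)=q(y)$, as required.

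What remains is routine bookkeeping: verify that the sum of two $(d,d)$-degree symmetric bivariate polynomials is again of this type (immediate from the monomial description of such polynomials used above), conclude $|\mathcal{F}_1|=|\mathcal{F}_2|$ via $\psi$, and observe that $\psi$ restricts to a bijection between the preimage inside $\mathcal{F}_1$ of any fixed tuple $(p_i)_{P_i\in\Bad}$ under the map $F\mapsto\{F(x,\alpha_i)\}_{P_i\in\Bad}$ and the corresponding preimage inside $\mathcal{F}_2$; hence $\Pr[\{F(x,\alpha_i)\}_{P_i\in\Bad}=(p_i)_{P_i\in\Bad}]=\Pr[\{F'(x,\alpha_i)\}_{P_i\in\Bad}=(p_i)_{P_i\in\Bad}]$ for every such tuple, which is the statement. (Lemma~\ref{lemma:bivariateI} is not needed for this argument.)
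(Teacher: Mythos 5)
Your proof is correct. The paper itself states this lemma without proof (citing \cite{CD05,AL17}), so there is no in-paper argument to compare against; your translation-by-a-fixed-polynomial approach is the standard one found in those references, and your explicit construction of $G$---symmetric, degree $\leq d$ in each variable, vanishing on the $\Bad$ rows, and recovering $q_2-q_1$ at $x=0$---is verified correctly (the hypotheses $d \geq |\Bad|$ and $\alpha_\ell \neq 0$ are both used exactly where needed).
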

We next give the definition of $d$-sharing, which is central to our protocols.
\begin{definition}[{\bf $d$-sharing}]
A value $s \in \F$ is said to be $d$-shared, if there exists a $d$-degree  {\it sharing-polynomial}, say $f_s(\cdot)$, with
 $f_s(0) = s$, such that every (honest) $P_i$ has the {\it share} $s_i = f_s(\alpha_i)$.
  The vector of shares of $s$ corresponding to the (honest) parties $P_i$ is called a $d$-sharing of $s$, 
   denoted by $[s]_d$. We will omit the degree $d$ from the notation $[\cdot]_d$ if it is clear from the context.
\end{definition}
It is easy to see that $d$-sharing satisfies the {\it linearity} property; i.e.~given 
 $[a]_d$ and $[b]_d$, then $[c_1 \cdot a + c_2 \cdot b]_d = c_1 \cdot [a]_d + c_2 \cdot [b]_d$ holds, where
 $c_1, c_2 \in \F$ are {\it publicly-known}. In general, consider any arbitrary linear function $g: \F^{\ell} \rightarrow \F^m$ and let 
 $u^{(1)}, \ldots, u^{(\ell)}$ be $d$-shared. When we say that {\it parties locally compute $([v^{(1)}]_d, \ldots, [v^{(m)}]_d) = g([u^{(1)}]_d, \ldots, [u^{(\ell)}]_d)$},
  we mean that the parties {\it locally} apply the function $g$ on their respective shares of $u^{(1)}, \ldots, u^{(\ell)}$,
   to get their respective shares of
  $v^{(1)}, \ldots, v^{(m)}$.       
  \subsection{Existing Primitives}
  We next discuss the existing primitives used in our protocols.
  \paragraph{\bf Online Error-Correction (OEC) \cite{BCG93}:}
   Let $\Partyset'$ be a subset of parties, containing 
    at most $t$ corrupt parties.
    And let there exist some $d$-degree polynomial $q(\cdot)$ with every (honest)
   $P_i \in \Partyset'$ having a point $q(\alpha_i)$. 
       The goal is 
    to make some {\it designated} party, say $P_R$, reconstruct $q(\cdot)$.
    For this, each $P_i \in \Partyset'$ sends $q(\alpha_i)$ to $P_R$, who 
    keeps waiting till it receives
    $d + t + 1$ points, all of which lie on a {\it unique} $d$-degree polynomial. 
     This step requires $P_R$ to repeatedly apply the Reed-Solomon (RS) error-correction procedure \cite{MS81}
      and try to recover $q(\cdot)$, upon
    receiving a new point from the parties in $\Partyset'$. 
    Once $P_R$ receives $d + 1 + 1$ points lying on a $d$-degree polynomial, say
     $q'(\cdot)$, then $q'(\cdot) = q(\cdot)$. This is because among these $d + t + 1$ points, at least $d + 1$ are from {\it honest}
     parties in $\Partyset'$, which uniquely determine $q(\cdot)$.
     If $d < (|\Partyset'| - 2t)$, then in an {\it asynchronous} network, $P_R$ {\it eventually} receives $d + t + 1$ points (from the {\it honest} parties in $\Partyset'$)
     lying on $q(\cdot)$ and recovers $q(\cdot)$. Moreover, in a {\it synchronous} network, it will take at most $\Delta$ time
     for $P_R$ to recover $q(\cdot)$, since the points of the honest parties will be delivered within $\Delta$ time.
     We denote the above procedure by $\OEC(d, t, \Partyset')$, which is presented in Appendix \ref{app:ExistingPrimitives},
     along with its properties.
  \paragraph{\bf Finding $\Star{t}$ \cite{BCG93}:}
  Let $G$ be an undirected graph over $\Partyset$. Then a pair $(\ESet,\FSet)$ where $\ESet \subseteq \FSet \subseteq \Partyset$
     is called an $\Star{t}$, if all the following hold.
       \begin{myitemize}
       \item[--] $|\ESet| \geq n - 2t$;
       \item[--] $|\FSet| \geq n - t$;
       \item[--] There exists an edge between every $P_i \in \ESet$ and every $P_j \in \FSet$.
       \end{myitemize}
     The work of 
    \cite{BCG93} presents an efficient algorithm (whose running time is polynomial in $n$),
     which we denote as  $\StarAlgo$. The algorithm always outputs an $\Star{t}$
     $(\ESet,\FSet)$, provided $G$ contains a clique of size at least $n - t$.
  \paragraph{\bf Asynchronous Reliable Broadcast (Acast):} 
   We use the Bracha's Acast protocol \cite{Bra84}, where there exists a designated {\it sender} $\Sender \in \Partyset$
   with input $m \in \{0, 1 \}^{\ell}$. The protocol allows $\Sender$ to send $m$
   {\it identically} to all the parties, in the presence of any $t < n/3$ corruptions, possibly including $\Sender$.
    While the protocol has been primarily designed for an {\it asynchronous}
   network, it also provides certain guarantees in a {\it synchronous} network, as stated in Lemma \ref{lemma:Acast}. 
  Notice that the protocol {\it does not} provide any liveness if $\Sender$ is {\it corrupt}, irrespective of the network type.
  This is because a {\it corrupt} $\Sender$ may not invoke the protocol in the first place. 
  Moreover in a {\it synchronous} network, if $\Sender$ is {\it corrupt} and if the honest parties compute an output, then they {\it may not}
  get the output at the same time. And there may be a difference of
  at most $2 \Delta$ time within which the honest parties compute their output. 
  The Acast protocol and proof of Lemma \ref{lemma:Acast} are available in Appendix \ref{app:ExistingPrimitives}.
  \begin{lemma}
  \label{lemma:Acast}
Bracha's Acast protocol $\PiACast$ achieves the following in the presence of up to $t < n/3$ corruptions, where $\Sender$
 has an input $m \in \{0, 1 \}^{\ell}$ for the protocol.
 \begin{myitemize}
\item[--] {\it Asynchronous Network}: 
     \begin{myitemize}
      \item[--] {\it (a) $t$-Liveness}: If $\Sender$ is {\it honest}, then all honest parties eventually obtain an output.
       \item[--] {\it (b) $t$-Validity}: If $\Sender$ is {\it honest}, then every honest party with an output, outputs $m$.
       \item[--] {\it (c) $t$-Consistency}: If $\Sender$ is {\it corrupt} and some honest party outputs $m^{\star}$,
        then every honest party {\it eventually} outputs $m^{\star}$.
    \end{myitemize}    
  \item[--] {\it Synchronous Network}:
         \begin{myitemize}
            \item[--] {\it (a) $t$-Liveness}: If $\Sender$ is {\it honest}, then all honest parties obtain an output within time $3\Delta$.
            \item[--] {\it (b) $t$-Validity}: If $\Sender$ is {\it honest}, then every honest party with an output, outputs $m$.
            \item[--] {\it (c) $t$-Consistency}: If $\Sender$ is {\it corrupt} and some honest party outputs $m^{\star}$ at time $T$, then every
             honest $P_i$ outputs $m^{\star}$ by the end of time $T + 2 \Delta$.
        \end{myitemize}      
  \item[--] Irrespective of the
  network type, $\Order(n^2 \ell)$ bits are communicated by the honest parties.
\end{myitemize}
\end{lemma}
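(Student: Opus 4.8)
The plan is to recall Bracha's three-phase structure and then derive all the listed guarantees from the classical quorum-intersection argument, tracking message-delivery deadlines explicitly for the synchronous statements. Recall that in $\PiACast$ the sender $\Sender$ sends $(\init, m)$ to all parties; a party that receives $(\init, v)$ from $\Sender$ sends $(\echo, v)$ to all parties; a party that has received $(\echo, v)$ from at least $n - t$ parties, or $(\ready, v)$ from at least $t + 1$ parties, sends $(\ready, v)$ to all parties; and a party that has received $(\ready, v)$ from at least $2t + 1$ parties outputs $v$. Since $t < n/3$ we have $n - t \geq 2t + 1$ and $2(n - t) - n = n - 2t \geq t + 1$, and these two inequalities drive everything.

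First I would establish the single-value invariant: no two honest parties ever send $\ready$ messages for distinct values. The earliest honest $(\ready, v)$ message cannot have been produced by the amplification rule (that rule requires $t+1$ earlier $\ready$'s for $v$, hence an earlier honest one), so it rests on $n - t$ echoes for $v$; likewise the earliest honest $(\ready, v')$ rests on $n - t$ echoes for $v'$. These two echo-quorums intersect in at least $n - 2t \geq t + 1$ parties, hence in an honest party, who would then have echoed two distinct values --- impossible, since an honest party echoes only the single value carried in the $\init$ message it accepted. Consequently every honest output, needing $2t + 1$ $\ready$'s of which at least $t + 1$ are honest, is for that common value; this already yields the consistency-of-value part in both network models. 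For an honest $\Sender$ it moreover pins the common value to $m$: every honest echo is for $m$, so no party collects $n - t > t$ echoes for any $v \neq m$, and since no honest $\ready$ for $v \neq m$ is ever generated, no party outputs $v \neq m$ --- this is $t$-Validity in both models.

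Next I would handle liveness, amplification, and the timing. For an honest $\Sender$: all $\geq n - t$ honest parties send $(\echo, m)$, so (asynchronously) every honest party eventually collects $n - t$ echoes for $m$, sends $(\ready, m)$, eventually collects $n - t \geq 2t+1$ $\ready$'s for $m$, and outputs $m$, giving $t$-Liveness; in the synchronous model the same chain runs on the clock --- $\init$ delivered by time $\Delta$, echoes by $2\Delta$, $\ready$'s by $3\Delta$ --- so every honest party outputs by time $3\Delta$. For $t$-Consistency with a corrupt $\Sender$: if an honest party outputs $m^{\star}$ (at time $T$, in the synchronous case), it has seen $2t + 1$ $\ready$'s for $m^{\star}$, at least $t + 1$ of them honest; those $t+1$ honest $\ready$'s were sent by time $T$ and reach every honest party --- eventually in the asynchronous case, within a further $\Delta$, i.e.\ by $T + \Delta$, in the synchronous case --- triggering every honest party to send $(\ready, m^{\star})$; then all $\geq n - t \geq 2t + 1$ honest $\ready$'s for $m^{\star}$ are delivered (within one more $\Delta$ synchronously) and every honest party outputs $m^{\star}$, by time $T + 2\Delta$ in the synchronous case. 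Finally, the communication bound is an immediate count: one batch of $\Order(n)$ $\init$ messages, one of $\Order(n^2)$ $\echo$ messages and one of $\Order(n^2)$ $\ready$ messages, each carrying a value in $\{0,1\}^{\ell}$ plus an $\Order(1)$-bit tag, for $\Order(n^2 \ell)$ bits, independent of the network type.

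The ideas here are all standard; the part needing care is the bookkeeping. In particular one must argue that the \emph{earliest} honest $\ready$ for a value is necessarily generated by the echo rule (otherwise the single-value invariant would be circular), and, on the synchronous side, one must justify that the $T + 2\Delta$ bound genuinely decomposes as ``$+\Delta$ for the $t+1$ witnessing honest $\ready$'s to propagate and cross the amplification threshold at every honest party'' followed by ``$+\Delta$ for the resulting full set of honest $\ready$'s to arrive'' --- and that no tighter bound holds, since a corrupt $\Sender$ can arrange that different honest parties first reach the output threshold of $2t+1$ $\ready$ messages at times differing by up to $2\Delta$, which is precisely the $2\Delta$ output spread used later in the paper.
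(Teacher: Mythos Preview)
Your proof is correct and follows essentially the same approach as the paper: the echo-quorum intersection argument for the single-value invariant, the $t+1$-honest-$\ready$ amplification for consistency, and the same $\Delta$-by-$\Delta$ timing chain for the synchronous bounds. The only cosmetic differences are that you front-load the single-value invariant as a standalone step (the paper folds it into a contradiction inside the consistency argument) and that you state the output threshold as $2t+1$ where the paper's protocol uses $n-t$; since $n-t \geq 2t+1$, the reasoning is unaffected.
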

We next discuss few terminologies with respect to $\PiACast$, which we use throughout the paper.
\paragraph{\bf Terminologies for Using $\PiACast$:}
 We will say that  ``{\it $P_i$ Acasts $m$}" to mean that $P_i$ acts as a sender $\Sender$ and invokes an instance of $\PiACast$
  with input $m$ and the parties participate in this instance. Similarly, we 
   will say that ``{\it $P_j$ receives $m$ from the Acast of $P_i$}" to mean that $P_j$ outputs $m$
  in the corresponding instance of $\PiACast$.

\section{best-of-both-worlds Perfectly-Secure Byzantine Agreement}
\label{sec:HBA}
  In this section, we present our best-of-both-worlds perfectly-secure Byzantine agreement (BA) protocol. 
   We begin with the definition of BA, which is a modified version of \cite{BKL19}, as we {\it do not} require any 
 {\it termination} guarantees. In the definition, we consider the case where the inputs of the parties is a single bit. However, the
  definition can be easily extended for the case when the inputs are bit-strings.
\begin{definition}[{\bf Byzantine Agreement (BA)} \cite{BKL19}]
\label{def:BA}
Let $\Pi$ be a protocol for the parties in $\Partyset$ with up to $t$ corrupt parties,
  where every $P_i$ has an input $b_i \in \{0, 1\}$ and a possible output from  
 $\{0, 1, \bot \}$.
     \begin{myitemize}
    \item[--] {\it $t$-Guaranteed Liveness}: $\Pi$ has guaranteed liveness, if all honest parties obtain an output.
       \item[--] {\it $t$-Almost-Surely Liveness}:  $\Pi$ has almost-surely liveness, if almost-surely, 
        all honest parties obtain some output.
      \item[--] {\it $t$-Validity}: $\Pi$ has $t$-validity, if the following hold:
      if all honest parties have input $b$, then every honest party with an output, outputs $b$. 
      \item[--] {\it $t$-Weak Validity}: $\Pi$ has $t$-weak validity, if the following hold:
       if all honest parties have input $b$, then every honest party with an output, outputs $b$ 
        or $\bot$. 
      \item[--] {\it $t$-Consistency}: $\Pi$ has $t$-consistency, 
      if all honest parties with an output, output the same value.
      \item[--] {\it $t$-Weak Consistency}: $\Pi$ has $t$-Weak Consistency, 
      if  all honest parties with an output, output either a common $v \in \{0, 1 \}$ or $\bot$.
      \end{myitemize}
 Protocol $\Pi$ is called a {\it $t$-perfectly-secure synchronous-BA} (SBA) protocol,
  if in a {\it synchronous} network, it achieves all the following:
      \begin{myitemize}
      \item[--] {\it $t$-guaranteed liveness};
      \item[--] {\it $t$-Validity};
      \item[--] {\it $t$-Consistency}.
      \end{myitemize}
 Protocol $\Pi$ is called a {\it $t$-perfectly-secure asynchronous-BA} (ABA) protocol, if in an {\it asynchronous network},
  it achieves the following:
     \begin{myitemize}
     \item[--] {\it $t$-almost-surely liveness};
     \item[--] {\it $t$-Validity};
     \item[--] {\it $t$-Consistency}.
     \end{myitemize}
 \end{definition}
 \noindent To design our best-of-both-worlds BA protocol, we will be using an
  {\it existing} perfectly-secure SBA and a perfectly-secure ABA protocol, whose properties
  we review next.
  \paragraph{\bf Existing $t$-Perfectly-Secure SBA:}
   We assume the existence of a $t$-perfectly-secure SBA protocol tolerating $t < n/3$ corruptions, which 
   {\it also} provides 
   {\it $t$-guaranteed liveness} in an {\it asynchronous} network.\footnote{We stress that we {\it do not} require any other
   property from the SBA protocol in an {\it asynchronous} network.}. For the sake of
    communication efficiency, we choose the recursive phase-king based
   {\it $t$-perfectly-secure} SBA protocol $\BGP$ of \cite{BGP92}.
   The protocol incurs a communication of $\Order(n^2 \ell)$ bits, if the inputs of the parties
   are of size $\ell$ bits. If the network is {\it synchronous}, then in protocol 
   $\BGP$, at time $\TimeBGP = (12n - 6) \cdot \Delta$, all honest parties have an output
    (see Lemma 10.7 of \cite{BA}).
   To ensure guaranteed liveness in an {\it asynchronous} network, the parties can simply run the protocol and then 
    check if any output is obtained at local time $(12n - 6) \cdot \Delta$.
    In case no output is obtained, then $\bot$ is taken as the output. The properties of $\BGP$ are summarized in 
    Lemma \ref{lemma:ExistingSBAGuarantees}.
  \begin{lemma}[\cite{BGP92,BA}]
 \label{lemma:ExistingSBAGuarantees}
 Let $t < n/3$. Then there exists a protocol $\BGP$ with the following properties, where all parties participate with an 
  input of size $\ell$ bits. 
   \begin{myitemize}
   \item[--] The protocol incurs a communication of $\Order(n^2 \ell)$ bits from the honest parties.
   \item[--] The protocol is a $t$-perfectly-secure SBA protocol, where all honest parties have an output within 
   time $\TimeBGP = (12n - 6) \cdot \Delta$.
   \item[--] In an asynchronous network, all honest parties have an output from $\{0, 1\}^{\ell} \cup \{ \bot\}$, within
   local time $(12n - 6) \cdot \Delta$.
   \end{myitemize}
 \end{lemma}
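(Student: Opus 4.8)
The plan is to observe that this lemma is essentially a repackaging of known facts about the recursive phase-king protocol $\BGP$ of \cite{BGP92}, together with one elementary observation about running a synchronous protocol under asynchrony with a hard local time-out. Accordingly, the proof will have two parts: a citation-based argument for the synchronous guarantees and the communication bound, and a short self-contained argument for the asynchronous clause.

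For the synchronous part, I would recall that $\BGP$ is a deterministic protocol built by recursing on the phase-king paradigm, and cite \cite{BGP92} (and Lemma~10.7 of \cite{BA} for the explicit round/time bound) to conclude that, when run over a synchronous network with $t < n/3$, it achieves $t$-validity, $t$-consistency and $t$-guaranteed liveness, with every honest party producing its output by wall-clock time $\TimeBGP = (12n-6)\cdot\Delta$. The communication bound follows because in each invocation the honest parties send $\Order(n)$ messages each of size $\Order(\ell)$ (so $\Order(n^2\ell)$ bits), and the recursion contributes only a constant-factor blow-up; summing over the fixed round structure gives $\Order(n^2\ell)$ bits overall from the honest parties. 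I would also note here that this count depends only on the protocol's (fixed) round structure and message sizes, not on the delivery schedule, so the same bound will apply verbatim in the asynchronous setting.

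For the asynchronous clause, the argument is immediate from the stated modification: each honest $P_i$ runs $\BGP$ as-is, and if by its own local time $(12n-6)\cdot\Delta$ it has not obtained an output, it simply outputs $\bot$. Hence every honest party necessarily holds a value in $\{0,1\}^{\ell}\cup\{\bot\}$ by local time $(12n-6)\cdot\Delta$, which is precisely $t$-guaranteed liveness in an asynchronous network. I would stress explicitly that no other property is being claimed in the asynchronous case — in particular we do not assert validity or consistency there, since delayed messages can cause honest parties to time out with inconsistent or meaningless values — so there is nothing further to verify.

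There is no real obstacle here; the only points requiring a little care are (i) confirming the exact constant $(12n-6)$ against \cite{BA} and translating its round count into wall-clock time via the per-round bound $\Delta$, and (ii) making the (trivial but worth stating) remark that the $\Order(n^2\ell)$ communication figure is network-independent. If one wanted a fully self-contained treatment rather than a citation, the genuinely nontrivial component would be re-deriving the phase-king recursion and its round complexity, but given the lemma is stated with \cite{BGP92,BA} as the source, invoking those results is the intended route.
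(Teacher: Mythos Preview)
Your proposal is correct and matches the paper's treatment exactly: the paper does not give a standalone proof of this lemma but simply cites \cite{BGP92} and Lemma~10.7 of \cite{BA} for the synchronous guarantees and communication bound, and explains in the preceding paragraph that the asynchronous liveness clause follows from the trivial time-out modification (output $\bot$ at local time $(12n-6)\cdot\Delta$ if no output has been obtained). Your write-up is, if anything, slightly more detailed than the paper's own discussion.
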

 \paragraph{\bf Existing $t$-Perfectly-Secure ABA:}
  Existing perfectly-secure ABA protocols achieve the following properties. 
 \begin{lemma}[\cite{ADH08,BCP20}]
 \label{lemma:ABAGuarantees}
 Let $t < n/3$. Then there exists a {\it randomized} protocol $\PiABA$, achieving the following properties,
  where the inputs of each party is a bit.
 \begin{myitemize}
 \item[--] {\it Asynchronous Network}: The protocol is a {\it $t$-perfectly-secure} ABA protocol and provides the following liveness
  guarantees.
     \begin{myitemize}
     \item[--]  If the inputs of all {\it honest} parties are same, then $\ABA$ achieves $t$-guaranteed liveness;
     \item[--] Else $\ABA$ achieves $t$-almost-surely liveness.
     \end{myitemize}
 \item[--] {\it Synchronous Network}: The protocol achieves $t$-validity, $t$-consistency and the following liveness guarantees.
    \begin{myitemize}
    \item[--] If the inputs of all {\it honest} parties are same, then $\ABA$ achieves $t$-guaranteed liveness and 
    all honest parties obtain their output within time $\TimeABA = k \cdot \Delta$ for some constant $k$. 
     \item[--] Else $\ABA$ achieves $t$-almost-surely liveness
    and requires $\Order(\mbox{poly}(n) \cdot \Delta)$ expected time to generate the output.
    \end{myitemize}
 \item[--] Irrespective of the network type, the protocol incurs the following amount of communication from the 
  honest parties.
      \begin{myitemize}
      \item[--] If the inputs of all honest parties are the same, then the protocol incurs a communication of $\Order(\mbox{poly(n)} \log|\F|)$ bits;
      \item[--] Else, it incurs an expected communication of
       $\Order(\mbox{poly(n)} \log|\F|)$ bits.\footnote{Looking ahead, the number of invocations of $\PiABA$ in our protocol will be a 
       {\it constant} and {\it independent} of the size of the circuit $\ckt$. Hence, we do not focus on the ``exact"
       communication complexity of $\PiABA$.}
      \end{myitemize}
 \end{myitemize}
 \end{lemma}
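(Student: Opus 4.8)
The plan is to instantiate $\PiABA$ with one of the known perfectly-secure ABA protocols for $t < n/3$ --- for concreteness, the protocol of \cite{ADH08} (or the more recent construction of \cite{BCP20}) --- and then to verify each claimed property in turn. For the asynchronous-network guarantees there is essentially nothing to do beyond quoting those works: they give $t$-validity, $t$-consistency, and almost-surely liveness against a computationally unbounded adversary corrupting up to $t < n/3$ parties. The only refinement I would spell out is that guaranteed liveness holds whenever all honest parties start with the same input $b$; this follows from the \emph{biased-termination} property of these protocols, whereby if the honest inputs already agree then every honest party locks onto $b$ within a fixed, small number of communication steps and outputs it, without the common-coin sub-routine ever being invoked. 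It is only when the honest inputs differ that the protocol must rely on the randomized common coin, and hence on almost-surely rather than guaranteed liveness.

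For the synchronous-network guarantees, the key observation I would use is that the chosen ABA protocol is purely message-driven: it never consults a global clock, never sets an internal time-out, and every action of an honest party depends only on the messages received so far. Hence any execution of $\PiABA$ in a synchronous network is \emph{also} a legal execution in the asynchronous model --- one in which the scheduler happens to deliver each honest party's message within $\Delta$ time. The safety properties $t$-validity and $t$-consistency, which already hold against the strictly more powerful asynchronous adversary, therefore hold verbatim in a synchronous network. Liveness I would then obtain by translating rounds into time: since every message an honest party sends at the start of a round is delivered within $\Delta$, each round of $\PiABA$ completes within $\Delta$ time. If the honest inputs agree, the protocol terminates in $\Order(1)$ rounds by biased termination, giving worst-case termination time $\TimeABA = k \cdot \Delta$ for a suitable constant $k$; if the honest inputs differ, the expected number of coin-phases before the common coin forces agreement is $\Order(1)$ and each phase spans $\Order(\mathrm{poly}(n))$ rounds, so the expected termination time is $\Order(\mathrm{poly}(n) \cdot \Delta)$, as claimed.

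The communication-complexity bound is then inherited directly from the chosen protocol: $\Order(\mathrm{poly}(n) \log|\F|)$ bits are sent by the honest parties when all honest inputs agree (the execution being deterministic and short in that case), and the same bound holds in expectation otherwise, since the expected number of coin-phases is constant. I expect the only genuinely delicate point to be the transfer of properties from the asynchronous to the synchronous setting: I must make sure that the specific ABA protocol I pick is truly event-driven and uses no synchrony assumption or internal timing, so that a synchronous execution really is a special case of an asynchronous one. Granted that, every safety property comes for free and every liveness bound follows from the simple round-to-$\Delta$ accounting above.
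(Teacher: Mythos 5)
Your argument is essentially sound, and in its main outline it matches the way the paper justifies the lemma, but you package it more abstractly than the paper does. The paper's appendix actually unrolls the Rabin/Ben-Or framework: it restates the $\Vote$ protocol, proves a dedicated lemma that $\Vote$ finishes within $9\Delta$ in a synchronous network, and then adds up the contributions of two $\Vote$ phases, one $\CoinFlip$ instance, and one $\ready$-message round to pin down the constant $k$ (the paper ends with $\TimeABA = (c + 19)\cdot\Delta$, where $c\cdot\Delta$ is the synchronous running time of $\CoinFlip$). Your route is to observe, once and for all, that these ABA protocols are purely message-driven, so any synchronous execution is a legal asynchronous one; the safety properties then transfer wholesale, and the time bound follows by translating ``one round'' into ``one $\Delta$.'' That observation is the right abstraction, and it is indeed what the paper implicitly relies on when it asserts (Lemma B.1) that $\Vote$'s properties hold ``both in the synchronous as well as asynchronous network'' without re-proving them. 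Your ``biased termination'' phrasing also correctly identifies why guaranteed liveness holds when all honest inputs agree: every honest party obtains grade~$2$ from the first $\Vote$, so the coin is never decisive and a single iteration suffices.

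One factual detail in your accounting is off, though it does not change the conclusion. You attribute the $\Order(\mathrm{poly}(n))$ factor to the length of a single coin-phase (``each phase spans $\Order(\mathrm{poly}(n))$ rounds'') and claim $\Order(1)$ expected coin-phases. In the protocols of \cite{ADH08,BCP20} the situation is reversed: each iteration, including the $\CoinFlip$ sub-protocol, is a \emph{constant} number of rounds, but there may be $\Order(\mathrm{poly}(n))$ expected iterations. The reason is the shunning mechanism: the common coin is built from a shunning AVSS, and a Byzantine party can make the coin ``fail'' (different honest parties see different coin values) without being caught globally. Each failed instance only reveals a bounded number of new local-conflicts --- one per instance in \cite{ADH08}, $\Theta(n)$ per instance in \cite{BCP20} --- and up to $t(n-t)$ local-conflicts may need to be exhausted before the coin reliably succeeds with probability $p$. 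Only afterwards does the ``$\Order(1)$ expected good iterations'' intuition kick in. So the $\mathrm{poly}(n)$ lives in the iteration count, not in the per-iteration round count. Since both decompositions multiply out to the same $\Order(\mathrm{poly}(n)\cdot\Delta)$ expected time, the final bound you state is correct.
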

Protocol $\PiABA$ is designed using a {\it weaker} ``variant" of AVSS called  {\it shunning} AVSS (SAVSS) \cite{ADH08,BCP20}, 
 which {\it cannot} be used for circuit-evaluation.  
  We provide a brief overview of the ABA protocols of \cite{ADH08,BCP20} and a brief outline of the proof of 
         Lemma \ref{lemma:ABAGuarantees} in Appendix \ref{app:ExistingABAAnalysis}.

  From the above discussion, we note that protocol $\ABA$ {\it cannot} be considered as a {\it best-of-both-worlds} BA protocol.
  This is because the protocol 
   achieves {\it $t$-guaranteed liveness} in a {\it synchronous} network, {\it only} when {\it all} honest parties have the same input. 
   In case, the parties have a mixed bag of inputs, then the parties may end up running the protocol forever,
    without having
   an output, even if the
   network is {\it synchronous}, though the probability of this event is asymptotically $0$. 
      We design a {\it perfectly-secure} BA protocol, which solves this problem and which is secure in {\it any} network. 
  To design
the protocol, we need a special type of broadcast protocol, which we design first.
  \subsection{Synchronous Broadcast with Asynchronous Guarantees}
    We begin with the definition of broadcast, adapted from \cite{BKL19}, 
     where we {\it do not} put any {\it termination} requirement.
\begin{definition}[{\bf Broadcast} \cite{BKL19}]
\label{def:BCAST}
Let $\Pi$ be a protocol for the parties in $\PartySet$ consisting of up to
 $t$ corrupt parties, where a sender $\Sender \in \Partyset$ has input $m \in \{0, 1\}^{\ell}$, and parties obtain a possible output 
  from $\{0, 1 \}^{\ell} \cup \{\bot \}$.
  \begin{myitemize}
   \item[--] {\it $t$-Liveness}: $\Pi$ has $t$-liveness, if all honest parties obtain some output.
    \item[--] {\it $t$-Validity}: $\Pi$ has $t$-validity, if the following holds:
      if $\Sender$ is {\it honest}, then every honest party with an output, outputs $m$.
     \item[--] {\it $t$-Weak Validity}: $\Pi$ has $t$-validity, if the following holds:
      if $\Sender$ is {\it honest}, then every honest party outputs either $m$ or $\bot$.
     \item[--] {\it $t$-Consistency}: $\Pi$ has $t$-consistency, if the following holds: 
     if $\Sender$ is {\it corrupt}, then every honest party with an output, has a common output.
     \item[--] {\it $t$-Weak Consistency}: $\Pi$ has $t$-weak consistency, if the following holds: 
     if $\Sender$ is {\it corrupt}, then every honest party with an output, outputs a common $m^{\star} \in \{0, 1 \}^{\ell}$ or $\bot$.
  \end{myitemize}
Protocol $\Pi$ is called a {\it $t$-perfectly-secure broadcast} protocol, if it has the following properties:
  \begin{myitemize}
  \item[--]  {\it $t$-Liveness};
  \item[--] {\it $t$-Validity}; 
  \item[--] {\it $t$-Consistency}.
  \end{myitemize}
\end{definition}
We next design a {\it special} broadcast protocol $\BCAST$, which is a $t$-perfectly-secure broadcast protocol in a {\it synchronous} network. Additionally, in an {\it asynchronous} network, the protocol achieves {\it $t$-liveness}, {\it $t$-weak validity} and {\it $t$-weak consistency}.
 Looking ahead, we will combine the protocols $\BCAST, \BGP$ and $\PiABA$ to get our
  best-of-both-worlds BA protocol.

 Before proceeding to design $\BCAST$, we note that the existing Bracha's Acast protocol $\PiACast$
 {\it does not} guarantee the same properties as $\BCAST$. Specifically, for a {\it corrupt} $\Sender$, 
  there is {\it no} liveness guarantee (irrespective of the network type). Moreover, in a {\it synchronous}
 network, if $\Sender$ is {\it corrupt} and honest parties obtain an output, then they  
  {\it may not} obtain an output within the same time
    (see Lemma \ref{lemma:Acast}).\footnote{Looking ahead, this property from
    $\BCAST$ will be crucial when we use it in our best-of-both-worlds BA protocol.}
         Interestingly, our instantiation of $\BCAST$ is based on $\PiACast$, by carefully
          ``stitching" it with the protocol $\BGP$.

   The idea behind $\BCAST$ is the following: sender $\Sender$ first Acasts its message. 
   If the network is {\it synchronous} and $\Sender$ is {\it honest}, then within time 
   $3\Delta$, every honest party should have received $\Sender$'s message.
   To verify this, 
   the parties start participating in an instance of $\BGP$ at (local) time $3\Delta$,
    with their respective inputs being the output obtained from $\Sender$'s Acast
   at time $3\Delta$.
    If there is no output at time $3\Delta$ from $\Sender$'s Acast,
     then the input for $\BGP$ is $\bot$. Finally, at time $3\Delta + \TimeBGP$, parties output
   $m^{\star}$, if it has been received from the 
   Acast of $\Sender$ {\it and} if it is the output of $\BGP$ as well; otherwise the parties output $\bot$. 
   
   It is easy to see that the protocol has now {\it guarantees} liveness in
  {\it any} network (irrespective of $\Sender$), since all parties will have some output at (local) time $3\Delta + \TimeBGP$.
   Moreover, {\it consistency} is achieved for a {\it corrupt} $\Sender$ in a {\it synchronous} network, with {\it all} honest parties
   obtaining a common output at the {\it same} time. 
   This is because if any {\it honest} party obtains an output $m^{\star}\neq \bot$ at time 
   $3\Delta + \TimeBGP$, then {\it at least} one {\it honest} party must have received $m^{\star}$ from $\Sender$'s Acast by time $3\Delta$.
   And so by time $3\Delta + \TimeBGP$, {\it all} honest parties will receive $m^{\star}$ from $\Sender$'s Acast. 
 \paragraph {\bf Eventual Consistency and Validity in Asynchronous Network:}
 In $\BCAST$, the parties set a ``time-out" of $3\Delta + \TimeBGP$, due to which it
   provides
  {\it weak validity} and {\it weak consistency}  in an {\it asynchronous} network. This is because
   some {\it honest} parties may receive $\Sender$'s message from the Acast of $\Sender$ within the timeout, 
   while others may fail to do so. 
   The time-out is essential, as we need {\it liveness} from
   $\BCAST$ in {\it both} synchronous and asynchronous network, when 
   $\BCAST$ is used later in our best-of-both-worlds BA protocol.
       
      Looking ahead, we will use $\BCAST$ in our VSS protocol for broadcasting protocol.
       Due to the  {\it weak validity} and {\it weak consistency} properties, we may end up in a scenario where
       one subset of {\it honest} parties may output a common value different from $\bot$ at the end of the time-out, while 
      others may output $\bot$. For the security of the VSS protocol, we would require even the latter subset of (honest) parties 
      to {\it eventually} output the common non-$\bot$ value, if the parties {\it continue} participating in $\BCAST$. To achieve this goal,     
     every party who outputs $\bot$ at time $3\Delta + \TimeBGP$, ``switches" its output to $m^{\star}$, if
     it {\it eventually} receives $m^{\star}$ from $\Sender$'s Acast. 
    We stress that this switching is {\it only} for the parties who obtained $\bot$
    at time $3\Delta + \TimeBGP$. 
    To differentiate between the two ways of obtaining output, we use the terms {\it regular-mode}
    and {\it fallback-mode}. The regular-mode refers to the process of deciding the output at time $3\Delta + \TimeBGP$, while
    the fallback-mode refers to  the process of deciding the output beyond time
     $3\Delta + \TimeBGP$.\footnote{The fallback-mode is never triggered when $\BCAST$ is used in our best-of-both-worlds BA
     protocol. It will be triggered (along with the regular-mode) in our VSS protocol.}
   
      If the network is {\it asynchronous} and $\Sender$ is {\it honest}, then from the {\it liveness} and {\it validity} of $\PiACast$,
    every honest party {\it eventually} obtains $m$ from $\Sender$'s Acast.
    Hence, through the fallback-mode, every honest party who outputs $\bot$ at the time-out of $3\Delta + \TimeBGP$, eventually
    outputs $m$. Moreover, even if $\Sender$ is {\it corrupt}, the fallback-mode 
    {\it will not} lead to different honest parties obtaining different non-$\bot$ outputs 
    due to the {\it consistency} property of $\PiACast$.
       \begin{protocolsplitbox}{$\BCAST$}{Synchronous broadcast with asynchronous guarantees.}{fig:DS}
\centerline{\underline{(Regular Mode)}}
%
   \begin{myitemize}
   \item[--] On having the input $m \in \{0, 1 \}^{\ell}$, sender $\Sender$ Acasts $m$.
   \item[--] {\color{red} At time $3\Delta$}, each $P_i \in \Partyset$ participates in an instance of $\BGP$, 
    where the input for $\BGP$ is set as follows:
       \begin{myitemize}
       \item[--] $P_i$ sets  $m^{\star}$ as the input,  if $m^{\star} \in \{0, 1\}^{\ell}$ is received from the Acast of 
       $\Sender$;
       \item[--] Else $P_i$ sets $\bot$ as the input  (encoded as a default $\ell$-bit string).
       \end{myitemize}
  \item[--] {\bf (Local Computation)}: {\color{red} At time $3\Delta + \TimeBGP$}, each 
  $P_i \in \Partyset$ computes its output through {\it regular-mode} as follows:
     \begin{myitemize}
     \item[--] $P_i$ outputs $m^{\star} \neq \bot$, if $m^{\star}$ is received from the Acast of $\Sender$ {\it and} $m^{\star}$ is computed as the 		output during the instance of $\BGP$;
     \item[--] Else, $P_i$ outputs $\bot$. 
     \end{myitemize}
  Each $P_i \in \PartySet$ keeps participating in the protocol, even after computing the output. \\~\\
\end{myitemize}
\centerline{\underline{{\color{blue}(Fallback Mode)}}}
 \begin{myitemize}
 \item[--] {\color{blue} Every $P_i \in \Partyset$ who has computed the output $\bot$ at time $3\Delta + \TimeBGP$,
 changes it to $m^{\star}$, if $m^{\star}$ is received by $P_i$ from the Acast of $\Sender$.
 }
\end{myitemize}
\end{protocolsplitbox}

We next prove the properties of the protocol $\BCAST$.
\begin{theorem}
\label{thm:DS}
Protocol $\BCAST$ achieves the following properties in the presence of any $t < n/3$ corruptions,
 where $\Sender$ has an input $m \in \{0, 1 \}^{\ell}$ and where $\TimeBCAST =  (12n - 3) \cdot \Delta$.   
 \begin{myitemize}
   \item[--] {\it Synchronous} network: 
        \begin{myitemize}
           \item[--] {\it (a) $t$-Liveness}: At time $\TimeBCAST$, every honest party has an output, through regular-mode. 
	    \item[--] {\it (b) $t$-Validity}: If $\Sender$ is {\it honest}, then at time $\TimeBCAST$, each honest party
            outputs $m$ through regular-mode.
	    \item[--] {\it (c) $t$-Consistency}: If $\Sender$ is {\it corrupt}, then the output of every honest party is the same at 
	    time $\TimeBCAST$ through regular-mode.     
	    \item[--] {\it (d) $t$-Fallback Consistency}: If $\Sender$ is {\it corrupt} and some honest party
	     outputs $m^{\star} \neq \bot$ at time 
	     $T$ through fallback-mode, then every honest party outputs $m^{\star}$ by time $T + 2\Delta$ through fallback-mode.    
         \end{myitemize} 
\item[--] {\it Asynchronous Network}:
   \begin{myitemize}
     \item[--] {\it (a) $t$-Liveness}: At local time $\TimeBCAST$, every honest party has an output, through regular-mode. 
      \item[--] {\it (b) $t$-Weak Validity}: If $\Sender$ is {\it honest}, then at local time $\TimeBCAST$, 
       each honest party outputs $m$ or $\bot$ through regular-mode.
       \item[--] {\it (c) $t$-Fallback Validity}: If $\Sender$ is {\it honest}, then each honest party
        who outputs $\bot$ at local time $\TimeBCAST$ through regular-mode, eventually outputs 
         $m$ through fallback-mode.
        \item[--] {\it (d) $t$-Weak Consistency}: If $\Sender$ is {\it corrupt}, then at local time $\TimeBCAST$, 
    each honest party outputs either a common $m^{\star} \neq \bot$ or $\bot$, through regular-mode.
       \item[--] {\it (e) $t$-Fallback Consistency}: If $\Sender$ is {\it corrupt} and some honest party outputs $m^{\star} \neq \bot$ at local time
    $T$, either through regular or fallback-mode, 
         then every honest party eventually outputs $m^{\star}$, either through regular or fallback-mode.
   \end{myitemize}
   \item[--] Irrespective of the network type, the protocol incurs a communication of $\Order(n^2 \ell)$ bits from the honest parties.
   \end{myitemize}
\end{theorem}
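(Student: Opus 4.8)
The plan is to derive each of the listed guarantees from the properties of the Acast sub-protocol $\PiACast$ (Lemma~\ref{lemma:Acast}) and of the synchronous BA protocol $\BGP$ (Lemma~\ref{lemma:ExistingSBAGuarantees}), while keeping track of the local clocks; recall $\TimeBCAST = 3\Delta + \TimeBGP = (12n-3)\Delta$. Liveness (regular mode, in both networks) is the easiest step: every honest party invokes its $\BGP$ instance at local time $3\Delta$ and, by Lemma~\ref{lemma:ExistingSBAGuarantees}, holds a $\BGP$-output in $\{0,1\}^\ell\cup\{\bot\}$ by local time $3\Delta+\TimeBGP=\TimeBCAST$, so it can run the local-computation step and obtain an output exactly at time $\TimeBCAST$. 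The communication bound $\Order(n^2\ell)$ is the sum of the costs of one $\PiACast$ and one $\BGP$ on $\ell$-bit inputs.

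For the synchronous network: if $\Sender$ is honest, then by $t$-Liveness and $t$-Validity of $\PiACast$ every honest party receives $m$ from $\Sender$'s Acast by time $3\Delta$, hence inputs $m$ to $\BGP$, and by $t$-Validity of $\BGP$ outputs $m$ from $\BGP$; so the local-computation step yields $m$ for every honest party, giving $t$-Validity. For $t$-Consistency assume $\Sender$ is corrupt. By $t$-Consistency of $\BGP$ all honest parties hold a common $\BGP$-output $v$. If $v=\bot$, no honest party passes the local-computation test and all output $\bot$. If $v=m^{\star}\neq\bot$, then by $t$-Validity of $\BGP$ (contrapositive: if all honest inputs were $\bot$ the output would be $\bot$) some honest $P_k$ fed a non-$\bot$ value into $\BGP$, i.e.\ received it from $\Sender$'s Acast by time $3\Delta$; by $t$-Consistency of $\PiACast$ every honest party then receives that same value from $\Sender$'s Acast by time $3\Delta+2\Delta\le\TimeBCAST$, and since an Acast delivers a single value this value is precisely $m^{\star}$. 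Hence at time $\TimeBCAST$ every honest party holds $m^{\star}$ from $\Sender$'s Acast and has $\BGP$-output $m^{\star}$, so all output $m^{\star}$; in both sub-cases the honest outputs agree. For $t$-Fallback Consistency (corrupt $\Sender$): a party that outputs $m^{\star}\neq\bot$ through fallback mode had output $\bot$ in regular mode, so by $t$-Consistency all honest parties output $\bot$ in regular mode and are in fallback mode; the party that receives $m^{\star}$ from $\Sender$'s Acast at time $T$ triggers, via $t$-Consistency of $\PiACast$, every honest party to receive $m^{\star}$ (and nothing else) from that Acast by time $T+2\Delta$, whence every honest party switches to $m^{\star}$ by $T+2\Delta$.

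For the asynchronous network: $t$-Weak Validity holds because, for honest $\Sender$, $t$-Validity of $\PiACast$ ensures any value an honest party obtains from $\Sender$'s Acast equals $m$, so the local-computation step returns $m$ or $\bot$. $t$-Fallback Validity holds because, for honest $\Sender$, $t$-Liveness and $t$-Validity of $\PiACast$ guarantee every honest party eventually receives $m$ from $\Sender$'s Acast, so every honest party that output $\bot$ at time $\TimeBCAST$ eventually switches to $m$. $t$-Weak Consistency: a non-$\bot$ regular-mode output is, by construction, a value received from $\Sender$'s Acast, and by $t$-Consistency of $\PiACast$ all honest parties that ever obtain a value from $\Sender$'s Acast obtain the same one, so all non-$\bot$ regular-mode outputs coincide (here $\BGP$ plays no role — it is needed only to upgrade weak to full consistency in the synchronous case). $t$-Fallback Consistency: if an honest party outputs $m^{\star}\neq\bot$ through either mode, it received $m^{\star}$ from $\Sender$'s Acast, so by $t$-Consistency of $\PiACast$ every honest party eventually receives $m^{\star}$ (and only $m^{\star}$) from $\Sender$'s Acast; an honest party with a non-$\bot$ regular-mode output already output $m^{\star}$ by the weak-consistency argument, while one with regular-mode output $\bot$ is in fallback mode and switches to $m^{\star}$ upon receiving it.

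The one step that goes beyond a direct appeal to the two building-block lemmas is the synchronous $t$-Consistency argument: it is the sole place where $\BGP$ does real work, and the delicate points are (i) using the contrapositive of $\BGP$-validity to extract an honest party that already holds $m^{\star}$ at time $3\Delta$, and (ii) checking that Acast's $2\Delta$ consistency window fits inside the remaining $\TimeBGP$ budget before the time-out at $\TimeBCAST$, so that the propagated value reaches everyone in time for the regular-mode decision.
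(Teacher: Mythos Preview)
Your proof follows the paper's approach and is almost entirely correct, but there is one genuine gap in the synchronous $t$-Consistency argument. You case-split on the common $\BGP$-output $v$; when $v = m^\star \neq \bot$, the contrapositive of $\BGP$-validity tells you only that some honest $P_k$ fed a \emph{non-$\bot$} value $m'$ into $\BGP$ (i.e.\ received $m'$ from $\Sender$'s Acast by time $3\Delta$)---it does \emph{not} tell you that $m' = m^\star$. Your sentence ``since an Acast delivers a single value this value is precisely $m^{\star}$'' does not bridge this: Acast consistency pins all honest Acast-outputs to the single value $m'$, but nothing in the stated properties of $\BGP$ forces its output $m^\star$ to coincide with $m'$ when the honest inputs are a mix of $m'$ and~$\bot$. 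In the sub-case $m' \neq m^\star$ your intermediate conclusion ``all output $m^\star$'' is actually false (every honest party fails the local test and outputs~$\bot$), though overall consistency still holds.

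The paper sidesteps this by choosing the case-split differently: it starts from an honest $P_i$ that outputs $m^\star \neq \bot$ in $\BCAST$, which by the local-computation rule means $P_i$ \emph{received $m^\star$ from the Acast}. Acast consistency then immediately pins the unique Acast-deliverable value to $m^\star$, and only now is the contrapositive of $\BGP$-validity invoked to place an honest party holding $m^\star$ already at time $3\Delta$; Acast consistency then propagates $m^\star$ to everyone by time $5\Delta < \TimeBCAST$. Your argument is easily repaired either along these lines or by an explicit further sub-case on whether $m' = m^\star$ (in the $m' \neq m^\star$ branch all honest parties output $\bot$, so consistency still holds). The remaining parts of your proof match the paper; your asynchronous weak-consistency argument is in fact a bit crisper than the paper's, which merely points back to the synchronous argument.
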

\begin{proof}
The {\it liveness} (both for the {\it synchronous} as well {\it asynchronous} network) simply follows from the fact that 
 every honest party outputs something (including $\bot$) at (local)
  time $\TimeBCAST = 3\Delta + \TimeBGP$, where $\TimeBGP = (12n - 6) \cdot \Delta$.
   We next prove the  rest of the properties of the protocol in the {\it synchronous} network, for which we rely on the properties of
   Acast and $\BGP$ in the {\it synchronous} network.
 
 If $\Sender$ is {\it honest}, then due to the {\it liveness} and {\it validity} properties of $\PiACast$ in the {\it synchronous} network, 
  at time $3\Delta$, 
  every {\it honest} party $P_i$ receives $m$
  from the Acast of $\Sender$. Hence, every honest party participates with input $m$ in the instance of $\BGP$.
  From the {\it guaranteed liveness} and {\it validity} properties
   of $\BGP$ in {\it synchronous} network, at time $3\Delta + \TimeBGP$, every honest party will have
  $m$ as the output from $\BGP$. Hence, each honest party has the output $m$ at time $\TimeBCAST$, thus proving that {\it validity} is achieved.

    For {\it consistency}, we consider a {\it corrupt} $\Sender$.
    We first note that each honest party will have the {\it same} output from the instance of $\BGP$ at time $\TimeBCAST$, which follows from the
    {\it consistency} property of $\BGP$ in {\it synchronous} network.
    If all honest honest parties have the output $\bot$ for $\BCAST$ at time $\TimeBCAST$, then consistency holds trivially. So consider the case when
    some {\it honest} party, say $P_i$, has the output $m^{\star} \neq \bot$ for $\BCAST$ at time $\TimeBCAST$. 
    This implies that the output of $\BGP$ is $m^{\star}$ for every honest party.
    Moreover, it also implies that at time $3\Delta$, at least one {\it honest} party, say $P_h$, has received $m^{\star}$ from the Acast of $\Sender$. 
    Otherwise, all honest parties would participate with input $\bot$ in the instance of $\BGP$ and from the {\it validity} of $\BGP$ in the {\it synchronous}
    network, every honest party would compute $\bot$ as the output during $\BGP$, which is a contradiction.
    Since $P_h$ has received $m^{\star}$ from $\Sender$'s Acast at time $3\Delta$, it follows from the {\it consistency} property of 
    $\PiACast$ in the {\it synchronous}
    network that {\it all} honest parties will receive
     $m^{\star}$ from $\Sender$'s Acast by time $5\Delta$. Moreover, $5\Delta < (12n - 3) \cdot \Delta$ holds.
    Consequently, by
     time $(12n - 3) \cdot \Delta$, {\it all} honest parties will receive $m^{\star}$ from $\Sender$'s Acast and will have
     $m^{\star}$ as the output of
    $\BGP$ and hence, output $m^{\star}$ for $\BCAST$.   
    
    For {\it fallback consistency}, we have to consider a {\it corrupt} $\Sender$. Let $P_h$ be an {\it honest} party who outputs
     $m^{\star} \neq \bot$ at time $T$ through fallback-mode.
     Since the steps of fallback-mode are executed after time $\TimeBCAST$, it follows that 
     $T > \TimeBCAST$. We first note that this implies that {\it every} honest party has output $\bot$ at 
    time $\TimeBCAST$, through regular-mode.
     This is because, from the proof of the {\it consistency} property of $\BCAST$, if any {\it honest} party has an output 
    $m' \neq \bot$ at time $\TimeBCAST$, then {\it all} honest parties (including $P_h$) also must have computed the output 
    $m'$ at time $\TimeBCAST$, through regular-mode.
    And hence, $P_h$ will never change its output to $m^{\star}$.\footnote{Recall that in the protocol the parties who obtain
    an output different from $\bot$ at time $\TimeBCAST$, never change their output.} 
    Since $P_h$ has computed the output $m^{\star}$, it means that at time $T$, it has received $m^{\star}$ from the Acast of
    $\Sender$. It then follows from the {\it consistency} of $\PiACast$ in the {\it synchronous} network that every honest party will also receive 
    $m^{\star}$ from the Acast of $\Sender$, latest by time $T + 2 \Delta$ and output $m^{\star}$ through fallback-mode.

   We next prove the properties of the protocol $\BCAST$ in an {\it asynchronous} network, for which we depend upon the properties of $\PiACast$ in the {\it asynchronous} network.   
    The {\it weak-validity} property follows from the {\it validity} property of $\PiACast$ in the {\it asynchronous} network, which ensures that
    no honest party $P_i$ ever receives an $m'$ from the Acast of $\Sender$ where $m' \neq m$.
    So if at all $P_i$ outputs a value different from $\bot$ at time $\TimeBCAST$, it has to be $m$.
       The {\it weak-consistency} property follows using similar arguments
   as used to prove {\it consistency} in the {\it synchronous} network, but relying instead on the {\it validity} and 
     {\it consistency}
     properties of $\PiACast$ in the asynchronous network. The latter property ensures that even if the adversary has full control over message scheduling in the
   {\it asynchronous} network, it cannot ensure that
    for a {\it corrupt} $\Sender$, two different honest parties end up receiving $m_1$ and $m_2$ from the Acast of $\Sender$, where $m_1\neq m_2$. 
    
    For {\it fallback validity}, consider an {\it honest} $\Sender$ and let 
     $P_i$ be an {\it honest} party,
      who outputs $\bot$ at (local) time $\TimeBCAST$ through regular-mode.
       Since the parties keep on participating in the protocol beyond time $\TimeBCAST$,
       it follows from the {\it liveness} and {\it validity} properties of $\PiACast$ in the {\it asynchronous} network that party $P_i$ will {\it eventually} 
     receive
      $m$ from the Acast of $\Sender$ through the fallback-mode of $\BCAST$. Consequently, party $P_i$ eventually changes its output from
     $\bot$ to $m$.
     
     For {\it fallback consistency}, we consider a {\it corrupt} $\Sender$ and let 
     $P_j$ be an {\it honest} party, who outputs some $m^{\star}$ at time $T$ where $T \geq \TimeBCAST$.
      This implies that $P_j$ has
      obtained $m^{\star}$ from the Acast of $\Sender$.
           Now, consider an arbitrary {\it honest} $P_i$. From the {\it liveness} and {\it weak consistency} properties of $\BCAST$ in {\it asynchronous} network,
     it follows that $P_i$ outputs either $m^{\star}$ or $\bot$ at local time $\TimeBCAST$, through the regular-mode.
     If $P_i$ has output $\bot$, then from the {\it consistency}
     property of $\PiACast$ in the {\it asynchronous} network, it follows that 
     $P_i$ will also eventually obtain $m^{\star}$ from the Acast of $\Sender$ through the fallback-mode of $\BCAST$. 
     Consequently, party $P_i$ eventually changes its output from
     $\bot$ to $m^{\star}$.
     
     The {\it communication complexity} follows from the communication complexity of $\BGP$ and $\PiACast$.
\end{proof}
\noindent We next discuss few terminologies for $\BCAST$, which we use in the rest of the paper.
\paragraph {\bf Terminologies for $\BCAST$:}  
 When we say that  ``{\it $P_i$ broadcasts $m$}", we mean that $P_i$ invokes $\BCAST$ as  $\Sender$
  with input $m$ and the parties participate in this instance. Similarly,
   when we say that ``{\it $P_j$ receives $m$ from the broadcast of $P_i$ through regular-mode}",
    we mean that $P_j$ has the output $m$
   at time $\TimeBCAST$, during the instance of $\BCAST$.
   Finally, when we say that ``{\it $P_j$ receives $m$ from the broadcast of $P_i$ through fallback-mode}",
      we mean that $P_j$ has the output $m$
      after time $\TimeBCAST$
  during the instance of $\BCAST$. 
  \subsection{$\BCAST + \PiABA \Rightarrow$ best-of-both-worlds BA}
We now show how to combine the protocols $\BCAST$ and $\PiABA$ to get our best-of-both-worlds BA protocol $\HBA$.
 For this, we use
 an idea used in \cite{BKL19}, to get a best-of-both-worlds BA protocol with {\it conditional} security. In the protocol, 
 every party first broadcasts its input bit through an instance of $\BCAST$. If the network is
  {\it synchronous}, then all honest parties should have received the inputs of all the
  (honest) sender parties from their broadcasts through regular-mode, within time $\TimeBCAST$.
 Consequently, at time $\TimeBCAST$, the parties decide an output for all the $n$ instances
  of $\BCAST$. Based on these outputs, the parties decide their respective inputs for
 an instance of the $\PiABA$ protocol. Specifically, if “sufficiently many” outputs from the $\BCAST$
 instances are found to be the same, then the parties consider it as their input for
  the $\PiABA$ instance. Else, they stick to their original inputs. The overall output
  of the protocol is then set to be the output from $\PiABA$.
\begin{protocolsplitbox}{$\HBA$}{The best-of-both-worlds BA protocol.
  The above code is executed by every $P_i \in \PartySet$.}{fig:BA}
\justify
\begin{myitemize}
\item[--] 
 On having input $b_i \in \{0, 1 \}$, broadcast $b_i$.
\item[--] For $j = 1, \ldots, n$, let $b_i^{(j)} \in \{0, 1, \bot \}$ be received from the broadcast of $P_j$ through regular-mode. 
  Include $P_j$ to a set $\R$, if $b_i^{(j)} \neq \bot$. 
  Compute the input $v_i^{\star}$ for an instance of $\ABA$ as follows.
    \begin{myitemize}
    \item[--] If $|\R| \geq n - t$, then set $v_i^{\star}$ to the majority bit among 
    the $b_i^{(j)}$ values of the parties in $\R$.\footnote{If there is no majority, then set $v_i^{\star} = 1$.} 
    \item[--] Else set $v_i^{\star} = b_i$.
    \end{myitemize}
\item[--] {\color{red} At time $\TimeBCAST$}, participate in an instance of $\ABA$ with input $v_i^{\star}$.
 Output the result of $\ABA$.
\end{myitemize}
\end{protocolsplitbox}

We next prove the properties of the protocol $\HBA$. We note that 
 protocol $\HBA$ is invoked only $\Order(n^3)$ times in our MPC protocol, which is {\it independent} of $\ckt$.
  Consequently,  we do not 
  focus on the {\it exact} communication complexity of $\HBA$. 
\begin{theorem}
\label{thm:HBA}
Let $t < n/3$ and let $\PiABA$ be a randomized protocol, satisfying the conditions as per Lemma \ref{lemma:ABAGuarantees}.
   Then $\HBA$ achieves the following, where every party participates with an input bit.
\begin{myitemize}
\item[--]  {\bf Synchronous Network}: 
  The protocol is a {\it $t$-perfectly-secure} SBA protocol,
  where all honest parties obtain an output 
  within time $\TimeHBA = \TimeBCAST + \TimeABA$. 
  The protocol incurs a communication of $\Order(\mbox{poly(n)} \log|\F|)$ bits from the honest parties.
\item[--]  {\bf Asynchronous Network}: The protocol is a {\it $t$-perfectly-secure} ABA protocol
 with an expected communication of $\Order(\mbox{poly(n)} \log|\F|)$ bits.
\end{myitemize}
\end{theorem}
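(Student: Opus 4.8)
The plan is to derive every property of $\HBA$ from the guarantees of $\BCAST$ (Theorem~\ref{thm:DS}) and $\PiABA$ (Lemma~\ref{lemma:ABAGuarantees}), together with the bound $t<n/3$. Since the output of $\HBA$ is precisely the output of its single $\ABA$ instance, \emph{consistency} in both network types and \emph{liveness} in the asynchronous network come essentially for free: $\PiABA$ is a $t$-perfectly-secure ABA protocol in an asynchronous network (so $t$-consistency and $t$-almost-surely liveness hold there), and it also guarantees $t$-consistency in a synchronous network. Moreover, by the $t$-liveness of $\BCAST$ every honest party has all $n$ regular-mode outputs by (local) time $\TimeBCAST$ and therefore does invoke $\ABA$. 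The real content is thus (i) \emph{guaranteed liveness} with the time bound $\TimeHBA=\TimeBCAST+\TimeABA$ in the synchronous case, and (ii) \emph{validity} in both networks; both hinge on the input $v_i^\star$ that the honest parties feed into $\ABA$.

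For the synchronous case I would first observe that, by the $t$-validity of $\BCAST$, every honest $P_i$ receives $b_j$ through regular-mode from each honest $P_j$ by time $\TimeBCAST$, so the set $\R$ that $P_i$ forms contains all $\ge n-t$ honest parties; by the $t$-consistency of $\BCAST$, for a corrupt $P_j$ all honest parties obtain the same regular-mode value (possibly $\bot$). Hence the vector $(b_i^{(1)},\dots,b_i^{(n)})$ and the set $\R$ are identical across honest parties, so all honest parties compute the \emph{same} input $v^\star$ for $\ABA$ and have $|\R|\ge n-t$. By Lemma~\ref{lemma:ABAGuarantees}, when all honest parties enter $\ABA$ with the same input it has guaranteed liveness with output within $\TimeABA$; thus every honest party outputs by time $\TimeBCAST+\TimeABA=\TimeHBA$, and $t$-consistency of $\ABA$ transfers to $\HBA$. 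For $t$-validity, if all honest parties hold input $b$ then every honest party in $\R$ contributes $b$, the $\ge n-t$ honest entries outnumber the $\le t$ corrupt entries because $n-t>t$, so the local majority is $b$, hence $v^\star=b$, and $t$-validity of $\ABA$ in a synchronous network gives output $b$.

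For the asynchronous case the only nontrivial point is validity, where I must use the weaker $\BCAST$ guarantees. Assume all honest parties hold input $b$. By $t$-weak validity of $\BCAST$ in an asynchronous network, each honest $P_i$ receives either $b$ or $\bot$ through regular-mode from every honest $P_j$, i.e.\ $b_i^{(j)}\in\{b,\bot\}$. If $P_i$ finds $|\R|<n-t$, it sets $v_i^\star=b_i=b$. If $|\R|\ge n-t$, then $\R$ contains at most $t$ corrupt parties, hence at least $n-2t$ honest parties, each of which lies in $\R$ and therefore contributes $b_i^{(j)}=b$; since $n-2t>t$, the majority among $\{b_i^{(j)}\}_{P_j\in\R}$ is $b$, so again $v_i^\star=b$. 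Thus every honest party enters $\ABA$ with input $b$, and $t$-validity of $\ABA$ in an asynchronous network yields output $b$, while $t$-consistency and $t$-almost-surely liveness of $\ABA$ supply the remaining ABA properties of $\HBA$. The communication bound is obtained by summing the $n$ instances of $\BCAST$ on $\Order(1)$-bit inputs ($\Order(n^3)$ bits by Theorem~\ref{thm:DS}) and the (expected, in the asynchronous case) $\Order(\mathrm{poly}(n)\log|\F|)$ bits of the single $\ABA$ instance.

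The main obstacle, and essentially the only step that is not bookkeeping, is the asynchronous validity argument: because $\BCAST$ provides only \emph{weak} validity and consistency asynchronously, honest parties may see differing regular-mode outputs, and one has to verify that the branch $|\R|\ge n-t$ still forces the local majority to equal $b$. This is precisely where $t<n/3$ (equivalently $n-2t>t$) is used, and it may be cleanest to isolate this counting fact as a short preliminary claim before assembling the proof.
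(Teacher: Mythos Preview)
Your proposal is correct and follows essentially the same argument as the paper's proof: both establish that in the synchronous case the $t$-validity and $t$-consistency of $\BCAST$ force all honest parties to compute the \emph{same} input $v^\star$ for $\ABA$ (yielding guaranteed liveness within $\TimeBCAST+\TimeABA$ and validity via the $n-t>t$ majority count), and in the asynchronous case both perform the identical two-branch analysis on $|\R|$, using $t$-weak validity of $\BCAST$ together with $n-2t>t$ to conclude that every honest party feeds $b$ into $\ABA$. The only cosmetic difference is that you spell out the count ``at most $t$ corrupt in $\R$, hence at least $n-2t$ honest'' slightly more explicitly than the paper does.
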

\begin{proof}
We start with the properties in a {\it synchronous} network. The {\it $t$-liveness} property of $\BCAST$ in the {\it synchronous}
 network guarantees
 that all honest parties will have some output, from each instance of $\BCAST$ through regular-mode, at time $\TimeBCAST$. Moreover, 
  the {\it $t$-validity} and {\it $t$-consistency} properties of $\BCAST$ in the {\it synchronous} network
   guarantee that irrespective of the sender parties,
  {\it all} honest parties will have a common output from {\it each} individual instance of $\BCAST$, at time $\TimeBCAST$.
  Now since the parties decide their respective inputs for the instance of $\ABA$ {\it deterministically} based on the individual outputs
  from the $n$ instances of $\BCAST$ at time $\TimeBCAST$, it follows that 
  all honest parties participate with a {\it common} input in the protocol $\ABA$. Hence, all honest parties obtain an output by the end of
  time $\TimeBCAST + \TimeABA$, thus ensuring {\it $t$-guaranteed liveness} of $\HBA$. 
  Moreover, the {\it $t$-consistency} property of $\ABA$ in the {\it synchronous} network
   guarantees that all honest parties have a {\it common} output from the instance of $\ABA$, which is taken as the output of
   $\HBA$,  thus
  proving the {\it $t$-consistency} of $\HBA$.

    For proving the {\it validity} in the synchronous network, let all {\it honest} parties have the same input bit $b$. 
         From the 
    {\it $t$-consistency} of $\BCAST$ in the {\it synchronous} network,
    all honest parties will receive $b$ as the output at time $\TimeBCAST$ in all the 
    $\BCAST$ instances, corresponding to the {\it honest} sender parties.
    Since there are at least $n - t$ honest parties, 
     it follows that all honest parties will find a {\it common} subset $\R$ in the protocol,
    as the set of honest parties constitutes a candidate $\R$. Moreover, all honest parties will be present in $\R$, as $n - t > t$ holds.
    Since the set of honest parties constitute a majority in $\R$, it follows that 
    all honest parties will participate with input $b$ in the
    instance of $\ABA$ and hence output $b$ at the end of $\ABA$, which follows from the
   {\it $t$-validity} of $\ABA$ in the {\it synchronous} network. This proves the {\it $t$-validity} of $\HBA$.
   
   We next prove the properties of $\HBA$ in an {\it asynchronous} network. 
    The {\it $t$-consistency} of the protocol $\HBA$ follows from the {\it $t$-consistency} of the protocol
    $\ABA$ in the {\it asynchronous} network, since the overall output of the protocol $\HBA$ is same as the output
    of the protocol $\ABA$. The {\it $t$-liveness} of the protocol $\BCAST$ in the {\it asynchronous}
    network guarantees that all honest parties will have some output from all the $n$ instances of $\BCAST$
    at local time $\TimeBCAST$ through regular-mode. Consequently, all honest parties will participate with some input 
    in the instance of $\ABA$. The {\it $t$-almost-surely liveness} of
    $\ABA$ in the {\it asynchronous} network then implies the
    {\it $t$-almost-surely liveness} of
    $\HBA$.
    
    For proving the validity in an {\it asynchronous} network, let all {\it honest} parties have the same input bit $b$. 
     We claim that all {\it honest} parties participate with input
    $b$ during the instance of $\ABA$. The {\it $t$-validity} of $\ABA$ in the {\it asynchronous} network then
    automatically implies the  {\it $t$-validity} of $\HBA$. 
    
    To prove the above claim, consider an arbitrary
    {\it honest} party $P_h$. There are two possible cases. If $P_h$ fails to find a subset $\R$ satisfying the
    protocol conditions, then the claim holds trivially, as $P_h$ participates in the instance of $\ABA$ with its input for
    $\HBA$, which is the bit $b$. So consider the case when $P_h$ finds a subset $\R$, such that
    $|\R| \geq n - t$ and where corresponding to each $P_j \in \R$, 
    party $P_h$ has computed an output $b_h^{(j)} \in \{0, 1 \}$ at local time $\TimeBCAST$ during the instance $\BCAST^{(j)}$,
    through regular-mode.
    Now consider the subset of {\it honest} parties in the set $\R$.
    Since $t < n/3$, it follows that $n - 2t > t$ and hence the {\it majority} of the parties in $\R$ will be {\it honest}. 
       Moreover, $P_h$ will compute the output $b$ at local time $\TimeBCAST$ in the instance of $\BCAST$, corresponding to
    {\it every honest} $P_j$ in $\R$, which follows from the {\it $t$-weak validity} of $\BCAST$ in the {\it asynchronous} network.
    From these arguments, it follows that $P_h$ will
     set $b$ as its input for the instance of $\ABA$, thus proving the claim.
    
   The communication complexity, both in a synchronous as well as in an asynchronous network, follows easily from the 
   protocol steps and from the communication complexity of $\SBA$ and $\ABA$.
\end{proof}

\section{best-of-both-worlds Perfectly-Secure VSS}
\label{sec:HVSS}
In this section, we present our best-of-both-worlds VSS protocol $\HSh$. In the protocol, there 
   exists a designated {\it dealer}
   $\D \in \PartySet$. The input for $\D$ consists of  $L$ number of $t_s$-degree polynomials $q^{(1)}(\cdot), \ldots, q^{(L)}$, where $L \geq 1$.
    And each (honest) $P_i$ is supposed to ``verifiably" receive 
    the {\it shares} $\{q(\alpha_i)\}_{\ell = 1, \ldots, L}$. Hence, the goal is to generate a $t_s$-sharing of 
     $\{q(0) \}_{\ell = 1, \ldots, L}$.\footnote{Note that the degree of $\D$'s polynomials is {\it always} $t_s$, {\it irrespective} of the
     underlying network type.}
   If $\D$ is {\it honest}, then in an {\it asynchronous} network, each (honest) $P_i$ {\it eventually} gets its shares, while in a {\it synchronous}
   network, $P_i$ gets its shares after some {\it fixed} time, such that the view of the adversary remains independent of $\D$'s polynomials. 
   The {\it verifiability} here ensures that if $\D$ is {\it corrupt}, then either no honest party obtains any output (if $\D$ does not invoke the protocol), or
    there exist $L$ number of $t_s$-degree polynomials, such that 
    $\D$ is ``committed" to these polynomials and 
    each honest $P_i$ gets its shares lying on these polynomials.
    Note that in the latter case, we {\it cannot} bound the time within which honest parties will have their shares, even
   if the network is {\it synchronous}. This is because a {\it corrupt} $\D$ may delay sending the messages arbitrarily and the parties {\it will not} know the exact network
   type.
   To design $\HSh$, we first design a ``weaker" primitive called {\it weak polynomial-sharing} (WPS), whose
    security guarantees are {\it identical} to that of VSS for an
   {\it honest} $\D$. However, for a {\it corrupt} $\D$, the security guarantees are ``weakened", as only a subset of the 
    honest parties may get their shares of the committed polynomials.
  \subsection{The best-of-both-worlds Weak Polynomial-Sharing (WPS) Protocol}
   For simplicity, we explain our WPS protocol $\WPS$, assuming $\D$ has a {\it single} $t_s$-degree polynomial $q(\cdot)$ as input.
    Later we discuss the modifications needed to handle $L$ polynomials efficiently. 
     Protocol $\WPS$ is obtained by carefully
      ``stitching" a {\it synchronous} WPS protocol with an {\it asynchronous} WPS protocol. We first explain these two individual protocols, followed by the
     procedure to stitch them together, where the parties will {\it not be} knowing the exact network type.
  \paragraph{\bf WPS in an Asynchronous Network:}
   In an {\it asynchronous} network, one can consider the following
   protocol $\PiAsynWPS$:
    $\D$ embeds $q(\cdot)$ in a random
    $(t_s, t_s)$-degree symmetric bivariate polynomial $Q(x, y)$ at $x = 0$ and distributes univariate polynomials lying on $Q(x, y)$ to respective parties.
  To verify whether $\D$ has distributed ``consistent" polynomials, the parties check for the
  pair-wise consistency of their supposedly common points and make public the results 
  through $\OK$ messages, if the tests are ``positive".
  Based on the $\OK$ messages, the parties prepare a {\it consistency graph} and look for an $\Star{t_a}$, say $(\ESet', \FSet')$.
   If $\D$ is {\it honest}, then $(\ESet', \FSet')$ will be obtained eventually, since the honest parties form a clique of size at least $n - t_a$.
   The existence of $(\ESet', \FSet')$ guarantees that the polynomials of the {\it honest} parties in 
  $\FSet'$ lie on a single
    $(t_s, t_s)$-degree symmetric bivariate polynomial $Q^{\star}(x, y)$, where $Q^{\star}(x, y) = Q(x, y)$ for an {\it honest} $\D$. This is because 
    $\ESet'$ has at least $t_s + 1$ {\it honest} parties with pair-wise consistent polynomials, defining $Q^{\star}(x, y)$.
    And the polynomial of every {\it honest} party in $\FSet'$ is pair-wise consistent with the polynomials of every {\it honest} party in $\ESet'$.
    The parties {\it outside}
    $\FSet'$ obtain their polynomials lying on $Q^{\star}(x, y)$ by applying OEC on the common points on these polynomials received from the parties in
    $\FSet'$.  Every $P_i$ then outputs $Q^{\star}(0, \alpha_i)$ as its share, which is same as $q^{\star}(\alpha_i)$, where $q^{\star}(\cdot) = Q^{\star}(0, y)$.
    Note that $\PiAsynWPS$ actually constitutes a VSS in the {\it asynchronous} network, as for an {\it honest} $\D$ every honest party eventually gets its share. 
    On the other hand, for a
    {\it corrupt} $\D$, every honest party eventually gets its share, if some honest party gets its share.
    \paragraph{\bf WPS for Synchronous Network:}
    $\PiAsynWPS$ fails in a {\it synchronous} network if there are $t_s$ corruptions. This is because
     only $n - t_s$ honest parties are guaranteed and hence the 
   parties may {\it fail} to find an $\Star{t_a}$. The existence of an $\Star{t_s}$, say $(\ESet, \FSet)$, in the consistency graph
   is not ``sufficient" to conclude that $\D$ has distributed consistent polynomials, lying on a $(t_s, t_s)$-degree symmetric bivariate polynomial.
    This is because if $\D$ is {\it corrupt}, then $\ESet$ is guaranteed to have {\it only}  $n - 2t_s - t_s > t_a$ {\it honest} parties, with pair-wise consistent polynomials.
    Whereas to define a $(t_s, t_s)$-degree symmetric bivariate polynomial, we need more than $t_s$ pair-wise consistent polynomials.
    On the other hand, if $\D$ is {\it corrupt}, then the honest parties in $\FSet$ {\it need not} constitute a clique. As a result, the polynomials of the
    honest parties in $\FSet$ need not be pair-wise consistent and hence need not
    lie on a 
     $(t_s, t_s)$-degree symmetric bivariate polynomial.
   
   To get rid of the above problem, the parties instead
    look for a ``special" $\Star{t_s}$ $(\ESet, \FSet)$, where the polynomials of all {\it honest} parties in $\FSet$ are guaranteed to lie on a  
   single $(t_s, t_s)$-degree symmetric bivariate polynomial. Such a special $(\ESet, \FSet)$ is bound to exist for an {\it honest}
   $\D$. 
   %
   Based on the above idea, protocol $\PiSynWPS$ for a {\it synchronous} network proceeds as follows.
   For ease of understanding, we explain the protocol as a sequence of communication {\it phases}, with the parties being {\it synchronized} in each phase. 
   
    In the {\it first} phase, $\D$ distributes the univariate polynomials, during the {\it second} phase
    the parties perform pair-wise consistency tests and during the {\it third} phase the parties make public 
    the results of {\it positive} tests. Additionally, the parties
    {\it also} make public the results of ``negative" tests through $\NOK$ messages and their respective versions of the disputed points
    (the
   $\NOK$ messages {\it were not} required for $\PiAsynWPS$). The parties then construct the consistency graph.
   Next $\D$ {\it removes} all the parties from consideration in its consistency graph, who
   have made public ``incorrect" $\NOK$ messages, whose version of the disputed points are {\it incorrect}. 
   Among the {\it remaining} parties, $\D$ checks for the presence of a set of at least $n - t_s$ parties $\WCORE$, such that the polynomial of 
   {\it every} party in $\WCORE$ is publicly confirmed to be pair-wise consistent with the polynomials of at least $n - t_s$ parties within $\WCORE$.
   If a $\WCORE$ is found, then $\D$ checks for the presence of an $\Star{t_s}$, say $(\ESet, \FSet)$ among  $\WCORE$
   and broadcasts $(\WCORE, \ESet, \FSet)$ during the {\it fourth} phase, if $\D$ finds $(\ESet, \FSet)$.
          The parties upon receiving $(\WCORE, \ESet, \FSet)$ verify if $\WCORE$ is of size at least $n - t_s$ and every party in $\WCORE$ has an edge with at least
          $n - t_s$ parties within $\WCORE$ in their local copy of the consistency graph. The parties also check whether indeed 
          $(\ESet, \FSet)$ constitutes an $\Star{t_s}$ among the parties within $\WCORE$. 
       Furthermore, the parties now {\it additionally} verify whether any pair of parties $P_j, P_k$
   from $\WCORE$ have made public  ``conflicting"
   $\NOK$ messages during the {\it third} phase. That is, if there exists any $P_j, P_k \in \WCORE$ who made public 
   $\NOK$ messages with $q_{jk}$ and $q_{kj}$ respectively during the {\it third} phase such that $q_{jk} \neq q_{kj}$, then $\WCORE$ is {\it not} accepted.
    The idea here is that if $\D$ is {\it honest}, then at least one of $P_j, P_k$ is bound to be {\it corrupt}, whose corresponding $\NOK$ message is incorrect. 
    Since 
   $\D$ also would have seen these public $\NOK$ messages during the third phase, it should have have discarded the corresponding corrupt party, 
   before finding $\WCORE$. Hence if
   a $(\WCORE, \ESet, \FSet)$  is accepted at the end of fourth phase, then the polynomials of all {\it honest}
   parties in $\WCORE$ are guaranteed to be pair-wise consistent and lie on a single $(t_s, t_s)$-degree symmetric bivariate polynomial, say
   $Q^{\star}(x, y)$, where $Q^{\star}(x, y) = Q(x, y)$ for an {\it honest} $\D$. This will further guarantee that the polynomials of all {\it honest}
   parties in $\FSet$ {\it also} lie on $Q^{\star}(x, y)$, as $\FSet \subseteq \WCORE$.

   If a $(\WCORE, \ESet, \FSet)$ is accepted, then each $P_i \in \WCORE$ outputs the constant term of its univariate polynomial as its share.
   On the other hand, the parties outside $\WCORE$ attempt to obtain their corresponding polynomials lying on 
   $Q^{\star}(x, y)$ by applying OEC on the common points on these polynomials received from the parties in
    $\FSet$. And if a $t_s$-degree polynomial is obtained, then the constant term of the polynomial is set as the share.
    For  an {\it honest}
   $\D$, each {\it honest} $P_i$ will be present in $\WCORE$ and hence will have the share $q(\alpha_i)$. On the other hand, if a $\WCORE$ is accepted
   for a {\it corrupt} $\D$, then all the {\it honest} parties in $\WCORE$ (which are at least $t_s + 1$ in number) will have their shares lying on $q^{\star}(\cdot) = Q^{\star}(0, y)$.
   Moreover, even if an {\it honest} party $P_i$ {\it outside} $\WCORE$ is able to compute its share, then it is the same as
   $q^{\star}(\alpha_i)$ due to the OEC mechanism. However, for a {\it corrupt} $\D$, all the {\it honest} parties {\it outside} $\WCORE$ may {\it not} be able to obtain their
   desired share, as $\FSet$ is guaranteed to have only $n - 2t_s > t_s + t_a$ {\it honest} parties and OEC may fail. It is precisely for this reason that
   $\PiSynWPS$ {\it fails} to qualify as a VSS.  
   \paragraph{\bf $\PiSynWPS + \PiAsynWPS \Rightarrow$ best-of-both-worlds WPS Protocol $\WPS$:}
   We next discuss how to combine protocols $\PiSynWPS$ and $\PiAsynWPS$ to get our best-of-both-worlds
    WPS protocol called $\WPS$. In protocol $\WPS$ (Fig \ref{fig:WPS}), the parties first run $\PiSynWPS$ {\it assuming} a {\it synchronous} network,
 where $\BCAST$ is used to make any value public by setting $t = t_s$ in the protocol $\BCAST$.
  If $\D$ is {\it honest} then in a {\it synchronous} network, the first, second, third and fourth phase of $\PiSynWPS$ would have been over by time
  $\Delta, 2\Delta, 2\Delta + \TimeBCAST$ and $2\Delta + 2\TimeBCAST$ respectively and 
   by time $2\Delta + 2\TimeBCAST$,
   the parties should have accepted a $(\WCORE, \ESet, \FSet)$. 
    However, in an {\it asynchronous} network, parties may have different ``opinion" regarding the acceptance of a $(\WCORE, \ESet, \FSet)$. 
     This is because it may be possible that only a subset of the honest parties accept a $(\WCORE, \ESet, \FSet)$ within local time 
     $2\Delta + 2\TimeBCAST$.
    Hence at time $2\Delta + 2\TimeBCAST$, the parties run an instance of our best-of-both-worlds BA protocol $\HBA$, to check whether
    any $(\WCORE, \ESet, \FSet)$ is accepted.
   
    If the parties conclude that a
   $(\WCORE, \ESet, \FSet)$ is accepted, then the parties compute their {\it WPS-shares} as per $\PiSynWPS$. 
   However, we need to ensure that for a {\it corrupt} $\D$ in a {\it synchronous} network, 
   if the polynomials of the {\it honest} parties in $\WCORE$ are {\it not} pair-wise consistent, then the corresponding conflicting
   $\NOK$ messages are received within time $2\Delta + \TimeBCAST$ (the time required for the  {\it third} phase of $\PiSynWPS$
   to be over), so that $(\WCORE, \ESet, \FSet)$ is {\it not} accepted.
   This is ensured by enforcing even a {\it corrupt} $\D$ to send the respective polynomials of all the {\it honest} parties in $\WCORE$ by time $\Delta$, so that the 
   pair-wise consistency test between every pair of {\it honest} parties in $\WCORE$ is over by time $2\Delta$. 
    For this, the parties are asked to wait for some ``appropriate" time,
   before starting the pair-wise consistency tests and also before making public the results of pair-wise consistency tests. 
   The idea is to ensure that if the polynomials of the {\it honest} parties in $\WCORE$ are {\it not} delivered within time $\Delta$ (in a {\it synchronous} network),
    then the results of the
   pair-wise consistency tests also get {\it delayed} beyond time $2\Delta + \TimeBCAST$ (the time-out of the {\it third} phase of $\PiSynWPS$).
   This in turn will ensure that no
    $\WCORE$ is accepted within time $2\Delta + 2\TimeBCAST$.
    
     If the parties conclude that no
 $(\WCORE, \ESet, \FSet)$ is accepted within time $2\Delta + 2\TimeBCAST$, then it implies that
  either $\D$ is {\it corrupt} or the network is {\it asynchronous} and hence the parties resort to 
  $\PiAsynWPS$. However, $\D$ {\it need not} have to start afresh and distribute polynomials on a 
  ``fresh" bivariate polynomial.
  Instead, $\D$ continues with the consistency graph formed using the $\OK$ messages received 
 as part of $\PiSynWPS$ and searches for an $\Star{t_a}$. If $\D$ is {\it honest} and the network is {\it asynchronous},
  then the parties eventually obtain their shares.
  
  Notice that in an {\it asynchronous} network, it might be possible that the parties (through $\HBA$) conclude 
  that $(\WCORE, \ESet, \FSet)$ is accepted, if some honest party(ies) accepts a $(\WCORE, \ESet, \FSet)$ within the time-out $2\Delta + 2\TimeBCAST$.
  Even in this case, the polynomials of all {\it honest} parties in $\WCORE$ lie on a single 
  $(t_s, t_s)$-degree symmetric bivariate polynomial $Q^{\star}(x, y)$ for a {\it corrupt} $\D$.
   This is because there will be at least $n - 2t_s - t_a > t_s$ {\it honest} parties in $\ESet$ with pair-wise consistent polynomials, defining
  $Q^{\star}(x, y)$, and the polynomial of every {\it honest} party in $\FSet$ will be pair-wise consistent with the polynomials of {\it all honest} 
  parties in $\ESet$ and hence lie on $Q^{\star}(x, y)$ as well. Now consider any {\it honest} $P_i \in (\WCORE \setminus \FSet)$.
  As part of $\PiSynWPS$, it is ensured that the polynomial of $P_i$ is consistent with the polynomials of at least $n - t_s$ parties among $\WCORE$.
  Among these $n - t_s$ parties, at least $n - 2t_s - t_a > t_s$ will be {\it honest} parties from $\FSet$.
  Thus, the polynomial of $P_i$ will also lie on $Q^{\star}(x, y)$. 
     \begin{protocolsplitbox}{$\WPS(\D, q(\cdot))$}{The best-of-both-worlds weak polynomial-sharing protocol for a single polynomial.}{fig:WPS}
\justify
  \begin{myitemize}
   \item {\bf Phase I --- Sending Polynomials}: 
	\begin{myitemize}
	      \item[--] $\D$ on having the input $q(\cdot)$, chooses a random $(t_s, t_s)$-degree symmetric bivariate polynomial
	      $Q(x, y)$ such that $Q(0, y) = q(\cdot)$ and sends $q_i(x) = Q(x, \alpha_i)$ to each party $P_i \in \Partyset$. 
	\end{myitemize}
    \item {\bf Phase II --- Pair-Wise Consistency}: Each $P_i \in \PartySet$ on receiving a $t_s$-degree polynomial $q_i(x)$ from $\D$ does the following.
             \begin{myitemize}
            \item[--] {\color{red} Wait till the local time becomes a multiple of $\Delta$} and then send $q_{ij} =  q_i(\alpha_j)$ to $P_j$, for $j = 1, \ldots, n$.   
             \end{myitemize}
     \item {\bf Phase III --- Publicly Declaring the Results of Pair-Wise Consistency Test}: Each $P_i \in \Partyset$ does the following.
           \begin{myitemize}
           \item[--] Upon receiving $q_{ji}$ from $P_j$, {\color{red} wait till the local time becomes a multiple of $\Delta$}.
           If a $t_s$-degree polynomial $q_i(x)$ has been received from $\D$, then
           do the following.
           \begin{myitemize}
           \item[--] Broadcast $\OK(i, j)$, if $q_{ji} = q_i(\alpha_j)$ holds.
           \item[--] Broadcast $\NOK(i, j, q_i(\alpha_j))$, if $q_{ji} \neq q_i(\alpha_j)$ holds.
           \end{myitemize}
          \end{myitemize}      
    \item {\bf Local computation --- Constructing Consistency Graph:} Each $P_i \in \Partyset$ does the following.
          \begin{myitemize}
           \item[--] Construct a {\it consistency graph} $G_i$ over $\Partyset$, where the edge $(P_j, P_k)$ is included in $G_i$, if $\OK(j, k)$ and $\OK(k, j)$ is received from the broadcast of $P_j$ and $P_k$ respectively, either through the regular-mode or fall-back mode.	
          \end{myitemize}
     \item {\bf Phase IV  --- Checking for an $\Star{t_s}$}: $\D$ does the following in its consistency graph
     $G_\D$ at  time $2\Delta + \TimeBCAST$.   
      \begin{myitemize}
          \item[--] Remove edges incident with $P_i$, if $\NOK(i, j,q_{ij})$ is received from the broadcast of $P_i$ through regular-mode
           and $q_{ij} \neq Q(\alpha_j,\alpha_i)$.
          \item[--] Set $\WCORE = \{P_i: \mathsf{deg}(P_i) \geq n-t_s\}$, where $\mathsf{deg}(P_i)$ denotes the degree of $P_i$ in $G_\D$.
           \item[--] Remove $P_i$ from $\WCORE$, if $P_i$ is {\it not} incident with at least $n - t_s$ parties in $\WCORE$. Repeat this step till no more parties can be 
           removed from $\WCORE$.
          \item[--] Run algorithm $\StarAlgo$ on $G_\D[\WCORE]$, where $G_\D[\WCORE]$ denotes the subgraph of $G_\D$ induced by the vertices in $\WCORE$.
           If an $\Star{t_s}$, say $(\ESet, \FSet)$, is obtained, then broadcast $(\WCORE, \ESet, \FSet)$.
      \end{myitemize}
     \item {\bf Local Computation --- Verifying and Accepting $(\WCORE, \ESet, \FSet)$: } Each $P_i \in \Partyset$ does the following at time $2\Delta + 2\TimeBCAST$.
	     \begin{myitemize}
    	    \item[--] If a $(\WCORE, \ESet, \FSet)$ is received from $\D$'s broadcast through regular-mode, then {\it accept} it if following {\it were true} at time $2\Delta + \TimeBCAST$:
    	    \begin{myitemize}
                \item[--] There exist no $P_j, P_k \in \WCORE$, such that 
                $\NOK(j, k, q_{jk})$ and $\NOK(k, j, q_{kj})$ messages {\it were} received from the broadcast of $P_j$ and $P_k$ respectively through regular-mode, 
                where $q_{jk} \neq q_{kj}$.
                \item[--] In the consistency graph $G_i$,
                  $\mathsf{deg}(P_j) \ge n-t_s$  for all $P_j \in \WCORE$.
                \item[--] In the consistency graph $G_i$, every $P_j \in \WCORE$ has 
                  edges with at least $n - t_s$ parties from $\WCORE$. 
                \item[--] $(\ESet, \FSet)$ {\it was} an $\Star{t_s}$ in the induced graph $G_i[\WCORE]$.
                \item[--] For every $P_j, P_k \in \WCORE$ where the edge $(P_j, P_k)$ is present in $G_i$, the $\OK(j, k)$ and $\OK(k, j)$ messages {\it were} received from the
                 broadcast of $P_j$ and $P_k$ respectively, through regular-mode.
            \end{myitemize} 
        \end{myitemize}
   \item {\bf Phase V --- Deciding Whether to Go for an $\Star{t_a}$}:
     At time $2\Delta + 2\TimeBCAST$, each $P_i \in \Partyset$ participates in an instance of $\HBA$ with input 
     $b_i = 0$ if a $(\WCORE, \ESet, \FSet)$ was accepted, else with input $b_i = 1$, and {\color{red} waits for time $\TimeHBA$}.
   \item {\bf Local Computation --- Computing WPS-share Through $\WCORE$}:
     If the output of $\HBA$ is 0, then each $P_i \in \Partyset$ computes its {\it WPS-Share} $s_i$ (initially set to $\bot$) as follows.
            \begin{myitemize}
            \item[--] If a $(\WCORE, \ESet, \FSet)$ is not yet received then wait till a $(\WCORE, \ESet, \FSet)$ is received from $\D$'s broadcast through fall-back mode.
            \item[--] If $P_i \in \WCORE$, then output $s_i = q_i(0)$.
            \item[--] Else, initialise a support set $\Support_i$ to $\emptyset$. 
             If $q_{ji}$ is received from $P_j \in \FSet$, include $q_{ji}$ to $\Support_i$. Keep executing $\OEC(t_s, t_s, \Support_i)$, 
             till a $t_s$-degree polynomial, say $q_i(\cdot)$, is obtained. Then, output $s_i = q_i(0)$.                             
                \end{myitemize}
     \item {\bf Phase VI --- Broadcasting an $\Star{t_a}$}: If the output of $\HBA$ is 1, then $\D$ does the following.
        \begin{myitemize}
            \item[--] After every update in the consistency graph $G_\D$, run $\StarAlgo$ on $G_\D$. If an $\Star{t_a}$, say
             $(\ESet', \FSet')$, is obtained, then broadcast $(\ESet', \FSet')$.
        \end{myitemize}
   \item \textbf{Local Computation --- Computing WPS-share Through $\Star{t_a}$}: If the output of $\HBA$ is 1, then each $P_i$ does the following to compute
    its {\it WPS-Share}.
        \begin{myitemize}
        \item[--] Waits till an $\Star{t_a}$ $(\ESet', \FSet')$ is obtained from the broadcast of $\D$, either through regular or fall-back mode. Upon receiving, wait
         till $(\ESet', \FSet')$ becomes an $\Star{t_a}$ in $G_i$.
       \item[--] If $P_i \in \FSet'$, then output $s_i= q_i(0)$. 
       \item[--] Else, initialise a support set $\Support_i$ to $\emptyset$. 
            If $q_{ji}$ is received from $P_j \in \FSet'$, include $q_{ji}$ to $\Support_i$. Keep executing $\OEC(t_s, t_s, \Support_i)$, 
             till a $t_s$-degree polynomial, say $q_i(\cdot)$, is obtained. Then, output $s_i = q_i(0)$.    
        \end{myitemize}
  \end{myitemize}    
%
\end{protocolsplitbox}

We next proceed to prove the properties of the protocol $\WPS$. We begin with showing that if $\D$ is {\it honest}, then the adversary does not learn
 anything additional about $q(\cdot)$, irrespective of the network type.
\begin{lemma}[{\it $t_s$-Privacy}]
 \label{lemma:WPSPrivacy}
In protocol $\WPS$, if $\D$ is honest, then irrespective of the network type, the view of the adversary remains independent of $q(\cdot)$.
 \end{lemma}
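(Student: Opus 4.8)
The plan is to show that, for an honest $\D$, everything the adversary sees during an execution of $\WPS$ can be produced by a (randomized) function of the univariate polynomials $\{q_i(x) = Q(x,\alpha_i)\}_{P_i \in \Bad}$ that $\D$ hands to the corrupt parties $\Bad$, together with the adversary's own coins and scheduling choices; the claim then follows from Lemma~\ref{lemma:bivariateII}. First I would enumerate the components of the adversary's view: (i) the polynomial $q_i(x)$ received from $\D$ by each corrupt $P_i$; (ii) every pairwise-consistency value $q_{ji}$ that an honest $P_j$ sends to a corrupt $P_i$; (iii) all values broadcast during the protocol, namely the $\OK$/$\NOK$ flags of honest parties, the tuples $(\WCORE, \ESet, \FSet)$ and $(\ESet', \FSet')$ broadcast by $\D$, and the transcript of the $\HBA$ instance; and (iv) every value $q_{ji}$ delivered to a corrupt $P_i$ by an honest $P_j \in \FSet$ (resp. $P_j \in \FSet'$) during the $\OEC$ steps.

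The core claim is that each of these is determined by $\{q_i(x)\}_{P_i \in \Bad}$ and adversarial choices. For (i) this is immediate. For (ii) and (iv): an honest $P_j$ only ever sends the single field element $q_j(\alpha_i)$ to $P_i$, and by the symmetry of $Q(x,y)$ we have $q_j(\alpha_i) = Q(\alpha_i,\alpha_j) = Q(\alpha_j,\alpha_i) = q_i(\alpha_j)$, which is simply a point on the corrupt party's own polynomial $q_i(x)$ and hence already in the view. For (iii): the $\OK(j,k)$ flags of honest parties carry no field elements and are deterministic functions of the pairwise tests; an honest $P_i$ broadcasts $\NOK(i,j,q_i(\alpha_j))$ only if the value $q_{ji}$ it received from $P_j$ disagrees with $q_i(\alpha_j)$, which (again by symmetry of $Q$) can never occur when $P_j$ is honest, so $P_j$ must be corrupt and $q_i(\alpha_j) = q_j(\alpha_i)$ is a point on $P_j$'s own polynomial. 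Therefore the consistency graph has all edges among honest parties present and all its remaining edges are functions of the messages the adversary itself sent; consequently $\D$'s computation of $\WCORE$, of which corrupt parties to discard for incorrect $\NOK$ messages, of $(\ESet,\FSet)$ and of $(\ESet',\FSet')$, as well as every honest party's $\HBA$ input bit, are all computable from $\{q_i(x)\}_{P_i \in \Bad}$ and the adversary's transcript, and the $\HBA$ messages are in turn a function of those input bits and the protocol's coins. This reasoning applies uniformly to the synchronous first stage $\PiSynWPS$ and the asynchronous fallback stage $\PiAsynWPS$, and it subsumes the asynchronous case since the adversary's message scheduling is itself one of the adversarial choices, and the regular/fallback modes of $\BCAST$ only change \emph{when}, not \emph{which}, values become public.

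Combining these observations, there is a randomized function which, on input $\{Q(x,\alpha_i)\}_{P_i \in \Bad}$ and the adversary's randomness, reproduces the adversary's view. Since at most $t_s$ parties are corrupt and $Q(x,y)$ is a random $(t_s,t_s)$-degree symmetric bivariate polynomial with $Q(0,y) = q(\cdot)$, Lemma~\ref{lemma:bivariateII} (applied with $d = t_s \ge |\Bad|$) shows that for any two $t_s$-degree polynomials $q^{(1)}(\cdot), q^{(2)}(\cdot)$ with $q^{(1)}(\alpha_i) = q^{(2)}(\alpha_i)$ for all $P_i \in \Bad$, the distribution of $\{Q(x,\alpha_i)\}_{P_i \in \Bad}$ is the same whether $Q(0,y)$ is chosen based on $q^{(1)}$ or $q^{(2)}$; hence the adversary's view is identically distributed in the two cases. (A corrupt $P_i$ does learn $q(\alpha_i) = q_i(0)$, so the statement is to be read as: the view reveals nothing about $q(\cdot)$ beyond the shares $\{q(\alpha_i)\}_{P_i \in \Bad}$ to which the corrupt parties are anyway entitled; in particular it is independent of the secret $q(0)$.) The step I expect to be the main obstacle is the bookkeeping in (iii): checking carefully that no broadcast value — and nothing inside the $\HBA$ sub-execution — ever exposes a point $Q(\alpha_i,\alpha_j)$ with \emph{both} $P_i$ and $P_j$ honest, and that this holds across the two stages of $\WPS$ and both modes of $\BCAST$.
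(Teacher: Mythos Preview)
Your proposal is correct and follows essentially the same approach as the paper: both argue that every field element the adversary obtains (the polynomials $q_i(x)$ for corrupt $P_i$, the pairwise values $q_j(\alpha_i)=q_i(\alpha_j)$ sent by honest $P_j$ to corrupt $P_i$, and any $\NOK$ payload, which can only concern a corrupt partner) is already determined by $\{Q(x,\alpha_i)\}_{P_i\in\Bad}$, and then invoke Lemma~\ref{lemma:bivariateII}. The paper's proof is considerably terser---it does not explicitly walk through the $\HBA$ transcript, the $\OEC$ step, or the two $\BCAST$ modes---whereas you spell these out, but the underlying idea and the key use of symmetry of $Q(x,y)$ are identical.
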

\begin{proof}
Let $\D$ be {\it honest}. We consider the worst case scenario, when the adversary controls up to $t_s$ parties. 
 We claim that throughout the protocol, the adversary learns at most $t_s$ univariate polynomials lying on $Q(x, y)$. 
  Since $Q(x, y)$ is a random 
  $(t_s, t_s)$-degree- symmetric-bivariate polynomial, it then follows from Lemma \ref{lemma:bivariateII} that the view of the adversary will be independent of 
   $q(\cdot)$. 
   We next proceed to prove the claim.

 Corresponding to every {\it corrupt} $P_i$, the adversary learns $Q(x, \alpha_i)$. Corresponding to every {\it honest} $P_i$, the adversary learns $t_s$ distinct points on 
  $P_i$'s univariate polynomial $Q(x, \alpha_i)$, through the pair-wise consistency checks. However, these points were already included in the view of the adversary
    (through the univariate polynomials under adversary's control).
    Hence no additional information about the polynomials of the honest parties is revealed during the pair-wise consistency checks.
   Furthermore, no {\it honest} $P_i$ ever broadcasts $\NOK(i, j, q_i(\alpha_j))$, corresponding to
    any {\it honest} $P_j$.  This is because the pair-wise consistency check will always pass for every pair of {\it honest} parties.
\end{proof}
We next prove the correctness property in a {\it synchronous} network.
\begin{lemma}[{\it $t_s$-Correctness}]
 \label{lemma:WPSSynCorrectness}
In protocol $\WPS$, if $\D$ is honest and the network is synchronous, then each honest $P_i$ outputs 
 $q(\alpha_i)$ at time $\TimeWPS = 2\Delta +  2\TimeBCAST + \TimeHBA$.
 \end{lemma}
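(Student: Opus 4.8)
The plan is to walk through $\WPS$ phase by phase under the two hypotheses (honest $\D$, synchronous network), pin down the wall-clock time at which each phase completes, and show that every honest $P_i$ reaches the ``output of $\HBA$ is $0$ and $P_i \in \WCORE$'' branch at time $\TimeWPS$, at which point it outputs $q_i(0) = Q(0,\alpha_i) = q(\alpha_i)$.

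\emph{Phases I--III: timing and the honest clique.} Because $\D$ is honest and the network synchronous, every honest $P_i$ holds $q_i(x) = Q(x,\alpha_i)$ by time $\Delta$, hence sends its pair-wise consistency values at local time $\Delta$; every honest pair receives each other's values by time $2\Delta$, the pair-wise test passes between every two honest parties, and each honest $P_i$ invokes $\BCAST$ on $\OK(i,j)$ at time $2\Delta$ for every honest $P_j$. By $t_s$-validity of $\BCAST$ in a synchronous network (Theorem~\ref{thm:DS}), all these $\OK$ messages are delivered through regular-mode by time $2\Delta+\TimeBCAST$, so at that time the consistency graph of every honest party contains all edges among the (at least $n-t_s$) honest parties; thus the set $\Honest$ of honest parties, which contains $\D$, forms a clique of size $\geq n-t_s$ in every such graph. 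I would also record two facts for later: an honest party never broadcasts $\NOK$ against another honest party, and any $\NOK(i,j,\cdot)$ it does broadcast carries the value $Q(\alpha_j,\alpha_i)$; hence $\D$'s edge-removal step in Phase~IV deletes no edge incident on an honest party, and $\Honest$ is still a clique in $G_{\D}$ after that step.

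\emph{Phase IV and acceptance.} Each honest party has degree $\geq n-t_s$ in the pruned $G_{\D}$ and survives the iterative pruning of $\WCORE$ (it is adjacent to all of $\Honest\subseteq\WCORE$, which has size $\geq n-t_s$), so $\Honest\subseteq\WCORE$ and $|\WCORE|\geq n-t_s$; since $G_{\D}[\WCORE]$ contains the clique $\Honest$ of size $\geq n-t_s$, $\StarAlgo$ returns an \Star{t_s}, say $(\ESet,\FSet)$, which $\D$ broadcasts at time $2\Delta+\TimeBCAST$, so every honest party receives it through regular-mode by time $2\Delta+2\TimeBCAST$. To see that every honest $P_i$ \emph{accepts} $(\WCORE,\ESet,\FSet)$ I would argue: (i) any party that broadcast an incorrect $\NOK$ reaching $\D$ through regular-mode by time $2\Delta+\TimeBCAST$ has had \emph{all} its incident edges removed by $\D$, hence is excluded from $\WCORE$; so no two parties of $\WCORE$ carry conflicting $\NOK$ messages in the state at time $2\Delta+\TimeBCAST$, and by $t_s$-consistency of $\BCAST$ the dealer $\D$ and every honest $P_i$ share the same view of which broadcasts arrived through regular-mode by that time; (ii) since no party of $\WCORE$ had edges removed by $\D$, the induced graphs $G_{\D}[\WCORE]$ and $G_i[\WCORE]$ agree as of time $2\Delta+\TimeBCAST$, so the degree conditions hold, $(\ESet,\FSet)$ is an \Star{t_s} in $G_i[\WCORE]$, and every $\WCORE$-internal edge stems from regular-mode $\OK$ messages. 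Hence all five acceptance conditions are met, and every honest $P_i$ enters Phase~V with $\HBA$-input $b_i=0$.

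\emph{Phase V and output.} Since all honest parties feed $\HBA$ the common input $0$ and $t_s<n/3$, $t_s$-validity together with guaranteed liveness of $\HBA$ in a synchronous network (Theorem~\ref{thm:HBA}) yields the output $0$ for every honest party at time $2\Delta+2\TimeBCAST+\TimeHBA=\TimeWPS$. In the ensuing local computation each honest $P_i$ is in $\WCORE$ and hence outputs $s_i=q_i(0)=Q(0,\alpha_i)=q(\alpha_i)$; the ``wait for $(\WCORE,\ESet,\FSet)$ through fall-back mode'' clause is vacuous since the triple was already obtained through regular-mode. I expect the main obstacle to be step (ii): carefully relating $\D$'s pruned graph $G_{\D}$ to the honest parties' unpruned graphs $G_i$, and using synchronous consistency of $\BCAST$ to certify that the triple broadcast by $\D$ passes the verification of every honest party; by comparison the timing bookkeeping of Phases I--III is routine.
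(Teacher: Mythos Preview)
Your proposal is correct and follows essentially the same approach as the paper: trace the phases, establish the honest clique by time $2\Delta+\TimeBCAST$, show $\D$ finds and broadcasts $(\WCORE,\ESet,\FSet)$ with all honest parties in $\WCORE$, argue every honest party accepts and inputs $0$ to $\HBA$, then conclude via $t_s$-validity and guaranteed liveness of $\HBA$. Your acceptance argument in step~(ii) is in fact more detailed than the paper's own proof, which simply asserts ``all honest parties will accept $(\ESet,\FSet)$'' after noting that the consistency graphs coincide; your explicit justification that parties with bad $\NOK$ messages are expelled from $\WCORE$ (so the first acceptance clause holds) and that $G_{\D}[\WCORE]=G_i[\WCORE]$ (so the remaining clauses hold) fills a gap the paper leaves implicit.
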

\begin{proof}
Let $\D$ be {\it honest} and the network be {\it synchronous} with up to $t_s$ corruptions. 
During phase I, every {\it honest} party $P_j$ receives $q_j(x) = Q(x, \alpha_j)$ from $\D$ within time $\Delta$.
 Hence during phase II, 
every {\it honest} $P_j$ sends $q_{jk}$ to every $P_k$, which takes at most $\Delta$ time to be delivered.
 Hence, by time $2\Delta$, every {\it honest} $P_j$
 receives $q_{kj}$ from every {\it honest} $P_k$, such that $q_{kj} = q_j(\alpha_k)$ holds. 
 Consequently, during phase III, every {\it honest} $P_j$ broadcasts $\OK(j, k)$ corresponding to every {\it honest} $P_k$, and vice versa. From the {\it $t_s$-validity} 
 property of $\BCAST$ in the {\it synchronous} network, 
 it follows that every honest $P_i$ receives $\OK(j, k)$ and $\OK(k, j)$ from the broadcast of every {\it honest} $P_j$ and every {\it honest} $P_k$ respectively,
 through regular-mode,
  at
  time $2\Delta + \TimeBCAST$. Hence, the edge $(P_j, P_k)$ will be added to the consistency 
   graph $G_i$, corresponding to every honest $P_j, P_k$. 
    Furthermore, from the {\it $t_s$-consistency} property of $\BCAST$, the graph $G_i$ will be the same for every honest party $P_i$ (including $\D$)
     at time $2\Delta + \TimeBCAST$.   Moreover,
     if $\D$ receives  an {\it incorrect} $\NOK(i, j, q_{ij})$ message from the broadcast of any {\it corrupt}
    $P_i$ through regular-mode at time $2\Delta + \TimeBCAST$, where $q_{ij} \neq Q(\alpha_j, \alpha_i)$, then $\D$ removes all the edges incident with
  $P_i$ in $\D$'s consistency graph $G_\D$. Dealer
  $\D$ then computes the set $\WCORE$, and all {\it honest} parties will be present in $\WCORE$. Moreover, the honest parties will
  form a clique of size at least $n - t_s$ in the induced subgraph $G_\D[\WCORE]$ at time $2\Delta + \TimeBCAST$
   and $\D$ will find an $\Star{t_s}$, say $(\ESet,\FSet)$, in $G_\D[\WCORE]$ and broadcast  $(\WCORE, \ESet, \FSet)$ during phase IV. 
  By the {\it $t_s$-validity} of $\BCAST$ in the {\it synchronous} network, 
  all honest parties will receive $(\WCORE, \ESet, \FSet)$ through regular-mode at time $2\Delta + 2\TimeBCAST$.
  Moreover, all honest parties will {\it accept}  $(\ESet, \FSet)$   and participate with input $0$ in the instance of $\HBA$.
 Hence, by the {\it $t_s$-validity} and {\it $t_s$-guaranteed liveness} of $\HBA$ in the {\it synchronous} network, 
 {\it every} honest party obtains the output $0$ in the instance of $\HBA$,  by time $2\Delta + 2\TimeBCAST + \TimeHBA$. 
 Now, consider an arbitrary {\it honest} party $P_i$. Since $P_i \in \WCORE$, party $P_i$ outputs $s_i = q_i(0) = Q(0, \alpha_i) = q(\alpha_i)$.
\end{proof}
We next prove the correctness property in an {\it asynchronous} network.
\begin{lemma}[{\it $t_a$-Correctness}]
\label{lemma:WPSAsynCorrectness}
In protocol $\WPS$, if $\D$ is honest and network is asynchronous, then almost-surely,
 each honest $P_i$ eventually outputs $q(\alpha_i)$.
\end{lemma}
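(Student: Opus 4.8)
The plan is to follow the single execution of $\HBA$ inside $\WPS$ and argue that, no matter which common bit it outputs, every honest party eventually reconstructs $q(\alpha_i)$; the only probabilistic ingredient will be the termination of $\HBA$. First I would fix notation: since $\D$ is honest it picks a random $(t_s,t_s)$-degree symmetric $Q(x,y)$ with $Q(0,y)=q(\cdot)$ and sends $q_i(x)=Q(x,\alpha_i)$ to $P_i$; by symmetry $q_i(\alpha_j)=Q(\alpha_j,\alpha_i)=Q(\alpha_i,\alpha_j)=q_j(\alpha_i)$ and $q_i(0)=q(\alpha_i)$. Then I would record the ``eventual'' facts that hold because $\D$ is honest and the network is asynchronous: every honest $P_i$ eventually receives $q_i(x)$ from $\D$; every ordered pair of honest parties eventually exchanges consistent points and broadcasts $\OK$, while no honest party ever broadcasts a $\NOK$ against another honest party; and by the weak-validity and fallback-validity of $\BCAST$ for an honest sender (Theorem~\ref{thm:DS}), every honest party eventually receives each such $\OK$. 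Hence the clique on the (at least $n-t_a$) honest parties eventually appears in the consistency graph $G_j$ of every honest $P_j$, in particular in $\D$'s graph $G_\D$.

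Next I would invoke the guarantees of $\HBA$. Since $t_s<n/3$, in an asynchronous network $\HBA$ is a perfectly-secure ABA (Theorem~\ref{thm:HBA}); as all honest parties feed it an input at local time $2\Delta+2\TimeBCAST$, almost-surely all honest parties eventually obtain a common output $b\in\{0,1\}$, and the rest of the argument conditions on this event. If $b=0$, then by validity of $\HBA$ some honest party accepted a triple $(\WCORE,\ESet,\FSet)$ obtained from $\D$'s broadcast; since $\D$ is honest this is the unique triple $\D$ broadcasts, so by the weak/fallback validity of $\BCAST$ every honest party eventually holds $(\WCORE,\ESet,\FSet)$. An honest $P_i\in\WCORE$ outputs $s_i=q_i(0)=q(\alpha_i)$ directly. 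For an honest $P_i\notin\WCORE$, note $|\FSet|\ge n-t_s$, so $\FSet$ contains at least $n-t_s-t_a>2t_s$ honest parties (this is exactly where $3t_s+t_a<n$ is used), each holding $q_j(x)=Q(x,\alpha_j)$ and eventually sending $P_i$ the correct point $q_{ji}=q_i(\alpha_j)$; thus $\Support_i$ eventually contains $\ge 2t_s+1$ correct points on the $t_s$-degree polynomial $q_i(x)$, so $\OEC(t_s,t_s,\Support_i)$ terminates, and it cannot return a wrong polynomial since any $2t_s+1$ consistent points in $\Support_i$ include at least $2t_s+1-t_a\ge t_s+1$ honest points (using $t_a<t_s$), which already determine $q_i(x)$. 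Hence $P_i$ outputs $q_i(0)=q(\alpha_i)$.

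If $b=1$, the parties run Phase VI. Because the honest clique of size $\ge n-t_a$ eventually appears in $G_\D$, $\StarAlgo$ run on $G_\D$ after each update eventually outputs an $\Star{t_a}$ $(\ESet',\FSet')$, which $\D$ broadcasts; as before every honest party eventually receives $(\ESet',\FSet')$ and, since the honest clique eventually appears in each $G_i$, eventually verifies it is an $\Star{t_a}$ in $G_i$. An honest $P_i\in\FSet'$ outputs $s_i=q_i(0)=q(\alpha_i)$; for honest $P_i\notin\FSet'$, $|\FSet'|\ge n-t_a$ gives at least $n-2t_a>2t_s$ honest parties in $\FSet'$ (using $3t_s+t_a<n$ and $t_a<t_s$), each sending $P_i$ the correct point $q_i(\alpha_j)$, so exactly as above $\OEC(t_s,t_s,\Support_i)$ eventually returns $q_i(x)$ and $P_i$ outputs $q(\alpha_i)$. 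Combining the two cases, almost-surely every honest $P_i$ eventually outputs $q(\alpha_i)$.

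I expect the main obstacle to be the bookkeeping around the $\OEC$ step in both branches: verifying that $\FSet$ (resp.\ $\FSet'$) contains strictly more than $2t_s$ honest parties so that reconstruction is live, while $t_a<t_s$ simultaneously rules out recovery of an incorrect polynomial, and making sure the interaction between the $\BCAST$ time-out, its fallback mode, and the only-almost-surely-terminating $\HBA$ never leaves an honest party without an input to $\HBA$ or without an eventual output.
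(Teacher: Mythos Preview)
Your proof is essentially correct and follows the same two-case analysis as the paper's own proof, with the same use of the $\HBA$ validity/consistency properties, the $\BCAST$ fallback guarantees, and the $\OEC$ bounds derived from $3t_s+t_a<n$ and $t_a<t_s$.

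There is one small gap worth fixing in the $b=1$ branch. You argue that each honest $P_i$ ``eventually verifies it is an $\Star{t_a}$ in $G_i$'' because ``the honest clique eventually appears in each $G_i$''. That justification is not sufficient: the pair $(\ESet',\FSet')$ that $\D$ finds in $G_\D$ may include corrupt parties, so the edges $P_i$ must see to recognise $(\ESet',\FSet')$ as an $\Star{t_a}$ can involve $\OK$ messages broadcast by corrupt senders, not only honest--honest edges. The paper closes this by invoking the $t_a$-weak consistency and $t_a$-fallback consistency of $\BCAST$ (Theorem~\ref{thm:DS}) for possibly-corrupt senders: any $\OK$ message that $\D$ received is eventually received by every honest party, so every edge present in $G_\D$ eventually appears in $G_i$. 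Once you add this one line, your argument matches the paper's.
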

\begin{proof}
Let $\D$ be {\it honest} and network be {\it asynchronous} with up to $t_a$ corruptions. We first note that every honest party participates with some input in the instance
  of $\HBA$ at local time
  $2\Delta + 2 \TimeBCAST$. Hence
  from the {\it $t_a$-almost-surely liveness} and {\it $t_a$-consistency} of $\HBA$ in an {\it asynchronous} network, 
  it follows that  almost-surely, the instance of $\HBA$ eventually generates 
   some common output, for all honest parties. Now there are two possible cases:
    \begin{myitemize}
     \item[--] {\bf The output of $\HBA$ is 0}: From the {\it $t_a$-validity} of $\HBA$ in the {\it asynchronous} network, 
      it follows that at least one {\it honest} party, say $P_h$, participated with input $0$ during the instance of $\HBA$. 
      This implies that $P_h$ 
       has accepted a $(\WCORE, \ESet, \FSet)$ at local $2\Delta +2\TimeBCAST$, which is
       received from the broadcast of $\D$, through {\it regular-mode}.
       Hence, by the {\it $t_a$-weak validity} and {\it $t_a$-fallback validity} properties
        of $\BCAST$ in the {\it asynchronous} network, 
            all honest parties will eventually receive $(\WCORE, \ESet, \FSet)$ from the broadcast of $\D$ and {\it accept} the triplet. 
            This is because 
            the consistency graphs of all honest parties will eventually have all the edges which were present in the consistency graph
            $G_h$ of $P_h$, at time $2\Delta +2\TimeBCAST$.
            We claim that {\it every honest} $P_i$ will eventually get $Q(x, \alpha_i)$.
             This will imply that eventually, every {\it honest} $P_i$ outputs $s_i = Q(0, \alpha_i) = q(\alpha_i)$.
            To prove the claim, consider an arbitrary {\it honest} party $P_i$. There are two possible cases.
            \begin{myitemize}
                \item[--] {\it $P_i \in \WCORE$}: In this case, $P_i$ already has received $Q(x, \alpha_i)$ from $\D$.
                \item[--] {\it $P_i \not \in \WCORE$}:
                In this case, there will be at least $n - t_s > 2t_s + t_a$ parties in $\FSet$, of which at most $t_a$ could be {\it corrupt}.
                Since $Q(x, \alpha_i)$ is a $t_s$-degree polynomial and $t_s < |\FSet| - 2t_a$, 
                from Lemma \ref{lemma:OECProperties}, it follows that by applying the OEC procedure on the common points on the
                polynomials $Q(x, \alpha_i)$, received from the parties in $\FSet$, party $P_i$ will eventually obtain 
                $Q(x, \alpha_i)$.
            \end{myitemize}
  \item[--] \textbf{The output of $\HBA$ is 1}: Since $\D$ is {\it honest}, every pair of honest parties $P_j, P_k$ eventually
   broadcast $\OK(j, k)$ and $\OK(k, j)$ messages respectively, as the pair-wise consistency check between them will eventually be successful. 
   From the {\it $t_a$-weak validity} and {\it $t_a$-fallback validity} of $\BCAST$, these messages are eventually delivered to every honest party.
   Also from the {\it $t_a$-weak consistency} and {\it $t_a$-fallback consistency}
    of $\BCAST$ in the {\it asynchronous} network, any $\OK$ message which is received by $\D$ from the broadcast of any {\it corrupt} party,
    will be eventually received by every other honest party as well.
  As there will be at least $n - t_a$ honest parties, a clique of size at least $n - t_a$ will eventually form in the consistency graph of every honest party. 
   Hence $\D$ will eventually find an $\Star{t_a}$, say $(\ESet', \FSet')$, in its consistency graph and broadcast it. From 
   the {\it $t_a$-weak validity} and {\it $t_a$-fallback validity} of $\BCAST$, this star will be eventually delivered to every honest party. 
   Moreover, $(\ESet', \FSet')$ will be eventually an $\Star{t_a}$ in every honest party's consistency graph.
       We claim that {\it every honest} $P_i$ will eventually get $Q(x, \alpha_i)$. 
       This will imply that eventually, every {\it honest} $P_i$ outputs $s_i = Q(0, \alpha_i) = q(\alpha_i)$.
            To prove the claim, consider an arbitrary {\it honest} party $P_i$. There are two possible cases.
            \begin{myitemize}
                \item[--] $P_i \in \FSet'$: In this case, $P_i$ already has  $Q(x, \alpha_i)$, received  from $\D$.
                \item[--] $P_i \notin \FSet'$:   In this case, there will be at least $n - t_a > 3t_s $ parties in $\FSet'$, of which at most $t_a$ could be {\it corrupt}, where $t_a < t_s$.
                Since $Q(x, \alpha_i)$ is a $t_s$-degree polynomial and $t_s < |\FSet'| - 2t_a$, from Lemma \ref{lemma:OECProperties}
                 it follows that by applying the OEC procedure on the common points on the
                polynomial $Q(x, \alpha_i)$, received from the parties in $\FSet'$, party $P_i$ will eventually obtain 
                $Q(x, \alpha_i)$.
            \end{myitemize}
    \end{myitemize}
\end{proof}

We next proceed to prove the weak commitment properties for a {\it corrupt} $\D$. However, before that we prove a helping lemma.
\begin{lemma}
\label{lemma:WPS}
Let the network be synchronous and let $\D$ be corrupt in the protocol $\WPS$. 
 If any one honest party receives a $(\WCORE, \ESet, \FSet)$ from the broadcast of $\D$ through regular-mode and accepts it at time $2\Delta + 2\TimeBCAST$, 
 then all the following hold.
 \begin{myitemize}
\item[--] All honest parties in $\WCORE$ receive their respective $t_s$-degree univariate polynomials from $\D$, within time $\Delta$.
\item[--] The univariate polynomials $q_i(x)$ of all honest parties $P_i$ in the set $\WCORE$ lie on a unique
 $(t_s, t_s)$-degree symmetric bivariate polynomial, say $Q^{\star}(x, y)$. 
\item[--] Within time $2\Delta + 2\TimeBCAST$, every honest party accepts $(\WCORE, \ESet, \FSet)$.
\end{myitemize}
\end{lemma}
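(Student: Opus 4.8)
The plan is to establish the three items in the order listed, since item~(2) builds on item~(1), and item~(3) follows from the synchronous guarantees of $\BCAST$ (Theorem~\ref{thm:DS}). Throughout, let $P_i$ be the honest party that accepts $(\WCORE,\ESet,\FSet)$ at time $2\Delta + 2\TimeBCAST$, and recall that the acceptance conditions $P_i$ evaluates concern what was true at time $2\Delta + \TimeBCAST$. For item~(1), suppose towards a contradiction that some honest $P_j \in \WCORE$ did not receive a $t_s$-degree polynomial $q_j(x)$ from $\D$ by time $\Delta$. If $P_j$ never obtains a valid polynomial it broadcasts no $\OK$ message, so $\deg(P_j) = 0$ in $G_i$ and $P_i$ cannot accept; hence assume $P_j$ receives $q_j(x)$ only after time $\Delta$. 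The crux is the ``wait until the local time is a multiple of $\Delta$'' gadget in Phases~II and~III. Since $P_i$ accepts, $P_j$ has at least $n - t_s$ neighbours inside $\WCORE$ in $G_i$ at time $2\Delta + \TimeBCAST$; at most $t_s$ of them are corrupt and $n - 2t_s \ge t_s + 1 \ge 1$ (from $3t_s + t_a < n$), so at least one such neighbour $P_k \in \WCORE$ is honest. By the last acceptance condition, the edge $(P_j,P_k)$ requires $P_i$ to have received $\OK(k,j)$ from $P_k$'s broadcast through regular-mode by time $2\Delta + \TimeBCAST$; by the synchronous $t_s$-validity of $\BCAST$ this forces $P_k$ to have invoked that $\BCAST$ instance by local time $2\Delta$, and an honest $P_k$ invokes it only after receiving $q_{jk}$ from $P_j$, so $P_k$ received $q_{jk}$ by time $2\Delta$. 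But since $P_j$ obtained $q_j(x)$ only after time $\Delta$, the Phase~II waiting rule makes $P_j$ send $q_{jk}$ no earlier than local time $2\Delta$, so $P_k$ receives it strictly after time $2\Delta$, and the Phase~III waiting rule then makes $P_k$ invoke the $\OK(k,j)$ broadcast no earlier than local time $3\Delta$ --- too late, a contradiction. Hence every honest $P_j \in \WCORE$ received its polynomial within time $\Delta$.

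For item~(2), item~(1) implies that for every pair of honest $P_j, P_k \in \WCORE$ the Phase~II exchange of $q_{jk}, q_{kj}$ completes by time $2\Delta$ and each of $P_j, P_k$ broadcasts its $\OK$/$\NOK$ message by time $2\Delta$, so these reach $P_i$ through regular-mode by time $2\Delta + \TimeBCAST$. If $q_j(\alpha_k) \ne q_k(\alpha_j)$ for some honest $P_j, P_k \in \WCORE$, then $P_j$ broadcasts $\NOK(j,k,q_j(\alpha_k))$ and $P_k$ broadcasts $\NOK(k,j,q_k(\alpha_j))$ with $q_j(\alpha_k) \ne q_k(\alpha_j)$, contradicting the first acceptance condition $P_i$ verified; hence the univariate polynomials of all honest parties in $\WCORE$ are pairwise consistent. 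Since $|\WCORE| \ge n - t_s$ and at most $t_s$ of these parties are corrupt, there are at least $n - 2t_s \ge t_s + 1$ pairwise-consistent $t_s$-degree univariate polynomials among the honest parties of $\WCORE$; by Lemma~\ref{lemma:bivariateI} they --- and hence the polynomials of \emph{all} honest parties in $\WCORE$ --- lie on a unique $(t_s,t_s)$-degree symmetric bivariate polynomial $Q^{\star}(x,y)$.

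For item~(3), I would invoke the synchronous $t_s$-validity and $t_s$-consistency of $\BCAST$ (Theorem~\ref{thm:DS}): since $P_i$ obtained $(\WCORE,\ESet,\FSet)$ from $\D$'s broadcast through regular-mode by time $2\Delta + 2\TimeBCAST$, every honest party obtains the same triple through regular-mode by that time, and likewise the set of $\OK$/$\NOK$ messages received through regular-mode by (local) time $2\Delta + \TimeBCAST$ is identical for all honest parties. Consequently each honest party's restriction of the acceptance test to regular-mode information --- the $\NOK$-conflict check, the degree and ``each party of $\WCORE$ has $n-t_s$ neighbours in $\WCORE$'' checks, the $\Star{t_s}$ check on the induced graph, and the regular-mode-$\OK$ check --- yields exactly the outcome it did for $P_i$; since $P_i$'s checks all succeeded, so do those of every honest party, and every honest party accepts $(\WCORE,\ESet,\FSet)$ by time $2\Delta + 2\TimeBCAST$. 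The delicate step is item~(1): one must carefully track how a late delivery of $\D$'s polynomial is amplified through the two ``round up to a multiple of $\Delta$'' buffers into an $\OK$ broadcast that provably misses the Phase~III regular-mode deadline $2\Delta + \TimeBCAST$, and one must keep in mind that only regular-mode $\OK$ messages count for the edges used in accepting $\WCORE$; once this bookkeeping is pinned down, items~(2) and~(3) are routine consequences of Lemma~\ref{lemma:bivariateI} and the consistency of $\BCAST$.
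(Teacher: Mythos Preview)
Your proposal is correct and follows essentially the same approach as the paper's own proof: a timing contradiction via the two ``round to a multiple of $\Delta$'' waits for item~(1), pairwise consistency of honest polynomials in $\WCORE$ via the $\NOK$ check plus Lemma~\ref{lemma:bivariateI} for item~(2), and the synchronous $t_s$-validity/$t_s$-consistency of $\BCAST$ to propagate the regular-mode view for item~(3). In fact, your argument for item~(1) is slightly more careful than the paper's: you explicitly select $P_k$ as an \emph{honest neighbour} of $P_j$ inside $\WCORE$ (using that $P_j$ has at least $n-t_s$ neighbours in $\WCORE$, hence at least $n-2t_s \ge 1$ honest ones), whereas the paper picks an arbitrary honest $P_k \in \WCORE$ and then asserts $P_h$ received $\OK(k,j)$ without justifying the edge.
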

\begin{proof}
Let $\D$ be {\it corrupt} and network be {\it synchronous} with up to $t_s$ corruptions. As per the lemma condition, let $P_h$ be an {\it honest} party, who 
  receives a $(\WCORE, \ESet, \FSet)$ from the broadcast of $\D$ through regular-mode and accepts
   it at time $2\Delta + 2\TimeBCAST$. Then from the protocol steps, the following must be
  true for $P_h$ at time $2\Delta + \TimeBCAST$:
 	     \begin{myitemize}
              \item[--] There does not exist any               
              $P_j, P_k \in \WCORE$,
               such that $\NOK(j, k, q_{jk})$ and $\NOK(k, j,q_{kj})$ messages are received by $P_h$, from the broadcast of 
              $P_j$ and $P_k$ respectively through regular-mode, where $q_{jk} \neq q_{kj}$.
               \item[--] In $P_h$'s consistency graph $G_h$,  $\mathsf{deg}(P_j) \ge n-t_s$  for all $P_j \in \WCORE$ and $P_j$ has edges with at least
               $n - t_s$ parties from $\WCORE$. 
               \item[--] $(\ESet, \FSet)$ constitutes an $\Star{t_s}$ in the induced subgraph $G_h[\WCORE]$, such that for every
           $P_j, P_k \in \WCORE$ where the edge $(P_j, P_k)$ is present in $G_h$, the messages $\OK(j, k)$ and $\OK(k, j)$ are
            received by $P_h$, from the broadcast of 
          $P_j$ and $P_k$ respectively, through regular-mode.
          \end{myitemize} 
 We prove the first part of the lemma through a contradiction. So let $P_j \in \WCORE$ be an {\it honest} party, who receives its
  $t_s$-degree univariate polynomial, say $q_j(x)$, from $\D$ at time $\Delta + \delta$, where
  $\delta > 0$. Moreover, let $P_k \in \WCORE$ be an {\it honest} party, different from $P_j$ (note that there are at least $n - 2t_s$ {\it honest} parties in $\WCORE$).
  As stated above, at time $2\Delta + \TimeBCAST$, party $P_h$ receives the message $\OK(k, j)$ from the broadcast of $P_k$ through regular-mode.
  From the protocol steps, $P_j$ waits till its local time becomes a multiple of $\Delta$, before it sends the points on its polynomial to other parties for pair-wise consistency tests.
 Hence, $P_j$ must have started sending the points after time $c \cdot \Delta$, where $c \geq 2$.
  Since the network is {\it synchronous}, the point $q_{jk} = q_j(\alpha_k)$
 must have been received by $P_k$ by time $(c + 1) \cdot \Delta$. 
  Moreover, from the protocol steps, even if $P_k$ receives these points at time $T$, where $c \cdot \Delta < T < (c + 1) \cdot \Delta$,
  it waits till time $(c + 1) \cdot \Delta$, before broadcasting the $\OK(k, j)$ message. 
   Since $P_k$ is {\it honest}, from the {\it $t_s$-validity} property of $\BCAST$ in the {\it synchronous} network,
  it will take {\it exactly} $\TimeBCAST$ time for the message $\OK(k, j)$ to be received through regular-mode, once it is broadcast. 
 This implies that $P_h$ will receive the message $\OK(k, j)$
 at time $(c + 1) \cdot \Delta + \TimeBCAST$, where $(c + 1) > 2$. However, this is a contradiction, 
  since the $\OK(k, j)$ message has been received by $P_h$ at time $2\Delta + \TimeBCAST$.
 
 To prove the second part of the lemma, we will show that 
   the univariate polynomials $q_j(x)$ of all the {\it honest} parties in $\WCORE$ are pair-wise consistent. 
   Since there are at least $n - 2t_s > t_s$ {\it honest} parties in $\WCORE$, 
             from Lemma \ref{lemma:bivariateI}, it  follows
              that the univariate polynomials $q_j(x)$ 
              of all the {\it honest} parties in $\WCORE$ lie on a unique $(t_s, t_s)$-degree symmetric bivariate polynomial, say $Q^{\star}(x,y)$. 
   So consider an arbitrary pair of {\it honest} parties $P_j, P_k \in \WCORE$. From the first part of the lemma, both $P_j$ and $P_k$ must have received their
    respective univariate polynomials
  $q_j(x)$ 
  and $q_k(x)$ by time $\Delta$. This further implies that $P_j$ and $P_k$ must have received the points $q_{kj} = q_k(\alpha_j)$ and 
   $q_{jk} = q_j(\alpha_k)$ respectively by time
  $2\Delta$. If $q_{kj} \neq q_{jk}$, then $P_j$ and $P_k$ would broadcast $\NOK(j, k, q_{jk})$ and $\NOK(k, j, q_{kj})$ 
  messages respectively, at time $2\Delta$.  
   Consequently,  from the {\it $t_s$-validity} property of $\BCAST$ in the {\it synchronous} network, 
    $P_h$ will receive these messages through regular-mode at time $2\Delta + \TimeBCAST$. 
     Consequently, $P_h$ {\it will not} accept $(\WCORE, \ESet, \FSet)$, which is a contradiction.
  
  To prove the third part of the lemma, we note that since $P_h$ has received 
   $(\WCORE, \ESet, \FSet)$ from the broadcast of $\D$ through regular-mode  at time $2\Delta + 2\TimeBCAST$, it implies
  that $\D$ must have started broadcasting  $(\WCORE, \ESet, \FSet)$ latest at time $2\Delta + \TimeBCAST$. 
   This is because it takes $\TimeBCAST$ time for the regular-mode of $\BCAST$ to produce an 
   output. From the
  {\it $t_s$-consistency} property of $\BCAST$ in the  {\it synchronous} network, it follows that every honest party will also receive $(\WCORE, \ESet, \FSet)$
   from the broadcast of $\D$ through regular-mode,
  at time $2\Delta + 2\TimeBCAST$. Since at time $2\Delta + \TimeBCAST$, party $P_h$ has received the $\OK(j, k)$ and $\OK(k, j)$ messages
   through regular-mode
   from the broadcast of 
  every $P_j, P_k \in \WCORE$ where $(P_j, P_k)$ is an edge in $P_h$'s consistency graph, 
   it follows that these messages started getting broadcast, latest at time $2\Delta$.
  From the {\it $t_s$-validity}
   and {\it $t_s$-consistency} properties of $\BCAST$ in the {\it synchronous} network, 
    it follows that every honest party receives these broadcast messages through regular-mode at time
   $2\Delta + \TimeBCAST$. Hence $(\ESet, \FSet)$ will constitute an $\Star{t_s}$ in the induced subgraph $G_i[\WCORE]$ of every honest party $P_i$'s 
   consistency-graph at time $2\Delta + \TimeBCAST$ 
  and consequently, every honest party accepts $(\WCORE, \ESet, \FSet)$.
\end{proof}

Now based on the above helping lemma, we proceed to prove the weak commitment properties of the protocol $\WPS$.
\begin{lemma}[{\it $t_s$-Weak Commitment}]
 \label{lemma:WPSSynWeakCommitment}
In protocol $\WPS$, if $\D$ is corrupt and network is synchronous, then either no honest party computes any output or there exists some
 $t_s$-degree polynomial, say $q^{\star}(\cdot)$, such that all the following hold.
   \begin{myitemize}
      \item[--] There are at least $t_s + 1$ honest parties $P_i$ who
          output the WPS-shares $q^{\star}(\alpha_i)$.
        \item[--]  If any honest $P_j$ outputs a WPS-share $s_j \in \F$, then
          $s_j = q^{\star}(\alpha_j)$ holds.
   \end{myitemize}
 \end{lemma}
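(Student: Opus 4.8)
The plan is to condition on the common output bit $b \in \{0,1\}$ of the instance of $\HBA$ executed in Phase~V. Since the network is synchronous, Theorem~\ref{thm:HBA} guarantees that $\HBA$ behaves as a $t_s$-perfectly-secure SBA protocol, so all honest parties obtain the same output bit $b$.

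\textbf{Case $b = 0$.} First I would observe that some honest party $P_h$ must have participated in $\HBA$ with input $0$: otherwise all honest parties have input $1$, and by $t_s$-validity of $\HBA$ in a synchronous network the output would be $1$, a contradiction. Hence $P_h$ accepted some $(\WCORE,\ESet,\FSet)$ at time $2\Delta + 2\TimeBCAST$, received from $\D$'s broadcast through regular-mode --- exactly the hypothesis of Lemma~\ref{lemma:WPS}. So the honest parties in $\WCORE$ received $t_s$-degree univariate polynomials from $\D$ within time $\Delta$; these lie on a unique $(t_s,t_s)$-degree symmetric bivariate polynomial $Q^\star(x,y)$; and every honest party accepts $(\WCORE,\ESet,\FSet)$ by time $2\Delta + 2\TimeBCAST$ (hence enters $\HBA$ with input $0$, consistent with $b=0$). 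Define $q^\star(\cdot) \defined Q^\star(0,y)$. Since $|\WCORE| \ge n - t_s$ and at most $t_s$ parties are corrupt, $\WCORE$ has at least $n - 2t_s \ge t_s + 1$ honest members (using $3t_s < n$), each of which outputs $s_i = q_i(0) = Q^\star(0,\alpha_i) = q^\star(\alpha_i)$: this gives the first bullet. For an honest $P_j \notin \WCORE$ that outputs some $s_j \in \F$, it obtained $s_j = q_j(0)$ where $q_j(\cdot)$ was recovered by $\OEC(t_s,t_s,\Support_j)$ with $\Support_j \subseteq \FSet \subseteq \WCORE$. Each honest $P_i \in \FSet$ sends $P_j$ the value $q_i(\alpha_j) = Q^\star(\alpha_i,\alpha_j)$, a correct evaluation of the $t_s$-degree polynomial $Q^\star(x,\alpha_j)$, and $\FSet$ contains at most $t_s$ corrupt parties; hence by Lemma~\ref{lemma:OECProperties} any polynomial OEC outputs is forced to be $Q^\star(x,\alpha_j)$ (the $2t_s+1$ mutually consistent points it uses contain at least $t_s+1$ honest ones, which already determine $Q^\star(x,\alpha_j)$). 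Therefore $s_j = Q^\star(0,\alpha_j) = q^\star(\alpha_j)$, the second bullet.

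\textbf{Case $b = 1$.} Now every honest party follows the $\Star{t_a}$ branch. If no honest party ever obtains a pair $(\ESet',\FSet')$ from $\D$'s broadcast that it confirms as an $\Star{t_a}$ in its own consistency graph, then no honest party produces an output, and the first alternative of the lemma holds. Otherwise, fix an honest $P_h$ for whom $(\ESet',\FSet')$ is an $\Star{t_a}$ in $G_h$. Since $\ESet' \subseteq \FSet'$, the star property makes $\ESet'$ a clique in $G_h$; as $|\ESet'| \ge n - 2t_a$ and at most $t_s$ parties are corrupt, $\ESet'$ has at least $n - 2t_a - t_s \ge t_s + 1$ honest members (using $t_a < t_s$ and $3t_s + t_a < n$), whose univariate polynomials are pairwise consistent and therefore, by Lemma~\ref{lemma:bivariateI}, lie on a unique $(t_s,t_s)$-degree symmetric bivariate polynomial $Q^\star(x,y)$; set $q^\star(\cdot) \defined Q^\star(0,y)$. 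Every honest $P_j \in \FSet'$ has an edge in $G_h$ to each party of $\ESet'$, in particular to its $\ge t_s+1$ honest members, so $q_j(x)$ agrees with $Q^\star(x,\alpha_j)$ at $\ge t_s+1$ points and hence equals it. By the $t_s$-consistency and $t_s$-fallback-consistency of $\BCAST$ together with the monotonicity of the honest parties' consistency graphs, all honest parties eventually receive $(\ESet',\FSet')$ and eventually confirm it as an $\Star{t_a}$ in their own graph, so every honest $P_i \in \FSet'$ eventually outputs $s_i = q_i(0) = q^\star(\alpha_i)$; since $|\FSet'| \ge n - t_a$ and at most $t_s$ parties are corrupt, there are at least $n - t_a - t_s \ge t_s+1$ such parties, giving the first bullet. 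Finally, an honest $P_j \notin \FSet'$ that outputs $s_j$ does so via $\OEC(t_s,t_s,\Support_j)$ with $\Support_j \subseteq \FSet'$, and exactly as above Lemma~\ref{lemma:OECProperties} forces the recovered polynomial to be $Q^\star(x,\alpha_j)$, so $s_j = q^\star(\alpha_j)$, the second bullet.

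\textbf{Main obstacle.} The delicate part is the bookkeeping of the case analysis: deducing from the $\HBA$ output that either the helping Lemma~\ref{lemma:WPS} applies verbatim (when $b=0$), or --- in the $\Star{t_a}$ branch (when $b=1$) --- re-deriving a common committed bivariate polynomial directly from the clique $\ESet'$ inside an honest party's consistency graph, since Lemma~\ref{lemma:WPS} only covers the $\WCORE$/regular-mode path. One must also be careful that the correctness of the OEC-based reconstruction for parties outside $\WCORE$ (resp.\ $\FSet'$) uses only that $\FSet$ (resp.\ $\FSet'$) contains at most $t_s$ corrupt parties --- i.e.\ the \emph{safety}, not the termination, guarantee of OEC --- so that the conditional second bullet holds even for honest parties that might never actually produce an output.
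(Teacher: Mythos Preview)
Your proposal is correct and follows essentially the same approach as the paper. The only cosmetic difference is that you split cases on the common output bit $b$ of $\HBA$, whereas the paper splits on whether some honest party accepted a $(\WCORE,\ESet,\FSet)$ at time $2\Delta+2\TimeBCAST$; in the synchronous setting these two case splits coincide (by Lemma~\ref{lemma:WPS}, part three, together with $t_s$-validity of $\HBA$), and within each case your argument matches the paper's --- invoking Lemma~\ref{lemma:WPS} to obtain $Q^{\star}$ when $b=0$, and rebuilding $Q^{\star}$ from the clique $\ESet'$ when $b=1$, with the same OEC safety reasoning for parties outside $\WCORE$ (resp.\ $\FSet'$).
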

 \begin{proof}
 Let $\D$ be {\it corrupt} and network be {\it synchronous} with up to $t_s$ corruptions. 
 If no honest party outputs any wps-share, then the lemma holds trivially. So consider the case when some honest party
   outputs a wps-share,  which is an element of $\F$. 
 Now, there are two possible cases.
    \begin{myitemize}
         \item[--] \textit{At time $2\Delta + 2\TimeBCAST$,  at least one honest party, say $P_h$, accepts a $(\WCORE, \ESet, \FSet)$, received 
         from the broadcast of $\D$ through regular-mode}:
         In this case, from Lemma \ref{lemma:WPS}, at time $2\Delta + 2\TimeBCAST$, every honest party will accept $(\WCORE, \ESet, \FSet)$.
         Hence every honest party participates in the instance of 
         $\HBA$ with input $0$. From the {\it $t_s$-validity} and {\it $t_s$-guaranteed liveness} properties of $\HBA$ in the {\it synchronous} network, 
          all honest parties will get the output $0$ during the instance of
           $\HBA$ by time $\TimeWPS = 2\Delta + 2\TimeBCAST + \TimeHBA$. From Lemma \ref{lemma:WPS}, 
          the univariate polynomials of all the {\it honest} parties in $\WCORE$ will lie on some
           $(t_s, t_s)$-degree symmetric bivariate polynomial, say $Q^{\star}(x,y)$. 
         Let $q^{\star}(\cdot) \defined Q^{\star}(0, y)$. 
         Now consider an arbitrary {\it honest}
          party $P_i$, who outputs a wps-share $s_i \in \F$. 
          We want to show that the
          condition $s_i = q^{\star}(\alpha_i)$ holds. 
          And there are at least $t_s + 1$ such {\it honest} parties $P_i$ who output their wps-share.
          There are two possible cases.
        \begin{myitemize}
            \item[--] {\it $P_i \in \WCORE$}: From the protocol steps, $P_i$ sets $s_i = Q^{\star}(0, \alpha_i)$,
             which is the same as $q^{\star}(\alpha_i)$.
            Since $\WCORE$ contains at least $t_s + t_a + 1$ honest parties, this also shows that at least $t_s + 1$ honest parties $P_i$ output 
            their respective wps-share $s_i \in \F$, which is the same as $q^{\star}(\alpha_i)$.
            \item[--] {\it $P_i \not \in \WCORE$}:
                In this case, $P_i$ sets $s_i = q_i(0)$, where $q_i(\cdot)$ is a $t_s$-degree univariate polynomial,
                 obtained by applying the OEC procedure with $d = t = t_s$,
                 on the values $q_{ji}$, received from the parties  $P_j \in \FSet$, during the pair-wise consistency checks.
                  Note that as part of OEC (see the proof of Lemma \ref{lemma:OECProperties}),
                  party $P_i$ verifies that at least $2t_s + 1$ $q_{ji}$ values from the parties in $\FSet$ lie on 
                  $q_i(\cdot)$. Now out of these $2t_s +1$ $q_{ji}$ values, at least $t_s + 1$ values are from the {\it honest} parties in $\FSet$. Furthermore, these 
                 $q_{ji}$ values from the {\it honest} parties in $\FSet$ are the same as $Q^{\star}(\alpha_i, \alpha_j)$, which is equal to 
                 $Q^{\star}(\alpha_j, \alpha_i)$ and uniquely determine
                 $Q^{\star}(x, \alpha_i)$; the last property holds since $Q^{\star}(x, y)$ is a symmetric bivariate polynomial.
                  This automatically implies that $q_i(x)$ is the {\it same} as $Q(x, \alpha_i)$ and hence 
                 $s_i = q^{\star}(\alpha_i)$, since two different $t_s$-degree polynomials can have at most $t_s$ common values.
        \end{myitemize}
         \item[--] \textit{At time $2\TimeBCAST + 2\Delta$, no honest party has accepted any $(\WCORE, \ESet, \FSet)$}: 
         This implies that all honest parties participate in the instance of $\HBA$ with input $1$. So by the {\it $t_s$-validity} and {\it $t_s$-guaranteed liveness} of
          $\HBA$ in the {\it synchronous} network,
         all honest parties obtain the output $1$ in the instance of $\HBA$. Let $P_h$ be the {\it first honest} party who outputs a wps-share, consisting of an element
         from $\F$. 
         This means that $P_h$ has received a pair $(\ESet', \FSet')$,
          from the broadcast of $\D$, such that $(\ESet', \FSet')$ constitutes an $\Star{t_a}$
         in $P_h$'s consistency graph. By the {\it $t_s$-consistency} and {\it $t_s$-fallback consistency} properties of $\BCAST$ in the
         synchronous network, 
         {\it all} honest parties receive $(\ESet', \FSet')$
         from the broadcast of $\D$. Moreover, since $(\ESet', \FSet')$ constitutes an $\Star{t_a}$ in $P_h$'s consistency graph, it will also constitute an
         $\Star{t_a}$ in every other honest party's consistency graph as well. This is because the $\OK(\star, \star)$ messages which are received by
         $P_h$ from the broadcast of the various parties in $\ESet'$ and $\FSet'$, are also received by every other honest party, either through regular-mode
         or fallback-mode.
         The last property follows from the {\it $t_s$-validity, $t_s$-consistency} and {\it $t_s$-fallback consistency} properties of $\BCAST$ in the
          synchronous network.        
         Since $|\ESet'| \geq n - 2t_a > 2t_s + (t_s-t_a) > 2t_s$, it follows that $\ESet'$ has at least $t_s + 1$ {\it honest} parties $P_i$, whose 
         univariate polynomials $q_i(x)$ are pair-wise consistent. Hence, from Lemma \ref{lemma:bivariateI}, these univariate polynomials
         lie on a unique 
         $(t_s,t_s)$-degree symmetric bivariate polynomial, say $Q^{\star}(x,y)$. 
         Similarly, since the univariate polynomial  $q_i(x)$ of every {\it honest} party in $\FSet'$ is pair-wise consistent with the univariate polynomials  $q_j(x)$ 
         of the {\it honest} parties in $\ESet'$,
         it implies that the univariate polynomials $q_i(x)$
         of all the {\it honest} parties in $\FSet'$ also lie on $Q^{\star}(x, y)$.
               Let $q^{\star}(\cdot) \defined Q^{\star}(0, y)$. We show that {\it every honest} $P_i$ 
               outputs a wps-share, which is the same as
          $q^{\star}(\alpha_i)$. For this it is enough to show that each honest $P_i$ gets $q_i(x) = Q^{\star}(x, \alpha_i)$,
         as $P_i$ outputs $q_i(0)$ as its wps-share, which will be then same as $q^{\star}(\alpha_i)$.
         Consider an arbitrary {\it honest} party $P_i$. There are two possible cases. 
         \begin{myitemize}
             \item[--] \textit{$P_i \in \FSet'$}: In this case, $P_i$ already has $Q^{\star}(x, \alpha_i)$, received from $\D$.
             \item[--] \textit{$P_i \notin \FSet'$}: In this case, there will be $n-t_a > 3t_s$ parties in $\FSet'$, of which at most $t_s$ could be corrupt.
              Moreover, $Q^{\star}(x, \alpha_i)$ is a $t_s$-degree polynomial and $t_s < |\FSet'| - 2t_s$ holds. 
              Hence from the properties of OEC (Lemma \ref{lemma:OECProperties}),  by applying the OEC procedure on the common points on the
                polynomial $Q^{\star}(x, \alpha_i)$ received from the parties in $\FSet'$, party $P_i$ will compute 
                $Q^{\star}(x, \alpha_i)$.
         \end{myitemize}
 \end{myitemize}
 \end{proof}

We finally prove the commitment property in an {\it asynchronous} network.
\begin{lemma}[{\it $t_a$-Strong Commitment}]
 \label{lemma:WPSAsynStrongCommitment}
In protocol $\WPS$, if $\D$ is corrupt and network is asynchronous, then either no honest party computes any output or there exist some 
 $t_s$-degree polynomial, say $q^{\star}(\cdot)$, such that almost-surely, every honest $P_i$ eventually outputs a wps-share
  $q^{\star}(\alpha_i)$.\footnote{Note that {\it unlike} the synchronous network,
   the commitment property in the {\it asynchronous} network is {\it strong}. That is, if at all any honest party outputs a wps-share, then {\it all} the honest
   parties are guaranteed to eventually output their wps-shares.}			   
 \end{lemma}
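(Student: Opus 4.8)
The plan is to drive everything off the almost-surely liveness and consistency of $\HBA$ in an asynchronous network (Theorem~\ref{thm:HBA}). Every honest party enters the $\HBA$ instance at local time $2\Delta + 2\TimeBCAST$ with an input bit, so almost-surely all honest parties obtain a \emph{common} output $b \in \{0,1\}$ from $\HBA$; I then case-split on $b$, and in each branch I extract a single $(t_s,t_s)$-degree symmetric bivariate polynomial $Q^{\star}(x,y)$ on which the univariate polynomials of more than $t_s$ honest parties lie, and set $q^{\star}(\cdot) \defined Q^{\star}(0,y)$. The only probabilistic ingredient is the termination of $\HBA$; once we condition on that (which happens with probability $1$ and yields a common output), the rest of the argument is deterministic in the adversary's behaviour, so ``almost-surely'' carries through.

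For the branch $b=0$, I would first use $t_a$-validity of $\HBA$ to conclude that some honest $P_h$ input $0$, i.e.\ $P_h$ accepted a triplet $(\WCORE,\ESet,\FSet)$ received from $\D$'s broadcast through regular-mode. The structural claim is that the polynomials of all honest parties in $\WCORE$ lie on a common $Q^{\star}(x,y)$: since $\ESet \subseteq \FSet \subseteq \WCORE$ and $(\ESet,\FSet)$ is an $\Star{t_s}$ in $P_h$'s graph, all pairs inside $\ESet$ are joined by edges, and $\ESet$ contains at least $n-2t_s-t_a > t_s$ honest parties whose polynomials are therefore pairwise consistent, so Lemma~\ref{lemma:bivariateI} gives $Q^{\star}$; each honest $P_j \in \FSet$ is joined to all of $\ESet$, hence to those $>t_s$ honest parties, so $q_j(x)=Q^{\star}(x,\alpha_j)$; and each honest $P_i \in \WCORE$ has edges with at least $n-t_s$ parties of $\WCORE$, of which at least $n-2t_s-t_a > t_s$ are honest members of $\FSet$, forcing $q_i(x)=Q^{\star}(x,\alpha_i)$ as well (here $\deg(P_i)\ge n-t_s\ge 1$ in $P_h$'s graph, so $P_i$ broadcast an $\OK$ message and hence did receive its $t_s$-degree polynomial from $\D$). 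Then, since $P_h$ received $(\WCORE,\ESet,\FSet)$ through regular-mode, $t_a$-weak consistency and $t_a$-fallback consistency of $\BCAST$ give that every honest party eventually receives the same triplet; combined with $\HBA$ outputting $0$, every honest $P_i \in \WCORE$ outputs $q_i(0)=q^{\star}(\alpha_i)$, while every honest $P_i \notin \WCORE$ (hence $\notin \FSet$) runs $\OEC(t_s,t_s,\Support_i)$ on the points from $\FSet$; as $|\FSet|\ge n-t_s$ and $t_s < n-t_s-2t_a \le |\FSet|-2t_a$ (using $2t_s+2t_a < 3t_s+t_a < n$ since $t_a<t_s$), Lemma~\ref{lemma:OECProperties} guarantees $P_i$ eventually reconstructs $Q^{\star}(x,\alpha_i)$ and outputs $q^{\star}(\alpha_i)$.

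For the branch $b=1$, a corrupt $\D$ may never broadcast a valid $\Star{t_a}$, in which case no honest party ever produces a wps-share and the first disjunct of the lemma holds. Otherwise I would let $P_h$ be the \emph{first} honest party to output a wps-share; then $P_h$ received a pair $(\ESet',\FSet')$ from $\D$'s broadcast that is an $\Star{t_a}$ in $P_h$'s consistency graph. Using $t_a$-weak/fallback consistency of $\BCAST$ for the broadcast of $(\ESet',\FSet')$ and for the $\OK$ messages underpinning its edges, every honest party eventually receives $(\ESet',\FSet')$ and eventually sees it become an $\Star{t_a}$ in its own graph. Since $|\ESet'|\ge n-2t_a$, $\ESet'$ has at least $n-3t_a > t_s$ honest parties (using $t_a<t_s$ and $3t_s+t_a<n$, which give $3t_a+t_s < n$) with pairwise-consistent polynomials defining $Q^{\star}$ via Lemma~\ref{lemma:bivariateI}, and each honest $P_j\in\FSet'$ lies on $Q^{\star}$ by the same pairwise-consistency argument; finally every honest $P_i\in\FSet'$ outputs $q_i(0)=q^{\star}(\alpha_i)$, and every honest $P_i\notin\FSet'$ reconstructs $Q^{\star}(x,\alpha_i)$ via $\OEC(t_s,t_s,\Support_i)$ on points from $\FSet'$ (which succeeds since $t_s < n-3t_a \le |\FSet'|-2t_a$), hence outputs $q^{\star}(\alpha_i)$.

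The main obstacle, compared with the synchronous helping lemma (Lemma~\ref{lemma:WPS}), is that we cannot use $\Delta$-based timing to force a corrupt $\D$ to deliver mutually consistent polynomials; instead the structural claims ``all honest polynomials in $\WCORE$ (resp.\ $\FSet'$) lie on a single $Q^{\star}$'' must be squeezed purely out of the counting inequalities $3t_s+t_a<n$ and $t_a<t_s$, which is exactly why every honest subset I invoke has strictly more than $t_s$ members and every $\OEC$ invocation clears its error-correction bound. The secondary subtlety is ruling out a corrupt $\D$ causing only some honest parties to output: this is handled by noting that once a single honest party accepts the triplet (resp.\ sees the star), $t_a$-fallback validity and $t_a$-fallback consistency of $\BCAST$ propagate both the broadcast value and all the $\OK$ messages supporting the relevant edges to every honest party, so the ``some-or-all'' dichotomy of the lemma follows.
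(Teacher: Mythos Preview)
Your proposal is correct and follows essentially the same approach as the paper's proof: case-split on the common $\HBA$ output (guaranteed almost-surely by $t_a$-liveness and $t_a$-consistency), use $t_a$-validity to find an honest accepter, propagate the broadcast values via $t_a$-weak/fallback consistency of $\BCAST$, extract $Q^{\star}$ from $>t_s$ pairwise-consistent honest polynomials via Lemma~\ref{lemma:bivariateI}, and finish with $\OEC$ for parties outside the core set. Your counting arguments differ slightly in form (e.g.\ you bound honest parties in $\ESet'$ by $n-3t_a>t_s$ whereas the paper writes $|\ESet'|>2t_s$), but they are equivalent, and your explicit observation that $P_i\in\WCORE$ must have received a polynomial from $\D$ (because it broadcast an $\OK$) is a detail the paper leaves implicit.
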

\begin{proof}
 Let $\D$ be {\it corrupt} and network be {\it asynchronous} with up to $t_a$ corruptions. If no honest party computes
  any output, then the lemma holds trivially. So consider the case when some honest party
   outputs a wps-share, consisting of an element of $\F$. We note that every honest party participates with some input in the instance of $\HBA$
    at local time
  $2\Delta + 2 \TimeBCAST$. Hence, from the {\it $t_a$-almost-surely liveness} and {\it $t_a$-consistency} properties
   of $\HBA$ in the {\it asynchronous} network, 
    almost-surely, all honest parties eventually compute a common output during 
    the instance of $\HBA$. Now there are two possible cases:
    \begin{myitemize}
     \item[--] {\bf The output of $\HBA$ is 0}: 
     From the {\it $t_a$-validity} of $\HBA$ in the {\it asynchronous} network, it implies that 
     at least one honest party, say $P_h$, participated with input $0$ during the instance of $\HBA$.
     This further implies that at local time $2\Delta +2\TimeBCAST$, party $P_h$
      has accepted a $(\WCORE,\ESet,\FSet)$, which has been received by $P_h$
       from the broadcast of $\D$, through regular-mode.             
            Hence, by the {\it $t_a$-weak consistency} and {\it $t_a$-fallback consistency} of $\BCAST$ in 
            the {\it asynchronous} network, 
            all honest parties will eventually receive $(\WCORE,\ESet, \FSet)$ from the broadcast of $\D$.
            There will be at least $n - 2t_s - t_a > t_s$ {\it honest} parties in $\ESet$, whose univariate polynomials $q_i(x)$ are pair-wise consistent and hence
            from Lemma \ref{lemma:bivariateI} lie on a unique          
            $(t_s, t_s)$-degree symmetric bivariate polynomial, say $Q^{\star}(x,y)$. Similarly, the univariate polynomial $q_j(x)$ of every {\it honest}
            $P_j \in \FSet$ will be pair-wise consistent with the univariate polynomials $q_i(x)$ of all the {\it honest} parties in $\ESet$ and hence lie on 
            $Q^{\star}(x, y)$ as well. Let $q^{\star}(\cdot) \defined Q^{\star}(0, y)$.
            We claim that {\it every honest} $P_i$ will eventually get $Q^{\star}(x, \alpha_i)$. 
            This will imply that eventually every {\it honest} $P_i$ outputs the wps-share $s_i = Q^{\star}(0, \alpha_i) = q^{\star}(\alpha_i)$.
            To prove the claim, consider an arbitrary {\it honest} party $P_i$. There are three possible cases.
            \begin{myitemize}
                \item[--] {\it $P_i \in \WCORE$ and $P_i \in \FSet$}: In this case, $P_i$ already has $q_i(x)$, received from $\D$. And since $P_i \in \FSet$, the condition                
                 $q_i(x) = Q^{\star}(x, \alpha_i)$ holds.
                \item[--] {\it $P_i \in \WCORE$ and $P_i \not \in \FSet$}: In this case, $P_i$ already has $q_i(x)$, received from $\D$.
                Since $|\WCORE| \geq n-t_s$ and $|\FSet| \geq n-t_s$, $|\WCORE \cap \FSet| \geq n-2t_s > t_s + t_a$. 
                From the protocol steps, the polynomial $q_i(x)$ is pair-wise consistent with the polynomial $q_j(x)$ of at least $n - t_s$ parties $P_j \in \WCORE$
                (since $P_i$ has edges with at least $n - t_s$ parties $P_j$ within $\WCORE$). Now among these $n - t_s$ parties, at least $n - 2t_s$ parties will be from $\FSet$, of which 
                at least $n - 2t_s - t_a > t_s$ parties will be {\it honest}. Hence, $q_i(x)$ is pair-wise consistent with the $q_j(x)$ polynomials of at least $t_s + 1$ 
                {\it honest} parties $P_j \in \FSet$. Now since the $q_j(x)$ polynomial of all the {\it honest} parties in $\FSet$ lie on $Q^{\star}(x, y)$, it implies that
                $q_i(x) = Q^{\star}(x, \alpha_i)$ holds.
                \item[--] {\it $P_i \not \in \WCORE$}:
                In this case, there will be $n - t_s$ parties in $\FSet$, of which at most $t_a$ could be {\it corrupt}.
                Since $Q^{\star}(x, \alpha_i)$ is a $t_s$-degree polynomial, and $t_s < |\FSet| - 2t_a$, from Lemma \ref{lemma:OECProperties}
                  it follows that by applying the OEC procedure on the common points on the
                $Q^{\star}(x, \alpha_i)$ polynomial received from the parties in $\FSet$, party $P_i$ will eventually obtain 
                $Q^{\star}(x, \alpha_i)$.
            \end{myitemize}
  \item[--] \textbf{The output of $\HBA$ is 1}: Let $P_h$ be the {\it first honest} party, who outputs a wps-share. 
         This means that $P_h$ has received a pair, say $(\ESet', \FSet')$,
          from the broadcast of $\D$, such that $(\ESet', \FSet')$ constitutes an $\Star{t_a}$
         in $P_h$'s consistency graph. By the {\it $t_a$-weak consistency} and {\it $t_a$-fallback consistency} properties of $\BCAST$ in the {\it asynchronous} network,
          {\it all} honest parties eventually receive $(\ESet', \FSet')$
         from the broadcast of $\D$. Moreover, since the consistency graphs are constructed based on the 
         broadcast $\OK$ messages and since
         $(\ESet', \FSet')$ constitutes an $\Star{t_a}$  in $P_h$'s consistency graph, from the {\it $t_a$-weak validity, $t_a$-fallback validity, $t_a$-weak consistency and 
         $t_a$-fallback consistency}
         properties of $\BCAST$ in the {\it asynchronous} network, the pair 
         $(\ESet', \FSet')$ will eventually constitute an $\Star{t_a}$ in every honest party's consistency graph. 
         Since $|\ESet'| \geq n-2t_a > 2t_s + (t_s-t_a) > 2t_s$, it follows that $\ESet'$ has at least $ t_s + 1$ {\it honest} parties, whose 
         univariate polynomials $q_i(x)$ are pair-wise consistent and hence from Lemma \ref{lemma:bivariateI}, lie on a unique
         degree-$(t_s,t_s)$ symmetric bivariate polynomial, say $Q^{\star}(x, y)$. 
         Similarly, since the univariate polynomial $q_j(x)$ of every {\it honest} party $P_j$ in $\FSet'$ is pair-wise consistent with the univariate polynomials 
         $q_i(x)$ of the {\it honest} parties in $\ESet'$,
         it implies that the univariate polynomial $q_j(x)$ of all the {\it honest} parties in $\FSet'$ also lie on $Q^{\star}(x, y)$.
         Let $q^{\star}(\cdot) \defined Q^{\star}(0, y)$. We show that {\it every honest} $P_i$ eventually
         outputs $q^{\star}(\alpha_i)$ as its wps-share. For this it is enough to show that each honest $P_i$ eventually gets 
         $q_i(x) = Q^{\star}(x, \alpha_i)$,
         as $P_i$ outputs $q_i(0)$ as its wps-share, which will be the same as $q^{\star}(\alpha_i)$.
         Consider an arbitrary {\it honest} $P_i$. There are two possible cases. 
         \begin{myitemize}
             \item[--] \textit{$P_i \in \FSet'$}: In this case, $P_i$ already has $Q^{\star}(x, \alpha_i)$, received from $\D$.
             \item[--] \textit{$P_i \notin \FSet'$}: In this case, $\FSet'$ has at least $n - t_a > 3t_s$ parties, of which at most $t_a$ could be corrupt.            
             Since $Q^{\star}(x, \alpha_i)$ is a $t_s$-degree polynomial and $t_s < |\FSet'| - 2t_a$, from Lemma \ref{lemma:OECProperties}, 
              it follows that by applying the OEC procedure on the common points on the
                polynomial $Q^{\star}(x, \alpha_i)$ received from the parties in $\FSet'$, party $P_i$ will eventually obtain 
                $Q^{\star}(x, \alpha_i)$.
         \end{myitemize}
    \end{myitemize}
\end{proof}
\begin{lemma}
 \label{lemma:WPSCommunication}
 Protocol $\WPS$ incurs a communication of $\Order(n^4 \log{|\F|})$ bits from the honest parties
   and invokes $1$ instance of $\HBA$.		   
 \end{lemma}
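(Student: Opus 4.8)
The plan is to account for the communication incurred in each phase of $\WPS$ separately and to observe that the broadcasts of Phase~III dominate. First I would handle the point-to-point communication: in Phase~I, $\D$ sends each party $P_i$ the $t_s$-degree polynomial $q_i(x)$, which costs $\Order(n \log |\F|)$ bits per party and hence $\Order(n^2 \log|\F|)$ bits in total over the honest parties; in Phase~II, each party sends a single field element $q_{ij}$ to each other party, again totalling $\Order(n^2 \log|\F|)$ bits. The local computations --- constructing the consistency graphs $G_i$, running $\StarAlgo$, and the $\OEC$ invocations inside the various ``Computing WPS-share'' steps --- incur no communication at all, since $\OEC$ here only re-uses the values $q_{ji}$ that were already transmitted in Phase~II.

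Next I would bound the broadcasts. In Phase~III each party $P_i$ invokes at most $n$ instances of $\BCAST$ (one $\OK$ or $\NOK$ message per index $j$), and each such message has size $\Order(\log|\F|)$ bits (an index pair together with at most one field element, using $|\F| > 2n$). By Theorem~\ref{thm:DS}, a single instance of $\BCAST$ on an $\ell$-bit message costs $\Order(n^2 \ell)$ bits, irrespective of the network type; hence each Phase-III broadcast costs $\Order(n^2 \log|\F|)$ bits, and the at most $n^2$ such broadcasts contribute $\Order(n^4 \log|\F|)$ bits --- the dominant term. In Phases~IV and VI, $\D$ broadcasts $(\WCORE, \ESet, \FSet)$ and $(\ESet', \FSet')$ respectively; each is a triple or pair of subsets of $\Partyset$ and so can be encoded in $\Order(n)$ bits, costing $\Order(n^3)$ bits per broadcast, which is subsumed by the Phase-III bound.

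Finally, Phase~V invokes exactly one instance of $\HBA$ on a single-bit input; its communication, which is $\mbox{poly}(n)\log|\F|$ bits by Theorem~\ref{thm:HBA}, is counted separately in the statement (its polynomial may exceed $n^4$, which is precisely why it is not folded into the explicit bound). Summing the contributions of all phases yields $\Order(n^4 \log|\F|)$ bits from the honest parties, plus the single invocation of $\HBA$. There is no genuine obstacle here beyond bookkeeping; the only points requiring a little care are to note that the $\OEC$ steps add nothing because they reuse the Phase-II pairwise messages, and that the $\BCAST$ cost bound of Theorem~\ref{thm:DS} applies uniformly, so that the exponent is governed by the $\Order(n^2)$ broadcast instances of Phase~III.
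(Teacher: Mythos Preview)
Your proposal is correct and follows essentially the same approach as the paper: both identify the $\Order(n^2)$ broadcasts of $\OK$/$\NOK$ messages in Phase~III as the dominant term and invoke the $\Order(n^2\ell)$ cost of $\BCAST$ from Theorem~\ref{thm:DS}. Your version is more thorough (separately bounding Phases~I, II, IV, VI and explicitly noting that the $\OEC$ steps reuse Phase~II messages), while the paper's proof is terser, but the substance is identical.
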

 \begin{proof}
 In the protocol, $\D$ sends a $t_s$-degree univariate polynomial to every party. As part of the pair-wise consistency checks, each pair of parties exchange $2$
  field elements. In addition, an {\it honest} party may broadcast an 
   $\NOK$ message, corresponding to every other party. As part of the $\NOK$ message,
  the honest party also broadcasts the corresponding common point on its univariate polynomial. Each such common point can be represented by $\log{|\F|}$ bits. 
   The communication complexity now follows from the communication complexity of the protocol $\BCAST$ (see Theorem \ref{thm:DS}).
 \end{proof}
 
 We next discuss the modifications needed in the protocol $\WPS$, if the input for $\D$ consists of $L$ number of
  $t_s$-degree polynomials.
\paragraph{\bf $\WPS$ for $L$ Polynomials:}
  If $\D$ has $L$ polynomials as input in protocol $\WPS$,
   then it embeds them
    into $L$ random
   $(t_s, t_s)$-degree symmetric bivariate polynomials and distributes the univariate polynomials lying on these bivariate polynomials, 
   to the respective parties. The parties then perform the 
   pair-wise consistency tests, by exchanging their supposedly common points on the bivariate polynomials.
    However, $P_i$ broadcasts a {\it single} $\OK(i, j)$ message for $P_j$, if the pair-wise consistency test is positive for {\it all} the
   $L$ supposedly common values between $P_i$ and $P_j$. 
    On the other hand, if the test fails for {\it any} of the $L$ supposedly common values, then 
   $P_i$ broadcasts a {\it single} $\NOK$ message, corresponding to the {\it least indexed} common value for which the test fails. 
    Hence, instead of constructing $L$ consistency graphs, 
    a {\it single} consistency graph is 
   constructed by each party.  As a result, $\D$ finds a {\it single} $(\WCORE, \ESet, \FSet)$ triplet and broadcast it.
   Similarly, a {\it single} instance of $\HBA$ is used to decide whether any $(\WCORE, \ESet, \FSet)$ is accepted.
   Finally, if no $(\WCORE, \ESet, \FSet)$ triplet is found and broadcast, then $\D$ looks for a {\it single}
   $\Star{t_a}$ $(\ESet', \FSet')$ and broadcasts it.

   To void repetition,
   we skip the formal details of the modified protocol and the proof of its properties, 
 as stated in Theorem \ref{thm:WPS}.
\begin{theorem}
\label{thm:WPS}
Let $n > 3t_s + t_a$ and let $\D$ has 
 $L$ number of $t_s$-degree polynomials $q^{(1)}(\cdot), \ldots, q^{(L)}(\cdot)$ as input for $\WPS$, where $L \geq 1$. Moreover, let 
 $\TimeWPS = 2 \Delta +  2\TimeBCAST + \TimeHBA$.
Then protocol $\WPS$ achieves the following properties.
  \begin{myitemize}
   \item[--] If $\D$ is {\it honest} then the following hold.
       \begin{myitemize}
       \item[--] {\it $t_s$-correctness}: In a synchronous network, each (honest) $P_i$ outputs
         $\{q(\alpha_i)\}_{\ell = 1, \ldots, L}$ at time $\TimeWPS$.
       \item[--] {\it $t_a$-correctness}: In an asynchronous network, almost-surely, each (honest) $P_i$ eventually outputs
        $\{q(\alpha_i)\}_{\ell = 1, \ldots, L}$.       
       \item[--] {\it $t_s$-privacy}: Irrespective of the network type, the view of the adversary remains independent of 
       the polynomials $q^{(1)}(\cdot), \ldots, q^{(L)}(\cdot)$.
       \end{myitemize}
    \item[--]  If $\D$ is {\it corrupt}, then either no 
     honest party computes any output or there exist $L$ number
     of $t_s$-degree polynomials, say $\{{q^{\star}}(\cdot)\}_{\ell = 1, \ldots, L}$, such that the following hold.
         \begin{myitemize}
         \item[--] {\it $t_s$-Weak Commitment}: In a synchronous network, at least $t_s + 1$ honest parties $P_i$
          output wps-shares $\{{q^{\star}}(\alpha_i)\}_{\ell = 1, \ldots, L}$. Moreover, if any honest $P_j$ outputs wps-shares
          $s^{(1)}_j, \ldots, \allowbreak s^{(L)}_j \in \F$, then
          $s_j = {q^{\star}}(\alpha_j)$ holds for $\ell = 1, \ldots, L$.
         \item[--] {\it $t_a$-Strong Commitment}: In an asynchronous network, almost-surely, each (honest) $P_i$  
          eventually outputs $\{{q^{\star}}(\alpha_i)\}_{\ell = 1, \ldots, L}$ as wps-shares.
         \end{myitemize}
    \item[--] Irrespective of the network type, the protocol incurs a communication of $\Order(n^2 L \log{|\F|} + n^4 \log{|\F|})$ 
     bits from the honest parties and invokes $1$ instance of $\HBA$.
  \end{myitemize}
\end{theorem}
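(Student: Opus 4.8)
The plan is to lift each of Lemmas~\ref{lemma:WPSPrivacy}, \ref{lemma:WPSSynCorrectness}, \ref{lemma:WPSAsynCorrectness}, \ref{lemma:WPSSynWeakCommitment}, \ref{lemma:WPSAsynStrongCommitment} and \ref{lemma:WPSCommunication} --- all proven for a single polynomial --- to the batched protocol, by treating the $L$ symmetric bivariate polynomials $Q^{(1)}(x,y),\ldots,Q^{(L)}(x,y)$ independently, and to isolate the handful of places where the batching of the $\OK/\NOK$ messages and of the single consistency graph requires a separate argument rather than a mechanical reduction.

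For the honest-$\D$ guarantees I would argue as follows. For privacy, $\D$ picks the $L$ bivariate polynomials independently and uniformly at random subject to $Q^{(\ell)}(0,y)=q^{(\ell)}(\cdot)$; the adversary learns the univariate polynomial $Q^{(\ell)}(x,\alpha_i)$ only for corrupt $P_i$, since the points exchanged in the pairwise-consistency phase involving an honest party are already determined by the polynomials held by corrupt parties, and no honest party ever broadcasts an $\NOK$ against another honest party (all $L$ tests pass between honest parties). Hence for each $\ell$ the adversary's view is consistent with only $t_s$ univariate polynomials on $Q^{(\ell)}(x,y)$, and Lemma~\ref{lemma:bivariateII} applied to each $Q^{(\ell)}(x,y)$ gives independence, exactly as in Lemma~\ref{lemma:WPSPrivacy}. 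For correctness in both network models, the only change is that an honest pair $P_j,P_k$ broadcasts a single $\OK(j,k)$ precisely when the test passes for all $L$ indices, which it does for honest parties; thus the consistency graph, the set $\WCORE$, the $\Star{t_s}$ (resp.\ $\Star{t_a}$) and the single $\HBA$ instance behave as in Lemmas~\ref{lemma:WPSSynCorrectness} and \ref{lemma:WPSAsynCorrectness}, with the same time-outs $\TimeWPS$. Each honest party inside the core receives all $L$ of its univariate polynomials directly from $\D$, and each honest party outside the core recovers, for every $\ell$ separately, its polynomial $Q^{(\ell)}(x,\alpha_i)$ by running $\OEC(t_s,t_s,\Support_i)$ on the $\ell$-th coordinates of the received points --- the size bounds on $\FSet$ (resp.\ $\FSet'$) that make OEC succeed are independent of $L$.

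The one step that does not reduce mechanically to the single-polynomial case, and the main obstacle, is weak/strong commitment for a corrupt $\D$: we must still rule out that two honest parties $P_j,P_k$ inside $\WCORE$ hold univariate polynomials disagreeing on some index $\ell$, even though each party now broadcasts only one $\NOK$, for its least failing index. The key observation is that the pairwise test is \emph{symmetric per index}: $P_j$'s test for index $\ell$ succeeds iff $q^{(\ell)}_k(\alpha_j)=q^{(\ell)}_j(\alpha_k)$, which is exactly the condition under which $P_k$'s test for index $\ell$ succeeds. Hence two honest parties fail on the same set of indices and, in particular, both broadcast an $\NOK$ pointing to the same least failing index $\ell^\star$, with $q^{(\ell^\star)}_j(\alpha_k)\neq q^{(\ell^\star)}_k(\alpha_j)$; this conflicting $\NOK$ pair is detected at time $2\Delta+\TimeBCAST$ (synchronous) or eventually (asynchronous) by the same bookkeeping already in the protocol, so $(\WCORE,\ESet,\FSet)$ is not accepted. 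Consequently the honest parties in $\WCORE$ (resp.\ in $\ESet$/$\FSet$, resp.\ in $\ESet'$/$\FSet'$) hold, for each $\ell$, pairwise-consistent $t_s$-degree univariate polynomials, which by Lemma~\ref{lemma:bivariateI} lie on a unique $(t_s,t_s)$-degree symmetric bivariate polynomial $Q^{\star(\ell)}(x,y)$; setting $q^{\star(\ell)}(\cdot)\defined Q^{\star(\ell)}(0,y)$ and repeating the OEC and $\HBA$ reasoning of Lemmas~\ref{lemma:WPSSynWeakCommitment} and \ref{lemma:WPSAsynStrongCommitment} coordinate-wise yields both commitment statements.

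Finally, the communication bound follows by direct accounting: $\D$ sends $L$ univariate polynomials to each of the $n$ parties and each ordered pair exchanges $L$ field elements, contributing $\Order(n^2 L\log|\F|)$ bits; the $\Order(n)$ $\OK/\NOK$ broadcasts per party, together with the single $(\WCORE,\ESet,\FSet)$ broadcast and the single $\Star{t_a}$ broadcast, each realized over $\BCAST$ and hence costing $\Order(n^2)$ times the message length by Theorem~\ref{thm:DS}, contribute $\Order(n^4\log|\F|)$ bits; and exactly one instance of $\HBA$ is invoked. Combining these with the preceding arguments establishes all the claimed properties.
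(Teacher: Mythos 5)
Your proof is correct and supplies the argument that the paper explicitly declines to spell out (it states that it skips the formal details of the modified protocol and the proof of its properties). The coordinate-wise reductions of privacy, correctness, OEC recovery, and the communication count to the single-polynomial Lemmas~\ref{lemma:WPSPrivacy}--\ref{lemma:WPSCommunication} are routine, and you have correctly isolated the one non-mechanical step: soundness of a single batched $\OK/\NOK$ per ordered pair driving a single consistency graph. Your observation that the per-index consistency test between two honest parties is symmetric in $(j,k)$ --- so they fail on the same set of indices and both $\NOK$ the same least index $\ell^{\star}$ with conflicting values $q_j^{(\ell^{\star})}(\alpha_k) \neq q_k^{(\ell^{\star})}(\alpha_j)$ --- is exactly what keeps the conflicting-$\NOK$ acceptance test, and hence the analogue of Lemma~\ref{lemma:WPS}, sound in the batched setting; it is the implicit content behind the paper's prescription that each party broadcast a single $\NOK$ for its least failing index. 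For the remaining commitment branches (the $\ESet'/\FSet'$ branch and the asynchronous cases), pairwise consistency among the relevant honest parties comes instead from edges in the consistency graph being mutual $\OK$'s, which already force all $L$ per-index tests to pass, so those cases really are, as you say, coordinate-wise re-reads of Lemmas~\ref{lemma:WPSSynWeakCommitment} and \ref{lemma:WPSAsynStrongCommitment}.
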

\subsection{The VSS Protocol}
Protocol $\WPS$ fails to serve as a VSS because if $\D$ is {\it corrupt} and the network is {\it synchronous}, 
 then the (honest) parties {\it outside} $\WCORE$ may not obtain their required shares, lying on $\D$'s committed polynomials.
   Protocol $\VSS$ (see Fig \ref{fig:VSS}) fixes this shortcoming. For ease of understanding, we present the protocol assuming 
  $\D$ has a single $t_s$-degree polynomial as input and later discuss the modifications needed when $\D$ has $L$ such polynomials.
  The protocol has two ``layers" of communication involved.
 The  first layer is similar to $\WPS$ and identifies whether
  the parties accepted some $(\WCORE, \ESet, \FSet)$ within a specified time-out, such that the polynomials of all
  honest parties in $\WCORE$ lie on a single $(t_s, t_s)$-degree symmetric bivariate polynomial, say $Q^{\star}(x, y)$.
    If some $(\WCORE, \ESet, \FSet)$ is accepted, then the second layer of communication (which is coupled with the first layer) enables 
    even the (honest) parties outside $\WCORE$ to get their corresponding polynomials lying on $Q^{\star}(x, y)$.
    
     In more detail, to perform the pair-wise
    consistency check of the polynomials received from $\D$, each $P_j$ upon receiving $q_j(x)$ from $\D$, 
    shares the polynomial $q_j(x)$ by invoking an instance of $\WPS$ as a dealer. Any party $P_i$ who computes a WPS-Share in this instance of $\WPS$ either broadcasts
    an $\OK$ or $\NOK$ message for $P_j$, depending on whether the WPS-share lies on the polynomial which $P_i$ has received from $\D$. 
    The rest of the steps for computing $(\WCORE, \ESet, \FSet)$ and accepting it remains the same. 
     If some $(\WCORE, \ESet, \FSet)$ is accepted,
     then    any $P_i$ {\it outside} $\WCORE$  computes its polynomial lying on $Q^{\star}(x, y)$ as follows: $P_i$ checks for a subset $\Support_i \subseteq \FSet$ 
     of $t_s + 1$ parties
    $P_j$, such that $P_i$ is able to compute its WPS-share in the $\WPS$ instance invoked by $P_j$ as a dealer.
    Such an $\Support_i$ is bound to exist as there are at least $t_s + 1$ {\it honest} parties in $\FSet$ who are always included in $\Support_i$.
    While the WPS-shares corresponding to the {\it honest} parties in $\Support_i$ will be the common points on $Q^{\star}(x, \alpha_i)$, the same holds
    even for {\it corrupt} parties in $\Support_i$. This is because in order to be included in $\FSet$,
     such parties are ``forced" to share
    polynomials lying on $Q^{\star}(x, y)$, in their respective instances
    of $\WPS$. Now using the WPS-shares corresponding to the parties in $\Support_i$, party $P_i$ will be able to compute  
    $Q^{\star}(x, \alpha_i)$ and hence, its share.
  
  \begin{protocolsplitbox}{$\VSS(\D, q(\cdot))$}{best-of-both-worlds VSS protocol for a single polynomial.}{fig:VSS}
\justify
  \begin{myitemize}
   	\item {\bf Phase I --- Sending Polynomials}: $\D$ on having the input $q(\cdot)$, chooses a random $(t_s, t_s)$-degree symmetric bivariate polynomial
	      $Q(x, y)$ such that $Q(0, y) = q(\cdot)$ and sends $q_i(x) = Q(x, \alpha_i)$ to each party $P_i \in \Partyset$. 
	    \item {\bf Phase II --- Exchanging Common Values}: Each $P_i \in \Partyset$, upon receiving a
	    $t_s$-degree polynomial $q_i(x)$ from $\D$, {\color{red} waits till the current local time becomes a multiple of $\Delta$} and then does the following.
        \begin{myitemize}
            \item[--] Act as a dealer and invoke an instance $\WPS^{(i)}$ of $\WPS$ with input $q_i(x)$. 
            \item[--] For $j = 1, \ldots, n$, participate in the instance $\WPS^{(j)}$, if invoked by $P_j$ as a dealer, {\color{red} and wait for time $\TimeWPS$}.
        \end{myitemize}
    \item {\bf Phase III  --- Publicly Declaring the Results of Pair-Wise Consistency Test}: Each $P_i \in \Partyset$ {\color{red} waits till the local time becomes a multiple of $\Delta$} 
    and then does the following.
        \begin{myitemize}
           \item[--] If a WPS-share $q_{ji}$ is computed during the instance $\WPS^{(j)}$ and $q_i(x)$ has been received from $\D$, then:
              \begin{myitemize}
               \item[--]  Broadcast $\OK(i, j)$, if $q_{ji} = q_i(\alpha_j)$ holds.
               \item[--] Broadcast $\NOK(i, j, q_i(\alpha_j))$, if $q_{ji} \neq q_i(\alpha_j)$ holds.
            \end{myitemize}            
        \end{myitemize}      
     \item {\bf Local Computation --- Constructing Consistency Graph}:  Each $P_i \in \Partyset$ does the following.
          \begin{myitemize}
           \item[--] Construct a {\it consistency graph} $G_i$ over $\Partyset$, where the edge $(P_j, P_k)$ is included in 
           $G_i$, if $\OK(j, k)$ and $\OK(k, j)$ is received from the broadcast of $P_j$ and $P_k$ respectively, either through the regular-mode or fall-back mode.	
          \end{myitemize}
         \item {\bf Phase IV --- Constructing $\Star{t_s}$}: $\D$ does the following in its consistency graph
     $G_\D$ at  time $2\Delta + \TimeBCAST$.   
      \begin{myitemize}
          \item[--] Remove edges incident with $P_i$, if $\NOK(i, j,q_{ij})$ is received from the broadcast of $P_i$ through regular-mode
           and $q_{ij} \neq Q(\alpha_j,\alpha_i)$.
          \item[--] Set $\WCORE = \{P_i: \mathsf{deg}(P_i) \geq n-t_s\}$, where $\mathsf{deg}(P_i)$ denotes the degree of $P_i$ in $G_\D$.
           \item[--] Remove $P_i$ from $\WCORE$, if $P_i$ is {\it not} incident with at least $n - t_s$ parties in $\WCORE$. Repeat this step till no more parties can be 
           removed from $\WCORE$.
          \item[--] Run algorithm $\StarAlgo$ on $G_\D[\WCORE]$, where $G_\D[\WCORE]$ denotes the subgraph of $G_\D$ induced by the vertices in $\WCORE$.
           If an $\Star{t_s}$, say $(\ESet, \FSet)$, is obtained, then broadcast $(\WCORE, \ESet, \FSet)$.
      \end{myitemize}
   \item {\bf Local Computation --- Verifying and Accepting $(\WCORE,\ESet,\FSet)$}: Each $P_i \in \Partyset$ does the following at time $\Delta + \TimeWPS + 2\TimeBCAST$.
          \begin{myitemize}
    	    \item[--] If a $(\WCORE, \ESet, \FSet)$ is received from $\D$'s broadcast through regular-mode, then {\it accept} it if following {\it were true} at time 
	    $\Delta + \TimeWPS + \TimeBCAST$:
    	    \begin{myitemize}
                \item[--] There exist no $P_j, P_k \in \WCORE$, such that 
                $\NOK(j, k, q_{jk})$ and $\NOK(k, j, q_{kj})$ messages {\it were} received from the broadcast of $P_j$ and $P_k$ respectively through regular-mode, 
                where $q_{jk} \neq q_{kj}$.
                \item[--] In the consistency graph $G_i$,
                  $\mathsf{deg}(P_j) \ge n-t_s$  for all $P_j \in \WCORE$.
                \item[--] In the consistency graph $G_i$, every $P_j \in \WCORE$ has 
                  edges with at least $n - t_s$ parties from $\WCORE$. 
                \item[--] $(\ESet, \FSet)$ {\it was} an $\Star{t_s}$ in the induced graph $G_i[\WCORE]$.
                \item[--] For every $P_j, P_k \in \WCORE$ where the edge $(P_j, P_k)$ is present in $G_i$, the $\OK(j, k)$ and $\OK(k, j)$ messages {\it were} received from the
                 broadcast of $P_j$ and $P_k$ respectively, through regular-mode.
            \end{myitemize} 
        \end{myitemize}
    \item {\bf Phase V  --- Deciding Whether to Go for $\Star{t_a}$}: At time $\Delta + \TimeWPS + 2\TimeBCAST$, each party $P_i$ participates in an instance of $\HBA$ with input $b_i = 0$ if a $(\WCORE, \ESet, \FSet)$ is accepted, else with input $b_i = 1$ and {\color{red} waits for time $\TimeHBA$}.
    \item \textbf{Local Computation --- Computing VSS-Share Through $(\WCORE, \ESet, \FSet)$:} If $0$ is the output during
    the instance of $\HBA$, then each $P_i$ does the following.
		  \begin{myitemize}
			  \item[--] If a $(\WCORE,\ESet,\FSet)$ is not yet received, then
			  wait till it is received from $\D$'s broadcast through fall-back mode.  
			  \item[--] If $P_i \in \WCORE$, then output $q_i(0)$. 
              \item[--] Else, initialize $\Support_i$ to $\emptyset $. Include $P_j \in \FSet$ to $\Support_i$ if a wps-share $q_{ji}$ is computed during 
              $\WPS^{(j)}$. Wait till $|\Support_i| \geq t_s + 1 $. Then interpolate $\{(\alpha_j, q_{ji}) \}_{P_j \in \Support_i}$  to get a $t_s$-degree 
              polynomial, say $q_i(x)$, and output $q_i(0)$.
              \end{myitemize}
    \item {\bf Phase VI  --- Broadcasting $\Star{t_a}$}: If the output during the instance of $\HBA$ is $1$, 
    then $\D$ runs $\StarAlgo$ after every update in its consistency graph $G_\D$ and broadcasts $(\ESet', \FSet')$, if it finds an 
    $\Star{t_a}$ $(\ESet', \FSet')$.
    \item \textbf{Local Computation --- Computing VSS-Share Through $\Star{t_a}$}: If the output during the instance of $\HBA$ is
     $1$, then each $P_i$ does the following. 
        \begin{myitemize}
	       \item[--] Participate in any instance of $\BCAST$ invoked by $\D$ for broadcasting an $\Star{t_a}$ {\it only} after time 
	       $\Delta + \TimeWPS + 2\TimeBCAST + \TimeHBA$.		       
	       Wait till some $(\ESet', \FSet')$ is obtained from $\D$'s broadcast (through any mode), which constitutes an $\Star{t_a}$ in $G_i$.
           \item[--] If $P_i \in \FSet'$, then output $q_i(0)$. 
           Else, include $P_j \in \FSet'$ to $\Support_i$ (initialized to $\emptyset$) if a wps-share $q_{ji}$ is computed in
              $\WPS^{(j)}$. Wait till $|\Support_i| \geq t_s + 1 $. Then interpolate $\{(\alpha_j, q_{ji}) \}_{P_j \in \Support_i}$  to get a $t_s$-degree polynomial $q_i(x) $ and output $q_i(0)$.
                \end{myitemize}   
 \end{myitemize}
\end{protocolsplitbox}

We next proceed to prove the properties of the protocol $\VSS$. We first start by showing that if $\D$ is {\it honest}, then the view of the adversary remains independent of
 dealer's polynomial.
\begin{lemma}[{\it $t_s$-Privacy}]
 \label{lemma:VSSPrivacy}
In protocol $\VSS$, if $\D$ is honest, then irrespective of the network type, the view of the adversary remains independent of $q(\cdot)$.
 \end{lemma}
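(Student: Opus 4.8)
The plan is to mirror the argument used for $t_s$-privacy of $\WPS$ (Lemma~\ref{lemma:WPSPrivacy}): against an adversary corrupting any set $\Bad$ of at most $t_s$ parties, I will show that the entire view of the adversary in an execution of $\VSS$ with an \emph{honest} $\D$ can be generated as a randomized function of the $t_s$ univariate polynomials $\{Q(x,\alpha_i)\}_{P_i \in \Bad}$ that the corrupt parties legitimately receive from $\D$ in Phase~I, together with the adversary's own coins. Since $Q(x,y)$ is a uniformly random $(t_s,t_s)$-degree symmetric bivariate polynomial with $Q(0,y)=q(\cdot)$, and $|\Bad| \le t_s$, Lemma~\ref{lemma:bivariateII} then yields that this collection of polynomials --- hence the adversary's view --- is distributed identically for any two $t_s$-degree polynomials that agree with $q(\cdot)$ on $\{\alpha_i\}_{P_i \in \Bad}$, i.e.\ the view is independent of $q(\cdot)$.

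To establish this I would proceed message-type by message-type. In Phase~I the corrupt parties obtain exactly $\{Q(x,\alpha_i)\}_{P_i \in \Bad}$. In Phase~II each $P_j$ invokes an instance $\WPS^{(j)}$ with input $q_j(x)$; for an \emph{honest} $P_j$, the $t_s$-privacy of $\WPS$ (Theorem~\ref{thm:WPS}) guarantees the adversary learns nothing about $q_j(x)$ beyond the WPS-shares of the corrupt parties, namely $\{q_j(\alpha_i)\}_{P_i\in\Bad}=\{Q(\alpha_i,\alpha_j)\}_{P_i\in\Bad}$, each of which is already fixed by the polynomial $Q(x,\alpha_i)$ held by the corrupt $P_i$ (using $Q(\alpha_i,\alpha_j)=Q(\alpha_j,\alpha_i)$). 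For a \emph{corrupt} $P_j$, the instance $\WPS^{(j)}$ is driven by adversarially chosen data, and honest participants there only manipulate the polynomial that $P_j$ itself distributed in that instance, which is unrelated to $Q$. The Phase~III messages of an honest $P_i$ are either $\OK(i,j)$, which carries no field element, or $\NOK(i,j,q_i(\alpha_j))$; the latter reveals $q_i(\alpha_j)=Q(\alpha_j,\alpha_i)$, but an honest $P_i$ broadcasts $\NOK$ for $P_j$ only when its consistency check fails, which (by the honest-$P_j$ analysis above) can occur only if $P_j$ is corrupt --- and a corrupt $P_j$ already knows $Q(\alpha_j,\alpha_i)=q_j(\alpha_i)$ from its Phase~I polynomial. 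Finally, the sets $\WCORE,\ESet,\FSet,\ESet',\FSet'$ broadcast by $\D$, the input bits and transcript of the $\HBA$ instance, and all local share-reconstruction steps depend only on the consistency graph (hence on the $\OK/\NOK$ pattern) and on WPS-shares already accounted for, so they add nothing new.

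Combining these observations, the adversary's view is a randomized function of $\{Q(x,\alpha_i)\}_{P_i\in\Bad}$ alone, and the conclusion follows from Lemma~\ref{lemma:bivariateII} exactly as in the proof of Lemma~\ref{lemma:WPSPrivacy}. The main subtlety --- and the step I would treat most carefully --- is the second ``layer'' of communication, the instances $\WPS^{(j)}$: one must argue both that the honest-dealer instances leak nothing extra about the corresponding slice $q_j(x)=Q(x,\alpha_j)$ (which is precisely the $t_s$-privacy of $\WPS$, together with the observation that the corrupt parties' WPS-shares are redundant given Phase~I), and that no corrupt-dealer instance ever causes an honest party to transmit a value derived from $Q$ other than the already-known cross-evaluations $Q(\alpha_j,\alpha_i)$ that may surface through $\NOK$ messages.
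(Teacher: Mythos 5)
Your proposal takes essentially the same approach as the paper's proof: argue that the adversary's entire view is determined by the $t_s$ polynomials $\{Q(x,\alpha_i)\}_{P_i\in\Bad}$ it legitimately receives in Phase I, reduce the Phase II leakage to the $t_s$-privacy of $\WPS$ plus the redundancy of corrupt parties' WPS-shares, observe that honest-to-honest $\NOK$ messages never arise, and finish with Lemma~\ref{lemma:bivariateII}. You are in fact slightly more careful than the paper at one spot --- explicitly noting that an honest $P_i$ may still broadcast $\NOK(i,j,q_i(\alpha_j))$ against a \emph{corrupt} $P_j$, but that this reveals only $Q(\alpha_j,\alpha_i)$, which is already determined by $P_j$'s own Phase-I polynomial --- whereas the paper only states that no $\NOK$ is sent between honest pairs; this added case analysis is correct and tightens the argument without changing its structure.
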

 \begin{proof}
 Let $\D$ be {\it honest}. We consider the worst case scenario when adversary controls up to $t_s$ parties. 
 We claim that throughout the protocol, the adversary learns at most $t_s$ univariate polynomials lying on $Q(x, y)$.
  Since $Q(x, y)$ is a random $(t_s, t_s)$-degree
  symmetric-bivariate polynomial, it then follows from Lemma \ref{lemma:bivariateII}, 
  that the view of the adversary will be independent of $q(\cdot)$. We next proceed to prove the claim.

 Corresponding to every {\it corrupt} $P_i$, the adversary learns $Q(x, \alpha_i)$. 
   Corresponding to every {\it honest} $P_i$, the adversary learns $t_s$ number of $q_i(\alpha_j)$ values through pair-wise consistency tests,
   as these values are computed as wps-shares, during the instance $\WPS^{(i)}$. 
    However, these values are already included in the view of the adversary (through the univariate polynomials under adversary's control).
   Additionally, from the {\it $t_s$-privacy} property of $\WPS$, the view of the adversary remains independent of $q_i(x)$ during $\WPS^{(i)}$, if $P_i$ is {\it honest}.
  Hence no additional information about the polynomials of the honest parties is revealed during the pair-wise consistency checks.
   Furthermore, no {\it honest} $P_i$ ever broadcasts $\NOK(i, j, q_{ij})$ corresponding to
    any {\it honest} $P_j $, since the pair-wise consistency check will always pass for every pair of {\it honest} parties.
 \end{proof}
 
 We next prove the correctness property in a {\it synchronous} network. 
 \begin{lemma}[{\it $t_s$-Correctness}]
 \label{lemma:VSSSynCorrectness}
In protocol $\VSS$, if $\D$ is honest and network is synchronous, then each honest $P_i$ outputs $q(\alpha_i)$ within time
 $\TimeVSS = \Delta + \TimeWPS +   2\TimeBCAST + \TimeHBA$.
 \end{lemma}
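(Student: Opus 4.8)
The plan is to follow the proof of Lemma~\ref{lemma:WPSSynCorrectness} almost verbatim, inserting one extra layer to account for the nested $\WPS$ instances that $\VSS$ runs in place of the direct pair-wise exchange. Throughout I assume $\D$ is honest, the network is synchronous, and there are up to $t_s$ corruptions. First I would show that in Phase~I every honest $P_i$ receives $q_i(x) = Q(x,\alpha_i)$ from $\D$ by time $\Delta$, so that at local time $\Delta$ (the next multiple of $\Delta$) every honest $P_i$ simultaneously invokes $\WPS^{(i)}$ with input $q_i(x)$ and joins every $\WPS^{(j)}$. Invoking the $t_s$-correctness of $\WPS$ in a synchronous network (Theorem~\ref{thm:WPS}) for each honest dealer $P_j$, I get that every honest $P_i$ computes the wps-share $q_{ji} = q_j(\alpha_i)$ in $\WPS^{(j)}$ by local time $\Delta + \TimeWPS$.

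Next I would analyse Phase~III and the consistency graph. Using symmetry of $Q$, for every pair of honest $P_i, P_j$ we have $q_{ji} = q_j(\alpha_i) = Q(\alpha_i,\alpha_j) = Q(\alpha_j,\alpha_i) = q_i(\alpha_j)$, so at local time $\Delta + \TimeWPS$ every honest $P_i$ broadcasts $\OK(i,j)$ for each honest $P_j$ and never broadcasts an $\NOK$ message about an honest party. By the $t_s$-validity and $t_s$-consistency of $\BCAST$ in a synchronous network, at time $\Delta + \TimeWPS + \TimeBCAST$ all honest parties (and $\D$) hold the \emph{same} consistency graph, in which the honest parties form a clique. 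Then, exactly as in Lemma~\ref{lemma:WPSSynCorrectness}, $\D$ prunes only those corrupt parties that broadcast an incorrect $\NOK$, obtains a $\WCORE$ containing every honest party, locates an $\Star{t_s}$ $(\ESet,\FSet)$ in $G_\D[\WCORE]$, and broadcasts $(\WCORE,\ESet,\FSet)$; by $t_s$-validity this reaches all honest parties through regular-mode by time $\Delta + \TimeWPS + 2\TimeBCAST$, and each of the acceptance predicates (no conflicting $\NOK$ pair inside $\WCORE$, the degree and star conditions in the common graph, all the relevant $\OK$ messages through regular-mode) is satisfied, so every honest party accepts.

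Finally I would close with the $\HBA$ step and the share computation. All honest parties enter $\HBA$ with input $0$, so by $t_s$-validity and $t_s$-guaranteed liveness of $\HBA$ in a synchronous network (Theorem~\ref{thm:HBA}) every honest party outputs $0$ by time $\Delta + \TimeWPS + 2\TimeBCAST + \TimeHBA = \TimeVSS$ and runs the ``$(\WCORE,\ESet,\FSet)$'' share branch. Since every honest $P_i$ is in $\WCORE$, it outputs $q_i(0) = Q(0,\alpha_i) = q(\alpha_i)$; for completeness I would also record why an honest $P_i \notin \WCORE$ (impossible here) would still recover $Q(x,\alpha_i)$ by interpolating the $t_s+1$ wps-shares $q_{ji}$, $P_j \in \Support_i \subseteq \FSet$ --- the $\ge t_s+1$ honest parties of $\FSet$ supply correct points, and any corrupt party admitted to $\FSet$ was forced by $\WPS$-weak-commitment to share a polynomial lying on $Q$.

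The main obstacle is the timing bookkeeping: I must check that every ``wait until the local time is a multiple of $\Delta$'' instruction, composed with the $\TimeWPS$-long nested $\WPS$ runs and the $\TimeBCAST$-long regular-mode of $\BCAST$, lines up so that the Phase~IV predicates are evaluated on fully-formed, identical consistency graphs and the protocol finishes at exactly $\TimeVSS$ (noting in particular that the $\WPS$ instances only start at time $\Delta$, and that the ``$2\Delta + \TimeBCAST$'' written for Phase~IV must be read as $\Delta + \TimeWPS + \TimeBCAST$ in this protocol). A secondary point is to re-invoke the degree argument of Lemma~\ref{lemma:WPSSynCorrectness} to guarantee that all honest parties land in $\WCORE$, so that the reconstruction path outside $\WCORE$ is vacuous for honest parties.
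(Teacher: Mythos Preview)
Your proposal is correct and follows essentially the same approach as the paper's own proof: trace the timing through Phase~I ($\Delta$), the nested $\WPS$ instances ($\Delta + \TimeWPS$), the $\OK$ broadcasts and consistency graph ($\Delta + \TimeWPS + \TimeBCAST$), $\D$'s construction and broadcast of $(\WCORE,\ESet,\FSet)$ ($\Delta + \TimeWPS + 2\TimeBCAST$), and the $\HBA$ instance ($\TimeVSS$), then conclude that every honest $P_i$ lies in $\WCORE$ and outputs $q_i(0) = q(\alpha_i)$. Your extra remarks --- the vacuous $P_i \notin \WCORE$ branch and the observation that the ``$2\Delta + \TimeBCAST$'' in the Phase~IV description should be read as $\Delta + \TimeWPS + \TimeBCAST$ --- are correct additions that the paper's proof omits.
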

 \begin{proof}
 Let $\D$ be {\it honest} and network be {\it synchronous} with up to $t_s$ corruptions. During phase I, 
  all honest parties receive $q_i(x) = Q(x,\alpha_i)$ from $\D$ within time $\Delta$. Consequently during phase II,
   each {\it honest} $P_i$ invokes the instance $\WPS^{(i)}$ with input 
  $q_i(x)$. From the {\it $t_s$-correctness} of $\WPS$ in the {\it synchronous} network, 
  corresponding to each {\it honest} $P_j$, every {\it honest} $P_i$ computes the wps-share $q_{ji} = q_j(\alpha_i)$, at time $\Delta + \TimeWPS$.
  Consequently, during phase III,
   every {\it honest} party broadcasts an $\OK$ message for every other
   {\it honest} party, since $q_{ji} = q_i(\alpha_j)$ holds, for every pair of honest parties $P_i, P_j$.
       From the {\it $t_s$-validity} property of $\BCAST$ in the {\it synchronous} network, these $\OK$ messages are received by 
   every honest party through regular-mode at time $\Delta + \TimeWPS + \TimeBCAST$. Hence, there will be an edge between every pair of
  {\it honest} parties in the consistency graph of every honest party. 
  Moreover, if $\D$ receives an {\it incorrect} $\NOK(i, j, q_{ij})$ message from the broadcast of any {\it corrupt}
  $P_j$ through regular-mode at time $\Delta + \TimeWPS + \TimeBCAST$ where $q_{ij} \neq Q(\alpha_j, \alpha_i)$, then $\D$ removes all the edges incident with
  $P_i$ in $\D$'s consistency graph $G_\D$. 
  $\D$ then computes the set $\WCORE$ and all {\it honest} parties will be present in $\WCORE$. Moreover, the honest parties will
  form a clique of size at least $n - t_s$ in the subgraph $G_\D[\WCORE]$ at time $\Delta + \TimeWPS + \TimeBCAST$.
    Hence, $\D$ will find an $\Star{t_s}$ $(\ESet, \FSet)$ in $G_\D[\WCORE]$ and broadcast  $(\WCORE, \ESet, \FSet)$ during phase IV. 
  By the {\it $t_s$-validity} of $\BCAST$ in the {\it synchronous} network, 
  all honest parties will receive $(\WCORE, \ESet, \FSet)$ through regular-mode at time $\Delta + \TimeWPS + 2\TimeBCAST$.
  Moreover, all honest parties will accept {\it accept}  $(\WCORE,\ESet, \FSet)$ and participate with input $0$ in the instance of $\HBA$.
   By the {\it $t_s$-validity} and {\it $t_s$-guaranteed liveness}  of $\HBA$ in the {\it synchronous} network,
    the output of the $\HBA$ instance will be $0$ for every honest party at time
    $\Delta + \TimeWPS + 2\TimeBCAST + \TimeHBA$. 
    Now consider an arbitrary {\it honest} party $P_i$. Since $P_i \in \WCORE$, $P_i$ outputs $s_i = q_i(0) = Q(0, \alpha_i) = q(\alpha_i)$.
 \end{proof}
 
 We next prove the correctness property in the asynchronous network.
\begin{lemma}[{\it $t_a$-Correctness}]
\label{lemma:VSSAsynCorrectness}
In protocol $\VSS$, if $\D$ is honest and network is asynchronous, then almost-surely,
 each honest $P_i$ eventually outputs
  $q(\alpha_i)$.
\end{lemma}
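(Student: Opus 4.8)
The plan is to mirror the proof of Lemma~\ref{lemma:WPSAsynCorrectness}, adapted to the two-layer structure of $\VSS$. Throughout assume $\D$ is honest, the network is asynchronous, and the adversary corrupts at most $t_a$ parties. First I would collect the routine eventual-delivery facts: every honest $P_i$ eventually receives $q_i(x) = Q(x, \alpha_i)$ from $\D$ and invokes the instance $\WPS^{(i)}$ with this input; by the $t_a$-correctness of $\WPS$ (Theorem~\ref{thm:WPS}), for every ordered pair of honest parties $(P_j, P_i)$ party $P_i$ almost-surely eventually computes its wps-share $q_{ji} = q_j(\alpha_i)$ in $\WPS^{(j)}$, which equals $q_i(\alpha_j)$ by symmetry of $Q$. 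Hence every ordered pair of honest parties eventually broadcasts mutual $\OK$ messages, no honest party ever broadcasts $\NOK$ against an honest party, and by the $t_a$-weak-validity and $t_a$-fallback-validity of $\BCAST$ (Theorem~\ref{thm:DS}) all such $\OK$ messages eventually enter every honest consistency graph $G_i$ and $G_\D$; so a clique on the honest parties, of size at least $n - t_a$, eventually appears in each of them. Since every honest party reaches the $\HBA$ invocation at a fixed local time, the $t_a$-almost-surely liveness and $t_a$-consistency of $\HBA$ (Theorem~\ref{thm:HBA}) guarantee that almost-surely all honest parties eventually obtain a common bit $b$, and I would then split on $b$.

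\textbf{Case $b = 0$.} By $t_a$-validity of $\HBA$ some honest party participated with input $0$, hence accepted a $(\WCORE, \ESet, \FSet)$ received from $\D$'s broadcast through regular-mode; since $\D$ is honest it broadcast a unique such triple, and by the $t_a$-weak-validity and $t_a$-fallback-validity of $\BCAST$ every honest party eventually receives it. Every honest $P_i \in \WCORE$ already has $q_i(x)$ and outputs $q_i(0) = Q(0, \alpha_i) = q(\alpha_i)$. For an honest $P_i \notin \WCORE$ I must show that $\Support_i$ eventually reaches size $t_s + 1$ and that interpolating the collected wps-shares yields exactly $q_i(x)$. Since $|\FSet| \geq n - t_s$ and at most $t_a$ of its members are corrupt, $\FSet$ contains at least $n - t_s - t_a > t_s$ honest parties, and for each honest $P_j \in \FSet$ party $P_i$ eventually computes a wps-share $q_{ji} = q_j(\alpha_i) = q_i(\alpha_j)$ in $\WPS^{(j)}$; so eventually $|\Support_i| \geq t_s + 1$. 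The crucial sub-claim is that \emph{every} $P_j \in \Support_i$, even a corrupt one, contributes $q_{ji} = q_i(\alpha_j) = Q(\alpha_j, \alpha_i)$: as $P_j \in \FSet$ has edges in $G_\D$ to all of $\ESet$, and $\ESet \subseteq \WCORE$ contains at least $n - 2t_s - t_a > t_s$ honest parties, each such honest $P_k \in \ESet$ broadcast $\OK(k, j)$ and therefore its wps-share in $\WPS^{(j)}$ equals $q_k(\alpha_j) = Q(\alpha_j, \alpha_k)$. By the $t_a$-strong-commitment of $\WPS$, the polynomial $\bar q^{(j)}(\cdot)$ committed by $P_j$ in $\WPS^{(j)}$ satisfies $\bar q^{(j)}(\alpha_k) = Q(\alpha_j, \alpha_k)$ for more than $t_s$ values $\alpha_k$, so $\bar q^{(j)}(y) = Q(\alpha_j, y)$ (two distinct $t_s$-degree polynomials agree on at most $t_s$ points), whence $P_i$'s wps-share in $\WPS^{(j)}$ is $\bar q^{(j)}(\alpha_i) = Q(\alpha_j, \alpha_i) = q_i(\alpha_j)$. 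Interpolating any $t_s + 1$ of these values recovers $q_i(x)$, and $P_i$ outputs $q_i(0) = q(\alpha_i)$.

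\textbf{Case $b = 1$.} Because the honest clique of size at least $n - t_a$ eventually forms in $G_\D$, $\D$ eventually runs $\StarAlgo$ successfully, obtains an $\Star{t_a}$ $(\ESet', \FSet')$, and broadcasts it; by $t_a$-weak-validity and $t_a$-fallback-validity of $\BCAST$ every honest party eventually receives it, and since the honest $\OK$-edges eventually appear in every honest graph, $(\ESet', \FSet')$ eventually becomes an $\Star{t_a}$ in each $G_i$. Every honest $P_i \in \FSet'$ outputs $q_i(0) = q(\alpha_i)$, and for honest $P_i \notin \FSet'$ the argument of the previous case applies verbatim with $(\ESet', \FSet')$ in place of $(\ESet, \FSet)$: $\FSet'$ contains at least $n - 2t_a > t_s$ honest parties feeding $\Support_i$, while $\ESet'$ contains at least $n - 3t_a > t_s$ honest parties whose $\OK$-edges to any $P_j \in \FSet'$ force the committed polynomial of $\WPS^{(j)}$ to equal $Q(\alpha_j, y)$; hence every wps-share collected in $\Support_i$ is a point on $q_i(x) = Q(x, \alpha_i)$, and $P_i$ outputs $q_i(0) = q(\alpha_i)$. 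In all cases the ``almost-surely'' qualifier is inherited from the single $\HBA$ instance and from the finitely many $\WPS$ instances.

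\textbf{Main obstacle.} The delicate point is the sub-claim about corrupt parties inside $\FSet$ (resp.\ $\FSet'$): unlike $\WPS$, where a party outside the core runs $\OEC$ and therefore tolerates up to $t_a$ spurious points, in $\VSS$ a party outside the core interpolates \emph{directly} from any $t_s + 1$ wps-shares it manages to collect, so the proof must guarantee that none of those can be wrong. This is exactly where the edges between $\ESet$ (resp.\ $\ESet'$) and $\FSet$ (resp.\ $\FSet'$), the $t_a$-strong-commitment of $\WPS$, and the rigidity of $t_s$-degree polynomials combine. Everything else is routine eventual-delivery bookkeeping, parallel to the proofs of Lemma~\ref{lemma:WPSAsynCorrectness} and Lemma~\ref{lemma:VSSSynCorrectness}.
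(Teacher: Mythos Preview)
Your proposal is correct and follows essentially the same approach as the paper: split on the common $\HBA$ output, and in each case show that every honest $P_i$ recovers $Q(x,\alpha_i)$, with the key sub-claim being that even a corrupt $P_j \in \Support_i$ is committed in $\WPS^{(j)}$ to $Q(\alpha_j,\cdot)$. The only notable difference is in how you force this in Case $b=0$: the paper uses the $\WCORE$ degree condition ($P_j \in \WCORE$ has edges to at least $n-t_s$ parties within $\WCORE$, hence to at least $n-t_s-t_a > t_s$ honest ones), whereas you use the star structure ($P_j \in \FSet$ is adjacent to all of $\ESet$, which contains at least $n-2t_s-t_a > t_s$ honest parties). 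Both counts work, and your version has the minor advantage of reusing the same star-based argument uniformly in Cases $b=0$ and $b=1$. One small imprecision: in Case $b=1$ you write that ``the honest $\OK$-edges eventually appear in every honest graph'' to conclude $(\ESet',\FSet')$ becomes a star in each $G_i$, but in fact you need \emph{all} edges $\D$ saw (including those from corrupt broadcasters) to propagate; this follows from the $t_a$-weak-consistency and $t_a$-fallback-consistency of $\BCAST$, which you should cite there alongside validity.
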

\begin{proof}
Let $\D$ be {\it honest} and network be {\it asynchronous} with up to $t_a$ corruptions. We first note that every {\it honest} $P_i$ eventually broadcasts $\OK(i, j)$ message, corresponding to every
 {\it honest} $P_j$. This is because both $P_i$ and $P_j$ eventually receive $q_i(x) = Q(x, \alpha_i) $
  and $q_j(x) = Q(x, \alpha_j)$ respectively from $\D$. Moreover, $P_j$
  participates with input $q_j(\cdot)$ during $\WPS^{(j)}$. And from the {\it $t_a$-correctness} of $\WPS$ in the
   {\it asynchronous} network, 
  party $P_i$ eventually computes the wps-share $q_{ji} = q_j(\alpha_i)$ during
 $\WPS^{(j)}$. Moreover, $q_{ji} = q_{ij}$ holds. 
  Note that every honest party participates with some input in the instance of $\HBA$ at local time
  $\Delta + \TimeWPS + 2 \TimeBCAST$. Hence, from the {\it $t_a$-almost-surely liveness} and
  {\it $t_a$-consistency} properties of $\HBA$ in the asynchronous network,
  almost-surely, all honest parties eventually compute a common output during the instance of $\HBA$. 
  Now there are two possible cases:
    \begin{myitemize}
     \item[--] {\bf The output of $\HBA$ is $0$}: 
     From the {\it $t_a$-validity} of $\HBA$ in the {\it asynchronous} network, this means that at least one {\it honest} party, say
     $P_h$, participated with input $0$ during the instance of $\HBA$. This implies that 
      $P_h$
            has received $(\WCORE, \ESet,\FSet)$ from the broadcast of $\D$ through regular-mode and accepted it.             
            Hence, by the {\it $t_a$-weak validity} and {\it $t_a$-fallback validity} of 
            $\BCAST$ in the {\it asynchronous} network, 
            all honest parties will eventually receive $(\WCORE, \ESet,\FSet)$ from the broadcast of $\D$.
            We claim that {\it every honest} $P_i$ will eventually get $Q(x, \alpha_i)$.
             This will imply that eventually every {\it honest} $P_i$ outputs  $s_i = Q(0, \alpha_i) = q(\alpha_i) $.
            To prove the claim, consider an arbitrary {\it honest} party $P_i$. There are two possible cases.
            \begin{myitemize}
                \item[--] {\it $P_i \in \WCORE$}: In this case, $P_i$ already has $Q(x, \alpha_i) $, received from $\D$.
                \item[--] {\it $P_i \notin \WCORE$}: In this case, we first note that there will be at least $t_s + 1$ parties, who are eventually included in $\Support_i $. 
                This follows from the fact that there are at least
    $t_s + 1$ {\it honest} parties $P_j$ in $\FSet$. And corresponding to every {\it honest} $P_j \in \FSet$, party $P_i$ will eventually compute the wps-share
     $q_{ji}$ in the instance $\WPS^{(j)}$, which follows from the {\it $t_a$-correctness} of $\WPS$ in the {\it asynchronous} network.
    We next claim that corresponding to {\it every} $P_j \in \Support_i$, the value $q_{ji}$ computed by $P_i$ is the {\it same} as
    $Q(\alpha_j, \alpha_i)$. 
    
    The claim is obviously true for every {\it honest} $P_j \in \Support_i$, so consider a {\it corrupt} $P_j \in \Support_i$. We first note that
     the input polynomial $q_j(x)$ of $P_j$ during $\WPS^{(j)}$
      is the {\it same} as $Q(x, \alpha_j)$. This is because $P_j \in \WCORE$, since $\FSet \subseteq \WCORE$.
     And hence $P_j$ has edges with at least $n - t_s$ parties in $\WCORE$
     and hence with at least $n - t_s - t_a > t_s$ {\it honest}
     parties from $\WCORE$
    in $G_\D[\WCORE]$. Let $\Honest$ be the set of {\it honest} parties in $\WCORE$ with which $P_j$ has edges in $G_\D[\WCORE]$.
     This implies that {\it every} $P_k \in \Honest$ has broadcast $\OK(k, j)$ message after verifying that $q_{jk} = q_k(\alpha_j)$ holds, 
     where
     the polynomial $q_k(x)$ held by $P_k$ is the same as $Q(x, \alpha_k)$ and where
          the WPS-share $q_{jk}$ computed by $P_k$ during $\WPS^{(j)}$ is the {\it same} as $q_j(\alpha_k)$;
          the last property follows from the {\it $t_a$-strong commitment} of $\WPS$ in the {\it synchronous} network.
     Since $|\Honest| > t_s$, it implies that at least $t_s + 1$ {\it honest} parties $P_k$ have verified that $q_j(\alpha_k) = Q(\alpha_j, \alpha_k)$
     holds. This further implies that $q_j(x) = Q(x, \alpha_j)$, since two different $t_s$-degree polynomials can have at most $t_s$ common values.
    Since $P_i$ has computed the wps-share $q_{ji}$ during $\WPS^{(j)}$, from the {\it $t_a$-strong commitment} of $\WPS$ in {\it synchronous} network,  it follows that
     $q_{ji} = q_j(\alpha_i) = Q(\alpha_i, \alpha_j) = Q(\alpha_j, \alpha_i)$. The last equality follows since each $Q(x, y)$ is a symmetric bivariate polynomial.
            \end{myitemize}
      \item[--] {\bf The output of $\HBA$ is $1$}: As mentioned earlier, since $\D$ is {\it honest}, every pair of honest parties eventually
   broadcast $\OK$ messages corresponding to each other, as the pair-wise consistency check between them will be eventually positive. 
   From the {\it $t_a$-weak validity} and {\it $t_a$-fallback validity} of $\BCAST$ in the {\it asynchronous} network, these messages are 
   eventually delivered to every honest party.
   Also from the {\it $t_a$-weak consistency} and {\it $t_a$-fallback consistency} of $\BCAST$ in the {\it asynchronous} network, 
   any $\OK$ message which is received by $\D$, will be eventually received by every other honest party as well.
  As there will be at least $n - t_a$ honest parties, a clique of size at least $n - t_a$ will eventually form in the consistency graph of every honest party. 
   Hence $\D$ will eventually find an $\Star{t_a}$, say $(\ESet',\FSet')$, in its consistency graph and broadcast it. From 
   the {\it $t_a$-weak validity} and {\it $t_a$-fallback validity} of $\BCAST$ in the {\it asynchronous} network, 
    $(\ESet',\FSet')$ will be eventually received by every honest party. 
   Moreover, $(\ESet',\FSet')$ will be eventually an $\Star{t_a}$ in every honest party's consistency graph.
       We now claim that {\it every honest} $P_i$ will eventually get $Q(x, \alpha_i) $.
        This will imply that eventually every {\it honest} $P_i$ outputs $s_i = Q(0, \alpha_i) = q(\alpha_i) $.
            To prove the claim, consider an arbitrary {\it honest} party $P_i$. There are two possible cases.
            \begin{myitemize}
                \item[--] $P_i \in \FSet'$: In this case, $P_i$ already has $Q(x, \alpha_i) $, received from $\D$.
                \item[--] $P_i \notin \FSet'$: In this case, we note that there will be will be at least $t_s + 1$ parties, who are eventually included in $\Support_i $,
                 such that
                corresponding to {\it every} $P_j \in \Support_i$, the value $q_{ji}$ computed by $P_i$ during $\WPS^{(j)}$ is the same as 
                $Q(\alpha_j, \alpha_i)$. The proof for this will be similar as for the case when $P_i \notin \WCORE$ and the output of $\HBA$ is $0$ and so we skip the proof.
                            \end{myitemize}
   \end{myitemize}
\end{proof}

Before  we proceed to prove the {\it strong commitment} property in the {\it synchronous} network, we prove a helping lemma.
\begin{lemma}
\label{lemma:VSS}
Let $\D$ be corrupt and network be synchronous.  If any honest party receives a
 $(\WCORE, \ESet, \FSet)$ from the broadcast of $\D$ through regular-mode and accepts $(\WCORE, \ESet, \FSet)$ at time $\Delta + \TimeWPS + 2\TimeBCAST$, 
 then all the following hold:
 \begin{myitemize}
\item[--] All honest parties in $\WCORE$ have received their respective $t_s$-degree univariate polynomials from $\D$ within time $\Delta$.
\item[--] The univariate polynomials $q_i(x)$ of all honest parties $P_i$ in $\WCORE$ lie on a unique
 $(t_s, t_s)$-degree symmetric bivariate polynomial, say $Q^{\star}(x,y)$. 
\item[--] At time $\Delta + \TimeWPS + 2\TimeBCAST$, every honest party accepts $(\WCORE, \ESet, \FSet)$.
\end{myitemize}
\end{lemma}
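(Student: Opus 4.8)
The plan is to follow the structure of the proof of Lemma~\ref{lemma:WPS}, adapting the time accounting to the fact that in $\VSS$ the pair‑wise consistency of $\D$'s polynomials is certified through instances of $\WPS$ (each of which, with an honest dealer in a synchronous network, takes \emph{exactly} $\TimeWPS$ time) rather than through a direct exchange of points. Fix the honest party $P_h$ guaranteed by the hypothesis: it receives $(\WCORE,\ESet,\FSet)$ from $\D$'s broadcast through regular‑mode and accepts it at time $\Delta + \TimeWPS + 2\TimeBCAST$, so by the acceptance rule all the verification conditions held for $P_h$ already at time $\Delta + \TimeWPS + \TimeBCAST$. Throughout I will use that $\TimeWPS = 2\Delta + 2\TimeBCAST + \TimeHBA$ is a multiple of $\Delta$, the $t_s$‑correctness of $\WPS$ in a synchronous network (Lemma~\ref{lemma:WPSSynCorrectness}), namely that if an honest party invokes an instance of $\WPS$ as dealer at local time $\tau$ then every honest party obtains its wps‑share from that instance at time exactly $\tau + \TimeWPS$, and the $t_s$‑validity and $t_s$‑consistency of $\BCAST$ in a synchronous network (Theorem~\ref{thm:DS}).

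For the first part I argue by contradiction. Suppose some honest $P_j \in \WCORE$ receives its univariate polynomial $q_j(x)$ from $\D$ only at time $\Delta + \delta$ with $\delta > 0$ (or never). Since in Phase~II an honest party waits for the next multiple of $\Delta$ before invoking its $\WPS$ instance, $P_j$ invokes $\WPS^{(j)}$ no earlier than time $2\Delta$. As $P_j \in \WCORE$, the conditions that held for $P_h$ force $P_j$ to be adjacent in $G_h$ to at least $n - t_s$ parties of $\WCORE$; at most $t_s$ of them are corrupt and $n - 2t_s > t_s \ge 1$, so at least one such neighbour $P_k \in \WCORE$ is honest and $(P_j,P_k)$ is an edge of $G_h$. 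By the corresponding acceptance condition, $P_h$ received $\OK(k,j)$ from $P_k$'s broadcast through regular‑mode by time $\Delta + \TimeWPS + \TimeBCAST$. But $P_k$ broadcasts $\OK(k,j)$ only in Phase~III, after computing a wps‑share in $\WPS^{(j)}$ and waiting for the next multiple of $\Delta$; by $t_s$‑correctness of $\WPS$ that wps‑share is available no earlier than $2\Delta + \TimeWPS$, so the broadcast starts no earlier than $2\Delta + \TimeWPS$, and by $t_s$‑validity of $\BCAST$ the message $\OK(k,j)$ cannot reach $P_h$ through regular‑mode before $2\Delta + \TimeWPS + \TimeBCAST > \Delta + \TimeWPS + \TimeBCAST$, a contradiction.

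For the second part, the first part gives that every honest $P_i \in \WCORE$ received $q_i(x)$ by time $\Delta$ and hence invoked $\WPS^{(i)}$ at time $\Delta$; by $t_s$‑correctness of $\WPS$ every honest party obtains its wps‑share from each such instance at time $\Delta + \TimeWPS$. Thus for honest $P_a, P_b \in \WCORE$, party $P_b$ learns $q_a(\alpha_b)$ (its wps‑share in $\WPS^{(a)}$) and $P_a$ learns $q_b(\alpha_a)$ (its wps‑share in $\WPS^{(b)}$); if $q_a(\alpha_b)\neq q_b(\alpha_a)$ for some such pair then in Phase~III both broadcast conflicting $\NOK$ messages, which by $t_s$‑validity of $\BCAST$ reach $P_h$ through regular‑mode by time $\Delta + \TimeWPS + \TimeBCAST$, contradicting $P_h$'s acceptance. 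Hence the polynomials of the honest parties in $\WCORE$ are pairwise consistent; since at least $n - 2t_s > t_s$ honest parties lie in $\WCORE$, Lemma~\ref{lemma:bivariateI} yields the unique $(t_s,t_s)$‑degree symmetric bivariate polynomial $Q^{\star}(x,y)$ on which they all lie. For the third part, $P_h$'s receiving $(\WCORE,\ESet,\FSet)$ through regular‑mode at time $\Delta + \TimeWPS + 2\TimeBCAST$ forces $\D$ to have begun broadcasting it by time $\Delta + \TimeWPS + \TimeBCAST$, so by $t_s$‑consistency of $\BCAST$ every honest party receives $(\WCORE,\ESet,\FSet)$ through regular‑mode by time $\Delta + \TimeWPS + 2\TimeBCAST$ as well; and since in a synchronous network all honest parties obtain identical regular‑mode $\BCAST$ outputs at identical times, every honest party's consistency graph agrees with $G_h$ on the edges certified through regular‑mode and every honest party observed, by time $\Delta + \TimeWPS + \TimeBCAST$, exactly the $\OK$ and $\NOK$ messages that $P_h$ did. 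Hence all verification conditions hold for every honest party and each accepts $(\WCORE,\ESet,\FSet)$.

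The main obstacle I anticipate is making the timing chain in the first part fully rigorous: one must check that the artificial ``wait for the next multiple of $\Delta$'' steps in Phases~II and~III of $\VSS$, together with the exact $\TimeWPS$‑delay of the inner honest‑dealer $\WPS$ instances and the exact $\TimeBCAST$‑delay of regular‑mode $\BCAST$, really push the arrival of an incriminating $\OK$ message strictly beyond the verification deadline $\Delta + \TimeWPS + \TimeBCAST$. This is precisely the design rationale for those waits, and the argument must invoke the \emph{exact} $t_s$‑correctness of $\WPS$ (not a weaker ``eventual'' guarantee) and the deterministic, synchronized behaviour of regular‑mode $\BCAST$ in the synchronous setting.
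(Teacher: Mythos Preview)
Your proof is correct and follows essentially the same approach as the paper's: both arguments mirror the proof of Lemma~\ref{lemma:WPS}, replacing the direct point-exchange timing with the exact $\TimeWPS$-delay of honest-dealer $\WPS$ instances to derive the contradiction in the first part, then use the $\NOK$-message timing for pairwise consistency in the second part, and finally invoke the $t_s$-consistency of $\BCAST$ to propagate acceptance in the third part. Your identification of the main obstacle---making the timing chain rigorous via the exact (not merely eventual) $t_s$-correctness of $\WPS$ and the deterministic regular-mode delay of $\BCAST$---matches precisely the crux of the paper's argument.
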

\begin{proof}
Let $\D$ be {\it corrupt} and network be {\it synchronous} with up to $t_s$ corruptions. As per the lemma condition, let $P_h$ be an {\it honest} party, who 
 receives some $(\WCORE, \ESet, \FSet)$ from the broadcast of $\D$ through regular-mode and accepts
  it at time $\Delta + \TimeWPS + 2\TimeBCAST$. From the protocol steps, it then follows that the following must be
  true for $P_h$ at time $\Delta + \TimeWPS + \TimeBCAST$:
 	     \begin{myitemize}
                        \item[--] There does not exist any 
                        $P_j, P_k \in \WCORE$, 
                        such that $\NOK(j, k, q_{jk})$ and $\NOK(k, j,q_{kj})$ messages {\it were} received from the broadcast of 
              $P_j$ and $P_k$ respectively through regular-mode, such that $q_{jk} \neq q_{kj}$.
                   \item[--] In $P_h$'s consistency graph $G_h$,  $\mathsf{deg}(P_j) \ge n-t_s$  for all $P_j \in \WCORE$ and $P_j$ has edges with at least
                   $n - t_s$ parties from $\WCORE$. 
                   \item[--] $(\ESet, \FSet)$ {\it was} an $\Star{t_s}$ in the induced subgraph $G_h[\WCORE]$, such that
                   for every $P_j, P_k \in \WCORE$ where the edge $(P_j, P_k)$ is present in $G_h$, the $\OK(j, k)$ and $\OK(k, j)$ messages {\it were} received from the broadcast of 
              $P_j$ and $P_k$ respectively through regular-mode.
                    \end{myitemize}
 We prove the first part of the lemma through a contradiction. So let $P_j \in \WCORE$ be an {\it honest} party, who receives its
  $t_s$-degree univariate polynomial from $\D$, say $q_j(x)$, at time $\Delta + \delta$, where
  $\delta > 0$. Moreover, let $P_k \in \WCORE$ be an {\it honest} party such that $P_j$ has an edge with $P_k$
 (note that $P_j$ has edges with at least $n - 2t_s > t_s + t_a$  {\it honest} parties in $\WCORE$).
 As stated above, at time $\Delta + \TimeWPS + \TimeBCAST$, party $P_h$ has received the message $\OK(k, j)$ from the broadcast of $P_k$ through regular-mode.
 From the protocol steps, $P_j$ waits till its local time becomes a multiple of $\Delta$, before it participates with input $q_j(\cdot)$ in the instance
 $\WPS^{(j)}$.  Hence, $P_j$ must have invoked $\WPS^{(j)}$  at time $c \cdot \Delta$, where $c \geq 2$. 
   Since the network is {\it synchronous}, from the {\it $t_s$-correctness} of $\WPS$ in the {\it synchronous} network, party $P_k$ will compute its wps-share $q_{jk}$
   during $\WPS^{(j)}$ at time $c \cdot \Delta + \TimeWPS$. Hence the result of the pair-wise consistency test with $P_j$ will be available to $P_k$ at time
   $c \cdot \Delta + \TimeWPS$. As a result, $P_k$ starts broadcasting $\OK(k, j)$ message only at time $c \cdot \Delta + \TimeWPS$. 
   Since $P_k$ is {\it honest}, from the {\it $t_s$-validity} property of $\BCAST$ in the {\it synchronous} network,
 it will take {\it exactly} $\TimeBCAST$ time for the message $\OK(k, j)$ to be received through regular-mode, once it is broadcast. 
 This implies that $P_h$ will receive the message $\OK(k, j)$
 at time $c \cdot \Delta + \TimeWPS + \TimeBCAST$, where $c \geq 2$. 
  However, this is a contradiction, since the $\OK(k, j)$ message has been received by $P_h$ at time $\Delta + \TimeWPS + \TimeBCAST$.

  To prove the second part of the lemma, we will show that 
   the univariate polynomials $q_i(x)$ of all the {\it honest} parties $P_i \in \WCORE$ are pair-wise consistent. 
   Since there will be at least $n - 2t_s > t_s$ {\it honest} parties in $\WCORE$, 
             from Lemma \ref{lemma:bivariateII} this will imply that all these polynomials
               lie on a unique $(t_s, t_s)$-degree symmetric bivariate polynomial, say $Q^{\star}(x, y)$. 
  So consider an arbitrary pair of {\it honest} parties $P_j, P_k \in \WCORE$. From the first part of the claim, both $P_j$ and $P_k$ must have received their respective univariate polynomials
  $q_j(x)$ and $q_k(x)$ by time $\Delta $.
    Moreover, from the {\it $t_s$-correctness} property of $\WPS$ in the {\it synchronous} network, $P_j$ and $P_k$ will compute the wps-shares
  $q_{kj} = q_k(\alpha_j)$ and $q_{jk} = q_j(\alpha_k)$ at time $\Delta + \TimeWPS$ during  $\WPS^{(k)}$ and $\WPS^{(j)}$ 
  respectively.  Since $P_j$ and $P_k$ are honest, 
  if $q_{kj} \neq q_{jk}$, they would broadcast $\NOK(j, k, q_{jk})$ and $\NOK(k, j, q_{kj})$ messages respectively at
    time $\Delta + \TimeWPS$. From the {\it $t_s$-validity} property of $\BCAST$ in the {\it synchronous} network, $P_h$ will receive these messages through regular-mode at
     time $\Delta + \TimeWPS + \TimeBCAST$. Consequently, $P_h$ {\it will not} accept $(\WCORE, \ESet, \FSet)$, which is a contradiction.
  
  To prove the third part of the lemma, we note that since $P_h$ receives
   $(\WCORE, \ESet, \FSet)$ from the broadcast of $\D$ through regular-mode  at time $\Delta + \TimeWPS + 2\TimeBCAST$, it implies
  that $\D$ must have started broadcasting  $(\WCORE, \ESet, \FSet)$ latest at time $\Delta + \TimeWPS + \TimeBCAST$. This is because
   it takes $\TimeBCAST$ time for the regular-mode of $\BCAST$ to generate an output. From the
  {\it $t_s$-consistency} property of $\BCAST$ in the {\it synchronous} network, it follows that every honest party will also receive $(\WCORE, \ESet, \FSet)$
   from the broadcast of $\D$ through regular-mode
  at time $\Delta + \TimeWPS + 2\TimeBCAST$. Similarly, since at time $\Delta + \TimeWPS + \TimeBCAST$, party $P_h$ has received the $\OK(j, k)$ and $\OK(k, j)$ messages
  through regular-mode
   from the broadcast of 
  every $P_j, P_k \in \WCORE$ where $(P_j, P_k)$ is an edge in $P_h$'s consistency graph, it follows that these messages started getting broadcast latest
   at time $\Delta + \TimeWPS$.
  From the {\it $t_s$-validity}
  and {\it $t_s$-consistency} properties of $\BCAST$ in the {\it synchronous} network, it follows that every honest party receives these broadcast messages through 
   regular-mode at time
  $\Delta + \TimeWPS + \TimeBCAST$. Hence $(\ESet, \FSet)$ will constitute an $\Star{t_s}$ in the induced subgraph $G_i[\WCORE]$ of every honest party $P_i$'s 
   consistency-graph at time $\Delta + \TimeWPS + \TimeBCAST$ 
  and consequently, every honest party accepts $(\WCORE, \ESet, \FSet)$.  
\end{proof}

We next prove the strong commitment property in the synchronous network.
\begin{lemma}[{\it $t_s$-Strong Commitment}]
 \label{lemma:VSSSynStrongCommitment}
In protocol $\VSS$, if $\D$ is corrupt and network is synchronous, then either no honest party computes any output or there exist a
  $t_s$-degree polynomial, say $q^{\star}(\cdot)$, such that each honest $P_i$ eventually outputs $q^{\star}(\alpha_i)$, where the following hold.
     \begin{myitemize}
     \item[--] If any honest $P_i$ computes its output at time $\TimeVSS = \Delta + \TimeWPS +  2\TimeBCAST + \TimeHBA$,
      then every honest party obtains its output at time $\TimeVSS$.
     \item[--] If any honest $P_i$ computes its output at time $T$ where $T > \TimeVSS$, then every honest party computes its output by time
     $T + 2\Delta$.
     \end{myitemize}
  \end{lemma}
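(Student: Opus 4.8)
The plan is to follow the template of the proof of Lemma~\ref{lemma:WPSSynWeakCommitment}, splitting the argument according to the common output bit of the $\HBA$ instance. First I would dispose of the trivial case in which no honest party computes an output. Otherwise, since every honest party participates in $\HBA$ with some input at local time $\Delta + \TimeWPS + 2\TimeBCAST$, the $t_s$-validity, $t_s$-consistency and $t_s$-guaranteed-liveness of $\HBA$ in a synchronous network guarantee that all honest parties obtain a \emph{common} bit $b\in\{0,1\}$ from $\HBA$ by time $\TimeVSS$. The two cases $b=0$ and $b=1$ are then handled separately, and they will also pin down the timing: I will show that $b=0$ forces all honest outputs at time $\TimeVSS$ (first bullet) while $b=1$ can only produce outputs strictly after $\TimeVSS$, all within $2\Delta$ of each other (second bullet).

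For the case $b=0$: by $t_s$-validity of $\HBA$ some honest $P_h$ fed input $0$, hence accepted a $(\WCORE,\ESet,\FSet)$ received from $\D$'s broadcast through regular-mode at time $\Delta + \TimeWPS + 2\TimeBCAST$. Lemma~\ref{lemma:VSS} then gives that every honest party accepts $(\WCORE,\ESet,\FSet)$ at that time and that the polynomials $q_i(x)$ of the honest parties in $\WCORE$ lie on a unique $(t_s,t_s)$-degree symmetric bivariate polynomial $Q^{\star}(x,y)$; set $q^{\star}(\cdot) \defined Q^{\star}(0,y)$. An honest $P_i\in\WCORE$ outputs $q_i(0)=q^{\star}(\alpha_i)$ as soon as $\HBA$ terminates. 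For an honest $P_i\notin\WCORE$, I would show that for \emph{every} $P_j\in\FSet$ the wps-share $q_{ji}$ that $P_i$ computes in $\WPS^{(j)}$ equals $Q^{\star}(\alpha_j,\alpha_i)$: for honest $P_j$ this is the $t_s$-correctness of $\WPS$ applied to the honest dealer $P_j$, whose input polynomial is $Q^{\star}(x,\alpha_j)$ by Lemma~\ref{lemma:VSS}; for corrupt $P_j$ one uses that $P_j\in\FSet\subseteq\WCORE$ has edges with at least $n-2t_s>t_s$ honest parties $P_k$ in $\WCORE$, each of which broadcast $\OK(k,j)$ only after verifying that its wps-share of $P_j$'s committed polynomial equals $q_k(\alpha_j)=Q^{\star}(\alpha_k,\alpha_j)$, which together with the $t_s$-weak-commitment of $\WPS^{(j)}$ forces $P_j$'s committed polynomial to be $Q^{\star}(x,\alpha_j)$. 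Since the wps-shares of the at least $t_s+1$ honest parties in $\FSet$ are computed by $P_i$ before time $\TimeVSS$, $P_i$ reconstructs $Q^{\star}(x,\alpha_i)$ from any $t_s+1$ points in $\Support_i$ and outputs $q^{\star}(\alpha_i)$ exactly at $\TimeVSS$; hence all honest parties output at $\TimeVSS$.

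For the case $b=1$: no honest party accepted any $(\WCORE,\ESet,\FSet)$ (otherwise, by Lemma~\ref{lemma:VSS}, all would have, making the common $\HBA$-input $0$), so honest parties output through the $\Star{t_a}$ branch. Let $P_h$ be the first honest party to output, at time $T$; then $T>\TimeVSS$ because participation in $\D$'s broadcast of an $\Star{t_a}$ starts only at $\TimeVSS$, and $P_h$ has obtained some $(\ESet',\FSet')$ from $\D$'s broadcast that constitutes an $\Star{t_a}$ in $G_h$. Using $|\ESet'|\ge n-2t_a>2t_s$, the set $\ESet'$ contains at least $t_s+1$ honest parties forming a clique with pairwise-consistent univariate polynomials, which by Lemma~\ref{lemma:bivariateI} lie on a unique $(t_s,t_s)$-degree symmetric bivariate polynomial $Q^{\star}(x,y)$; every honest party in $\FSet'\supseteq\ESet'$ has edges with all of them, so its polynomial lies on $Q^{\star}(x,y)$ too, and the same $\OK$-edge-plus-weak-commitment argument forces the committed polynomial of every corrupt $P_j\in\FSet'$ to be $Q^{\star}(x,\alpha_j)$. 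Setting $q^{\star}(\cdot)\defined Q^{\star}(0,y)$, every honest $P_i$ either lies in $\FSet'$ and outputs $q_i(0)=q^{\star}(\alpha_i)$, or reconstructs $Q^{\star}(x,\alpha_i)$ from the $t_s+1$ wps-shares in $\Support_i$ (taking these from the honest parties in $\ESet'$, whose shares are available well before $T$) and outputs $q^{\star}(\alpha_i)$. For the $T+2\Delta$ bound I would invoke the $t_s$-consistency and $t_s$-fallback-consistency of $\BCAST$ applied to $\D$'s broadcast of $(\ESet',\FSet')$ and to the $\OK$ messages underlying the star --- each of these, once received by $P_h$ (by time $T$), is received by every honest party within an additional $2\Delta$ --- so every honest party can certify the star and output by time $T+2\Delta$.

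I expect the $b=1$ case to be the main obstacle. Disentangling the two communication layers to show that corrupt parties admitted into $\FSet'$ are genuinely ``forced'' onto $Q^{\star}(x,y)$ by the $\WPS$ instances they dealt is delicate, and the timing analysis is more so: one must track, through both the regular and fallback modes of $\BCAST$, when $\D$'s broadcast of $(\ESet',\FSet')$ and each constituent $\OK$ message reach every honest party, and verify that the support sets $\Support_i$ --- whose population depends on the (adversarially controlled) completion times of the inner $\WPS$ instances --- never become the binding constraint, so that the $2\Delta$ slack suffices.
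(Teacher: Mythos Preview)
Your proposal is correct and follows essentially the same approach as the paper: the same case split (your split on the $\HBA$ output bit is equivalent, via Lemma~\ref{lemma:VSS} and $t_s$-validity, to the paper's split on whether some honest party accepted a $(\WCORE,\ESet,\FSet)$), the same use of Lemma~\ref{lemma:VSS} to pin down $Q^{\star}(x,y)$ in the first case, and the same $\OK$-edge-plus-weak-commitment argument to force corrupt $\FSet/\FSet'$-members onto $Q^{\star}$. The timing concern you flag in the $b=1$ branch is exactly the point the paper also handles informally, and your sketch of invoking the $t_s$-(fallback-)consistency of $\BCAST$ for the $2\Delta$ slack matches the paper's reasoning.
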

 \begin{proof}
  Let $\D$ be {\it corrupt} and network be {\it synchronous} with up to $t_s$ corruptions.
   If no honest party computes any output, then the lemma holds trivially. So consider the case when some {\it honest} party
   computes an output. 
   Now, there are two possible cases.
    \begin{myitemize}
         \item[--] \textbf{At least one honest party, say $P_h$, has received some
       $(\WCORE, \ESet, \FSet)$ from the broadcast of $\D$ through regular-mode and accepted $(\WCORE, \ESet, \FSet)$ at time $\Delta + \TimeWPS + 2\TimeBCAST$}:
       In this case, from Lemma \ref{lemma:VSS}, the polynomials $q_i(x)$ of all {\it honest} parties in $\WCORE$ are guaranteed to lie on a unique
        $(t_s, t_s)$-degree symmetric bivariate polynomial, say
       $Q^{\star}(x, y)$.
        As per the protocol steps, $P_h$ has also verified that $\FSet \subseteq \WCORE$, by checking that $(\ESet, \FSet)$ constitutes an $\Star{t_s}$ in
       the induced subgraph $G_h[\WCORE]$.
       Hence the polynomials $q_i(x)$ 
       of all {\it honest} parties in $\FSet$ also lie on $Q^{\star}(x, y) $. Moreover, from Lemma \ref{lemma:VSS}, all honest parties
       accept $(\WCORE, \ESet, \FSet)$ at time $\Delta + \TimeWPS +  2\TimeBCAST$. 
       Hence, every honest party participates in the instance of $\HBA$ with input $0$. Consequently, by the {\it $t_s$-validity} and {\it $t_s$-guaranteed liveness} properties of 
        $\HBA$ in the {\it synchronous} network, 
       all honest parties compute the output $0$ during the instance of $\HBA$ at time $\TimeVSS = \Delta + \TimeWPS +  2\TimeBCAST +  \TimeHBA$.
       Let $q^{\star}(\cdot) = Q^{\star}(0, y)$ and consider an arbitrary {\it honest}
       party $P_i$. We wish to show that $P_i$ has $q_i(x) = Q^{\star}(x, \alpha_i)$ at time $\TimeVSS$, which will imply that $P_i$ outputs
       $s_i = q_i(0)$ at time $\TimeVSS$, which will be the same as $q^{\star}(\alpha_i)$. 
        For this, we consider the following two possible cases.
        \begin{myitemize}
            \item[--] {\it $P_i \in \WCORE$}: In this case, $P_i$ has already received $q_i(x)$ from $\D$ within time $\Delta$. 
            This follows from Lemma \ref{lemma:VSS}.
            \item[--] {\it $P_i \notin \WCORE$}: In this case, we claim that at time $\TimeVSS$,
             there will be will be at least $t_s + 1$ parties from $\FSet$, who are included in $\Support_i $, such that
                corresponding to {\it every} $P_j \in \Support_i$, party $P_i$ will have the value $q_{ji}$, which will be the same as $Q^{\star}(\alpha_j, \alpha_i) $. 
              Namely, there are at least $t_s + 1$ {\it honest} parties in $\FSet$, who will be included in $\Support_i$
             and the claim will be trivially true for those parties, due to the {\it $t_s$-correctness} property of $\WPS$ in the synchronous network.
              On the other hand, if any {\it corrupt} $P_j \in \FSet$ is included in $\Support_i$, then 
              the input polynomial of $P_j$ during $\WPS^{(j)}$ will be pair-wise consistent with the polynomials of at least
              $t_s + 1$ {\it honest} parties in $\WCORE$ and hence will be the same as $Q^{\star}(x, \alpha_j)$. Moreover, 
               from the {\it $t_s$-weak commitment} of
              $\WPS$ in the {\it synchronous} network, the WPS-share $q_{ji}$ computed by $P_i$ during $\WPS^{(j)}$
              will be the same as $Q^{\star}(\alpha_i, \alpha_j)$, which will be the same as $Q^{\star}(\alpha_j, \alpha_i)$, since
              $Q^{\star}(x, y) $ is a symmetric bivariate polynomial.              
              Hence, $P_i$ will interpolate $q_i(x) $. 
        \end{myitemize}
         \item[--] \textbf{No honest party has received any
       $(\WCORE, \ESet, \FSet)$ from the broadcast of $\D$ through regular-mode and accepted $(\WCORE, \ESet, \FSet)$ at time $\Delta + \TimeWPS + 2\TimeBCAST$}: 
       This implies that all honest parties participate in the instance of
        $\HBA$ with input $1$. Hence, by the {\it $t_s$-validity} and {\it $t_s$-guaranteed liveness} of $\HBA$ in 
       the {\it synchronous} network,
         all honest parties obtain the output $1$ during the instance of $\HBA$ at time $\Delta + \TimeWPS + 2\TimeBCAST + \TimeHBA$. 
          Let $P_h$ be the {\it first honest} party, who computes an output. 
         This means that $P_h$ has received a pair $(\ESet', \FSet')$ from the broadcast of $\D$, such that $(\ESet', \FSet')$ constitutes an $\Star{t_a}$
         in $P_h$'s consistency graph. Let $T$ be the time when $(\ESet', \FSet')$ constitutes an $\Star{t_a}$ in $P_h$'s consistency graph. 
         This implies that at time $T$, party $P_h$ has $(\ESet', \FSet')$ from $\D$'s broadcast and also all the $\OK(\star, \star)$ messages, 
         from the broadcast of respective parties in $\ESet'$ and $\FSet'$.
        From the protocol steps, $T > \TimeVSS$, since the honest parties participate in the instance of $\BCAST$ through which 
         $\D$ has broadcast $(\ESet', \FSet')$ {\it only} after time $\TimeVSS$. 
         By the {\it $t_s$-consistency} and {\it $t_s$-fallback consistency} properties of $\BCAST$ in the {\it synchronous} network, 
         {\it all} honest parties will receive $(\ESet', \FSet')$
         from the broadcast of $\D$ by time $T + 2\Delta$. Moreover, 
         $(\ESet', \FSet')$ will constitute an $\Star{t_a}$ in every honest party's consistency graph by time $T + 2\Delta$.
         This is because all the $\OK$ messages which are received by $P_h$ from the broadcast of various parties in $\ESet'$ and $\FSet'$ 
         are guaranteed to be received by every honest party by time $T + 2\Delta$. 
         Since $|\ESet'| \geq n-2t_a > 2t_s + (t_s-t_a) > 2t_s$, it follows that $\ESet'$ has at least $t_s + 1$ {\it honest} parties.
         Moreover, 
          the univariate polynomials $(q_j(x), q_k(x))$ of every pair of {\it honest} parties $P_j, P_k \in \ESet'$ will be pair-wise consistent 
         and hence lie on a unique 
         $(t_s,t_s)$-degree symmetric bivariate polynomial, say $Q^{\star}(x, y)$.
                Similarly, the univariate polynomial $q_i(x)$ of every {\it honest} party $P_i$ in $\FSet'$ is pair-wise consistent with the univariate polynomials 
                $q_j(x)$ of {\it all} the {\it honest} parties in $\ESet'$
         and hence lie on $Q^{\star}(x, y)$ as well.
         Let $q^{\star}(\cdot) \defined Q^{\star}(0, y)$. We show that {\it every honest} $P_i$ outputs $q^{\star}(\alpha_i) $, by time $T + 2\Delta$.
          For this it is enough to show that each honest $P_i$ has $q_i(x) = Q^{\star}(x, \alpha_i)$ by time $T + 2\Delta$,
         as $P_i$ outputs $q_i(0)$, which will be the same as $q^{\star}(\alpha_i) $.
         Consider an arbitrary {\it honest} party $P_i$. There are two possible cases. 
         \begin{myitemize}
             \item[--] \textit{$P_i \in \FSet'$}: In this case, $P_i$ has already received $Q^{\star}(x, \alpha_i) $ from $\D$, well before time $T + 2\Delta$.
             \item[--] \textit{$P_i \notin \FSet'$}: In this case, we claim that by time $T + 2\Delta$, 
              there will be will be at least $t_s + 1$ parties from $\FSet'$, who are included in $\Support_i$, such that
                corresponding to {\it every} $P_j \in \Support_i$, party $P_i$ will have the value $q_{ji}$, which will be the same as $Q^{\star}(\alpha_j, \alpha_i) $. 
                The proof for this is very similar to the previous case when $P_i \notin \WCORE$ and the output of $\HBA$ is $0$.
              Namely every {\it honest} $P_j \in \FSet'$ will be included in $\Support_i$. This is because $P_j$ starts broadcasting $\OK$ messages for other parties in $\ESet'$
              only after invoking instance $\WPS^{(j)}$ with input $q_j(x)$. 
              Hence, by time $T + 2\Delta$, the WPS-share $q_{ji}$ from the instance $\WPS^{(j)}$ will be available with $P_i $. 
              On the other hand, if a {\it corrupt} $P_j \in \FSet'$ is included in $\Support_i$, then also the claim holds (the proof for this is similar to the proof of
              the {\it $t_a$-correctness} property in the {\it asynchronous} network in Lemma \ref{lemma:VSSAsynCorrectness}).    
         \end{myitemize}
 \end{myitemize}
 \end{proof}
 
 We finally prove the strong commitment property in an asynchronous network.
  \begin{lemma}[{\it $t_a$-Strong Commitment}]
 \label{lemma:VSSAsynStrongCommitment}
In protocol $\VSS$, if $\D$ is corrupt and network is asynchronous, then either no honest party computes any output or there exist some
  $t_s$-degree polynomial, say $q^{\star}(\cdot)$, such that almost-surely, every honest $P_i$ eventually outputs $q^{\star}(\alpha_i)$.			   
 \end{lemma}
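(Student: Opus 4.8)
The plan is to follow the template already used for the $t_a$-correctness of $\VSS$ (Lemma~\ref{lemma:VSSAsynCorrectness}) and for the $t_s$-strong commitment (Lemma~\ref{lemma:VSSSynStrongCommitment}), now adapted to a \emph{corrupt} $\D$ in an \emph{asynchronous} network. First I would dispose of the trivial branch of the disjunction: if no honest party ever produces an output, the claim holds vacuously. Otherwise, I would use the fact that every honest party enters the single $\HBA$ instance with some input bit at local time $\Delta + \TimeWPS + 2\TimeBCAST$, so by the $t_a$-almost-surely liveness and $t_a$-consistency of $\HBA$ in an asynchronous network, almost-surely all honest parties eventually obtain a common bit $b \in \{0,1\}$ from $\HBA$; the ``almost-surely'' qualifier in the statement is inherited entirely from this step, as everything else in the argument is deterministic. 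The rest of the proof splits on the value of $b$.

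For $b = 0$: by $t_a$-validity of $\HBA$ some honest $P_h$ entered with input $0$, hence accepted a $(\WCORE, \ESet, \FSet)$ received from $\D$'s broadcast through regular-mode; by the $t_a$-weak-consistency and $t_a$-fallback-consistency of $\BCAST$ every honest party eventually receives the same triple. From $P_h$'s acceptance conditions, $(\ESet, \FSet)$ is an $\Star{t_s}$ inside $G_h[\WCORE]$, so $\ESet$ is a clique of size $\geq n - 2t_s$ containing $\geq n - 2t_s - t_a > t_s$ honest parties; since an edge between two honest parties forces (via the $\OK$-message semantics together with the $t_a$-correctness of the underlying $\WPS$ instances) their supposedly common points to agree, Lemma~\ref{lemma:bivariateI} yields a unique $(t_s,t_s)$-degree symmetric bivariate polynomial $Q^{\star}(x,y)$ on which these honest polynomials lie. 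I would set $q^{\star}(\cdot) \defined Q^{\star}(0, y)$ and then show that the polynomial $q_i(x)$ of \emph{every} honest $P_i \in \WCORE$ equals $Q^{\star}(x, \alpha_i)$: $P_i$ has $\geq n - t_s$ edges inside $\WCORE$, and since $\FSet \subseteq \WCORE$ and $|\FSet| \geq n - t_s$, at least $n - 2t_s - t_a > t_s$ of those edges land on honest members of $\FSet$, each of whose polynomials already lies on $Q^{\star}(x,y)$ (being edge-connected to $> t_s$ honest members of the clique $\ESet$), so $q_i$ and $Q^{\star}(x,\alpha_i)$ agree at $> t_s$ points. Hence every honest $P_i \in \WCORE$ outputs $q_i(0) = q^{\star}(\alpha_i)$. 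For an honest $P_i \notin \WCORE$, I would argue that $\Support_i \subseteq \FSet$ eventually reaches size $t_s + 1$ (the honest members of $\FSet$ alone suffice, by $t_a$-correctness of $\WPS$), and that for \emph{every} $P_j \in \FSet$ for which $P_i$ computes a WPS-share $q_{ji}$ in $\WPS^{(j)}$, one has $q_{ji} = Q^{\star}(\alpha_j, \alpha_i)$: for honest $P_j$ this is immediate, and for corrupt $P_j \in \FSet$ one first shows that $P_j$'s $\WPS^{(j)}$-committed polynomial is pinned to $Q^{\star}(x,\alpha_j)$ --- because $P_j$ is edge-connected to $> t_s$ honest parties in $\WCORE$, each of which broadcast an $\OK$ only after its $\WPS^{(j)}$-share (which equals $q_j^{\star}(\cdot)$ evaluated at its index, by $t_a$-strong commitment of $\WPS$) matched the corresponding point of $Q^{\star}(x,\alpha_j)$ --- so that, again by $t_a$-strong commitment of $\WPS$ and the symmetry of $Q^{\star}$, $P_i$'s share is $Q^{\star}(\alpha_i,\alpha_j) = Q^{\star}(\alpha_j,\alpha_i)$. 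Interpolating these $t_s + 1$ points gives $Q^{\star}(x,\alpha_i)$ and hence the output $q^{\star}(\alpha_i)$.

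For $b = 1$: the first honest party $P_h$ to output must have received from $\D$'s broadcast a pair $(\ESet', \FSet')$ that forms an $\Star{t_a}$ in its consistency graph; by the $t_a$-weak-consistency and $t_a$-fallback-consistency of $\BCAST$, and since consistency graphs only grow, every honest party eventually sees $(\ESet', \FSet')$ as an $\Star{t_a}$. Since $|\ESet'| \geq n - 2t_a$ and $3t_s + t_a < n$ with $t_a < t_s$ give $n - 3t_a > t_s$, the clique $\ESet'$ contains $> t_s$ honest parties defining a unique $Q^{\star}(x,y)$; with $q^{\star}(\cdot) \defined Q^{\star}(0,y)$ and the honest members of $\FSet'$ lying on $Q^{\star}$, I would repeat the $P_i \in \FSet'$ / $P_i \notin \FSet'$ case analysis verbatim with $\FSet'$ playing the role of $\FSet$ (and $\ESet'$ that of the clique), concluding again that every honest $P_i$ eventually outputs $q^{\star}(\alpha_i)$.

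I expect the main obstacle to be exactly the treatment of corrupt parties that sit inside $\FSet$ (resp.\ $\FSet'$) and hence inside some honest party's $\Support_i$: one has to show their $\WPS$-committed polynomial is forced to be the correct row $Q^{\star}(x,\alpha_j)$ of the bivariate polynomial, which means chaining the semantics of the broadcast $\OK$ messages in $\VSS$ with both the $t_a$-correctness and the $t_a$-strong-commitment guarantees of $\WPS$ and with the symmetry of $Q^{\star}(x,y)$, and carefully tracking --- consistently across the $b=0$/$b=1$ and the in-/out-of-$\WCORE$ (resp.\ $\FSet'$) sub-cases --- how many honest parties lie in the various set-intersections, which is where the hypothesis $3t_s + t_a < n$ is repeatedly invoked.
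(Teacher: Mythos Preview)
Your proposal is correct and follows essentially the same approach as the paper: the same trivial branch, the same appeal to $t_a$-almost-surely liveness/consistency/validity of $\HBA$, the same case split on the $\HBA$ output, the same construction of $Q^{\star}(x,y)$ from the honest clique in $\ESet$ (resp.\ $\ESet'$), and the same handling of corrupt $P_j \in \Support_i$ by pinning $P_j$'s $\WPS$-committed polynomial to $Q^{\star}(x,\alpha_j)$ via its $> t_s$ honest $\OK$-neighbours. The only cosmetic difference is that the paper splits the $b=0$ case for $P_i \in \WCORE$ into two sub-cases ($P_i \in \FSet$ and $P_i \in \WCORE \setminus \FSet$) while you treat them uniformly; your unified argument is equivalent.
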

 \begin{proof}
 Let $\D$ be {\it corrupt} and the network be {\it asynchronous} with up to $t_a$ corruptions. If no honest party computes any output, then the lemma holds trivially. So, consider the case when some honest party
  computes an output. We note that every {\it honest} party participates with some input in the instance of $\HBA$ at local time
 $\Delta + \TimeWPS + 2 \TimeBCAST$. Hence, from the {\it $t_a$-almost-surely liveness} and {\it $t_a$-consistency} properties of $\HBA$
  in the asynchronous network, 
   almost-surely, all honest parties eventually compute a common output during the instance of $\HBA$. 
   Now there are two possible cases:
    \begin{myitemize}
     \item[--] {\bf The output of $\HBA$ is 0}: 
     From the {\it $t_a$-validity} of $\HBA$ in the {\it asynchronous} network, it follows that at least one honest party, say $P_h$,
     participated with input $0$ during the instance of $\HBA$. 
     This means that $P_h$ has received some
       $(\WCORE, \ESet, \FSet)$ from the broadcast of $\D$ through regular-mode
        and accepted it at local time $\Delta + \TimeWPS + 2\TimeBCAST$.         
            Hence, by the {\it $t_a$-weak consistency} and {\it $t_a$-fallback consistency} of $\BCAST$ in the {\it asynchronous} network, 
            all honest parties will eventually receive $(\WCORE, \ESet, \FSet)$ from the broadcast of $\D$.
            There will be at least $n - 2t_s - t_a > t_s$ {\it honest} parties $P_i$ in $\ESet$, whose univariate polynomials $q_i(x)$ are pair-wise consistent and hence lie on
            a unique 
            $(t_s, t_s)$-degree symmetric bivariate polynomial, say $Q^{\star}(x,y)$.  
             Similarly, the univariate polynomial $q_i(x)$ of every {\it honest}
            $P_i \in \FSet$ will be pair-wise consistent with the univariate polynomials $q_j(x)$ of {\it all} the {\it honest} parties $P_j$ in $\ESet$ and hence will lie on 
            $Q^{\star}(x, y)$ as well. Let $q^{\star}(\cdot) \defined Q^{\star}(0, y)$.
            We claim that {\it every honest} $P_i$ will eventually have $Q^{\star}(x, \alpha_i) $. This will imply that eventually every {\it honest} $P_i$ outputs
            $s_i = Q^{\star}(0, \alpha_i) = q^{\star}(\alpha_i) $.
            To prove the claim, consider an arbitrary {\it honest} party $P_i$. There are three possible cases.
            \begin{myitemize}
                \item[--] {\it $P_i \in \WCORE$ and $P_i \in \FSet$}: In this case, $P_i$ has received the polynomials $q_i(x)$ from $\D$ and since 
                $P_i \in \FSet$, the condition $q_i(x) = Q^{\star}(x, \alpha_i)$ holds.
                \item[--] {\it $P_i \in \WCORE$ and $P_i \not \in \FSet$}: 
                In this case, $P_i$ has received the polynomial $q_i(x)$ from $\D$.
                Since $|\WCORE| \geq n-t_s$ and $|\FSet| \geq n-t_s$, $|\WCORE \cap \FSet| \geq n-2t_s > t_s + t_a$. 
                From the protocol steps, the polynomial $q_i(x)$ is pair-wise consistent with the polynomials $q_j(x)$ at least $n - t_s$ parties $P_j \in \WCORE$,
                since $P_i$ has edges with at least $n - t_s$ parties $P_j$ within $\WCORE$. Now among these $n - t_s$ parties, at least $n - 2t_s$ parties will be from $\FSet$, of which 
                at least $n - 2t_s - t_a > t_s$ parties will be {\it honest}. Hence, $q_i(x)$ is pair-wise consistent with the $q_j(x)$ polynomials of at least $t_s + 1$ 
                {\it honest} parties $P_j \in \FSet$. Now since the $q_j(x)$ polynomials of all the {\it honest} parties in $\FSet$ lie on $Q^{\star}(x, y)$, it implies that
                $q_i(x) = Q^{\star}(x, \alpha_i)$ holds.
                \item[--] {\it $P_i \notin \WCORE$}: In this case, similar to the proof of Lemma \ref{lemma:VSSAsynCorrectness}, one can show that 
                $P_i$ eventually includes at least $t_s + 1$ parties from $\FSet$ in $\Support_i $.
                 And the value computed by $P_i$ corresponding to
                {\it any} $P_j \in \Support_i$ will be the same as $Q^{\star}(\alpha_j, \alpha_i) $. 
                Hence, $P_i$ will eventually interpolate $Q^{\star}(x, \alpha_i) $.
            \end{myitemize}
  \item[--] \textbf{The output of $\HBA$ is 1}: Let $P_h$ be the {\it first honest} party, who computes an output in $\VSS$. 
         This means that $P_h$ has received some $(\ESet', \FSet')$ from the broadcast of $\D$, such that $(\ESet', \FSet')$ constitutes an $\Star{t_a}$
         in $P_h$'s consistency graph. By the {\it $t_a$-weak consistency} and {\it $t_a$-fallback consistency} properties of $\BCAST$ in
         the {\it asynchronous} network, 
         {\it all} honest parties eventually receive $(\ESet', \FSet')$
         from the broadcast of $\D$. Moreover, since the consistency graphs are constructed based on the
          broadcast $\OK$ messages and since
         $(\ESet', \FSet')$ constitutes an $\Star{t_a}$  in $P_h$'s consistency graph, from the {\it $t_a$-weak validity, $t_a$-fallback validity, $t_a$-weak consistency and 
         $t_a$-fallback consistency}
         properties of $\BCAST$ in the {\it asynchronous} network, the pair 
         $(\ESet', \FSet')$ will eventually constitute an $\Star{t_a}$ in every honest party's consistency graph, as the corresponding $\OK$ messages are eventually received by every
         honest party.         
         Since $|\ESet'| \geq n-2t_a > 2t_s + (t_s-t_a) > 2t_s$, it follows that $\ESet'$ has at least $t_s + 1$ {\it honest} parties $P_i$, whose 
         univariate polynomials $q_i(x)$ are pair-wise consistent and hence lie on a unique 
         $(t_s, t_s)$-degree symmetric bivariate polynomial, say $Q^{\star}(x, y)$. 
         Similarly, since the univariate polynomials $q_j(x)$ of every {\it honest} party $P_j$ in $\FSet'$ is pair-wise consistent with the univariate polynomials $q_i(x)$
          of all the {\it honest} parties $P_i$ in $\ESet'$,
         it implies that the polynomials $q_j(x)$ of all the {\it honest} parties $P_j$ in $\FSet'$ also lie on $Q^{\star}(x, y) $ as well.
         Let $q^{\star}(\cdot) \defined Q^{\star}(0, y)$. We show that {\it every honest} $P_i$ eventually
         outputs $q^{\star}(\alpha_i) $. For this it is enough to show that each honest $P_i$ eventually gets $q_i(x) = Q^{\star}(x, \alpha_i)$,
         as $P_i$ outputs $q_i(0)$, which will be the same as $q^{\star}(\alpha_i)$.
         Consider an arbitrary {\it honest} party $P_i$. There are two possible cases. 
         \begin{myitemize}
             \item[--] \textit{$P_i \in \FSet'$}: In this case, $P_i$ already has received $Q^{\star}(x, \alpha_i) $ from $\D$.
             \item[--] \textit{$P_i \notin \FSet'$}: Again in this case, one can show that 
                $P_i$ eventually includes at least $t_s + 1$ parties from $\FSet'$ in $\Support_i $. And the value computed by $P_i$ corresponding to
                {\it any} $P_j \in \Support_i$ will be the same as $Q^{\star}(\alpha_j, \alpha_i) $. 
                Hence, $P_i$ will eventually interpolate $Q^{\star}(x, \alpha_i) $.
         \end{myitemize}
    \end{myitemize}
\end{proof}
\begin{lemma}
 \label{lemma:VSSCommunication}
 Protocol $\VSS$ incurs a communication of $\Order(n^5 \log{|\F|})$ bits and invokes $n + 1$ instance of $\HBA$.		   
 \end{lemma}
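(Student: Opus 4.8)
The plan is to bound the communication of $\VSS$ phase by phase, observing that the overall cost is dominated by the $n$ parallel instances of $\WPS$ executed in Phase II, and then to tally the number of $\HBA$ invocations.

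First I would dispose of the ``cheap'' phases. In Phase I, $\D$ sends each of the $n$ parties a single $t_s$-degree univariate polynomial, describable by $\Order(n)$ elements of $\F$, for a total of $\Order(n^2 \log|\F|)$ bits. In Phase III, every honest party broadcasts at most one $\OK$ or $\NOK$ message per other party, each such message carrying at most one field element and hence of size $\Order(\log|\F|)$ bits; using the $\Order(n^2 \ell)$-bit cost of $\BCAST$ for an $\ell$-bit payload (Theorem~\ref{thm:DS}), the $\Order(n^2)$ such broadcasts cost $\Order(n^4 \log|\F|)$ bits in total. In Phases IV and VI, $\D$ broadcasts the descriptions of the set triple $(\WCORE, \ESet, \FSet)$ and of the pair $(\ESet', \FSet')$, each describable in $\Order(n)$ bits, so each broadcast costs $\Order(n^3)$ bits. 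In Phase V the parties run a single instance of $\HBA$ on single-bit inputs, contributing $\Order(\mbox{poly}(n)\log|\F|)$ bits.

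The main term comes from Phase II. Each $P_i$ acts as a dealer in an instance $\WPS^{(i)}$ with a single ($L=1$) $t_s$-degree polynomial as input. By Theorem~\ref{thm:WPS}, every such instance incurs $\Order(n^2\log|\F| + n^4\log|\F|) = \Order(n^4\log|\F|)$ bits and invokes exactly one instance of $\HBA$. Summing over the $n$ dealers gives $\Order(n^5\log|\F|)$ bits and $n$ invocations of $\HBA$. Adding the contributions of all phases, the total communication is dominated by this term and is $\Order(n^5\log|\F|)$ bits; the total number of $\HBA$ invocations is $n$ (from the $\WPS$ instances) plus the one in Phase V, i.e.\ $n+1$.

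The only point requiring care --- a matter of bookkeeping rather than a genuine obstacle --- is to make sure the recursive costs hidden inside $\WPS$ (which itself invokes $\BCAST$ and one $\HBA$) are counted once and not conflated with the $\BCAST$ and $\HBA$ instances invoked directly by $\VSS$, and to confirm that $\BCAST$ contributes no $\HBA$ invocation of its own (it is built only from $\BGP$ and $\PiACast$). Since every auxiliary contribution is subsumed by the $\Order(n^5\log|\F|)$ term, the final estimate is robust to the exact hidden constants.
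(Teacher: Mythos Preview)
Your proof is correct and follows essentially the same approach as the paper: the dominant cost is the $n$ instances of $\WPS$ in Phase~II, each costing $\Order(n^4\log|\F|)$ bits and one $\HBA$ invocation (the paper cites Lemma~\ref{lemma:WPSCommunication} rather than Theorem~\ref{thm:WPS}, but the content is the same), plus the single $\HBA$ instance in Phase~V. Your phase-by-phase breakdown is more detailed than the paper's terse version but reaches the same conclusion by the same route.
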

\begin{proof}
The proof follows from Lemma \ref{lemma:WPSCommunication} and the fact that each party acts as a dealer and invokes an instance of $\WPS$ 
 with a $t_s$-degree polynomial. 
 Hence, the total communication cost due to the instances of
  $\WPS$ in $\VSS$ will be $\Order(n \cdot  n^4\log{|\F|}) = \Order(n^5 \log{|\F|})$ bits, along with $n$ instances
  of $\HBA$. Additionally, there is an instance of
  $\HBA$ invoked in $\VSS$ to agree on whether some $(\WCORE, \ESet, \FSet)$ is accepted.
\end{proof}

We next discuss the modifications needed in the protocol $\VSS$, if the input for $\D$ consists of $L$ number of $t_s$-degree polynomials.
\paragraph{\bf Protocol $\VSS$ for $L$ Polynomials:}
If $\D$ has $L$ polynomials as input for $\VSS$,
  then we make similar modifications as done for $\WPS$ handling $L$ polynomials, with each party broadcasting a 
 {\it single} $\OK/\NOK$ message for every other party. 
  To void repetition, we skip the formal details of the modified protocol and the proof of its properties, as stated in Theorem \ref{thm:VSS}.
\begin{theorem}
\label{thm:VSS}
Let $n > 3t_s + t_a$ and let $\D$ has $L$ number of $t_s$-degree polynomials $q^{(1)}(\cdot), \ldots, q^{(L)}(\cdot)$ as input for $\VSS$ where $L \geq 1$. Moreover, 
 let $\TimeVSS = \Delta + \TimeWPS + 2 \TimeBCAST + \TimeHBA$.
Then protocol $\VSS$ achieves the following properties.
  \begin{myitemize}
    \item[--] If $\D$ is honest, then the following hold.
       \begin{myitemize}
        \item[--] {\it $t_s$-correctness}: In a synchronous network, each (honest) $P_i$ outputs
         $\{q(\alpha_i)\}_{\ell = 1, \ldots, L}$ at time $\TimeVSS$.
        \item[--] {\it $t_a$-correctness}: In an asynchronous network, almost-surely, each (honest) $P_i$  
          eventually outputs $\{q(\alpha_i)\}_{\ell = 1, \ldots, L}$.
          \item[--] {\it $t_s$-privacy}: Irrespective of the network type, the view of the adversary 
   remains independent of the polynomials $q^{(1)}(\cdot), \ldots, q^{(L)}(\cdot)$.
        \end{myitemize}
    \item[--]  If $\D$ is corrupt, then either no honest party computes any output or there exist $t_s$-degree polynomials $\{{q^{\star}}^{(\ell)}(\cdot)\}_{\ell = 1, \ldots, L}$, such that
     the following hold.
       \begin{myitemize}
        \item[--] {\it $t_s$-strong commitment}: every honest $P_i$
          eventually outputs $\{{q^{\star}}^{(\ell)}(\alpha_i)\}_{\ell = 1, \ldots, L}$,
          such that one of the following hold.
           \begin{myitemize}
             \item[--] If any honest $P_i$ computes its output at time $\TimeVSS$, then all honest parties compute their output
          at time $\TimeVSS$.
             \item[--] If any honest $P_i$ computes its output at time $T$ where $T > \TimeVSS$, then every honest party computes its output by time $T + 2\Delta$.  
           \end{myitemize}        
        \end{myitemize}
      \item[--] Irrespective of the network type, the protocol incurs a communication of $\Order(n^3L \log{|\F|} + n^5 \log{|\F|})$ bits from the honest parties
      and invokes $n +1$ instances of $\HBA$.
   \end{myitemize}
\end{theorem}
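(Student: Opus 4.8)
The plan is to derive Theorem~\ref{thm:VSS} from the already-established single-polynomial facts (Lemmas~\ref{lemma:VSSPrivacy},~\ref{lemma:VSSSynCorrectness},~\ref{lemma:VSSAsynCorrectness},~\ref{lemma:VSSSynStrongCommitment},~\ref{lemma:VSSAsynStrongCommitment},~\ref{lemma:VSSCommunication}) by arguing that the batching modifications preserve every property, handled coordinate by coordinate. For $L=1$ the theorem is literally the conjunction of those lemmas, so it suffices to check that none of their proofs used the assumption that $\D$ holds a single polynomial, once we replace, in the obvious way, the per-coordinate pair-wise tests, the bundled $\OK/\NOK$ messages, the single consistency graph, the single $(\WCORE,\ESet,\FSet)$, and the single $\Star{t_a}$ $(\ESet',\FSet')$.

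First I would handle $t_s$-privacy. Here $\D$ picks $L$ independent random $(t_s,t_s)$-degree symmetric bivariate polynomials $Q^{(1)},\dots,Q^{(L)}$ with $Q^{(\ell)}(0,y)=q^{(\ell)}(\cdot)$. As in Lemma~\ref{lemma:VSSPrivacy}, the adversary's view consists, for each corrupt $P_i$, of the $L$ univariate polynomials $\{Q^{(\ell)}(x,\alpha_i)\}_\ell$ together with the tuples of evaluation points exchanged in the consistency tests with honest parties, all of which are already determined by the corrupt parties' own polynomials. Since every honest--honest pair passes the check on all $L$ coordinates, no honest party ever broadcasts a $\NOK$ against another honest party, so the bundled $\NOK$ messages leak nothing about honest inputs; and the $t_s$-privacy of $\WPS$ for $L$ polynomials (Theorem~\ref{thm:WPS}) keeps the instances $\WPS^{(i)}$ of honest $P_i$ private. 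Applying Lemma~\ref{lemma:bivariateII} separately to each of the $L$ polynomials then shows the view is independent of all of $q^{(1)}(\cdot),\dots,q^{(L)}(\cdot)$.

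Next I would lift correctness. For an honest $\D$ the batched pair-wise tests between honest parties succeed on all $L$ coordinates exactly when the single-coordinate ones do, so the bundled $\OK$ messages are broadcast on the same schedule and the consistency graph, the set $\WCORE$, the $\Star{t_s}$ $(\ESet,\FSet)$, and (if needed) the $\Star{t_a}$ $(\ESet',\FSet')$ form exactly as in the proofs of Lemmas~\ref{lemma:VSSSynCorrectness} and~\ref{lemma:VSSAsynCorrectness}; the phase time-outs are unchanged because $\WPS$ for $L$ polynomials still terminates in time $\TimeWPS$ (Theorem~\ref{thm:WPS}). A party outside $\WCORE$ (resp.\ $\FSet'$) now interpolates $L$ polynomials, one per coordinate, from the values it collects from the $\geq t_s+1$ parties in $\Support_i$ (resp.\ via OEC over $\FSet'$); coordinate $\ell$ is recovered because the honest contributors supply consistent points of $Q^{(\ell)}(x,\alpha_i)$, which is the argument already used for a single polynomial. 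This gives $t_s$- and $t_a$-correctness with output $\{q^{(\ell)}(\alpha_i)\}_\ell$ at the stated times.

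Finally, and this is the step I expect to be the main obstacle, I would lift strong commitment for a corrupt $\D$, where the delicate point is that the bundled $\NOK$ publicizes a conflict on only the least-indexed disagreeing coordinate. I would first generalize the helping Lemma~\ref{lemma:VSS}: if an honest party accepts $(\WCORE,\ESet,\FSet)$ through regular-mode, then, by the same ``wait-till-a-multiple-of-$\Delta$'' timing argument combined with the $t_s$-correctness of $\WPS$, all honest parties in $\WCORE$ received their $L$-tuples of $t_s$-degree univariate polynomials within time $\Delta$ and completed their mutual consistency tests in time; hence if some honest pair $P_j,P_k\in\WCORE$ disagreed on any coordinate, the least-indexed such coordinate would have been broadcast with conflicting values in time, preventing acceptance. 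Therefore acceptance forces, for every $\ell$, the honest parties' univariate polynomials in $\WCORE$ to be pair-wise consistent, hence (Lemma~\ref{lemma:bivariateI}) to lie on a unique $(t_s,t_s)$-degree symmetric bivariate polynomial $Q^{\star(\ell)}(x,y)$; set $q^{\star(\ell)}(\cdot)=Q^{\star(\ell)}(0,y)$. With this in hand the two $\HBA$-output cases and their sub-cases ($P_i$ in or out of $\WCORE$, $\FSet$, $\FSet'$) go through coordinate-by-coordinate verbatim as in Lemmas~\ref{lemma:VSSSynStrongCommitment} and~\ref{lemma:VSSAsynStrongCommitment}, including the timing claims ``at time $\TimeVSS$'' and ``by time $T+2\Delta$'', which batching does not affect. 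The communication and $\HBA$ count follow from Lemma~\ref{lemma:VSSCommunication} and the $L$-polynomial cost of $\WPS$ (Theorem~\ref{thm:WPS}): $n$ instances of $\WPS$, one per dealer, each costing $\Order(n^2 L\log|\F| + n^4\log|\F|)$ bits and one $\HBA$, plus the pair-wise exchange, the bundled $\OK/\NOK$ and $(\WCORE,\ESet,\FSet)$ broadcasts through $\BCAST$, and one further $\HBA$ inside $\VSS$, giving $\Order(n^3 L\log|\F| + n^5\log|\F|)$ bits and $n+1$ instances of $\HBA$ in total.
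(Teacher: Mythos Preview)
Your proposal is correct and matches the paper's approach exactly: the paper explicitly skips the proof of Theorem~\ref{thm:VSS}, stating only that the modifications are the same as those for $\WPS$ handling $L$ polynomials (a single bundled $\OK/\NOK$ per pair, a single consistency graph, etc.) and that the properties follow from the single-polynomial lemmas. Your write-up faithfully fills in the details the paper omits, including the one nontrivial point---that for an honest pair $P_j,P_k$ the least-indexed disagreeing coordinate is the same from both sides, so the bundled $\NOK$ messages still conflict and block acceptance---and your communication accounting is correct.
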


\section{Agreement on a Common Subset (ACS)}
\label{sec:ACS}
In this section, we present a best-of-both-worlds protocol for agreement on a common subset, which will be later used in our
 preprocessing phase protocol, as well as in our circuit-evaluation protocol.  
 In the protocol, each party $P_i \in \PartySet$ has $L$ number of $t_s$-degree polynomials as 
  input for an instance of $\HSh$, which $P_i$ is supposed
  to invoke as a dealer.\footnote{The exact input of $P_i$ will be determined, based on where exactly the ACS protocol is used.}
  As {\it corrupt} parties may not invoke their instances
 of $\HSh$ as dealer, the parties may obtain points lying
  on the polynomials of only $n - t_s$ parties (even in a {\it synchronous} network).
  However, in an {\it asynchronous} network, different parties may obtain points on the polynomials of different subsets of $n - t_s$ parties.
  The ACS protocol allows the parties to agree on a {\it common subset} $\CoreSet$ of at least $n - t_s$ parties, 
   such that all (honest) parties are guaranteed to receive points lying on the polynomials of the parties in $\CoreSet$.
   Additionally, the protocol guarantees that in a {\it synchronous} network, all {\it honest} parties are present in $\CoreSet$. 
   Looking ahead, this property will be very crucial when the ACS protocol is used during circuit-evaluation, as it will ensure that in a 
   {\it synchronous} network, the inputs of {\it all} honest parties are considered for the circuit-evaluation.
   
      The ACS protocol is presented in Fig \ref{fig:acs}, where for simplicity we assume that $L = 1$.
       Later, we discuss the modifications required for $L > 1$.
     In the protocol, each party acts as a dealer and invokes an instance of $\HSh$ to verifiably distribute points on its polynomial. 
     If the network is {\it synchronous}, then after time $\TimeHSh$, all honest parties would have received points corresponding to the 
     polynomials of the {\it honest} dealers. Hence after (local) time $\TimeHSh$, 
     the parties {\it locally} check for the instances of $\HSh$ in which they have received an output. Based on this, the parties
   start participating in $n$ instances of $\HBA$, where the $j^{th}$ instance is used to decide
  whether $P_j$ should be included in $\CoreSet$. The input criteria for these $\HBA$ instances is the following: if a party has received an output
  in the $\HSh$ instance with $P_j$ as the dealer, then the party starts participating with input $1$ in the corresponding $\HBA$ instance. 
  Now once $1$ is obtained as the output from $n - t_s$ instances of $\HBA$, then the parties start participating with input $0$ in any of the 
   remaining $\HBA$ instances for which the parties may have not provided any input yet. Finally, once 
   an output is obtained from all the $n$ instances 
    of $\HBA$, party $P_j$ is included
   in $\CoreSet$ if and only if the output of the corresponding $\HBA$ instance is $1$.  
   Since the parties wait for time $\TimeHSh$ {\it before} starting the $\HBA$ instances, it is ensured that all {\it honest} dealers are 
   included in $\CoreSet$
   in a {\it synchronous} network.
\begin{protocolsplitbox}{$\ACS$}{Agreement on common subset of $n-t_s$ parties where each party has a single $t_s$-degree polynomial 
 as input.
 The above code is executed by every $P_i \in \Partyset$.}{fig:acs}
  \begin{myitemize}
  \item[--] {\bf Phase I --- Distributing Points on the Polynomials}
       \begin{myitemize}
            \item[--] On having the input $f_i(\cdot)$, act as a dealer $\D$ and invoke an instance $\HSh^{(i)}$ of $\HSh$ with input
            $f_i(\cdot)$.
	    \item[--] For $j = 1, \ldots, n$, participate in the instance $\HSh^{(j)}$ invoked by $P_j$ and {\color{red} wait for time $\TimeHSh$}.        
             \item[--] Initialize a set $\CSet_i = \emptyset$ {\color{red} after time $\TimeHSh$}
              and include $P_j$ in 
            $\CSet_i$, if an output is computed during $\HSh^{(j)}$.  
            \end{myitemize}
    \item[--] {\bf Phase II --- Identifying the Common Subset of Parties}: 
           \begin{myitemize}
            \item[--]  For $j = 1, \ldots, n$, participate in an instance of $\HBA^{(j)}$ of $\HBA$ with input $1$, if $P_j \in \CSet_i$.
            \item[--] Once $n - t_s$ instances of $\HBA$ have produced an output $1$, then participate 
            with input $0$ in all the $\HBA$ instances $\HBA^{(j)}$, such that $P_j \not \in \CSet_i$.
            \item[--] Once all the $n$ instances of $\HBA$ have produced a binary output, then output $\CoreSet$, 
             which is the set of parties $P_j$, such that
                           $1$ is obtained as the output in the instance $\HBA^{(j)}$. 
             \end{myitemize} 
       \end{myitemize}
\end{protocolsplitbox}
 
 We next prove the properties of the protocol $\ACS$.
\begin{lemma}
\label{lemma:ACSProperties}
 Protocol $\ACS$ achieves the following properties, where every party $P_i$ has a $t_s$-degree polynomial $f_i(\cdot)$ as input.
  \begin{myitemize}
   \item[--] {\it Synchronous Network}: The following is achieved in the presence of up to $t_s$ corruptions.
      \begin{myitemize}
        \item[--] {\it $t_s$-Correctness}: at time $\TimeACS = \TimeHSh + 2\TimeHBA$, the parties output
         a common subset $\CoreSet$ of size at least $n - t_s$, 
        such that all the following hold:
          \begin{myitemize}
          \item[--] All honest parties will be present in $\CoreSet$.
          \item[--] Corresponding to every {\it honest} $P_j \in \CoreSet$, every {\it honest} $P_i$ has  $f_j(\alpha_i)$.
          \item[--] Corresponding to every {\it corrupt} $P_j \in \CoreSet$, there exists some $t_s$-degree polynomial, say $f^{\star}_j(\cdot)$, 
             such that
             every {\it honest} $P_i$ has $f^{\star}_j(\alpha_i)$.
          \end{myitemize}
      \end{myitemize}
    \item[--] {\it Asynchronous Network}: The following is achieved in the presence of up to $t_a$ corruptions.
       \begin{myitemize}
        \item[--] {\it $t_a$-Correctness}: almost-surely, 
         the honest parties eventually output a common subset $\CoreSet$ of size at least $n - t_s$, such that all the following hold:
          \begin{myitemize}
          \item[--] Corresponding to every {\it honest} $P_j \in \CoreSet$, every {\it honest} $P_i$ eventually has $f_j(\alpha_i)$.
          \item[--] Corresponding to every {\it corrupt} $P_j \in \CoreSet$, there exists some $t_s$-degree polynomial, say $f^{\star}_j(\cdot)$, such that
             every {\it honest} $P_i$ eventually has $f^{\star}_j(\alpha_i)$.
          \end{myitemize}
      \end{myitemize}
      \item[--] {\it $t_s$-Privacy}: Irrespective of the network type, the view of the adversary remains independent of the $f_i(\cdot)$ polynomials
    of the honest parties.      
     \item[--] Irrespective of the network type, the protocol incurs a communication of  $\Order(n^6 \log{|\F|})$ bits from the honest parties
      and 
     invokes
     $\Order(n^2)$ instances of $\HBA$.
  \end{myitemize}
\end{lemma}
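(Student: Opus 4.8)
The plan is to derive each property of $\ACS$ by lifting the corresponding guarantee of the two building blocks $\HSh$ (Theorem~\ref{thm:VSS}) and $\HBA$ (Theorem~\ref{thm:HBA}), in the style of the classical agreement-on-a-common-subset construction.

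\emph{Synchronous correctness.} First I would invoke the $t_s$-correctness of $\HSh$ in a synchronous network: every honest dealer $P_j$'s instance $\HSh^{(j)}$ delivers $f_j(\alpha_i)$ to every honest $P_i$ exactly at time $\TimeHSh=\TimeVSS$, so at that time every honest $P_i$ has all $\ge n-t_s$ honest parties in $\CSet_i$ and feeds input $1$ into the corresponding instances $\HBA^{(j)}$. A key observation is that for a \emph{corrupt} dealer the $t_s$-strong commitment of $\HSh$ forces an all-or-nothing behaviour among honest parties: either all of them obtain a point from $\HSh^{(j)}$ by time $\TimeHSh$, or none of them do by that time. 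Combining this with the $t_s$-validity and $t_s$-guaranteed liveness of $\HBA$, the $\ge n-t_s$ honest-dealer instances output $1$ by time $\TimeHSh+\TimeHBA$; hence the ``$n-t_s$ instances decided $1$'' trigger fires by then, every honest party has supplied an input (namely $1$ whenever the dealer ever enters $\CSet_i$, else $0$ at the trigger) to \emph{every} $\HBA^{(j)}$ by time $\TimeHSh+\TimeHBA$, and by $t_s$-guaranteed liveness all $n$ instances terminate by time $\TimeHSh+2\TimeHBA=\TimeACS$. By $t_s$-consistency the resulting set $\CoreSet$ is common, and it contains every honest party (those instances output $1$ by $t_s$-validity), so $|\CoreSet|\ge n-t_s$. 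For an honest $P_j\in\CoreSet$ every honest $P_i$ already holds $f_j(\alpha_i)$; for a corrupt $P_j\in\CoreSet$, the output $1$ forces (by $\HBA$-validity) some honest party to have fed $1$, hence to have obtained a point from $\HSh^{(j)}$ by time $\le\TimeHSh+\TimeHBA$, so $t_s$-strong commitment of $\HSh$ yields a $t_s$-degree $f^{\star}_j$ whose evaluations all honest parties hold within a further $2\Delta$ time, which is at most $\TimeACS$ since $\TimeHBA\ge 2\Delta$.

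\emph{Asynchronous correctness.} This is the same argument with ``eventually'' and ``almost-surely'' in place of the fixed time-outs: by $t_a$-correctness of $\HSh$, almost-surely every honest dealer's points eventually reach every honest party, so a short liveness argument --- if the trigger never fired, no honest party would ever input $0$, hence every honest-dealer instance would receive common input $1$ and (by $t_a$-guaranteed liveness) output $1$, contradicting that $\ge n-t_s$ such instances suffice to fire the trigger --- shows the trigger fires, every honest party supplies an input to every $\HBA^{(j)}$, and by $t_a$-almost-surely liveness and $t_a$-consistency all $n$ instances eventually agree on a common $\CoreSet$ with $|\CoreSet|\ge n-t_s$. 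The membership guarantees transfer as in the synchronous case: an honest $P_j\in\CoreSet$ gives $f_j(\alpha_i)$ to every honest $P_i$ by $t_a$-correctness of $\HSh$, and a corrupt $P_j\in\CoreSet$ forces, via $\HBA$-validity and then $t_a$-strong commitment of $\HSh$, a $t_s$-degree $f^{\star}_j$ whose evaluations every honest party eventually obtains; note that here we no longer claim all honest parties lie in $\CoreSet$, since an honest party that executes the trigger before receiving some honest dealer's output may enter that dealer's $\HBA$ instance with input $0$.

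\emph{Privacy, complexity, and the main obstacle.} For $t_s$-privacy, the only place an honest $P_i$'s polynomial $f_i(\cdot)$ is used is in its own dealer instance $\HSh^{(i)}$, whose $t_s$-privacy keeps it hidden, while the $\HBA$ instances depend only on the public bits ``did $\HSh^{(j)}$ produce an output''; a standard hybrid over the honest dealers (each replaced by an arbitrary $t_s$-degree polynomial agreeing with the corrupt parties' shares, appealing to Lemma~\ref{lemma:bivariateII}) then shows the adversary's view is independent of all honest $f_i(\cdot)$. The complexity is immediate: $n$ instances of $\HSh$ at $\Order(n^5\log|\F|)$ bits each dominate and give $\Order(n^6\log|\F|)$ bits, while the number of $\HBA$ invocations is $n$ direct instances plus the $n+1$ invoked inside each of the $n$ instances of $\HSh$, i.e.\ $\Order(n^2)$. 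The one delicate point I expect to spend effort on is the synchronous time bound $\TimeACS=\TimeHSh+2\TimeHBA$: one must rule out that an $\HBA$ instance for a corrupt dealer whose $\HSh$ output trickles in between $\TimeHSh$ and the trigger receives honest inputs spread out beyond time $\TimeHSh+\TimeHBA$. This is precisely where the all-or-nothing consequence of $t_s$-strong commitment of $\HSh$, together with the slack $\TimeHBA\ge 2\Delta$, is used; the rest is a routine translation of the component guarantees.
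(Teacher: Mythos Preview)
Your proposal is correct and follows essentially the same approach as the paper: both derive each property of $\ACS$ directly from the corresponding guarantees of $\HSh$ (Theorem~\ref{thm:VSS}) and $\HBA$ (Theorem~\ref{thm:HBA}), with the same skeleton---honest-dealer $\HBA$ instances output $1$ by $\TimeHSh+\TimeHBA$, the trigger fires, the remaining instances finish by $\TimeHSh+2\TimeHBA$, and for corrupt $P_j\in\CoreSet$ one combines $\HBA$-validity with $t_s$-strong commitment of $\HSh$ and the slack $2\Delta<\TimeHBA$. Two small remarks: the paper does not invoke your ``all-or-nothing at time $\TimeHSh$'' observation (it simply uses that every honest party has supplied \emph{some} input to every $\HBA^{(j)}$ by time $\TimeHSh+\TimeHBA$), and in the asynchronous paragraph you should say ``almost-surely liveness'' rather than ``$t_a$-guaranteed liveness'' when appealing to Theorem~\ref{thm:HBA}, though your contradiction argument goes through unchanged.
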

\begin{proof}
The {\it $t_s$-privacy} property simply follows from the {\it $t_s$-privacy} property of $\HSh$, while communication complexity follows 
 from the communication complexity of $\HSh$ and 
  the fact that $\Order(n)$ instances of
  $\HSh$ are invoked. We next prove the {\it correctness} property.

We first consider a {\it synchronous} network, with up to $t_s$ corruptions. 
 Let $\Honest$ be the set of parties, where $|\Honest| \geq n - t_s$. Corresponding to each $P_j \in \Honest$, every {\it honest} $P_i$ 
  computes the output
 $f_j(\alpha_i)$ at time $\TimeHSh$ during $\HSh^{(j)}$, which follows from the {\it $t_s$-correctness} of $\HSh$ in
  the {\it synchronous} network.
  Consequently, at time $\TimeHSh$, the set $\CSet_i$ will be of size at least $n - t_s$ for every
  honest $P_i$. Now corresponding to each $P_j \in \Honest$, each honest $P_i$ participates with input $1$ in the instance $\HBA^{(j)}$ at time
  $\TimeHSh$. Hence, from the {\it $t_s$-validity} and
   {\it $t_s$-guaranteed liveness}
  of $\HBA$ in the {\it synchronous} network, it follows that at time $\TimeHSh + \TimeHBA$, every honest $P_i$ 
  computes the output $1$ during the instance $\HBA^{(j)}$,
  corresponding to every $P_j \in \Honest$. Consequently, at time $\TimeHSh + \TimeHBA$,
   every {\it honest} party will start participating in the remaining $\HBA$ instances for which no input has been provided yet (if there are any). 
  And from the {\it $t_s$-guaranteed liveness} and {\it $t_s$-consistency} of $\HBA$ in the {\it synchronous} network,
   these $\HBA$ instances will produce common outputs for every honest party at time $\TimeACS = \TimeHSh + 2 \TimeHBA$.
    Since the set $\CoreSet$ is determined deterministically based on the outputs
   computed from the $n$ instances of $\HBA$, it follows that  all the honest
  parties eventually output the same $\CoreSet$  of size at least $n - t_s$, such that 
   each $P_j \in \Honest$ will be present in $\CoreSet$.
  We next wish to show that corresponding to {\it every} $P_j \in \CoreSet$, every honest party has received its point on $P_j$'s polynomial.
  
 Consider an {\it arbitrary} party $P_j \in \CoreSet$. If $P_j$ is {\it honest}, then as argued above, every honest $P_i$ gets $f_j(\alpha_i)$ at time $\TimeHSh$ itself. 
  Next, consider a {\it corrupt} $P_j \in \CoreSet$. Since $P_j \in \CoreSet$, it follows that the instance $\HBA^{(j)}$ produces the output $1$.
  From the {\it $t_s$-validity} property of $\HBA$ in the {\it synchronous} network, it follows that
   at least one 
  {\it honest} $P_i$ must have participated
  with input $1$ in the instance $\HBA^{(j)}$. This implies that $P_i$ must have computed
  some output during the instance $\HSh^{(j)}$ by time $\TimeHSh + \TimeHBA$ and $P_j \in \CSet_i$.  
  This is because if at time $\TimeHSh + \TimeHBA$, 
  party  $P_j$ {\it does not} belong to the $\CSet_i$ set of {\it any} honest $P_i$,
   then it implies that {\it all honest} parties participate with input
   $0$ in the instance $\HBA^{(j)}$ from time $\TimeHSh + \TimeHBA$. 
   Then, from the {\it $t_s$-validity} of $\HBA$ in the {\it synchronous} network, every honest party would compute
    the output $0$ in the instance
  $\HBA^{(j)}$ and hence $P_j$ will not be present in $\CoreSet$, which is a contradiction. 
  Now if $P_i$ has computed
   some output during $\HSh^{(j)}$ at time $\TimeHSh + \TimeHBA$, then from the {\it $t_s$-strong-commitment} of $\HSh$, it follows that
  $P_j$ has some $t_s$-degree polynomial, say $f^{\star}_j(\cdot)$, such that every honest party $P_i$ computes
  $f^{\star}_j(\alpha_i)$ by time $\TimeHSh + \TimeHBA + 2 \Delta$.
  Since $2 \Delta < \TimeHBA$, it follows that at time $\TimeACS$, every honest $P_i$ has $f^{\star}_j(\alpha_i)$, thus proving the 
  {\it $t_s$-correctness} property in a {\it synchronous} network.
  
  We next consider an {\it asynchronous} network, with up to $t_a$ corruptions. 
    Let $\Honest$ be the set of parties, where $|\Honest| \geq n - t_a \geq n - t_s$.  We first note that irrespective of way messages are scheduled, 
  there will be at least $n - t_s$ instances of $\HBA$ in which all honest parties eventually participate with input $1$. This is because
  corresponding to every $P_j \in \Honest$, every {\it honest} $P_i$ {\it eventually} computes
   the output $f_j(\alpha_i)$ during the instance $\HSh^{(j)}$,
   which follows from the 
  {\it $t_a$-correctness} of $\HSh$ in the {\it asynchronous} network. 
   So even if the {\it corrupt} parties $P_j$ do not invoke their respective $\HSh^{(j)}$ instances, there will be at least $n - t_s$ instances of $\HBA$ in which all
  {\it honest} parties eventually participate with input $1$.
   Consequently, from the {\it $t_a$-almost-surely liveness} and {\it $t_a$-validity} properties of
   $\HBA$ in the {\it asynchronous} network, almost-surely, all honest parties eventually compute the output 
   $1$ during these 
   $\HBA$ instances. Hence, all honest parties eventually participate with some input in the remaining $\HBA$ instances.
   Consequently, from the {\it $t_a$-almost-surely liveness} and {\it $t_a$-consistency} properties of
   $\HBA$ in the {\it asynchronous} network, almost-surely, all honest parties will compute some common output
   in these $\HBA$ instances as well. Since the set $\CoreSet$ is determined deterministically based on the outputs
   computed from the $n$ instances of $\HBA$, it follows that  all the honest
  parties eventually output the same $\CoreSet$.
  
  Now consider an arbitrary party $P_j \in \CoreSet$. It implies that the honest parties 
  compute the output $1$ during the instance $\HBA^{(j)}$.
  From the {\it $t_a$-validity} of $\HBA$ in the {\it asynchronous} network, it follows 
    that at least one
   {\it honest} $P_i$ participated with input $1$ during $\HBA^{(j)}$,
    after computing some output in the instance $\HSh^{(j)}$. Now if $P_j$ is {\it honest}, then the
   {\it $t_a$-correctness} of $\HSh$ in the {\it asynchronous} network
    guarantees that every honest party $P_i$ {\it eventually} computes the output $f_j(\alpha_i)$ during $\HSh^{(j)}$.
   On the other hand, if $P_j$ is {\it corrupt}, then the {\it $t_a$-strong commitment} 
   of $\HSh$ in the {\it asynchronous} network
    guarantees that there exists some $t_s$-degree polynomial, say $f^{\star}_j(\cdot)$, such that 
    every honest party $P_i$ eventually computes the output $f^{\star}_j(\alpha_i)$ during the instance $\HSh^{(j)}$.
\end{proof}

We end this section by discussing the modifications needed in the protocol $\ACS$, if each party has
 $L$ number of polynomials as input.
\paragraph{\bf Protocol $\ACS$ for Multiple Polynomials:}
  Protocol $\ACS$ can be easily extended if each party has $L$ number of $t_s$-degree polynomials as input. In this case, each party $P_j$
  will invoke its instance of $\HSh$ with $L$ polynomials. The rest of the protocol steps remain the same. The protocol will incur a 
   communication of 
  $\Order(n^4 L \log{|\F|} + n^6 \log{|\F|})$ bits from the honest parties
  and invokes $\Order(n^2)$ instances of $\HBA$.

\section{The Preprocessing Phase Protocol}
\label{sec:preprocessing}
In this section, we present our best-of-both-worlds protocol for the preprocessing phase. The goal of the protocol is to generate
 $c_M$ number of $t_s$-shared multiplication-triples, which are random from the point of view of the adversary. The protocol is obtained
  by extending the framework of \cite{CP17} to the best-of-both-worlds setting. We first start by discussing the various 
   (best-of-both-worlds) building blocks used in the protocol.
 \subsection{best-of-both-worlds Beaver's Multiplication Protocol}
Given $t_s$-shared $x, y$ and a $t_s$-shared triple $(a, b, c)$, protocol $\BatchBeaver$ \cite{Bea91}
  outputs a $t_s$-shared $z$, where
 $z = x \cdot y$, if and only if $c = a \cdot b$. If $(a, b, c)$ is random for the adversary, then
  $x$ and $y$ remain random for the adversary. 
  In the protocol, the parties first 
  {\it publicly} reconstruct $x - a$ and $y - b$. A $t_s$-sharing of $z$ can be then computed locally, since 
   $[z] = (x - a) \cdot (y - b) + (x - a) \cdot [b] + (y - b) \cdot [a] + [c]$.
  The protocol takes $\Delta$ time in a {\it synchronous} network
   and in an {\it asynchronous} network, the parties
   {\it eventually} compute [z].    
   \begin{protocolsplitbox}{$\BatchBeaver(([x],[y]),([a],[b],[c]))$}{Beaver's protocol for multiplying two $t_s$ shared values.}{fig:beaver}
\begin{myitemize}
    \item[--] {\bf Masking Input Values}  --- parties locally compute 
    $[e] = [x] - [a]$ and $[d] = [y] - [b]$.
    \item[--] {\bf Publicly Reconstructing Masked Inputs} --- each $P_i \in \Partyset$ does the following:
      \begin{myitemize}
      \item[--] Send the share of $e$ and $d$ to every party in $\Partyset$ and {\color{red} wait for $\Delta$ time}.
      \item[--] Apply the $\OEC(t_s, t_s, \Partyset)$ procedure on the received shares of $d$ to compute $d$.
      Similarly, apply the $\OEC(t_s, t_s, \Partyset)$ procedure on the received shares of $e$ to compute $e$.       
      \end{myitemize}
    \item[--] \textbf{Output Computation} --- parties locally compute $[z] = d \cdot e + e \cdot [b] +  d \cdot [a] + [c]$ and output $[z]$.
    \end{myitemize}
\end{protocolsplitbox}
\begin{lemma}
\label{lemma:BatchBeaverProperties}
Let $x$ and $y$ be two $t_s$-shared values and let $(a, b, c)$ be a $t_s$-shared triple. 
 Then protocol $\BatchBeaver$ achieves the following properties in the presence of up to $t_s$ corruptions.
  \begin{myitemize}
     \item[--] If the network is {\it synchronous}, then within time $\Delta$, the parties output
       a $t_s$-sharing of $z$.
      \item[--] If the network is asynchronous, then the parties eventually output a  
      $t_s$-sharing of $z$.
      \item[--] Irrespective of the network type, $z = x \cdot y$ holds, if and only if $(a, b, c)$ is a multiplication-triple.
      \item[--] Irrespective of the network type, if $(a, b, c)$ is random from the point of view of the adversary, then
        the view of the adversary 
        remains independent of $x$ and $y$.
       \item[--] The protocol incurs a communication of $\Order(n^2 \log{|\F|})$ bits from the honest parties.
   \end{myitemize}
\end{lemma}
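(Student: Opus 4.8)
The plan is to dispatch the five claims more or less independently, leaning on the linearity of $t_s$-sharing (and the remark following the definition of $d$-sharing) together with the guarantees of $\OEC$.

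First I would handle the reconstruction step, which underlies the two liveness/timing claims. Since $x,a$ are $t_s$-shared, linearity gives that $e=x-a$ is $t_s$-shared with every honest $P_i$ holding the share $f_x(\alpha_i)-f_a(\alpha_i)$; symmetrically $d=y-b$ is $t_s$-shared. The parties send their shares of $e$ and $d$ to everyone and run $\OEC(t_s,t_s,\Partyset)$ on each. The one arithmetic point to check is that the OEC parameters are admissible: $|\Partyset|=n$, both the degree and the corruption bound equal $t_s$, and $t_s < n-2t_s$ holds because $3t_s \le 3t_s+t_a < n$. Hence, by the properties of OEC (Lemma~\ref{lemma:OECProperties}), in a \emph{synchronous} network every honest party correctly reconstructs $e$ and $d$ within time $\Delta$ (the honest shares all arrive by then), and in an \emph{asynchronous} network every honest party eventually reconstructs them.

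Next, the output sharing and its value. Once $e,d$ are public field elements, each party locally sets $[z]=d\cdot e+e\cdot[b]+d\cdot[a]+[c]$; since this is a publicly-known linear combination of the $t_s$-sharings $[a],[b],[c]$ plus the public constant $d\cdot e$, the result is again a $t_s$-sharing, with sharing-polynomial $f_z(\cdot)=de+e\,f_b(\cdot)+d\,f_a(\cdot)+f_c(\cdot)$, so the parties hold $[z]_{t_s}$ --- completing the first two bullets (the local computation adds no communication and no delay). For the third bullet I would simply substitute $e=x-a$, $d=y-b$ into $z=f_z(0)=de+eb+da+c$ and simplify to $z=x\cdot y-a\cdot b+c$; thus $z=x\cdot y$ if and only if $c=a\cdot b$, i.e.\ if and only if $(a,b,c)$ is a multiplication-triple, irrespective of the network type.

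For privacy (fourth bullet), the only new information the adversary obtains during $\BatchBeaver$ is the pair $(e,d)=(x-a,\,y-b)$: the individual shares sent by honest parties towards corrupt parties are already determined by the corrupt parties' own shares of $[e]$ and $[d]$. If $(a,b,c)$ is random from the adversary's point of view, then conditioned on the adversary's state $a$ and $b$ are uniformly distributed over $\F$ and independent of $(x,y)$, so $e$ and $d$ are uniformly distributed and carry no information about $x,y$; I would phrase this as a distribution-identity argument in the spirit of Lemma~\ref{lemma:bivariateII}. The communication bound is immediate: each party sends two field elements to each of the $n$ parties, i.e.\ $\Order(n^2\log|\F|)$ bits overall. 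The only step needing genuine care is the privacy argument, where one must be precise about what ``$(a,b,c)$ is random from the point of view of the adversary'' means relative to the shares already held by the corrupt parties; everything else is a routine application of linearity and the OEC guarantees.
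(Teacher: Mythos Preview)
Your proposal is correct and follows essentially the same route as the paper: linearity of $t_s$-sharing for $[e],[d]$, the $\OEC$ guarantees for reconstruction in each network, the algebraic identity $z=xy-ab+c$, the one-time-pad argument for privacy, and the $2$-elements-per-party count for communication. The only (helpful) addition is that you spell out the admissibility check $t_s < n-2t_s$ for $\OEC$, which the paper leaves implicit; the reference to Lemma~\ref{lemma:bivariateII} for the privacy step is a bit off-target (that lemma concerns bivariate polynomials), but your actual masking argument is the same as the paper's.
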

\begin{proof}
Since $x, y$ and the triple $(a, b, c)$ are all $t_s$-shared, the values $d = (x - a)$ and $e = (y - b)$ will be $t_s$-shared, which follows from the linearity of $t_s$-sharing. Let there be 
 up to $t_s$ corruptions. If the network is {\it synchronous}, then from the properties
  of $\OEC$ in the {\it synchronous} network, 
  within time $\Delta$, {\it every} honest $P_i$ will have
 $d$ and $e$ and hence the parties output a $t_s$-sharing of $z$ within time $\Delta$. On the other hand, if the network is {\it asynchronous},
  then from the properties of $\OEC$ in the {\it asynchronous} network, {\it every} honest $P_i$ {\it eventually} reconstructs
 $d$ and $e$  and hence the honest parties eventually output  a $t_s$-sharing of $z$.
 
 In the protocol, $z = (x - a) \cdot (y - b) + (x - a) \cdot b + (y - b) \cdot a + c = x \cdot y - a \cdot b + c$ holds.
  Hence it follows that $z = x \cdot y$ holds if and only if $c = a \cdot b$ holds. 
 
In the protocol, adversary learns the values $d$ and $e$, as they are publicly reconstructed. However, if $a$ and $b$ are random from the point of view of the adversary,
  then $d$ and $e$ leak no information about $x$ and $y$. Namely, for every candidate $x$ and $y$, there exist unique
 $a$ and $b$, consistent with $d$ and $e$. 
 
 The communication complexity follows from the fact each party needs to send $2$ field elements to every other party.
\end{proof}
\subsection{best-of-both-worlds Triple-Transformation Protocol}
 Protocol $\TripTrans$ 
  takes input a set of $2d + 1$ 
   $t_s$-shared triples $\{(x^{(i)}, y^{(i)}, z^{(i)}) \}_{i = 1, \ldots, 2d + 1}$, where the triples may not be ``related". 
   The output of the protocol are  ``co-related" $t_s$-shared triples
   $\{(\mathbf{x}^{(i)}, \mathbf{y}^{(i)}, \mathbf{z}^{(i)}) \}_{i = 1, \ldots, 2d + 1}$, such that all the following hold (irrespective of the network type):
\begin{myitemize}
  \item[--] There exist $d$-degree polynomials $\mathsf{X}(\cdot), \mathsf{Y}(\cdot)$ and $2d$-degree polynomial $\mathsf{Z}(\cdot)$, 
   such that $\mathsf{X}(\alpha_i) = \mathbf{x}^{(i)}$, $\mathsf{Y}(\alpha_i) = \mathbf{y}^{(i)}$ and $\mathsf{Z}(\alpha_i) = \mathbf{z}^{(i)}$ holds
   for $i = 1, \ldots, 2d + 1$.
  \item[--] The triple $(\mathbf{x}^{(i)}, \mathbf{y}^{(i)}, \mathbf{z}^{(i)})$ is a multiplication-triple if and only if
    $(x^{(i)}, y^{(i)}, z^{(i)})$ is a multiplication-triple.  
    This further implies that $\mathsf{Z}(\cdot) = \mathsf{X}(\cdot) \cdot \mathsf{Y}(\cdot)$ holds if and only if all the
   $2d + 1$ input triples are multiplication-triples.
   \item[--] Adversary learns the triple $(\mathbf{x}^{(i)}, \mathbf{y}^{(i)}, \mathbf{z}^{(i)})$ if and only if it knows the input triple 
   $(x^{(i)}, y^{(i)}, z^{(i)})$. 
  \end{myitemize}
 The idea behind $\TripTrans$ is as follows: the polynomials $\mathsf{X}(\cdot)$ and $\mathsf{Y}(\cdot)$ are ``defined''
 by the first and second components of the {\it first} $d+1$ input triples.  
 Hence the first $d + 1$ points on these polynomials are already $t_s$-shared.
   The parties then compute $d$ ``new'' points on the 
  polynomials $\mathsf{X}(\cdot)$ and $\mathsf{Y}(\cdot)$ in a shared fashion. This step requires the parties to 
  perform only {\it local computations}. 
  This is because from the property of Lagrange's interpolation, computing any new point on $\mathsf{X}(\cdot)$
  and $\mathsf{Y}(\cdot)$ involves computing a {\it publicly-known linear} function (which we call {\it Lagrange's linear function})
  of ``old" points on these polynomials. Since the old points are $t_s$-shared, by applying corresponding Lagrange's functions, the parties can compute a $t_s$-sharing of
  the new points. Finally, the parties compute a $t_s$-sharing of the product of the $d$ new points using Beaver's technique, making use of
  the {\it remaining} $d$ input triples. The $\mathsf{Z}(\cdot)$ polynomial is then defined by the $d$ computed products
   and the third component of the first $d+1$ input triples.
  The protocol is formally presented in Fig \ref{fig:triptrans}.
 \begin{protocolsplitbox}{$\TripTrans(d, \{[x^{(i)}],[y^{(i)}],[z^{(i)}]\}_{i = 1, \ldots, 2d + 1})$}{Protocol for transforming a set of 
   $t_s$-shared triples into a set of correlated $t_s$-shared triples.}{fig:triptrans}
\begin{myitemize}
    \item[--] {\bf Defining $\mathsf{X}(\cdot)$ and $\mathsf{Y}(\cdot)$ Polynomials} --- The parties locally do the following:
     \begin{myitemize}
      \item[--]  For $i = 1, \ldots, d + 1$, set
        \begin{myitemize}
        \item[--] $[\mathbf{x}^{(i)}] = [x^{(i)}]$;
        \item[--] $[\mathbf{y}^{(i)}] = [y^{(i)}]$;
        \item[--] $[\mathbf{z}^{(i)}] = [z^{(i)}]$.    
        \end{myitemize}
      \item[--] Let $\mathsf{X}(\cdot)$ be the unique $d$-degree polynomial, 
      passing through the points $\{(\alpha_i,\mathbf{x}^{(i)})\}_{i = 1, \ldots, d + 1}$. And let
      $\mathsf{Y}(\cdot)$ be the unique $d$-degree polynomial, 
      passing through
      $\{(\alpha_i,\mathbf{y}^{(i)})\}_{i = 1, \ldots, d + 1}$.
          \begin{myitemize}
           \item[--] For $i = d + 2, \ldots, 2d + 1$, 
           locally compute $[\mathbf{x}^{(i)}] = [\mathsf{X}(\alpha_i)]$ from $\{ [\mathbf{x}^{(i)}]\}_{i = 1, \ldots, d + 1}$, by applying the corresponding
           Lagrange's linear function.
             \item[--] For $i = d + 2, \ldots, 2d + 1$, locally compute $[\mathbf{y}^{(i)}] = [\mathsf{Y}(\alpha_i)]$ from 
             $\{ [\mathbf{y}^{(i)}]\}_{i = 1, \ldots, d + 1}$, by applying the corresponding
           Lagrange's linear function.
          \end{myitemize}
     \end{myitemize}
    \item[--] {\bf Computing Points on the $\mathsf{Z}(\cdot)$ Polynomial} --- The parties do the following:
        \begin{myitemize}
          \item[--] For $i = d + 2, \ldots, 2d + 1$, participate in the instance 
          $\BatchBeaver(([\mathbf{x}^{(i)}],[\mathbf{y}^{(i)}]), ([x^{(i)}], [y^{(i)}], [z^{(i)}]))$ of $\BatchBeaver$.
          Let $[\mathbf{z}^{(i)}]$ be the output obtained from this instance. 
          \item[--] Output $\{[\mathbf{x}^{(i)}], [\mathbf{y}^{(i)}], [\mathbf{z}^{(i)}] \}_{i = 1, \ldots, 2d + 1}$. 
           \end{myitemize}
\end{myitemize}
\end{protocolsplitbox}

We next prove the properties of the protocol $\TripTrans$.
\begin{lemma}
\label{lemma:TripTrans}
Let $\{[x^{(i)}], [y^{(i)}], [z^{(i)}] \}_{i = 1, \ldots, 2d + 1}$ be a set of $t_s$-shared triples. Then protocol $\TripTrans$
 achieves the following properties in the presence of up to $t_s$ corruptions.
 \begin{myitemize}
   \item[--] If the network is {\it synchronous}, then 
   the parties output
    $t_s$-shared triples $\{[\mathbf{x}^{(i)}], [\mathbf{y}^{(i)}], [\mathbf{z}^{(i)}] \}_{i = 1, \ldots, 2d + 1}$, 
    within time $\Delta$.
    \item[--] If the network is asynchronous, then the parties eventually output
     $t_s$-shared triples $\{[\mathbf{x}^{(i)}], [\mathbf{y}^{(i)}], \allowbreak [\mathbf{z}^{(i)}] \}_{i = 1, \ldots, 2d + 1}$.
        \item[--] Irrespective of the network type,
         there exist $d$-degree polynomials $\mathsf{X}(\cdot), \mathsf{Y}(\cdot)$ and $2d$-degree polynomial $\mathsf{Z}(\cdot)$, such that
       $\mathsf{X}(\alpha_i) = \mathbf{x}^{(i)}$, $\mathsf{Y}(\alpha_i) = \mathbf{y}^{(i)}$ and $\mathsf{Z}(\alpha_i) = \mathbf{z}^{(i)}$ holds for 
	   $i = 1, \ldots, 2d + 1$. 
	\item[--] Irrespective of the network type,
	 $(\mathbf{x}^{(i)}, \mathbf{y}^{(i)}, \mathbf{z}^{(i)})$ is a multiplication-triple if and only if
	   $(x^{(i)}, y^{(i)}, \allowbreak z^{(i)})$ is a multiplication-triple.  
         \item[--] For $i = 1, \ldots, 2d + 1$, no additional information about $(\mathbf{x}^{(i)}, \mathbf{y}^{(i)}, \mathbf{z}^{(i)})$ is revealed to
        the adversary,  if the triple $(x^{(i)}, y^{(i)}, z^{(i)})$ is random from the point of view of the adversary.
        \item[--] The protocol incurs a communication of $\Order(dn^2 \log{|\F|})$ bits from the honest parties.
      \end{myitemize} 
\end{lemma}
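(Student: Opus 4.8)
The plan is to walk through the three stages of $\TripTrans$ (defining $\mathsf{X},\mathsf{Y}$; extending to new points; running the Beaver instances) and read off each of the six claims, relying on the linearity of $t_s$-sharing from Section~\ref{sec:Prelims} and on the properties of $\BatchBeaver$ established in Lemma~\ref{lemma:BatchBeaverProperties}. First, for termination and timing: the assignments $[\mathbf{x}^{(i)}]=[x^{(i)}]$, $[\mathbf{y}^{(i)}]=[y^{(i)}]$, $[\mathbf{z}^{(i)}]=[z^{(i)}]$ for $i=1,\ldots,d+1$ are purely local, as are the Lagrange linear combinations that produce $[\mathbf{x}^{(i)}]=[\mathsf{X}(\alpha_i)]$ and $[\mathbf{y}^{(i)}]=[\mathsf{Y}(\alpha_i)]$ for $i=d+2,\ldots,2d+1$; by linearity of $t_s$-sharing these yield genuine $t_s$-sharings of the intended points. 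The only communication occurs in the $d$ instances of $\BatchBeaver$, run in parallel, each of which outputs a $t_s$-sharing within time $\Delta$ in a synchronous network and eventually in an asynchronous network by Lemma~\ref{lemma:BatchBeaverProperties}; hence the whole protocol outputs $t_s$-shared triples within time $\Delta$ (respectively eventually), giving the first two bullets.

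For the polynomial structure, take $\mathsf{X}(\cdot)$ and $\mathsf{Y}(\cdot)$ to be the unique $d$-degree polynomials through $\{(\alpha_i,\mathbf{x}^{(i)})\}_{i=1}^{d+1}$ and $\{(\alpha_i,\mathbf{y}^{(i)})\}_{i=1}^{d+1}$ as in the protocol. Then $\mathbf{x}^{(i)}=\mathsf{X}(\alpha_i)$ and $\mathbf{y}^{(i)}=\mathsf{Y}(\alpha_i)$ hold for every $i=1,\ldots,2d+1$ (for $i\le d+1$ by definition, and for $i>d+1$ because the Lagrange linear function applied to the shares of $\mathbf{x}^{(1)},\ldots,\mathbf{x}^{(d+1)}$ computes shares of exactly $\mathsf{X}(\alpha_i)$, and similarly for $\mathsf{Y}$). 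Let $\mathsf{Z}(\cdot)$ be the unique polynomial of degree at most $2d$ through $\{(\alpha_i,\mathbf{z}^{(i)})\}_{i=1}^{2d+1}$; then $\mathbf{z}^{(i)}=\mathsf{Z}(\alpha_i)$ for all $i$ by definition, which is the third bullet.

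For the multiplication-triple equivalence: when $i=1,\ldots,d+1$ we have $(\mathbf{x}^{(i)},\mathbf{y}^{(i)},\mathbf{z}^{(i)})=(x^{(i)},y^{(i)},z^{(i)})$, so the claim is immediate; when $i=d+2,\ldots,2d+1$, the instance $\BatchBeaver(([\mathbf{x}^{(i)}],[\mathbf{y}^{(i)}]),([x^{(i)}],[y^{(i)}],[z^{(i)}]))$ outputs a sharing of $\mathbf{z}^{(i)}$ for which, by Lemma~\ref{lemma:BatchBeaverProperties}, $\mathbf{z}^{(i)}=\mathbf{x}^{(i)}\cdot\mathbf{y}^{(i)}$ holds if and only if the mask triple $(x^{(i)},y^{(i)},z^{(i)})$ is a multiplication-triple. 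This gives the fourth bullet; as a further consequence, if all $2d+1$ input triples are multiplication-triples then $\mathsf{Z}$ and $\mathsf{X}\cdot\mathsf{Y}$ (both of degree at most $2d$) agree at the $2d+1$ points $\alpha_1,\ldots,\alpha_{2d+1}$, so $\mathsf{Z}(\cdot)=\mathsf{X}(\cdot)\cdot\mathsf{Y}(\cdot)$, and conversely.

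Finally, privacy follows from the privacy clause of $\BatchBeaver$: the only data the adversary sees beyond local computation are the masked openings in the Beaver instances for $i=d+2,\ldots,2d+1$, and since the corresponding mask triples $(x^{(i)},y^{(i)},z^{(i)})$ are, by hypothesis, random for the adversary, these openings are independent of $\mathbf{x}^{(i)},\mathbf{y}^{(i)}$ and reveal nothing about $(\mathbf{x}^{(i)},\mathbf{y}^{(i)},\mathbf{z}^{(i)})$; the indices $i\le d+1$ involve no communication at all. The communication cost is $d$ invocations of $\BatchBeaver$, each $\Order(n^2\log|\F|)$ bits by Lemma~\ref{lemma:BatchBeaverProperties}, for a total of $\Order(dn^2\log|\F|)$ bits. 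The only point requiring care -- and the closest thing to an obstacle -- is checking that the Lagrange linear combinations genuinely produce $t_s$-sharings of $\mathsf{X}(\alpha_i),\mathsf{Y}(\alpha_i)$ and that the mask triples used at the new indices are independent of the triples defining $\mathsf{X},\mathsf{Y}$ (so Beaver's masking really hides the new points); both are immediate from linearity of $t_s$-sharing and the fact that distinct input triples are used for the two roles, so the proof is essentially bookkeeping on top of Lemma~\ref{lemma:BatchBeaverProperties}.
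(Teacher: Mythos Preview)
Your proposal is correct and follows essentially the same approach as the paper's proof: both split the analysis into the local Lagrange-interpolation steps and the $d$ parallel $\BatchBeaver$ instances, then read off timing, the polynomial structure, the multiplication-triple equivalence (case-splitting on $i\le d+1$ versus $i>d+1$), privacy, and communication directly from Lemma~\ref{lemma:BatchBeaverProperties}. If anything, your write-up is slightly more explicit about linearity of $t_s$-sharing and the disjointness of the triples used for interpolation versus masking.
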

\begin{proof}
Consider an adversary who controls up to $t_s$ parties. In the protocol, irrespective of the network type, the parties {\it locally} compute the
$t_s$-sharings $\{[\mathbf{x}^{(i)}], [\mathbf{y}^{(i)}]\}_{i = 1, \ldots, 2d + 1}$ and $t_s$-sharings $\{[\mathbf{z}^{(i)}]\}_{i = 1, \ldots, d + 1}$. 
 If the network is {\it synchronous}, then from the properties
  of $\BatchBeaver$ in the {\it synchronous} network,
  it follows that after time $\Delta$,
  all honest parties will have their respective output in all the $d$ instances of $\BatchBeaver$.
  Hence after time $\Delta$, the parties have 
  $t_s$-sharings $\{[\mathbf{z}^{(i)}]\}_{i = d + 2, \ldots, 2d + 1}$.
  On the other hand, if the network is {\it asynchronous}, then from the
   properties of $\BatchBeaver$ in the {\it asynchronous} network,
   all honest parties eventually compute their output in 
    all the $d$ instances of  $\BatchBeaver$. Hence the 
  parties eventually compute the $t_s$-sharings $\{[\mathbf{z}^{(i)}]\}_{i = d + 2, \ldots, 2d + 1}$ and hence eventually 
  compute  their output in the protocol.
 
 Next consider an {\it arbitrary}
  $i \in \{1, \ldots, d + 1 \}$. Since $(\mathbf{x}^{(i)}, \mathbf{y}^{(i)}, \mathbf{z}^{(i)}) = (x^{(i)}, y^{(i)}, z^{(i)})$, it follows that
 $(\mathbf{x}^{(i)}, \mathbf{y}^{(i)}, \mathbf{z}^{(i)})$ will be a multiplication-triple if and only if $(x^{(i)}, y^{(i)}, z^{(i)})$ is a multiplication-triple.
 Now consider an arbitrary $i \in \{d + 2, \ldots, 2d + 1 \}$. Since $[\mathbf{z}^{(i)}]$ is the output of the instance
 $\BatchBeaver(([\mathbf{x}^{(i)}],[\mathbf{y}^{(i)}]), ([x^{(i)}], [y^{(i)}],   [z^{(i)}]))$,
  it follows from the properties of $\BatchBeaver$ that 
  $\mathbf{z}^{(i)} = \mathbf{x}^{(i)} \cdot \mathbf{y}^{(i)}$ holds, if and only if $(x^{(i)}, y^{(i)}, z^{(i)})$ is a multiplication-triple.
  
  From the protocol steps, it is easy to see that the polynomials $\mathsf{X}(\cdot)$ and $\mathsf{Y}(\cdot)$ defined in the protocols are
  $d$-degree polynomials, as they are defined through $d + 1$ distinct points 
  $\{(\alpha_i, \mathbf{x}^{(i)} ) \}_{i = 1, \ldots, d + 1}$ and $\{(\alpha_i, \mathbf{y}^{(i)} ) \}_{i = 1, \ldots, d + 1}$ respectively.
  On the other hand, $\mathsf{Z}(\cdot)$ is a $2d$-degree polynomial, as it is defined through the 
  $2d + 1$ distinct points $\{(\alpha_i, \mathbf{z}^{(i)} ) \}_{i = 1, \ldots, 2d + 1}$.
  
  For any $i \in \{1, \ldots, d + 1 \}$, if $(x^{(i)}, y^{(i)}, z^{(i)})$ is random from the point of view of the adversary, then
        $(\mathbf{x}^{(i)}, \mathbf{y}^{(i)}, \mathbf{z}^{(i)})$ is also random from the point of view of the adversary, since
        $(\mathbf{x}^{(i)}, \mathbf{y}^{(i)}, \mathbf{z}^{(i)}) = (x^{(i)}, y^{(i)}, z^{(i)})$. 
        On the other hand for any $i \in \{d + 2, \ldots, 2d + 1 \}$, if $(x^{(i)}, y^{(i)}, z^{(i)})$ is random from the point of view of the adversary,
        then from the properties
         of $\BatchBeaver$, it follows that no additional information is learnt about 
        $(\mathbf{x}^{(i)}, \mathbf{y}^{(i)}, \mathbf{z}^{(i)})$.
        
 The communication complexity follows from the fact that there are $d$ instances of $\BatchBeaver$ invoked in the protocol.
\end{proof}
 \subsection{best-of-both-worlds Triple-Sharing Protocol}
We next present a triple-sharing protocol $\TripSh$, which
  allows a dealer $\D$ to {\it verifiably} $t_s$-share $L$ multiplication-triples.
  If $\D$ is {\it honest},  then the triples remain random from the point of view of the adversary
  and all honest parties output the shares of $\D$'s multiplication-triples.
    On the other hand, if $\D$ is {\it corrupt}, then 
   the protocol {\it need not} produce any output, even in a {\it synchronous} network, as a corrupt $\D$ may not invoke the protocol at the first
    place and the parties will not be aware of the network type.
  However, the ``verifiability" of $\TripSh$ guarantees that if the honest parties compute
   any output corresponding to a {\it corrupt} $\D$, then $\D$ has indeed $t_s$-shared multiplication-triples.

   For simplicity, we present the protocol assuming $\D$ has a {\it single} multiplication-triple to share and the 
   protocol can be easily generalized for any
   $L > 1$.
    The idea behind the protocol is as follows: $\D$ picks a random multiplication-triple and $t_s$-shares it by invoking an instance of $\HSh$.
   To prove that it has indeed shared a multiplication-triple, $\D$ actually $t_s$-shares $2t_s + 1$ random multiplication-triples.
   The parties then run an instance of $\TripTrans$ and ``transform" these shared triples into ``co-related" shared triples, constituting
   distinct points on the triplet of polynomials $(\mathsf{X}(\cdot), \mathsf{Y}(\cdot), \mathsf{Z}(\cdot))$, 
   which are guaranteed to exist during $\TripTrans$.
   Then, to check if all the triples shared by $\D$ are multiplication-triples, it is sufficient to verify if 
   $\mathsf{Z}(\cdot) = \mathsf{X}(\cdot) \cdot \mathsf{Y}(\cdot)$ holds. To verify the latter, we incorporate a mechanism which 
   enables the parties to {\it publicly} learn if 
   $\mathsf{Z}(\alpha_j) = \mathsf{X}(\alpha_j) \cdot \mathsf{Y}(\alpha_j)$ holds under $P_j$'s ``supervision"
   in such a way that if $P_j$ is {\it honest}, then the supervised verification of the triplet $(\mathsf{X}(\alpha_j), \mathsf{Y}(\alpha_j), \mathsf{Z}(\alpha_j))$
   is ``successful" if and only if
    the triplet is a multiplication-triple. Moreover, the privacy of the triplet will be maintained during the supervised verification for an {\it honest}
   $\D$ {\it and} $P_j$. The goal is to check whether there are at least $2t_s + 1$ successful supervised-verifications, performed under the supervision of 
   {\it honest} supervisors $P_j$, which will then confirm that indeed  $\mathsf{Z}(\cdot) = \mathsf{X}(\cdot) \cdot \mathsf{Y}(\cdot)$ holds. This is because
   $\mathsf{Z}(\cdot)$ is a $2t_s$-degree polynomial. Upon confirming that $\mathsf{Z}(\cdot) = \mathsf{X}(\cdot) \cdot \mathsf{Y}(\cdot)$ holds,
       the parties compute a ``new" point on the polynomials (in a shared fashion), which is taken as
   the output triple shared {\it on behalf of} $\D$. We stress that the output triple is well defined and will be ``known" to $\D$, as it is {\it deterministically} determined from
   the triples shared by $\D$. If $\D$ is {\it honest}, then the privacy of the output triple is guaranteed from the fact that during the supervised verification, an adversary may learn at most
   $t_s$ distinct points on the polynomials $\mathsf{X}(\cdot), \mathsf{Y}(\cdot)$ and $\mathsf{Z}(\cdot)$, corresponding to the {\it corrupt} supervisors.
   
   The supervised verification of the (shared) points on the polynomials is performed as follows: the parties invoke an instance of 
   $\ACS$, where the input for each party
   is a triplet of random $t_s$-degree polynomials, whose constant terms constitute a random multiplication-triple, called {\it verification-triple}.
    The instance of $\ACS$ is invoked
   in parallel with $\D$'s invocation of $\HSh$. Through the instance of $\ACS$, the parties agree upon a set $\WCORE$ of at least $n - t_s$ supervisors, whose shared verification-triples
   are used to verify the points on the polynomials $\mathsf{X}(\cdot), \mathsf{Y}(\cdot)$ and $\mathsf{Z}(\cdot)$. Namely, if 
   $P_j \in \WCORE$ has shared the verification-triple $(u^{(j)}, v^{(j)}, w^{(j)})$, then in the supervised verification under $P_j$, parties {\it publicly} reconstruct and check
   if $\mathsf{Z}(\alpha_j) - \mathsf{X}(\alpha_j) \cdot \mathsf{Y}(\alpha_j) = 0$ holds.
    For this, the parties recompute $\mathsf{X}(\alpha_j) \cdot \mathsf{Y}(\alpha_j)$ in a shared fashion using Beaver's method, by deploying the
    shared verification-triple $(u^{(j)}, v^{(j)}, w^{(j)})$. If $\mathsf{Z}(\alpha_j) - \mathsf{X}(\alpha_j) \cdot \mathsf{Y}(\alpha_j)$ {\it does not} turn out to be $0$
    (implying that either $\D$ is {\it corrupt} or $P_j$'s verification-triple is {\it not} a multiplication-triple), then the parties
    {\it publicly} reconstruct and check if $(\mathsf{X}(\alpha_j), \mathsf{Y}(\alpha_j), \mathsf{Z}(\alpha_j))$ is a multiplication-triple
    and discard $\D$ if the triple {\it does not} turn out to be a multiplication-triple.

    An {\it honest} $\D$ will {\it never} be discarded. Moreover, in a {\it synchronous} network, {\it all honest} parties $P_j$ are guaranteed to be present in
    $\WCORE$ (follows from the {\it $t_s$-correctness} of $\ACS$ in the {\it synchronous} network) 
    and hence, there will be at least $n - t_s > 2t_s + 1$ {\it honest} supervisors in $\WCORE$. On the other hand, even in an {\it asynchronous}
    network, there will be at least $n - t_s - t_a > 2t_s$ {\it honest} supervisors  in $\WCORE$. Hence if a {\it corrupt}
    $\D$ is {\it not} discarded, then it is guaranteed that $\D$ has shared multiplication-triples.       
\begin{protocolsplitbox}{$\TripSh$}{A protocol for verifiably sharing a single multiplication triple.}{fig:tripsh}
\begin{myitemize}
    \item[--] {\bf Phase I --- Sharing Triples and Verification-Triples}:
	    \begin{myitemize}
	        \item[--] $\D$ selects $2t_s + 1$ random multiplication-triples $\{(x^{(j)}, y^{(j)}, z^{(j)})\}_{j = 1, \ldots, 2t_s + 1}$. It then selects
	        random $t_s$-degree polynomials $\{f_{x^{(j)}}(\cdot), f_{y^{(j)}}(\cdot), f_{z^{(j)}}(\cdot)\}_{j = 1, \ldots, 2t_s + 1}$, such that
	        $f_{x^{(j)}}(0) = x^{(j)}$, $f_{y^{(j)}}(0) = y^{(j)}$ and $f_{z^{(j)}}(0) = z^{(j)}$.
	        $\D$ then invokes an instance of $\HSh$ with input $\{f_{x^{(j)}}(\cdot), f_{y^{(j)}}(\cdot), f_{z^{(j)}}(\cdot)\}_{j = 1, \ldots, 2t_s + 1}$
	        and the parties in $\Partyset$ participate in this instance.
	        \item[--] In parallel, each party $P_i \in \Partyset$ randomly selects a 
	        {\it verification multiplication-triple} $(u^{(i)}, v^{(i)}, \allowbreak w^{(i)})$ and random $t_s$-degree polynomials
              $f_{u^{(i)}}(\cdot), f_{v^{(i)}}(\cdot)$ and $f_{w^{(i)}}(\cdot)$ where $f_{u^{(i)}}(0) = u^{(i)}$, 
                $f_{v^{(i)}}(0) = v^{(i)}$ and $f_{w^{(i)}}(0) = w^{(i)}$. With these polynomials as inputs, $P_i$ participates
                 in an instance of $\ACS$ and {\color{red} waits for time $\TimeACS$}.
              Let $\WCORE$ be the set of parties, computed as the output during the instance of $\ACS$,
                where $|\WCORE| \geq n - t_s$.
	    \end{myitemize}
    \item[--] {\bf Phase II --- Transforming $\D$'s Triples}:
	    \begin{myitemize}
	    \item[--] Upon computing an output in the instance of $\HSh$ invoked by $\D$,
	     the parties participate in an instance $\TripTrans(t_s, \{[x^{(j)}], [y^{(j)}], [z^{(j)}]\}_{j = 1, \ldots, 2t_s + 1})$ of $\TripTrans$.
	    \item[--] Let $\{[\mathbf{x}^{(j)}], [\mathbf{y}^{(j)}], [\mathbf{z}^{(j)}]\}_{j = 1, \ldots, 2t_s + 1})$ be the set of $t_s$-shared triples
	     computed during
	    $\TripTrans$. And let $\mathsf{X}(\cdot)$ and $\mathsf{Y}(\cdot)$ be the 
	    $t_s$-degree polynomials and $\mathsf{Z}(\cdot)$ be the $2t_s$-degree polynomial, which are
	    guaranteed to exist during the instance of $\TripTrans$, such that $\mathsf{X}(\alpha_j) = \mathbf{x}^{(j)}$, 	    
	    $\mathsf{Y}(\alpha_j) = \mathbf{y}^{(j)}$ and $\mathsf{Z}(\alpha_j) = \mathbf{z}^{(j)}$, for $j = 1, \ldots, 2t_s + 1$.
	    \item[--] For $j = 2t_s + 2, \ldots, n$, the parties do the following.
	      \begin{myitemize}
	       \item[--] Locally compute 
	      $[\mathbf{x}^{(j)}] = [\mathsf{X}(\alpha_j)]$ from $\{[\mathbf{x}^{(j)}] \}_{j = 1, \ldots, t_s + 1}$,
	      by using appropriate Lagrange's linear functions.
	      \item[--] Locally compute $[\mathbf{y}^{(j)}] = [\mathsf{Y}(\alpha_j)]$ from $\{[\mathbf{y}^{(j)}] \}_{j = 1, \ldots, t_s + 1}$,
	      by using appropriate Lagrange's linear functions.	      
	      \item[--] Locally compute 
	   $[\mathbf{z}^{(j)}] = [\mathsf{Z}(\alpha_j)]$ from $\{[\mathbf{z}^{(j)}] \}_{j = 1, \ldots, 2t_s + 1}$ respectively, 
	   by using appropriate Lagrange's linear functions.
	    \end{myitemize}
	\end{myitemize}    
     \item[--] {\bf Phase III --- Verifying Transformed Triples}: The parties do the following.
             \begin{myitemize}
              \item[--] {\bf Phase III(a) --- Recomputing the Products}:
                \begin{myitemize}
	               \item[--] Corresponding to each $P_j \in \WCORE$, participate in
	              the instance $\BatchBeaver(([\mathbf{x}^{(j)}], [\mathbf{y}^{(j)}]), 
	                ([u^{(j)}],  \allowbreak [v^{(j)}],  [w^{(j)}]))$ of $\BatchBeaver$ to compute $[\z^{(j)}]$.
	        \end{myitemize}      
              \item[--] {\bf Phase III(b) --- Computing and Publicly Reconstructing the Differences}:
                  \begin{myitemize}
	              \item[--] Corresponding to every $P_j \in \WCORE$, the parties locally compute 
	              $[\gamma^{(j)}] = [\mathbf{z}^{(j)}] - [\z^{(j)}]$.
               		\item[--] Corresponding to every $P_j \in \WCORE$, the parties publicly reconstruct 
		$\gamma^{(j)}$, by exchanging their respective shares of
		$\gamma^{(j)}$, followed by applying the $\OEC(t_s, t_s, \Partyset)$ procedure on the received shares.		
        		       \item[--] Corresponding to $P_j \in \WCORE$, party $P_i \in \Partyset$ upon reconstructing $\gamma^{(j)}$,
                       sets a Boolean variable $\flag^{(j)}_i$ to $0$ if $\gamma^{(j)} = 0$, else it sets $\flag^{(j)}_i$ to $1$.      
	             \end{myitemize}
               \item[--] {\bf Phase III(c) --- Checking the Suspected Triples}: Each $P_i \in \Partyset$ does the following.
        		    \begin{myitemize}
		             \item[--] For every $P_j \in \WCORE$ such that $\flag^{(j)}_i = 1$, send the shares
		              corresponding to $[\mathbf{x}^{(j)}], [\mathbf{y}^{(j)}]$
		             and $[\mathbf{z}^{(j)}]$ to every party.		             
		             \item[--] For every $P_j \in \WCORE$ such that $\flag^{(j)}_i = 1$, apply the
		              $\OEC(t_s, t_s, \Partyset)$ procedure on the received shares corresponding to
		             $[\mathbf{x}^{(j)}], [\mathbf{y}^{(j)}]$
		             and $[\mathbf{z}^{(j)}]$, to reconstruct the triple $(\mathbf{x}^{(j)}, \mathbf{y}^{(j)}, \mathbf{z}^{(j)})$.
		             \item[--] For every $P_j \in \WCORE$ such that $\flag^{(j)}_i = 1$,
		             reset $\flag^{(j)}_i$ to $0$ if  $(\mathbf{x}^{(j)}, \mathbf{y}^{(j)}, \mathbf{z}^{(j)})$ 
		             is a multiplication-triple.
		            \item[--] If $\flag^{(j)}_i = 0$, corresponding to every $P_j \in \WCORE$, then set $\flag_i = 0$, else set $\flag_i = 1$. 
	            \end{myitemize}
            \end{myitemize}
      \item[--] {\bf Output Computation}: Each party $P_i \in \Partyset$ does the following.
             \begin{myitemize}      
                \item[--] If $\flag_i = 0$ then output shares corresponding to $t_s$-shared triple
                $([a], [b], [c])$ {\it on behalf of} $\D$, where
                $a = \mathsf{X}(\beta)$, $b = \mathsf{Y}(\beta)$ and $c = \mathsf{Z}(\beta)$
                and where 
                $[a]$, $[b]$ and $[c]$ are locally computed from $\{ [\mathbf{x}^{(j)}] \}_{j = 1, \ldots, t_s + 1}$, 
                $\{ [\mathbf{y}^{(j)}] \}_{j = 1, \ldots, t_s + 1}$ and $\{ [\mathbf{z}^{(j)}] \}_{j = 1, \ldots, 2t_s + 1}$ respectively by
                 using appropriate Lagrange's linear functions.
                Here $\beta$ is a non-zero element from $\F$, distinct from $\alpha_1, \ldots, \alpha_{2t_s + 1}$.
              \item[--] If $\flag_i = 1$ then output default-shares (namely all shares being $0$) corresponding to $t_s$-shared triple
              $([0], [0], [0])$ {\it on behalf of} $\D$.
              \end{myitemize}
\end{myitemize}
\end{protocolsplitbox}

We next prove the properties of the protocol of $\TripSh$.
\begin{lemma}
\label{lemma:TripSh}
Protocol $\TripSh$ achieves the following properties.
  \begin{myitemize}
    \item[--] If $\D$ is {\it honest}, then the following hold:
        \begin{myitemize}
         \item[--] {\it $t_s$-Correctness}:  If the network is synchronous, then after
          time $\TimeTripSh = \TimeACS + 4\Delta$, the honest parties output a 
          $t_s$-shared multiplication-triple on the behalf of
           $\D$.
          \item[--]  {\it $t_a$-Correctness}: If the network is asynchronous, then almost-surely,
           the (honest) parties eventually output a $t_s$-shared multiplication-triple on the behalf of
           $\D$.
   \item[--] {\it $t_s$-Privacy}: Irrespective of the network type, the view of the adversary remains independent of the output
    multiplication-triple, shared on the behalf of $\D$.
        \end{myitemize}
     \item[--] If $\D$ is {\it corrupt}, then either no honest party computes any output or depending upon the network type, the following hold
            \begin{myitemize}
	            \item[--] {\it $t_s$-Strong Commitment}:
	            If the network is synchronous, then the (honest) parties eventually output a $t_s$-shared multiplication-triple on behalf of $\D$. 
	             Moreover, if some honest party computes its output shares at time $T$, 
                     then by time $T + 2\Delta$, all honest parties will compute their respective output shares.
                     \item[--] {\it $t_a$-Strong Commitment}: The (honest) parties eventually output a $t_s$-shared multiplication-triple on
                      the behalf of $\D$. 			   
	    \end{myitemize}
       \item[--] The protocol incurs a communication of $\Order(n^6 \log{|\F|})$ bits from the honest parties
        and invokes $\Order(n^2)$ instances of $\HBA$.
  \end{myitemize}
\end{lemma}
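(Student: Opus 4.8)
The plan is to chain the guarantees of the sub‑protocols that the statement lets us assume — $\HSh$ (Theorem~\ref{thm:VSS}), $\ACS$ (Lemma~\ref{lemma:ACSProperties}), $\TripTrans$ (Lemma~\ref{lemma:TripTrans}), $\BatchBeaver$ (Lemma~\ref{lemma:BatchBeaverProperties}) and $\OEC$ — and to organize everything around the polynomials $\mathsf{X}(\cdot),\mathsf{Y}(\cdot)$ of degree $t_s$ and $\mathsf{Z}(\cdot)$ of degree $2t_s$ that $\TripTrans$ is guaranteed to define, and around the flags $\flag_i^{(j)}$ and $\flag_i$. Throughout I use the resilience bound $3t_s+t_a<n$, which gives both $n-t_s>2t_s$ and $n-t_s-t_a>2t_s$, i.e.\ at least $2t_s+1$ honest supervisors in $\WCORE$ in either network.

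\textbf{Honest $\D$: correctness and privacy.} First I would note that for an honest $\D$ all $2t_s+1$ triples shared through $\HSh$ are multiplication‑triples, so by Lemma~\ref{lemma:TripTrans} every transformed triple $(\mathbf{x}^{(j)},\mathbf{y}^{(j)},\mathbf{z}^{(j)})$ is a multiplication‑triple and hence $\mathsf{Z}(\cdot)=\mathsf{X}(\cdot)\cdot\mathsf{Y}(\cdot)$ (two polynomials of degree $\le 2t_s$ agreeing on $\ge 2t_s+1$ points). For every honest $P_j\in\WCORE$ the verification‑triple is a multiplication‑triple, so the Phase~III(a) instance of $\BatchBeaver$ returns $\z^{(j)}=\mathbf{x}^{(j)}\cdot\mathbf{y}^{(j)}=\mathbf{z}^{(j)}$ and thus $\gamma^{(j)}=0$; for each of the at most $t_s$ corrupt $P_j\in\WCORE$ with a bad verification‑triple, the Phase~III(c) public reconstruction reveals $(\mathbf{x}^{(j)},\mathbf{y}^{(j)},\mathbf{z}^{(j)})$, which is a multiplication‑triple, so every honest party resets $\flag_i^{(j)}$ to $0$. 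Hence every honest $P_i$ has $\flag_i=0$ and outputs the $t_s$‑sharing of $(a,b,c)=(\mathsf{X}(\beta),\mathsf{Y}(\beta),\mathsf{Z}(\beta))$, a multiplication‑triple. The time bound is obtained by adding $\TimeACS$ (the parallel $\HSh$/$\ACS$ step, using $\TimeHSh<\TimeACS$) and the four deterministic $\Delta$‑rounds contributed by $\TripTrans$, the Phase~III(a) $\BatchBeaver$ instances, the reconstruction of the $\gamma^{(j)}$'s, and Phase~III(c); the asynchronous claim follows from the asynchronous (almost‑sure) guarantees of the same primitives. For privacy, I would argue via a standard simulation: the adversary's view is generated from the corrupt parties' inputs together with at most $t_s$ points on each of $\mathsf{X},\mathsf{Y},\mathsf{Z}$ (those indexed by corrupt supervisors in $\WCORE$, the only ones ever publicly reconstructed), which is all that leaks by $t_s$‑privacy of $\HSh$ and $\ACS$ and the masking property of $\BatchBeaver$; since $\mathsf{X},\mathsf{Y}$ are uniformly random degree‑$t_s$ polynomials (their first $t_s+1$ evaluation points are honest $\D$'s fresh random choices), $t_s$ points leave $(\mathsf{X}(\beta),\mathsf{Y}(\beta))$ uniform and $\mathsf{Z}(\beta)=\mathsf{X}(\beta)\mathsf{Y}(\beta)$ determined, so the output triple is independent of the view.

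\textbf{Corrupt $\D$: strong commitment.} Suppose some honest party produces an output. Then the $\HSh$ instance produced output, so by its strong commitment $\D$ is committed to $2t_s+1$ triples, every honest party eventually obtains its shares, and all later steps ($\TripTrans$, the $\BatchBeaver$ instances, the $\OEC$ reconstructions, Phase~III(c)) are public and deterministic, so all honest parties compute identical $\gamma^{(j)}$'s, identical reconstructed triples, identical flags, and thus the same output — either $(\mathsf{X}(\beta),\mathsf{Y}(\beta),\mathsf{Z}(\beta))$ or the default $([0],[0],[0])$. To see the output is always a valid $t_s$‑shared multiplication‑triple: there are $\ge 2t_s+1$ honest supervisors in $\WCORE$, and for each such $P_j$ the verification‑triple is genuine, so $\gamma^{(j)}=\mathbf{z}^{(j)}-\mathbf{x}^{(j)}\cdot\mathbf{y}^{(j)}$; hence either $\gamma^{(j)}=0$, giving $\mathsf{Z}(\alpha_j)=\mathsf{X}(\alpha_j)\cdot\mathsf{Y}(\alpha_j)$, or $\gamma^{(j)}\ne 0$ and Phase~III(c) exposes a non‑multiplicative triple and every honest party keeps $\flag_i=1$. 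Thus if $\D$ is not discarded, $\mathsf{Z}$ and $\mathsf{X}\cdot\mathsf{Y}$ (both of degree $\le 2t_s$) agree on $\ge 2t_s+1$ points, so $\mathsf{Z}=\mathsf{X}\cdot\mathsf{Y}$ and $(\mathsf{X}(\beta),\mathsf{Y}(\beta),\mathsf{Z}(\beta))$ is a multiplication‑triple; otherwise the default triple is trivially one. For the $2\Delta$‑spread in the synchronous case, the only non‑lockstep step is the $\HSh$ sub‑call, whose synchronous strong commitment already carries the $2\Delta$ spread, and the remaining phases advance in fixed $\Delta$ increments from there; the asynchronous case is identical with ``eventually''/``almost‑surely'' replacing the bounds.

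\textbf{Communication and main obstacle.} The communication count is bookkeeping: the dominant cost is the single $\ACS$ instance on three polynomials per party, $\Order(n^6\log|\F|)$ bits and $\Order(n^2)$ instances of $\HBA$; the $\HSh$ instance on $\Order(n)$ polynomials adds $\Order(n^5\log|\F|)$ bits and $\Order(n)$ instances of $\HBA$, and $\TripTrans$, the $\Order(n)$ $\BatchBeaver$ instances and the $\Order(n)$ public reconstructions each add $\Order(n^3\log|\F|)$ bits, giving the claimed totals. I expect the main obstacle to be the strong‑commitment argument for a corrupt $\D$: one must simultaneously (i) extract $\ge 2t_s+1$ honest supervisors in $\WCORE$ from the resilience bound (and, in the synchronous case, from the fact that $\ACS$ places all honest parties in $\WCORE$), (ii) argue that an honest supervisor's flag is cleared only when the supervised point genuinely lies on $\mathsf{X}\cdot\mathsf{Y}$, (iii) upgrade ``agreement on $2t_s+1$ points'' to the identity $\mathsf{Z}=\mathsf{X}\cdot\mathsf{Y}$ using $\deg(\mathsf{Z})=2t_s$, and (iv) carry the $2\Delta$ timing spread through the $\HSh$ call; steps (i)--(iii) are where the correctness of the whole triple‑sharing idea resides, while (iv) is routine but must be stated carefully.
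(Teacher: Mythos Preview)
Your proposal is correct and follows essentially the same approach as the paper: chain the guarantees of $\HSh$, $\ACS$, $\TripTrans$, $\BatchBeaver$ and $\OEC$, organize the argument around the polynomials $\mathsf{X},\mathsf{Y},\mathsf{Z}$, and use the $\ge 2t_s+1$ honest supervisors in $\WCORE$ to force $\mathsf{Z}=\mathsf{X}\cdot\mathsf{Y}$ whenever $\D$ is not discarded. The only cosmetic differences are that the paper phrases the corrupt-$\D$ synchronous case as a contrapositive (pick an honest $P_j$ where the relation fails and note $P_j\in\WCORE$ because $\ACS$ puts all honest parties there) and gives a more explicit two-case timing analysis (splitting on whether $\HSh$ finishes before or after $\TimeACS-2\Delta$), but the underlying ideas are identical.
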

\begin{proof}
We first consider an {\it honest} $\D$ and prove the corresponding properties. We first consider
  a {\it synchronous} network with up to $t_s$ corruptions. At time $\TimeHSh$, the multiplication-triples $\{(x^{(j)}, \allowbreak y^{(j)}, 
  z^{(j)})\}_{j = 1, \ldots, 2t_s + 1}$
 will be $t_s$-shared. This follows from the {\it $t_s$-correctness} property of $\HSh$ in 
  the {\it synchronous} network.
   Moreover, these triples will be random from the
 point of view of the adversary, which follows from the {\it $t_s$-privacy} property of $\HSh$.
 Since the instance of $\ACS$ is invoked in parallel with the instance of $\HSh$ invoked by $\D$, 
  at time $\TimeACS$, all honest parties will have a common subset $\WCORE$ from the instance of $\ACS$, 
  with {\it every honest} $P_j$ being present in the $\WCORE$. This follows from the 
   properties of $\ACS$ in the {\it synchronous} network. 
  At time $\TimeACS + \Delta$, the multiplication-triples shared by $\D$ will be transformed and parties will have
  $t_s$-shared multiplication-triples $\{(\mathbf{x}^{(j)}, \mathbf{y}^{(j)},  \mathbf{z}^{(j)})\}_{j = 1, \ldots, 2t_s + 1}$
   and there will exist $t_s$-degree polynomials $\mathsf{X}(\cdot), \mathsf{Y}(\cdot)$ and $2t_s$-degree polynomial $\mathsf{Z}(\cdot)$
    where $\mathsf{Z}(\cdot) = \mathsf{X}(\cdot) \cdot \mathsf{Y}(\cdot)$ holds.
     This follows from the  properties of $\TripTrans$ in the {\it synchronous} network. 
 
 Next, corresponding to {\it every honest} $P_j \in \WCORE$, the value $\z^{(j)}$ will be the same 
  as $\mathbf{x}^{(j)} \cdot \mathbf{y}^{(j)}$, which follows from the
  properties of $\BatchBeaver$ and the fact that the corresponding verification-triple
 $(u^{(j)}, v^{(j)}, w^{(j)})$ will be a multiplication-triple.
   Hence, $\gamma^{(j)} = \mathbf{z}^{(j)} - \z^{(j)}$ will be $0$ and so each {\it honest} $P_i$ will set $\flag^{(j)}_i$ to $0$, without
  suspecting and reconstructing the triple $(\mathbf{x}^{(j)}, \mathbf{y}^{(j)}, \mathbf{z}^{(j)})$. Moreover, 
  in this case, no additional information about $(\mathbf{x}^{(j)}, \mathbf{y}^{(j)}, \mathbf{z}^{(j)})$ is revealed, which follows from the
   properties of $\BatchBeaver$ and the fact that the verification-triple $(u^{(j)}, v^{(j)}, w^{(j)})$ remains random 
    from the point of view of the adversary. 
   On the other hand, if $P_j \in \WCORE$ is {\it corrupt}, then
  $\gamma^{(j)}$ may not be $0$. However, in this case each {\it honest} $P_i$ will reset $\flag^{(j)}_i$ to $0$ after reconstructing
  the corresponding suspected-triple $(\mathbf{x}^{(j)}, \mathbf{y}^{(j)}, \mathbf{z}^{(j)})$, since it will be a multiplication-triple. 
  The process of computing $\z^{(j)}$ and the difference $\gamma^{(j)}$ will take $2\Delta$ time and additionally $\Delta$ time might be required
  to publicly reconstruct suspected-triples corresponding to {\it corrupt} $P_j \in \WCORE$.
  Hence, at time $\TimeACS + 4\Delta$, each {\it honest} $P_i$ sets $\flag_i = 1$ and hence, the honest parties output
  $t_s$-shared triple $(a, b, c)$. Moreover, the triple will be a multiplication-triple, since $(a, b, c)$ is the same as
  $(\mathsf{X}(\beta), \mathsf{Y}(\beta), \mathsf{Z}(\beta))$.
  Since at most $t_s$ triples $(\mathbf{x}^{(j)}, \mathbf{y}^{(j)}, \mathbf{z}^{(j)})$
   may be publicly reconstructed corresponding to the {\it corrupt} parties $P_j \in \WCORE$, it follows that
   adversary will learn at most 
  $t_s$ distinct points on the $\mathsf{X}(\cdot), \mathsf{Y}(\cdot)$ and $\mathsf{Z}(\cdot)$ polynomials. 
  This further implies that $(\mathsf{X}(\beta), \mathsf{Y}(\beta), \mathsf{Z}(\beta))$ will be random from the point of view of the adversary, since
  $\mathsf{X}(\cdot), \mathsf{Y}(\cdot)$ are $t_s$-degree polynomials and $\mathsf{Z}(\cdot)$ is a $2t_s$-degree polynomial. 
  This completes the proof of the {\it $t_s$-correctness} in the {\it synchronous} network, as well as the proof of the 
   {\it $t_s$-privacy} property.
  
  If $\D$ is {\it honest} and the network is {\it asynchronous} with up to $t_a$ corruptions, then the proof of the
   {\it $t_a$-correctness} property is similar to the above proof, 
  except that now we now use the {\it $t_a$-correctness} property of $\HSh$ and the properties of
  $\ACS$, $\BatchBeaver$ in the {\it asynchronous} network.
  Moreover, the privacy property holds since adversary now corrupt $t_a < t_s$ parties.
  
  We next consider a {\it corrupt} $\D$ and prove the strong-commitment properties. 
   We first consider a {\it synchronous} network with up to $t_s$ corruptions.
  Note that irrespective of whether $\D$ shares any triples through instance of $\HSh$ or not,
  all honest parties will output a set $\WCORE$ at time $\TimeACS$ during the instance of $\ACS$, 
   with {\it every honest} $P_j$ being present in the $\WCORE$. This follows from the 
  properties of $\ACS$ in the {\it synchronous} network. 
 If no honest party computes any output in the protocol, then strong-commitment holds trivially.
  So consider the case when some honest party
  computes an output. This implies that 
  at least one {\it honest} party, say $P_h$, must have computed an output during the instance of $\HSh$ invoked by $\D$, as otherwise
  no honest party computes any output in the protocol.
  Let $T$ be the time at which $P_h$ has the output for the  instance of $\HSh$ invoked by $\D$. 
     Note that $T \geq \TimeHSh$ and $T$ could be greater than $\TimeACS$, as a {\it corrupt} $\D$ may delay the start of the instances of
      $\HSh$. 
  From the {\it $t_s$-strong commitment} property of $\HSh$ in the {\it synchronous} network,
   it then follows that by time $T + 2\Delta$, {\it all} honest parties compute their output in the instance
  of $\HSh$ invoked by $\D$. Hence at time $T + 2\Delta$, there are 
    $2t_s + 1$ triples which are $t_s$-shared by $\D$.
  
  If $T \leq \TimeACS - 2\Delta$, then at time $\TimeACS$, all honest parties will have their respective shares corresponding to the
  $t_s$-shared triples of $\D$, as well as the set $\WCORE$ and the 
  shares corresponding to the verification-triples, shared by the parties in $\WCORE$.
  The instance of $\TripTrans$ will produce its output at time $\TimeACS + \Delta$. The follow up instances of $\BatchBeaver$
  to recompute the products will take $\Delta$ time, followed by $\Delta$ time for publicly reconstructing the difference values $\gamma^{(j)}$.
  Additionally, the parties may take $\Delta$ time to publicly reconstruct any suspected triples. Hence in this case, {\it all honest} parties will have their respective output shares
  at time $\TimeACS + 4\Delta$. 
  
  On the other hand, if $T > \TimeACS - 2\Delta$, then each
  honest party computes its output during the instance of $\TripTrans$, either at time $T + \Delta$ or at time $T + 3\Delta$.
  Then, each
  honest party computes its output from the instances of $\BatchBeaver$, either at time $T + 2\Delta$ or at time $T + 4\Delta$.
  This implies that the difference values $\gamma^{(j)}$ are available with the honest parties, either at time $T + 3\Delta$ or
  $T + 5 \Delta$. Consequently, the suspected triples (if any) will be available with the honest parties, either at time
  $T + 4\Delta$ or  $T + 6 \Delta$. Hence each honest party computes its output share in the protocol either at time $T + 4\Delta$ or $T + 6\Delta$.
   Notice that in this case there might be a difference of at most $2\Delta$ time within which the honest parties compute
    their output in the protocol, due to a possible difference
  of $2\Delta$ time in getting the output in the instances of $\HSh$ invoked by the {\it corrupt} $\D$.
  
  If the triples shared by $\D$ during the instances of $\HSh$ are all {\it multiplication-triples}, then similar to the proof of the
   correctness property for an {\it honest}
  $\D$, it follows that the honest parties will output a $t_s$-shared multiplication-triple on behalf of $\D$. 
   So consider the case when all the triples shared by $\D$ are {\it not} multiplication-triples. This implies that  $\mathsf{Z}(\cdot) \neq \mathsf{X}(\cdot) \cdot \mathsf{Y}(\cdot)$,
   where $\mathsf{X}(\cdot), \mathsf{Y}(\cdot)$ are the $t_s$-degree polynomials and $\mathsf{Z}(\cdot)$ is the $2t_s$-degree polynomial,
   which are guaranteed to exist from the protocol $\TripTrans$. 
   Let $P_j$ be an {\it honest} party, such that $\mathsf{Z}(\alpha_j) \neq \mathsf{X}(\alpha_j) \cdot \mathsf{Y}(\alpha_j)$. This further implies that
  the transformed triple $(\mathbf{x}^{(j)}, \mathbf{y}^{(j)}, \mathbf{z}^{(j)})$ is {\it not} a multiplication-triple. Such a $P_j$ is bound to exist. This is because there are at least $2t_s + 1$ honest parties
  $P_j$. And if $\mathsf{Z}(\alpha_j) = \mathsf{X}(\alpha_j) \cdot \mathsf{Y}(\alpha_j)$ holds corresponding to {\it every} honest $P_j$, then it implies that
  $\mathsf{Z}(\cdot) = \mathsf{X}(\cdot) \cdot \mathsf{Y}(\cdot)$ holds (due to the degrees of the respective polynomials), which is a contradiction. 
  
    We next show that each {\it honest} $P_i$ will set
  $\flag^{(j)}_i = 1$ and hence $\flag_i = 1$. For this, we note that $P_j \in \WCORE$. This follows from the
   properties of $\ACS$ in the {\it synchronous} network, 
   which guarantees that {\it all} honest parties (and not just $P_j$) will be present in $\WCORE$.
     Since the verification-triple  $(u^{(j)}, v^{(j)}, w^{(j)})$ shared by $P_j$ will be a multiplication-triple, from the
    properties of $\BatchBeaver$ in the {\it synchronous} network, 
  it follows that $\z^{(j)} = \mathbf{x}^{(j)} \cdot \mathbf{y}^{(j)}$ holds. But since
  $\mathbf{z}^{(j)} \neq \mathbf{x}^{(j)} \cdot \mathbf{y}^{(j)}$, it follows that $\gamma^{(j)} = \mathbf{z}^{(j)} - \z^{(j)} \neq 0$.
  Consequently, the parties will publicly reconstruct the suspected-triple $(\mathbf{x}^{(j)}, \mathbf{y}^{(j)}, \mathbf{z}^{(j)})$ and
   find that it is not a multiplication-triple.
  Hence each {\it honest} $P_i$ will set $\flag^{(j)}_i = 1$ and hence $\flag_i = 1$. So the parties output a default $t_s$-sharing
   of the multiplication-triple
  $(0, 0, 0)$ on behalf of $\D$.
  
  The proof for the {\it $t_a$-strong-commitment} property in the {\it asynchronous} network is similar to the above proof,
  except that we now use the {\it $t_a$-strong-Commitment} property of $\HSh$ and the properties
  of $\ACS$ and $\BatchBeaver$ in the {\it asynchronous} network. 
  Moreover, there will be at least $n - t_s - t_a \geq 2t_s + 1$ honest parties in $\WCORE$, who will lead the verification
  of at least $2t_s + 1$ distinct points on the polynomials  $\mathsf{X}(\cdot), \mathsf{Y}(\cdot)$ and $\mathsf{Z}(\cdot)$, through their respective
  verification-triples.
  
  The communication complexity follows from communication complexity of $\ACS, \HSh, \TripTrans$ and $\BatchBeaver$.
\end{proof}

We next discuss the modifications needed in the protocol $\TripSh$ to handle $L$ multiplication-triples.
\paragraph{\bf Protocol $\TripSh$ for Sharing $L$ Multiplication-Triples:}
Protocol $\TripSh$ can be easily generalized so that $L$ multiplication-triples are shared on behalf of $\D$. Namely, $\D$ now has to share
 $L \cdot (2t_s + 1)$  random multiplication-triples through $\HSh$, while each party $P_j$ will need to select $L$ verification-triples during the instance of
 $\ACS$. Moreover, there will be $L$ instances of $\TripTrans$ to transform $\D$'s shared triples, resulting in $L$
  triplets of shared polynomials $(\mathsf{X}(\cdot), \mathsf{Y}(\cdot), \mathsf{Z}(\cdot))$, each of which is independently verified by performing supervised verification.
  To avoid repetition, we do not provide the formal details.
  The modified $\TripSh$ protocol will incur a communication of 
  $\Order(n^4 L \log{|\F|} + n^6 \log{|\F|})$ bits from the honest parties and invokes $\Order(n^2)$ instances of $\HBA$. 
  \subsection{\bf best-of-both-worlds Triple-Extraction Protocol}
Protocol $\TripExt$ (Fig \ref{fig:tripext}) takes as input a {\it publicly-known} subset $\CoreSet$ 
 of $2d + 1$ parties, where $d \geq t_s$ and where it will be {\it ensured} that each party
 $P_j \in \CoreSet$ has $t_s$-shared a multiplication-triple. It will also be ensured that if $P_j$ is {\it honest}, then the multiplication-triple is random from the point of view of the adversary.
  The protocol outputs $d + 1 - t_s$ number of $t_s$-shared multiplication-triples, which will be random from the point of view of the adversary. 
  The high level idea of the protocol is very simple. The parties first invoke an instance of $\TripTrans$ to ``transform" the input triples into a set of co-related triples. 
  Since all the input triples are multiplication-triples, the output triples will also be multiplication-triples. Let $(\mathsf{X}(\cdot), \mathsf{Y}(\cdot), \mathsf{Z}(\cdot))$
  be the triplet of shared polynomials which is guaranteed to exist after $\TripTrans$. From the 
   properties of $\TripTrans$, it follows that adversary will know
  at most $t_s$ distinct points on these polynomials and hence at least $d + 1 - t_s$ points on these polynomials are random for the adversary.
   Hence, the parties output $d + 1 - t_s$ ``new" points on these polynomials (in a $t_s$-shared fashion), which are guaranteed to be random from the point of view of the adversary.
   This requires the parties to perform {\it only} local computation.
\begin{protocolsplitbox}{$\TripExt(\CoreSet, \{[x^{(j)}], [y^{(j)}],[ z^{(j)}]\}_{P_j \in \CoreSet})$}{Protocol for extracting $d + 1 - t_s$ 
 random $t_s$-shared random multiplication-triples from a set of $2d + 1$ $t_s$-shared multiplication triples, where $d \geq t_s$.}{fig:tripext}
\begin{myitemize}
    \item[--] {\bf Transforming the Input Multiplication-Triples} --- The parties jointly do the following:
    \begin{myitemize}
    \item[--] Participate in an instance $\TripTrans(d, \{[x^{(j)}], [y^{(j)}], [z^{(j)}] \}_{P_j \in \CoreSet})$ of $\TripTrans$.
      \begin{myitemize}
	     \item[--] Let $\{[\mathbf{x}^{(j)}], [\mathbf{y}^{(j)}], [\mathbf{z}^{(j)}] \}_{P_j \in \CoreSet}$ be the shared multiplication-triples
	     obtained from $\TripTrans$. Moreover, let $\mathsf{X}(\cdot), \mathsf{Y}(\cdot)$ be the $d$-degree polynomials
	     and $\mathsf{Z}(\cdot)$ be the $2d$-degree polynomial where $\mathsf{Z}(\cdot) = \mathsf{X}(\cdot) \cdot \mathsf{Y}(\cdot)$
	     and where $\mathsf{X}(\alpha_j) = \mathbf{x}^{(j)}$, $\mathsf{Y}(\alpha_j) = \mathbf{y}^{(j)}$ and $\mathsf{Z}(\alpha_j) = \mathbf{z}^{(j)}$ holds
	     corresponding to every $P_j \in \CoreSet$.
	       \item[--] For $j = 1, \ldots, d + 1 - t_s$, locally compute $[\mathbf{a}^{(j)}]$, $[\mathbf{b}^{(j)}]$
	       and $[\mathbf{c}^{(j)}]$ from $\{[\mathbf{x}^{(j)}]\}_{j = 1, \ldots, d + 1}$, $[\{\mathbf{y}^{(j)}]\}_{j = 1, \ldots, d + 1}$
	        and $\{[\mathbf{z}^{(j)}]\}_{j = 1, \ldots, 2d + 1}$ respectively
	       	       	  by applying the corresponding Lagrange's linear function. 
	     Here $\mathbf{a}^{(j)} = \mathsf{X}(\beta_j)$, 	$\mathbf{b}^{(j)} = \mathsf{Y}(\beta_j)$
	     and $\mathbf{c}^{(j)} = \mathsf{Z}(\beta_j)$, where $\beta_1, \ldots, \beta_{d + 1 - t_s}$ are distinct, non-zero elements
	     from $\F$, different from $\alpha_1, \ldots, \alpha_n$.
	       \item[--] Output $\{[\mathbf{a}^{(j)}], [\mathbf{b}^{(j)}], [\mathbf{c}^{(j)}]\}_{j = 1, \ldots, d + 1 - t_s}$.	   
	       \end{myitemize}
      \end{myitemize} 	    
    \end{myitemize}
\end{protocolsplitbox}

\begin{lemma}
\label{lemma:TripExt}
Let $\CoreSet$ be a set of $2d + 1$ parties where $d \geq t_s$, such that each party $P_j \in \CoreSet$ has a multiplication-triple
 $(x^{(j)}, y^{(j)}, z^{(j)})$ which is $t_s$-shared. Moreover, if $P_j$ is {\it honest}, then the multiplication-triple is random from the point of view of the adversary.
  Then protocol $\TripExt$ achieves the following properties.
   \begin{myitemize}
     \item[--] {\it $t_s$-Correctness}:  If the network is {\it synchronous}, then after time $\Delta$,
   the parties output
    $t_s$-shared multiplication-triples $\{\mathbf{a}^{(j)}, \mathbf{b}^{(j)}, \mathbf{c}^{(j)} \}_{j = 1, \ldots, d + 1 - t_s}$. 
    \item[--] {\it $t_a$-Correctness}: If the network is asynchronous, then the parties eventually output $t_s$-shared multiplication-triples
    $\{\mathbf{a}^{(j)}, \mathbf{b}^{(j)}, \mathbf{c}^{(j)} \}_{j = 1, \ldots, d + 1 - t_s}$. 
       \item[--] {\it $t_s$-Privacy}: Irrespective of the network type, 
       the triples $\{\mathbf{a}^{(j)}, \mathbf{b}^{(j)}, \mathbf{c}^{(j)} \}_{j = 1, \ldots, d + 1 - t_s}$ will be random
        from the point of view of the adversary.
      \item[--] The protocol incurs a communication of $\Order(dn^2 \log{|\F|})$ bits from the honest parties.
   \end{myitemize}
\end{lemma}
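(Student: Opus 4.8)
The plan is to derive everything from Lemma~\ref{lemma:TripTrans}, since $\TripExt$ simply runs a single instance of $\TripTrans$ with parameter $d$ on the input triples and then performs purely local Lagrange interpolation, with no further communication. First I would apply Lemma~\ref{lemma:TripTrans} to the instance $\TripTrans(d,\{[x^{(j)}],[y^{(j)}],[z^{(j)}]\}_{P_j\in\CoreSet})$. This yields the $t_s$-shared transformed triples $\{[\mathbf{x}^{(j)}],[\mathbf{y}^{(j)}],[\mathbf{z}^{(j)}]\}_{P_j\in\CoreSet}$ together with $d$-degree polynomials $\mathsf{X}(\cdot),\mathsf{Y}(\cdot)$ and a $2d$-degree polynomial $\mathsf{Z}(\cdot)$ satisfying $\mathsf{X}(\alpha_j)=\mathbf{x}^{(j)}$, $\mathsf{Y}(\alpha_j)=\mathbf{y}^{(j)}$, $\mathsf{Z}(\alpha_j)=\mathbf{z}^{(j)}$ at all $2d+1$ indices; in a synchronous network this is available within time $\Delta$, and in an asynchronous network eventually. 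Since by hypothesis every input triple is a multiplication-triple, Lemma~\ref{lemma:TripTrans} gives that every transformed triple is a multiplication-triple, so $\mathsf{X}(\alpha_j)\cdot\mathsf{Y}(\alpha_j)=\mathsf{Z}(\alpha_j)$ at all $2d+1$ points; as $\mathsf{X}\cdot\mathsf{Y}$ and $\mathsf{Z}$ both have degree at most $2d$ and agree on $2d+1$ points, $\mathsf{Z}(\cdot)=\mathsf{X}(\cdot)\cdot\mathsf{Y}(\cdot)$. The parties then locally apply the appropriate Lagrange linear functions to their shares to obtain $[\mathbf{a}^{(j)}]=[\mathsf{X}(\beta_j)]$, $[\mathbf{b}^{(j)}]=[\mathsf{Y}(\beta_j)]$, $[\mathbf{c}^{(j)}]=[\mathsf{Z}(\beta_j)]$ for $j=1,\dots,d+1-t_s$ (a nonempty range, as $d\geq t_s$); these are $t_s$-sharings by linearity, obtained at the same (local) time as the output of $\TripTrans$, and $\mathbf{c}^{(j)}=\mathsf{Z}(\beta_j)=\mathsf{X}(\beta_j)\cdot\mathsf{Y}(\beta_j)=\mathbf{a}^{(j)}\cdot\mathbf{b}^{(j)}$, which establishes $t_s$- and $t_a$-correctness. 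The communication is exactly that of the single $\TripTrans(d,\cdot)$ instance, namely $\Order(dn^2\log{|\F|})$ bits.

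For $t_s$-privacy, I would argue that the adversary's view fixes at most $t_s$ evaluations of each of $\mathsf{X}(\cdot)$ and $\mathsf{Y}(\cdot)$. By the privacy clause of Lemma~\ref{lemma:TripTrans}, for an honest $P_j$ nothing beyond the (already random) input triple is revealed about $(\mathbf{x}^{(j)},\mathbf{y}^{(j)},\mathbf{z}^{(j)})$, so the only points on $\mathsf{X}(\cdot),\mathsf{Y}(\cdot)$ the adversary learns are those at the at most $t_s$ corrupt indices of $\CoreSet$; conditioned on the adversary's view, $\mathsf{X}(\cdot)$ and $\mathsf{Y}(\cdot)$ are then independent and uniformly distributed over the affine spaces of $d$-degree polynomials consistent with these known values, each of dimension at least $d+1-t_s$. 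Since $\beta_1,\dots,\beta_{d+1-t_s}$ are distinct and different from $\alpha_1,\dots,\alpha_n$, evaluating at these $d+1-t_s$ fresh points is a surjective linear map from each such affine space onto $\F^{d+1-t_s}$, so the $\mathbf{a}^{(j)}$'s and $\mathbf{b}^{(j)}$'s are jointly uniformly random; as $\mathbf{c}^{(j)}=\mathbf{a}^{(j)}\mathbf{b}^{(j)}$ is determined, the extracted triples $\{(\mathbf{a}^{(j)},\mathbf{b}^{(j)},\mathbf{c}^{(j)})\}_{j=1,\dots,d+1-t_s}$ are distributed exactly as independent random multiplication-triples from the adversary's point of view.

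The step I expect to be the main obstacle is this privacy argument: pinning down that the only leakage about $\mathsf{X}(\cdot),\mathsf{Y}(\cdot)$ comes from the at most $t_s$ points held by corrupt parties in $\CoreSet$ (which rests squarely on Lemma~\ref{lemma:TripTrans}), and then running the standard polynomial-randomness/super-invertibility argument to conclude that $d+1-t_s$ evaluations at fresh points are jointly uniform. Everything else---correctness, the $\Delta$/eventual timing, and the communication bound---follows directly from Lemma~\ref{lemma:TripTrans} and the linearity of $t_s$-sharing.
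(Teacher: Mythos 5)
Your proposal is correct and follows essentially the same structure as the paper's proof: invoke Lemma~\ref{lemma:TripTrans} for correctness, timing, and communication, observe that all transformed triples are multiplication-triples (hence $\mathsf{Z}=\mathsf{X}\cdot\mathsf{Y}$ by the degree count), and then argue privacy from the fact that the adversary sees at most $t_s$ points on $\mathsf{X}(\cdot)$ and $\mathsf{Y}(\cdot)$, leaving $d+1-t_s$ degrees of freedom. The only difference is cosmetic: you phrase the privacy step via the affine-space/super-invertibility formulation (evaluations at $d+1-t_s$ fresh $\beta$'s are a surjection from the residual coefficient space), whereas the paper argues it as a one-to-one correspondence between the $d+1-t_s$ input triples unknown to the adversary and the output triples; these are two standard ways of saying the same thing, and yours is arguably the more explicit. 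Your observation that $d\geq t_s$ makes the output index range nonempty is a small, correct detail the paper leaves implicit.
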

\begin{proof}
 If the network is {\it synchronous}, then from the properties of $\TripTrans$ in the 
 {\it synchronous} network, it follows that
  after time $\Delta$, the honest parties have $t_s$-shared triples
  $\{\mathbf{x}^{(j)}, \mathbf{y}^{(j)}, \mathbf{z}^{(j)} \}_{P_j \in \CoreSet}$. Moreover, all these triples will be multiplication-triples,
   since all the input triples
  $\{{x}^{(j)}, {y}^{(j)}, {z}^{(j)} \}_{P_j \in \CoreSet}$ are guaranteed to be multiplication-triples. This further implies that the condition
  $\mathsf{Z}(\cdot) = \mathsf{X}(\cdot) \cdot \mathsf{Y}(\cdot)$ holds, where 
  $\mathsf{X}(\cdot), \mathsf{Y}(\cdot)$ and $\mathsf{Z}(\cdot)$ are the $d, d$ and $2d$-degree polynomials respectively,
  which are guaranteed to exist from $\TripTrans$, such that 
  $\mathsf{X}(\alpha_j) = \mathbf{x}^{(j)}$, $\mathsf{Y}(\alpha_j) = \mathbf{y}^{(j)}$ and $\mathsf{Z}(\alpha_j) = \mathbf{z}^{(j)}$ holds
	     for every $j$, such that $P_j \in \CoreSet$. 
  It now follows that the honest parties output the $t_s$-shared triples $\{(\mathbf{a}^{(j)}, \mathbf{b}^{(j)}, \mathbf{c}^{(j)}) \}_{j = 1, \ldots, d + 1 - t_s}$
  after time $\Delta$, where $\mathsf{X}(\beta_j) = \mathbf{a}^{(j)}$, $\mathsf{Y}(\beta_j) = \mathbf{b}^{(j)}$ and $\mathsf{Z}(\beta_j) = \mathbf{c}^{(j)}$.
  Moreover, the triples will be multiplication-triples, because $\mathsf{Z}(\cdot) = \mathsf{X}(\cdot) \cdot \mathsf{Y}(\cdot)$ holds.
	     
 If the network is {\it asynchronous}, then the proof of {\it $t_a$-correctness} property will be similar as above,
  except that we now depend upon the
  properties of $\TripTrans$ in the {\it asynchronous} network.
 
 For {\it privacy}, we note that there will be at most $t_s$ {\it corrupt} parties in $\CoreSet$ and hence adversary will know at most $t_s$ multiplication-triples
 in the set $\{\mathbf{x}^{(j)}, \mathbf{y}^{(j)}, \mathbf{z}^{(j)} \}_{P_j \in \CoreSet}$, which follows from the properties of $\TripTrans$.
   This implies that adversary will know at most 
  $t_s$
  distinct points on the polynomials
 $\mathsf{X}(\cdot), \mathsf{Y}(\cdot)$ and $\mathsf{Z}(\cdot)$, leaving $d + 1 - t_s$ degrees of freedom on the these polynomials. This further implies that
 the multiplication-triples $\{(\mathsf{X}(\beta_j), \mathsf{Y}(\beta_j), \mathsf{Z}(\beta_j)) \}_{j = 1, \ldots, d + 1 - t_s}$, which are the
 same as $\{\mathbf{a}^{(j)}, \mathbf{b}^{(j)}, \mathbf{c}^{(j)} \}_{j = 1, \ldots, d + 1 - t_s}$,
  will be random from the point of view of the
 adversary. Namely, there will be a one-to-one correspondence between the $d + 1 - t_s$ multiplication-triples
  in the set $\{\mathbf{x}^{(j)}, \mathbf{y}^{(j)}, \mathbf{z}^{(j)} \}_{P_j \in \CoreSet}$ which are {\it unknown} to the adversary and 
  the output multiplication-triples $\{\mathbf{a}^{(j)}, \mathbf{b}^{(j)}, \mathbf{c}^{(j)} \}_{j = 1, \ldots, d + 1 - t_s}$.
  Hence adversary's view will be consistent with every candidate value of the output  $d + 1 - t_s$ multiplication-triples.
  
  The  communication complexity simply follows from the fact that the protocol requires one instance of $\TripTrans$. 
\end{proof}
\subsection{The best-of-both-worlds Preprocessing Phase Protocol}
  We finally present our best-of-both-worlds preprocessing phase protocol $\Offline$, which generates
   $c_M$ number of $t_s$-shared multiplication-triples, which will be random from the point of view of the adversary.
   The protocol is formally presented in Fig \ref{fig:Offline}. In the protocol, each party acts as a dealer and invokes an instance of
    $\TripSh$, so that
   $\frac{c_M}{(\frac{n - t_s - 1}{2} + 1 - t_s)}$
   random multiplication-triples are shared on its behalf. 
   As {\it corrupt} dealers may not invoke their instances of $\TripSh$ (even in a {\it synchronous} network), 
   the  parties agree on a {\it common} subset $\CoreSet$ of $n - t_s$ parties, who have shared multiplication-triples,
  by executing instances of $\HBA$ (similar to the protocol $\ACS$).
  The multiplication-triples shared on the behalf
  of up to $t_s$ {\it corrupt} triple-providers in $\CoreSet$ will be known to adversary, while the multiplication-triples shared on the behalf
   of the {\it honest}
  triple-providers in $\CoreSet$ will be random for the adversary. 
  Since the exact identity of the {\it honest} triple-providers in $\CoreSet$ will not be known, the parties execute 
    $\frac{c_M}{(\frac{n - t_s - 1}{2} + 1 - t_s)}$ instances of
   $\TripExt$ to securely extract $c_M$
    shared multiplication-triples, which will be random for the adversary. 
   In the protocol, for simplicity and without loss of generality, we assume that $n - t_s$  is of the form $2d + 1$.
   \begin{protocolsplitbox}{$\Offline$}{The best-of-both-worlds preprocessing phase
    protocol for generating shared random multiplication-triples.}{fig:Offline}
Let $L \defined \frac{c_M}{(\frac{n - t_s - 1}{2} + 1 - t_s)}$.
\justify
  \begin{myitemize}
      \item[--] {\bf Phase 1 --- Sharing Random Multiplication-Triples}:  Each $P_i \in \Partyset$ does the following.
	    \begin{myitemize}
	        \item[--]  Act as a dealer $\D$ and invoke an instance $\TripSh^{(i)}$ of $\TripSh$,
	         so that $L$ random multiplication-triples are shared on $P_i$'s behalf.
	        \item[--] For $j = 1, \ldots, n$, participate in the instance $\TripSh^{(j)}$ invoked by $P_j$ and 
	        {\color{red} wait for time $\TimeTripSh$}.     
        	     \item[--] Initialize a set $\CSet_i = \emptyset$ and include $P_j$ in 
            $\CSet_i$, if an output is obtained from $\TripSh^{(j)}$.  
    \end{myitemize}
      \item[--] {\bf Phase II --- Agreement on a Common Subset of Triple-Providers}:
       Each $P_i \in \Partyset$ does the following.
          \begin{myitemize}
            \item[--]  For $j = 1, \ldots, n$, participate in an instance of $\HBA^{(j)}$ of $\HBA$ with input $1$, if $P_j \in \CSet_i$.
            \item[--] Once $n - t_s$ instances of $\HBA$ have produced an output $1$, then participate 
            with input $0$ in all the $\HBA$ instances $\HBA^{(j)}$, such that $P_j \not \in \CSet_i$.
            \item[--] Once a binary output is computed in all the $n$ instances of $\HBA$, 
            set $\CoreSet$ to be the set of {\it first} $n - t_s$ parties $P_j$, such that
                           $1$ is computed as the output in the instance $\HBA^{(j)}$. 
             \end{myitemize} 
    \item[--] {\bf Phase III --- Extracting Random Multiplication-Triples}: The parties do the following.
         \begin{myitemize}
         \item[--] For every $P_j \in \CoreSet$, let $\{[x^{(j, \ell)}], [y^{(j, \ell)}], [z^{(j, \ell)}] \}_{\ell = 1, \ldots, L}$
         be the $t_s$-shared multiplication-triples, shared on $P_j$'s behalf, during the instance $\TripSh^{(j)}$.
         \item[--] For $\ell = 1, \ldots, L$, the parties participate in an instance 
         $\TripExt(\CoreSet, \{[x^{(j, \ell)}], [y^{(j, \ell)}],[ z^{(j, \ell)}]\}_{P_j \in \CoreSet})$ of $\TripExt$
         and compute the output $\{[\mathbf{a}^{(j, \ell)}], [\mathbf{b}^{(j, \ell)}], [\mathbf{c}^{(j, \ell)}]\}_{j = 1, \ldots, \frac{n - t_s - 1}{2} + 1 - t_s}$.         
         \item[--] Output the shared triples
          $\{[\mathbf{a}^{(j, \ell)}], [\mathbf{b}^{(j, \ell)}], [\mathbf{c}^{(j, \ell)}]\}_{j = 1, \ldots, \frac{n - t_s - 1}{2} + 1 - t_s, \ell = 1, \ldots, L}$. 
         \end{myitemize}
  \end{myitemize}
\end{protocolsplitbox}
\begin{theorem} \label{thm:Offline}
Protocol $\Offline$ achieves the following properties.
 \begin{myitemize}
       \item[--] In a synchronous network, by time $\TimeOffline = \TimeTripSh + 2\TimeHBA + \Delta$, the 
      honest parties output a $t_s$-sharing of $c_M$ multiplication-triples.
         \item[--] In an asynchronous network, almost-surely,
   the honest parties eventually output a $t_s$-sharing of $c_M$ multiplication-triples.
  \item[--] Irrespective of the network type, the view of the adversary
   remains independent of the output multiplication-triples.
   \item[--] 
    The protocol incurs a communication of $\Order(\frac{n^5}{\frac{t_a}{2}+1}c_M \log{|\F|} + n^7 \log{|\F|})$ bits 
    from the honest parties and invokes $\Order(n^3)$ instances of $\HBA$.    
 \end{myitemize}
\end{theorem}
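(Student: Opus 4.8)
The plan is to derive all four statements by composing, phase by phase, the guarantees of the building blocks $\TripSh$, $\HBA$ and $\TripExt$ established in Lemmas~\ref{lemma:TripSh} and~\ref{lemma:TripExt} and Theorem~\ref{thm:HBA}; the analysis of Phases~I--II will essentially replay the agreement-on-a-core-set argument of $\ACS$ in Lemma~\ref{lemma:ACSProperties}, now with $\TripSh$ in place of $\HSh$. Throughout I write $d = \frac{n-t_s-1}{2}$, so that $n - t_s = 2d+1$ and $d + 1 - t_s = \frac{n-3t_s+1}{2}$; from $n \ge 3t_s + t_a + 1$ we get $d \ge t_s$ (so $\TripExt$'s precondition holds) and $d + 1 - t_s \ge \tfrac{t_a}{2}+1 > 0$ (so $L = c_M/(\frac{n-t_s-1}{2}+1-t_s)$ is a well-defined positive integer, treating it as rounded up).

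For the \emph{synchronous} case I would first invoke $t_s$-correctness of $\TripSh$: every honest dealer $P_j$ has, by time $\TimeTripSh$, shared $L$ random multiplication-triples on its behalf and every honest $P_i$ has obtained the corresponding output, hence $P_j \in \CSet_i$. Since Phase~II begins only after local time $\TimeTripSh$, all honest parties enter the (at least $n-t_s$) instances $\HBA^{(j)}$ of honest dealers with input $1$; by $t_s$-validity and $t_s$-guaranteed liveness of $\HBA$ these output $1$ by time $\TimeTripSh + \TimeHBA$, whereupon every honest party supplies an input to the remaining $\HBA^{(j)}$ instances and, by $t_s$-guaranteed liveness and $t_s$-consistency, obtains a common binary vector from all $n$ instances by time $\TimeTripSh + 2\TimeHBA$; this fixes a common $\CoreSet$ of exactly $n-t_s = 2d+1$ parties. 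For every $P_j \in \CoreSet$ the instance $\HBA^{(j)}$ output $1$, so some honest party obtained an output from $\TripSh^{(j)}$ inside the Phase~I window, and $t_s$-correctness (if $P_j$ is honest) or $t_s$-strong-commitment (if $P_j$ is corrupt) of $\TripSh$ then guarantees that within a further $2\Delta \le \TimeHBA$ \emph{every} honest party holds its share of a well-defined $t_s$-shared multiplication-triple of $P_j$. Thus by time $\TimeTripSh + 2\TimeHBA$ the hypotheses of Lemma~\ref{lemma:TripExt} hold for each of the $L$ parallel instances of $\TripExt$ in Phase~III, which therefore terminate $\Delta$ later, yielding $L \cdot (d+1-t_s) = c_M$ shared multiplication-triples at time $\TimeOffline = \TimeTripSh + 2\TimeHBA + \Delta$. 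The \emph{asynchronous} case is the same argument with $t_s$ replaced by $t_a$ and "within time $T$" replaced by "eventually"/"almost-surely": $t_a$-correctness of $\TripSh$ makes at least $n-t_s$ of the $\HBA^{(j)}$ instances eventually entered with input $1$, so (almost-surely, by $t_a$-almost-surely-liveness and $t_a$-consistency of $\HBA$) all $n$ instances terminate with a common output and $\CoreSet$ is fixed; $t_a$-correctness/$t_a$-strong-commitment of $\TripSh$ makes each $P_j \in \CoreSet$'s triples eventually $t_s$-shared at all honest parties; and $t_a$-correctness of $\TripExt$ finishes the claim.

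For \emph{privacy} I would apply $t_s$-privacy of $\TripExt$ to each Phase~III instance: its hypothesis is precisely that the input triples of the \emph{honest} members of $\CoreSet$ be random from the adversary's view, which is $t_s$-privacy of $\TripSh$; since $|\CoreSet| = n-t_s$ contains at most $t_s$ corrupt parties, $\TripExt$ outputs triples random for the adversary. Because each party's $\TripSh$ instance uses fresh randomness and the $L$ instances of $\TripExt$ operate on disjoint tuples of shared triples, the per-instance simulators compose and the whole vector of $c_M$ output triples stays independent of the adversary's view; the public reconstructions internal to $\TripSh$ leak nothing about honest dealers' triples, which is already covered by $\TripSh$'s privacy.

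For \emph{complexity}, Phase~I runs $n$ instances of $\TripSh$, costing $\Order(n^5 L \log{|\F|} + n^7 \log{|\F|})$ bits and $\Order(n^3)$ instances of $\HBA$; Phase~II adds $n$ instances of $\HBA$; Phase~III runs $L$ instances of $\TripExt$ at $\Order(dn^2\log{|\F|}) = \Order(n^3\log{|\F|})$ bits each and no $\HBA$ calls. Summing gives $\Order(n^5 L \log{|\F|} + n^7 \log{|\F|})$ bits and $\Order(n^3)$ instances of $\HBA$, and substituting $L = c_M/\frac{n-3t_s+1}{2}$ together with $\frac{n-3t_s+1}{2} \ge \tfrac{t_a}{2}+1$ yields the stated $\Order\!\big(\frac{n^5}{\frac{t_a}{2}+1}c_M \log{|\F|} + n^7 \log{|\F|}\big)$. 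I do not expect a real obstacle here: the one point that needs care is the synchronous timing bookkeeping, namely verifying that the $2\Delta$ slack from $\TripSh$'s strong-commitment for corrupt providers is absorbed within the $\HBA$ rounds of Phase~II so that Phase~III can start uniformly at $\TimeTripSh + 2\TimeHBA$; everything else is a direct composition of the sub-protocol lemmas already proved.
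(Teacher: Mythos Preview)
Your proposal is correct and follows essentially the same approach as the paper's proof: you replay the $\ACS$-style core-set argument with $\TripSh$ in place of $\HSh$, then apply $\TripExt$ instance-wise, and you correctly flag the one delicate point (that the $2\Delta$ slack from $\TripSh$'s strong-commitment for corrupt dealers is absorbed because $2\Delta < \TimeHBA$, so all shares are in place by $\TimeTripSh + 2\TimeHBA$). The only minor imprecision is the phrase ``inside the Phase~I window'': the paper argues that an honest party voting $1$ in $\HBA^{(j)}$ must have obtained its $\TripSh^{(j)}$ output by time $\TimeTripSh + \TimeHBA$ (not $\TimeTripSh$), but since you explicitly account for the slack being covered by the second $\HBA$ round, your timing conclusion is the same.
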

\begin{proof}
Let the network be {\it synchronous} with up to $t_s$ corruptions.
  Let $\Honest$ be the set of parties, where $|\Honest| \geq n - t_s$. Corresponding to each $P_j \in \Honest$, at time $\TimeTripSh$, $L$ multiplication-triples 
  $\{x^{(j, \ell)}, y^{(j, \ell)}, z^{(j, \ell)} \}_{\ell = 1, \ldots, L}$ will be $t_s$-shared on the 
  behalf of $P_j$ during the instance $\TripSh^{(j)}$,
   which follows from the 
   {\it $t_s$-correctness} of $\TripSh$ in the {\it synchronous} network. 
   Consequently, the set $\CSet_i$ will be of size at least $n - t_s$ for every
  honest $P_i$. After time $\TimeTripSh$, corresponding to each $P_j \in \Honest$, each honest $P_i$ participates with input $1$
   in the instance $\HBA^{(j)}$.
   It then follows from the {\it $t_s$-validity} and
  {\it $t_s$-guaranteed liveness}
  of $\HBA$ in the {\it synchronous} network that corresponding to every $P_j \in \Honest$, every honest $P_i$
  computes the output $1$ during the instance $\HBA^{(j)}$, at time
  $\TimeTripSh + \TimeHBA$. Consequently, after time $\TimeTripSh + \TimeHBA$,
   every honest party will start participating in the remaining $\HBA$ instances for which no input has been provided yet (if there are any). 
  And from the {\it $t_s$-guaranteed liveness} and {\it $t_s$-consistency} of $\HBA$ in the {\it synchronous} network,
  all honest parties will compute a common output in these 
  $\HBA$ instances, at time $\TimeTripSh + 2 \TimeHBA$.
   Consequently, by time  $\TimeTripSh + 2 \TimeHBA$, every honest party has a common $\CoreSet$ of size $n - t_s$.
   
   Consider an {\it arbitrary} party $P_j \in \CoreSet$. If $P_j$ is {\it honest}, then as shown above, 
   $L$ multiplication-triples will be shared on behalf of $P_j$ at time $\TimeTripSh$ during the instance $\TripSh^{(j)}$.
  Next consider a {\it corrupt} $P_j \in \CoreSet$. Since $P_j \in \CoreSet$, it follows that the honest parties computed the output $1$
  during the instance $\HBA^{(j)}$.
  This implies 
   that at least one 
  {\it honest} $P_i$ must have computed
   some output during
    the instance $\TripSh^{(j)}$, within time $\TimeTripSh + \TimeHBA$  (implying that $P_j \in \CSet_i$) and participated
  with input $1$ in the instance $\HBA^{(j)}$. This is because if $P_j$ {\it does not} belong to the $\CSet_i$ set of {\it any} honest $P_i$ 
  at time $\TimeTripSh + \TimeHBA$,
   then it implies that {\it all} honest parties participate with input
   $0$ in the instance $\HBA^{(j)}$ after time $\TimeTripSh + \TimeHBA$. 
   And then from the {\it $t_s$-validity} of $\HBA$ in the {\it synchronous} network, every honest party would compute
    the output $0$ in the instance
  $\HBA^{(j)}$ and hence $P_j$ will {\it not} be present in $\CoreSet$, which is a contradiction. 
  Now if $P_i$ has computed some output in $\TripSh^{(j)}$ by time $\TimeTripSh + \TimeHBA$, then from the
   {\it $t_s$-strong-commitment} of $\TripSh$ in the {\it synchronous} network, it follows that
  there exist $L$ multiplication-triples, say $\{(x^{(j, \ell)}, y^{(j, \ell)}, z^{(j, \ell)}) \}_{\ell = 1, \ldots, L}$, 
  which will be $t_s$-shared among the parties 
  on behalf of $P_j$
   by time  $(\TimeTripSh + \TimeHBA + 2\Delta) < (\TimeTripSh + 2\TimeHBA)$; the latter follows because $2\Delta < \TimeHBA$.
  
  From the above discussion, it follows that there will be $L$ multiplication-triples, which will be $t_s$-shared on behalf of {\it each} $P_j \in \CoreSet$ by time
  $\TimeTripSh + 2 \TimeHBA$. Hence each instance of $\TripExt$ will output $\frac{n - t_s - 1}{2} + 1 - t_s$ number of 
  $t_s$-shared multiplication-triples
  by time $\TimeOffline = \TimeTripSh + 2 \TimeHBA + \Delta$. This follows from the
  {\it $t_s$-correctness} property of $\TripExt$ in the {\it synchronous} network
   by substituting $|\CoreSet| = n - t_s$ and $d = \frac{n - t_s - 1}{2}$ in Lemma \ref{lemma:TripExt}.
  Since there are $L = \frac{c_M}{(\frac{n - t_s - 1}{2} + 1 - t_s)}$ instances of $\TripExt$, it follows that
  at time $\TimeOffline$, the parties have $L \cdot \Big ( \frac{n - t_s - 1}{2} + 1 - t_s  \Big )= c_M$ number of $t_s$-shared multiplication-triples.
  This completes the proof of the {\it $t_s$-correctness} property in the {\it synchronous} network.
 
 The proof of the {\it $t_a$-correctness} property in the {\it asynchronous} network is similar as above, except that we now use the 
   {\it $t_a$-correctness} and {\it $t_a$-strong-commitment} properties of $\TripSh$ in the {\it asynchronous} network, the 
  {\it $t_a$-correctness} property of $\TripExt$ in the {\it asynchronous} network and
  the properties of $\HBA$ in the {\it asynchronous} network.
  
  From the {\it $t_s$-privacy} property of $\TripSh$, it follows that the multiplication-triples which 
  are $t_s$-shared on behalf of the {\it honest} parties $P_j \in \CoreSet$
  will be random from the point of view of the adversary under the presence of up to $t_s$ corrupt parties,
  irrespective of the network type.
   It then follows from the {\it $t_s$-privacy} property of $\TripExt$ that the $t_s$-shared
  multiplication-triples generated from each instance of
  $\TripExt$ will be random from the point of view of the adversary. This proves the {\it $t_s$-privacy} property.
  
  The communication complexity follows from the communication complexity of $\TripSh$ and $\TripExt$, and
   from the fact that $\frac{n-t_s-1}{2} + 1 - t_s \geq \frac{t_a}{2} + 1$.
\end{proof}


\section{The best-of-both-worlds Circuit-Evaluation Protocol}
\label{sec:MPC}
  The best-of-both-worlds protocol $\PiMPC$ for evaluating $\ckt$
    has four phases. 
  In the first phase, the parties generate $t_s$-sharing of $c_M$ random multiplication-triples through $\Offline$. 
   The parties also invoke an instance of $\ACS$ to generate $t_s$-sharing of their respective inputs for $f$ and 
     agree on a {\it common} subset  $\CoreSet$ of {\it at least} $n - t_s$ parties, whose inputs for $f$ are $t_s$-shared,
      while the remaining inputs are set to
     $0$. In a {\it synchronous} network,
   all {\it honest} parties will be in $\CoreSet$, thus
   ensuring that the inputs of {\it all} honest parties are considered for the circuit-evaluation.
     In the second phase, each gate is evaluated in a $t_s$-shared fashion after which the parties {\it publicly} reconstruct 
  the secret-shared output in the third phase. The fourth phase is the  {\it termination phase}, where the parties check
    whether ``sufficiently many" parties have obtained the same output, in which case the parties
    ``safely" take that output
   and terminate the protocol (and all the underlying sub-protocols). 
    Protocol $\PiMPC$ is formally presented in Fig \ref{fig:MPCProtocol}. 
\begin{protocolsplitbox}{$\PiMPC$}{A best-of-both-worlds perfectly-secure protocol for securely evaluating the arithmetic circuit $\ckt$.}{fig:MPCProtocol}
  \begin{myitemize}
  \item[--] {\bf Preprocessing and Input-Sharing} --- The parties do the following:
    \begin{myitemize}
     \item[--] Each $P_i \in \Partyset$ on having the input $x^{(i)}$ for $f$, selects a random $t_s$-degree polynomial $f_{x^{(i)}}(\cdot)$ where $f_{x^{(i)}}(0) = x^{(i)}$ and participates in an instance
     of $\ACS$ with input  $f_{x^{(i)}}(\cdot)$. 
     Let $\CoreSet$ be the common subset of parties, computed as an output during the instance
     of $\ACS$, where $|\CoreSet| \geq n - t_s$. Corresponding to every
     $P_j \not \in \CoreSet$, set $x^{(j)} = 0$ and set $[x^{(j)}]$ to a default 
     $t_s$-sharing of $0$.
     \item[--] In parallel, participate in an instance of $\Offline$. 
     Let $\{[\mathbf{a}^{(j)}], [\mathbf{b}^{(j)}], [\mathbf{c}^{(j)}] \}_{j = 1, \ldots, c_M}$ be the 
     $t_s$-shared multiplication-triples, computed as an output during the instance of $\Offline$.
     \end{myitemize}  
   \item[--] {\bf Circuit Evaluation} --- Let $G_1, \ldots, G_{m}$ be a publicly-known topological ordering
       of the gates of $\ckt$. For $k = 1, \ldots, m$, the parties do the following for gate $G_k$:
         \begin{myitemize}
	   \item[--] {\it If $G_k$ is an addition gate}: the parties locally compute $[w] = [u] + [v]$, where $u$ and $v$ are
	    gate-inputs and $w$ is the gate-output.
	   \item[--] {\it If $G_k$ is a multiplication-with-a-constant gate with constant $c$}: the parties locally compute $[v] = c \cdot [u]$, 
	    where $u$ is the gate-input and $v$ is the gate-output.
	   \item[--] {\it If $G_k$ is an addition-with-a-constant gate with constant $c$}: the parties locally compute 
	   $[v] = c +  [u]$, where $u$ is the gate-input. and $v$ is the gate-output.
	   \item[--] {\it If $G_k$ is a multiplication gate}: Let $G_k$ be the $\ell^{th}$ multiplication gate in $\ckt$ where $\ell \in \{1, \ldots, c_M \}$
	    and let $([a^{(\ell)}], [b^{(\ell)}], [c^{(\ell)}])$ be the $\ell^{th}$ shared multiplication-triple, generated from $\Offline$.
	   Moreover,  let $[u]$ and $[v]$ be the shared gate-inputs of $G_k$. 
	   Then the parties participate in an instance $\BatchBeaver(([u], [v]), ([a^{(\ell)}], [b^{(\ell)}], [c^{(\ell)}]))$ of 
	   $\BatchBeaver$ and compute the output $[w]$.	   
     \end{myitemize}
    \item[--] {\bf Output Computation} --- Let $[y]$ be the $t_s$-shared circuit-output. 
    The parties exchange their respective shares of $y$ and apply the $\OEC(t_s, t_s, \Partyset)$ procedure on the received shares to 
    reconstruct $y$.    
      \item[--] {\bf Termination}: Each $P_i$ does the following.
          \begin{myitemize}
          \item[--] If $y$ has been computed during output computation phase, then send $(\ready, y)$ message to all the parties.
          \item[--] If the message $(\ready, y)$ is received from at least $t_s + 1$  distinct parties, then 
          send $(\ready, y)$ message to all the parties, if not sent earlier.
           \item[--] If the message $(\ready, y)$ is received from at least $2t_s + 1$  distinct parties, then
           output $y$ and  terminate all the sub-protocols.
          \end{myitemize}   
   \end{myitemize}
\end{protocolsplitbox}

We now prove the properties of the protocol $\PiMPC$.
\begin{theorem}
\label{thm:MPC}
Let $t_a < t_s$, such that $3t_s + t_a < n$. Moreover, let $f: \F^n \rightarrow \F$ be a function represented by an arithmetic circuit $\ckt$ over $\F$ consisting of
 $c_M$ number of multiplication gates, and whose multiplicative depth is $D_M$. Moreover, let party $P_i$ has input $x^{(i)}$ for $f$.
 Then, $\PiMPC$ achieves the following.
 \begin{myitemize}
 \item[--] In a synchronous network, all honest parties output $y = f(x^{(1)}, \ldots, x^{(n)})$ at time $(120n + D_M + 6k - 20) \cdot \Delta$,
       where $x^{(j)} = 0$ for every $P_j \not \in \CoreSet$, such that
       $|\CoreSet| \geq n - t_s$
       and every honest party $P_j \in \PartySet$ is present in $\CoreSet$.
       Here $k$ is the constant from Lemma \ref{lemma:ABAGuarantees}, as determined by the underlying
       (existing) perfectly-secure ABA protocol $\PiABA$.
\item[--] In an asynchronous network, almost-surely, the honest parties eventually output $y = f(x^{(1)}, \allowbreak \ldots, x^{(n)})$ 
       where $x^{(j)} = 0$ for every $P_j \not \in \CoreSet$
       and where $|\CoreSet| \geq n - t_s$.           
  \item[--] Irrespective of the network type, the view of the adversary will be independent of the inputs of the honest parties in $\CoreSet$.
     \item[--] The protocol incurs a communication of $\Order(\frac{n^5}{\frac{t_a}{2}+1 }c_M \log{|\F|} + n^7 \log{|\F|})$ bits from the 
     honest parties and
  invokes $\Order(n^3)$ instances of $\HBA$. 
 \end{myitemize}
\end{theorem}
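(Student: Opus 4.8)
The plan is to analyze the four phases of $\PiMPC$ in order: first establish liveness and correctness in a synchronous network, then transfer the argument to the asynchronous setting, and finally handle privacy and the complexity bounds. \textbf{Synchronous network.} First I would invoke the guarantees of $\ACS$ (Lemma~\ref{lemma:ACSProperties}) and $\Offline$ (Theorem~\ref{thm:Offline}), which run in parallel: by time $\TimeACS$ every honest party holds the common set $\CoreSet$ with $|\CoreSet|\ge n-t_s$, with \emph{every} honest party inside $\CoreSet$, and holds a $t_s$-sharing of $x^{(j)}$ for each $P_j\in\CoreSet$ (and of the default $0$ otherwise); by time $\TimeOffline$ the parties hold $t_s$-sharings of $c_M$ random multiplication-triples. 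Since $\TimeOffline$ dominates $\TimeACS$, circuit evaluation begins by time $\TimeOffline$. Next I would prove, by induction on the topological ordering of the gates, that each gate-output is $t_s$-shared and equals the value $\ckt$ prescribes on $(x^{(1)},\dots,x^{(n)})$: linear gates are local by linearity of $d$-sharing, and each multiplication gate is handled by a $\BatchBeaver$ instance, which by Lemma~\ref{lemma:BatchBeaverProperties} returns the correct product $t_s$-sharing within $\Delta$ time (the supplied triple is a genuine multiplication-triple). Hence after $D_M$ sequential multiplication layers $[y]$ is available within an extra $D_M\cdot\Delta$, and the reconstruction step recovers $y$ within one further $\Delta$ via $\OEC(t_s,t_s,\Partyset)$ (using $t_s<n-2t_s$, Lemma~\ref{lemma:OECProperties}). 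For termination, the $\ge n-t_s\ge 2t_s+1$ honest parties all reconstruct $y$ and send $(\ready,y)$, so every honest party collects $2t_s+1$ such messages within $\Delta$ and outputs $y$; the $(t_s+1)$-amplification rule, together with $n>3t_s$, is what rules out an honest party terminating on a wrong value. Finally I would add up the time-outs — $\TimeOffline$ unfolds into those of $\TripSh$, $\ACS$, $\HSh$, $\HBA$, $\BCAST$, $\BGP$, $\PiABA$ — plus $D_M\cdot\Delta$ and the $\Order(\Delta)$ for reconstruction and termination, to obtain the stated $(120n+D_M+6k-20)\cdot\Delta$; this bookkeeping (unrolling $\TimeHBA=\TimeBCAST+\TimeABA$, $\TimeBCAST=(12n-3)\Delta$, $\TimeABA=k\Delta$, $\TimeWPS$, $\TimeVSS$, $\TimeACS$, $\TimeTripSh$, $\TimeOffline$) is the only genuinely tedious step.

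\textbf{Asynchronous network.} I would re-run the same structural argument, replacing ``within time $T$'' by ``eventually'' and ``guaranteed liveness'' by ``almost-surely''. Theorem~\ref{thm:Offline} and Lemma~\ref{lemma:ACSProperties} give that almost-surely the parties eventually agree on $\CoreSet$ and obtain the $c_M$ shared triples; the gate-by-gate induction is unchanged since $\BatchBeaver$ eventually outputs in an asynchronous network; $\OEC$ eventually reconstructs $y$ (this is where $t_s<n-2t_a$, i.e.\ $3t_s+t_a<n$, is used); and termination follows by the standard Bracha quorum argument, since all $\ge n-t_a\ge 2t_s+1$ honest parties eventually send $(\ready,y)$ for the same $y$, so every honest party eventually collects $2t_s+1$ of them. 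The one subtlety is that $y$ is well-defined even when $\CoreSet$ contains corrupt parties: this follows from the $t_s$-strong-commitment of the $\HSh$ instances invoked inside $\ACS$, which fixes the inputs of all parties in $\CoreSet$.

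\textbf{Privacy and complexity.} For privacy I would show the adversary's view is a deterministic function of the corrupt parties' inputs, the triples they already know, and $y$ alone: the transcripts of $\Offline$ and $\ACS$ are independent of the honest inputs by $t_s$-privacy of $\TripSh$ and of $\HSh$, and in each $\BatchBeaver$ instance the only leaked values are the masks $x-a$ and $y-b$, which are uniformly random and independent of $(x,y)$ because the triple is random for the adversary (Theorem~\ref{thm:Offline}); formally I would package this as a simulator $\SimMPC$ that runs the honest parties on dummy inputs and patches the reconstructed values using $y$ obtained from the ideal functionality. The communication and $\HBA$-invocation counts follow by summing the bounds of $\Offline$ (the dominant $\Order(\frac{n^5}{t_a/2+1}c_M\log|\F| + n^7\log|\F|)$ bits and $\Order(n^3)$ $\HBA$ instances), $\ACS$, the $c_M$ instances of $\BatchBeaver$ ($\Order(n^2\log|\F|)$ bits each), and the $\Order(n)$ $(\ready,\cdot)$-messages per party in the termination phase.

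The hardest part I expect is the asynchronous liveness argument, because $\Offline$ (and the $\ACS$ and $\TripSh$ instances it nests) only \emph{almost-surely} terminate: I must argue that all the $\HBA$ instances inside them eventually decide (almost-surely, by Theorem~\ref{thm:HBA}), that this ``almost-surely'' composes cleanly across the $\Order(n^3)$ nested invocations, and that a party terminating $\PiMPC$ shuts down every nested sub-protocol instance without stalling an honest party that has not yet finished — which is precisely what the $(\ready,\cdot)$-amplification in the termination phase is designed to ensure.
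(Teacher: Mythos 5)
Your proposal follows essentially the same route as the paper: invoke the $\ACS$ and $\Offline$ guarantees for the preprocessing/input phase, a gate-by-gate argument for circuit evaluation (linear gates local by linearity of $t_s$-sharing, multiplications via $\BatchBeaver$), $\OEC$ for output reconstruction, and the Bracha-style $(\ready,y)$ amplification for termination and agreement; privacy from the $t_s$-privacy of $\ACS$/$\Offline$ and the masking argument in $\BatchBeaver$. The explicit induction over the topological order and the explicit simulator $\SimMPC$ are a more careful packaging than the paper's informal argument, but the structure is identical.

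One small imprecision in the asynchronous liveness step: you write that $\OEC$ eventually reconstructs $y$ ``where $t_s<n-2t_a$, i.e.\ $3t_s+t_a<n$, is used.'' The output-reconstruction actually calls $\OEC(t_s,t_s,\Partyset)$ with the parameter $t=t_s$ (not $t_a$), so the condition from Lemma~\ref{lemma:OECProperties} is $t_s<|\Partyset|-2t_s$, i.e.\ $3t_s<n$, which does follow from $3t_s+t_a<n$ with $t_a\geq 0$ — but it is not the same inequality as $t_s<n-2t_a$. Likewise, the $(\ready,\cdot)$-quorum argument does not by itself need $n>3t_s$: since every honest party reconstructs the same $y$, an inductive argument on the $(t_s{+}1)$-amplification rule already shows that honest parties only ever send $(\ready,y)$, so at most $t_s<2t_s+1$ parties can vouch for any $y'\neq y$. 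The condition $n>3t_s$ enters earlier, in $\OEC$ and in the underlying BA/VSS building blocks, not in this final step. Neither slip affects the correctness of your argument, since the master condition $3t_s+t_a<n$ implies everything that is actually needed.
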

\begin{proof}
Consider a {\it synchronous} network with up to $t_s$ corruptions. From the 
 properties of $\Offline$ in the {\it synchronous} network, 
 at time $\TimeOffline$, the (honest)
 parties output $c_M$ number of
  $t_s$-shared multiplication-triples, during the instance of $\Offline$. 
  From the {\it $t_s$-correctness} property of $\ACS$ in the {\it synchronous} network,
   at time $\TimeACS$, the (honest)
   parties output a common subset of parties
    $\CoreSet$ during the instance of $\ACS$, where {\it all} {\it honest} parties will be present in $\CoreSet$ and where $|\CoreSet| \geq n - t_s$.
  Moreover, corresponding to {\it every} $P_j \in \CoreSet$, there will be some $x^{(j)}$ available with
   $P_j$ (which will be the same as $P_j$'s input for $f$ for an {\it honest} $P_j$), 
  such that $x^{(j)}$ will be
  $t_s$-shared. As $\CoreSet$ will be known {\it publicly}, the parties take a default $t_s$-sharing of $0$ on the behalf of the parties $P_j$
  outside $\CoreSet$ by considering $x^{(j)} = 0$.
  Since $\TimeACS < \TimeOffline$, it follows that at time $\TimeOffline$, the parties will hold $t_s$-sharing of $c_M$ multiplication-triples
  and $t_s$-sharing of $x^{(1)}, \ldots, x^{(n)}$. 
  
  The circuit-evaluation will take $D_M \cdot \Delta$ time. This follows from the fact that linear gates are
   evaluated locally, while all the {\it independent} multiplication gates can be evaluated in parallel
  by running the corresponding instances of $\BatchBeaver$ in {\it parallel}, where each such instance requires $\Delta$ time.
   From the properties of $\BatchBeaver$
   in the {\it synchronous} network, the
  multiplication-gates will be evaluated correctly and hence, during the output-computation phase, the parties will hold a 
  $t_s$-sharing of $y$, where 
  $y = f(x^{(1)}, \ldots, x^{(n)})$. From the properties of $\OEC$, it will take $\Delta$ time for every party to reconstruct $y$. 
  Hence, during the termination phase,
  {\it all} honest parties will send a $\ready$ message for $y$. Since there are at least $2t_s + 1$ honest parties, every honest party will then terminate with output $y$ at time
  $\TimeOffline + (D_M + 2) \cdot \Delta$. By substituting the values of $\TimeOffline, \TimeTripSh, \TimeACS,
   \TimeHSh, \TimeWPS, \TimeBCAST, \TimeHBA, \TimeSBA$ and
  $\TimeABA$ and by noting that all instances of $\BCAST$ in $\PiMPC$ are invoked with $t = t_s$, we get that
  the parties terminate the protocol at time $(120n + D_M + 6k -20) \cdot \Delta$, where $k$ is the constant from Lemma \ref{lemma:ABAGuarantees}, as determined by the underlying
       (existing) perfectly-secure ABA protocol $\PiABA$.

      The proof of the properties
       in an {\it asynchronous} network is similar as above, except that we now use the security properties of the building blocks
  $\Offline, \ACS, \BatchBeaver$ and $\RecPriv$ in the {\it asynchronous} network. During the termination phase,
   at most $t_a$ {\it corrupt} parties can send $\ready$ messages for $y' \neq y$
  and there will be at least $2t_s + 1$ honest parties, who eventually send $\ready$ messages for $y$. Moreover, if some honest party $P_h$ terminates with output $y$, then every honest party eventually
  terminates the protocol with output $y$. 
  This is because $P_h$ must have received $\ready$ messages for $y$ from at least $ t_s + 1$ {\it honest} parties before termination, which are eventually delivered to {\it every} 
  honest party. Consequently, irrespective of which stage of the protocol an honest party is in, every honest party (including $P_h$) eventually
   sends a $\ready$ message for $y$ which are eventually delivered. As there
  are at least $2t_s + 1$ honest parties, this implies that every honest party eventually terminates with output $y$.
  
  From the {\it $t_s$-privacy} property of $\ACS$, 
   corresponding to every {\it honest} $P_j \in \CoreSet$, the input $x^{(j)}$ will be random from the point of view
  of the adversary. Moreover, from the properties
   of $\Offline$, the multiplication-triples generated through $\Offline$ will be random from the point of view of the
  adversary. During the evaluation of linear gates, no interaction happens among the parties and hence, no additional information about the inputs of the honest parties is revealed.
  The same is true during the evaluation of multiplication-gates as well, which follows from the properties of $\BatchBeaver$.

  The communication complexity of the protocol follows from the communication complexity of $\Offline, \ACS$ and $\BatchBeaver$.
 \end{proof}
\section{Conclusion and Open Problems}
In this work, we presented the {\it first} best-of-both-worlds 
 perfectly-secure MPC protocol,
  which remains secure both in a synchronous as well as an asynchronous network. 
  To design the protocol, we presented a best-of-both-worlds perfectly-secure VSS protocol
  and a best-of-both-worlds perfectly-secure BA protocol. 
   Our work leaves the following
  interesting open problems.
    \begin{myitemize}
     \item[--] We could not prove whether the condition $3t_s + t_a < n$ is also necessary for any best-of-both-worlds
     perfectly-secure MPC protocol and conjecture that it is indeed the case.
    \item[--] Our main focus in this work is on the {\it existence} of 
    best-of-both-worlds perfectly-secure MPC protocols. Improving the efficiency of the protocol is left open for future work.
    \end{myitemize}
  \paragraph{\bf Acknowledgements:} We would like to sincerely thank the anonymous reviewers of PODC 20222 for their excellent
  reviews on the preliminary version of this article, which got published as an extended abstract.


\bibliographystyle{plain}

\bibliography{main}

\appendix

\section{Properties of the Existing (Asynchronous) Primitives}
\label{app:ExistingPrimitives}
In this section we discuss the existing asynchronous primitives in detail.
\subsection{Online Error-Correction (OEC)}
 The OEC procedure uses a Reed-Solomon (RS) error-correcting procedure $\RSDec(d, r, {\cal W})$,
   that takes as input a set ${\cal W}$ of distinct points
   on a $d$-degree polynomial and tries to output a $d$-degree polynomial,
   by correcting at most $r$  {\it incorrect} points in ${\cal W}$. Coding theory \cite{MS81}
    says that RS-Dec can correct up to $r$ errors in ${\cal W}$ and
   correctly interpolate back the original polynomial if and only if $|{\cal W}| \geq d + 2r + 1$ holds. There are several efficient
   implementations of $\RSDec$ (for example, the algorithm of Berlekamp-Welch).
   
   Suppose $\Partyset' \subseteq \Partyset$ contains at most $t$ corrupt parties
    and let there exist some $d$-degree polynomial $q(\cdot)$, with every (honest)
   $P_i \in \Partyset'$ having a point $q(\alpha_i)$. 
       The goal is 
    to make some {\it designated} party $P_R$ reconstruct $q(\cdot)$. 
   For this, each $P_i \in \Partyset'$ sends $q(\alpha_i)$ to $P_R$, who then applies the OEC procedure $\OEC$ as described in Fig \ref{fig:OEC}.
   \begin{protocolsplitbox}{$\OEC(d, t, \Partyset')$}{The online error-correction procedure.}{fig:OEC}
 {\bf Setting}: There exists a subset of parties $\Partyset'$ containing at most $t$ corrupt parties,
 with each $P_i \in \Partyset'$ having a point $q(\alpha_i)$ on some $d$-degree polynomial $q(\cdot)$.
  Every (honest) party in $\Partyset'$ is supposed to send its respective point to $P_R$, who is designated to reconstruct $q(\cdot)$. 
 \begin{myitemize}
    \item[--] \textbf{Output Computation} --- For $r = 0, \ldots, t$, party $P_R$ does the following in iteration $r$:
       \begin{myitemize}
         \item[--] Let ${\cal W}$ denote the set of parties in $\Partyset'$ from whom $P_R$ has received the points and let $\mathcal{I}_r$ denote the points received
	from the parties in ${\cal W}$, when ${\cal W}$ contains exactly $d + t + 1 + r$ parties.
         \item[--] Wait until ${\cal W} \geq d + t + 1 + r$. Execute $\RSDec(d, r, {\cal I}_r)$ to get a $d$-degree polynomial,
         say $q_r(\cdot)$. If no
	polynomial is obtained, then skip the next step and proceed to the next iteration.
         \item[--] If for at least $d + t + 1$ values $v_i \in {\cal I}_r$ it holds that $q_r(\alpha_i) = v_i$, then output $q_r(\cdot)$. Otherwise,
	proceed to the next iteration.        
       \end{myitemize}       
      \end{myitemize}
\end{protocolsplitbox}
\begin{lemma}[\cite{CanettiThesis}]
\label{lemma:OECProperties}
Let $\Partyset' \subseteq \Partyset$ contain at most $t$ corrupt parties
    and let there exist some $d$-degree polynomial $q(\cdot)$, with every (honest)
   $P_i \in \Partyset'$ having a point $q(\alpha_i)$. 
   Then the OEC protocol prescribed in Fig \ref{fig:OEC} 
    achieves the following
  for an honest $P_R$ in the presence of up to $t$ corruptions.
       \begin{myitemize}
       \item[--] If $d < (|\Partyset'| - 2t)$, then in a  {\it synchronous} network, it takes at most $\Delta$ time
       for $P_R$ to output $q(\cdot)$. And 
      in an {\it asynchronous} network, $P_R$ eventually outputs $q(\cdot)$.
       \item[--] If $P_R$ obtains any output, then irrespective of the network type, the output polynomial is the same as
       $q(\cdot)$.
         \item[--] The protocol incurs a communication of $\Order(n \log{|\F|})$ bits from the honest parties.
   \end{myitemize}
\end{lemma}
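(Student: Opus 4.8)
The plan is to establish the three bullets in the order: soundness of any output produced, then liveness together with the $\Delta$/eventual-termination timing, then the (immediate) communication bound. I would deal with soundness first, because it is independent of both the network type and the degree condition, and because it is exactly what is needed to rule out a spurious premature output in the liveness argument. Concretely: suppose $P_R$ outputs a polynomial $q_r(\cdot)$ in some iteration $r$. By the final check of the procedure, $q_r(\cdot)$ agrees with at least $d+t+1$ of the points in $\mathcal{I}_r$; the senders of those points lie in $\Partyset'$, which contains at most $t$ corrupt parties, so at least $d+1$ of the agreeing points were sent by honest parties and hence lie on $q(\cdot)$. Two $d$-degree polynomials that agree on $d+1$ points coincide, so $q_r(\cdot)=q(\cdot)$; this is the second bullet, and it holds irrespective of the network.

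For liveness and the timing claims I would invoke the hypothesis $d<|\Partyset'|-2t$, i.e.\ $|\Partyset'|\ge d+2t+1$, which forces the set $\Honest$ of honest parties in $\Partyset'$ to have size at least $d+t+1$. In a synchronous network every honest party in $\Partyset'$ delivers its point to $P_R$ within time $\Delta$ of $P_R$ invoking $\OEC$, and in an asynchronous network all these points are eventually delivered; in either case $P_R$ eventually (resp.\ by time $\Delta$) holds at least $d+t+1$ honest points. Let $r^\star\le t$ be the last iteration $P_R$ reaches and let $\mathcal{I}_{r^\star}$ be the associated point set, so $|\mathcal{I}_{r^\star}|=d+t+1+r^\star$. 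I would then check that $\mathcal{I}_{r^\star}$ contains at least $d+t+1$ honest points --- all of $\Honest$ when $r^\star<t$, and at least $d+2t+1-t$ of the first $d+2t+1$ received points when $r^\star=t$ --- so that it carries at most $r^\star$ erroneous points. Since $r^\star\le t$ we also have $|\mathcal{I}_{r^\star}|=d+t+1+r^\star\ge d+2r^\star+1$, so the cited Reed--Solomon bound applies and $\RSDec(d,r^\star,\mathcal{I}_{r^\star})$ returns a $d$-degree polynomial; by the soundness paragraph that polynomial is $q(\cdot)$, and the consistency check passes because the $\ge d+t+1$ honest points in $\mathcal{I}_{r^\star}$ all lie on $q(\cdot)$. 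Hence $P_R$ outputs $q(\cdot)$ no later than iteration $r^\star$, which is by time $\Delta$ in the synchronous case and eventually in the asynchronous case. The communication bound is then immediate: each of the at most $n$ parties in $\Partyset'$ sends a single element of $\F$ to $P_R$, for a total of $\Order(n\log|\F|)$ bits.

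I expect the bookkeeping in the liveness step to be the main obstacle, namely verifying --- uniformly over all adversarial choices of which corrupt parties in $\Partyset'$ send points and which (possibly malformed) values they send --- that the last reached iteration $r^\star$ is simultaneously (i) at least the number of erroneous points in $\mathcal{I}_{r^\star}$, so that RS decoding corrects them; (ii) at most $t$, so that the decoding threshold $|\mathcal{I}_{r^\star}|\ge d+2r^\star+1$ is met; and (iii) large enough that the $\ge d+t+1$ honest points clear the final consistency test. The only other points I would be careful about are the two bookkeeping edge cases --- $r^\star<t$ versus $r^\star=t$ (where $\mathcal{I}_t$ is frozen at the first $d+2t+1$ points received) --- and the phrasing of the synchronous $\Delta$ bound relative to the moment $P_R$ starts running $\OEC$, since the procedure has no explicit round structure.
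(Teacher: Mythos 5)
Your argument is correct and essentially the same as the paper's: soundness via $d+1$ honest agreeing points, liveness via the Reed--Solomon threshold $|\mathcal{I}_r|\ge d+2r+1$ together with a count showing some reached iteration sees at most $r$ corruptions among $\ge d+t+1$ honest points, and the obvious communication bound. The only real difference is parameterization: the paper explicitly constructs the successful iteration as $t-\hat r_2$, where $\hat r_2$ is the number of corrupt parties in $\Partyset'$ that never send, whereas you reason about the last iteration $r^\star$ that $P_R$ can ever reach. These are the same argument in disguise, but your framing does require you to disentangle, in the $r^\star<t$ subcase, whether $P_R$ already output at $r^\star$ (in which case $\mathcal{I}_{r^\star}$ need not contain all of $\Honest$, and soundness alone finishes) from the case where the point stream genuinely dries up at $d+t+1+r^\star$ values (where $\mathcal{I}_{r^\star}$ really is all received points and does contain $\Honest$) --- exactly the bookkeeping hazard you flagged. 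The paper's choice of iteration sidesteps this by never appealing to a "last reached" iteration.
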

\begin{proof}
The {\it communication complexity} follows from the fact that each party send its point to $P_R$. 
  We next show that if $P_R$ outputs a $d$-degree polynomial, say $q_r(\cdot)$, 
  during the iteration number $r$, then $q_r(\cdot)$ is the same as $q(\cdot)$, {\it irrespective} of the network type.
  However, this easily follows from the fact that  
   $q_r(\cdot)$ is consistent with $d + t + 1$ values from ${\cal I}_r$, out of which at least $d + 1$ values belong to the
  {\it honest} parties and thus, they lie on the polynomial $q(\cdot)$ as well. Furthermore, two {\it different} 
   $d$-degree polynomials can have at most $d$ distinct points in common.

We next prove the first property, assuming an {\it asynchronous} network. We first argue that an {\it honest}
 $P_R$ {\it eventually} obtains some output, provided $d < (|\Partyset'| - 2t)$. 
   Let adversary control $\hat{r}$ parties in $\Partyset'$, where $\hat{r} \leq t$. 
 Assume that $\hat{r}_1$ corrupt parties send incorrect points to $P_R$ and the remaining 
 $\hat{r}_2 = \hat{r} - \hat{r}_1$ corrupt parties do not send anything at all. Then, consider iteration number $t - \hat{r}_2$. 
 Since $\hat{r}_2$ parties never send any value, $P_R$ will receive at least $d + t + 1 + t - \hat{r}_2$ distinct points on $q(\cdot)$,
 of which $\hat{r}_1$ could be corrupted. Since $|{\cal I}_{d + t + 1 + t - \hat{r}_2}| \geq d + 2\hat{r}_1 + 1$ holds, 
  the algorithm $\RSDec$ will correct $\hat{r}_1$ errors and will return the polynomial $q(\cdot)$ during the iteration number $t - \hat{r}_2$.
   Therefore $P_R$ will obtain an output, latest after $(t - \hat{r}_2)$ iterations.
   
 The proof of the first property in the {\it synchronous} network is the same as above. In this case, it should be noted that the points of all honest parties reach $P_R$ within $\Delta$ time.
\end{proof}

\subsection{Bracha's Acast Protocol}
Bracha's Acast protocol \cite{Bra84} tolerating $t < n/3$ corruptions is presented in Fig \ref{fig:ACast}.
 \begin{protocolsplitbox}{$\PiACast$}{Bracha's Acast protocol. 
  The above code is executed by every $P_i \in \Partyset$ including the sender $\Sender$.}{fig:ACast}
\justify
   \begin{mydescription}
       \item[1.] If $P_i = \Sender$, then on input $m$,  send $(\init, \Sender, m)$ to all the parties.
        \item[2.] Upon receiving the message $(\init, \Sender, m)$ from $\Sender$, send 
       $(\echo, \Sender, m)$ to all
            the parties. Do not execute this step, more than once.
      \item[3.] Upon receiving $(\echo, \Sender, m^{\star})$ from $n - t$ parties,
            send $(\ready, \Sender, m^{\star})$ to all the parties.
      \item[4.] Upon receiving $(\ready, \Sender, m^{\star})$ from $t + 1$ parties, send
            $(\ready, \Sender, m^{\star})$ to all the parties.
       \item [5.] Upon receiving  $(\ready, \Sender, m^{\star})$ from $n - t$ parties, output $m^{\star}$.
   \end{mydescription}
\end{protocolsplitbox}

We now prove the properties of the protocol $\PiACast$. \\~\\
{\bf Lemma \ref{lemma:Acast}.}
{\it Bracha's Acast protocol $\PiACast$ achieves the following in the presence of up to $t < n/3$ corruptions, where $\Sender$
 has an input $m \in \{0, 1 \}^{\ell}$ for the protocol.
 \begin{myitemize}
\item[--] {\it Asynchronous Network}: 
     \begin{myitemize}
      \item[--] {\it (a) $t$-Liveness}: If $\Sender$ is {\it honest}, then all honest parties eventually obtain some output.
       \item[--] {\it (b) $t$-Validity}: If $\Sender$ is {\it honest}, then every honest party with an output, outputs $m$.
       \item[--] {\it (c) $t$-Consistency}: If $\Sender$ is {\it corrupt} and some honest party outputs $m^{\star}$,
        then every honest party {\it eventually} outputs $m^{\star}$.
    \end{myitemize}    
  \item[--] {\it Synchronous Network}:
         \begin{myitemize}
            \item[--] {\it (a) $t$-Liveness}: If $\Sender$ is {\it honest}, then all honest parties obtain an output within time $3\Delta$.
            \item[--] {\it (b) $t$-Validity}: If $\Sender$ is {\it honest}, then every honest party with an output, outputs $m$.
            \item[--] {\it (c) $t$-Consistency}: If $\Sender$ is {\it corrupt} and some honest party outputs $m^{\star}$ at time $T$, then every
             honest $P_i$ outputs $m^{\star}$ by the end of time $T + 2 \Delta$.
        \end{myitemize}      
  \item[--] Irrespective of the
  network type, $\Order(n^2 \ell)$ bits are communicated by the honest parties.
\end{myitemize}
}
\begin{proof}
We first prove the properties assuming an {\it  asynchronous} network with up to $t$ corruptions. 
 We start with the {\it validity} and {\it liveness} properties, for which we consider an {\it honest} $\Sender$. We show that 
  all honest parties eventually output $m$. 
  This is because all honest parties complete steps $2-5$ in the protocol,
   even if the corrupt parties do not send their messages. This is because there
  are at least $n - t$ honest parties, whose messages are eventually selected for delivery. Moreover, the adversary may send 
     at most $t$ $\echo$ messages for $m'$, where $m' \neq m$, on behalf of corrupt parties.
   Similarly, the adversary may send 
     at most $t$ $\ready$ messages for $m'$, where $m' \neq m$, on behalf of corrupt parties.
   Consequently, no honest party ever generates a $\ready$ message for $m'$,  neither in step $3$, nor in step $4$.  
   This is because $n - t > t$, as $t < n/3$.
   
   For {\it consistency}, we consider a {\it corrupt} $\Sender$ and let $P_h$ be an {\it honest} party, who outputs $m^{\star}$.
    We next show that all honest parties
   eventually outputs $m^{\star}$.
      Since $P_h$ outputs $m^{\star}$, it implies that it receives $n - t$ $\ready$ messages for $m^{\star}$ during step $5$ of the protocol.
   Let $\Honest$ be the set of {\it honest} parties whose $\ready$ messages are received by $P_h$ during step $5$. It is easy to see that
   $|\Honest| \geq t+1$. The $\ready$ messages of the parties in $\Honest$ are eventually delivered to every honest party
   and hence each honest party (including $P_h$) eventually executes step $4$ and sends a
   $\ready$ message for $m^{\star}$.
   As there are at least $n - t$ honest parties, it follows that eventually $n - t$ $\ready$ messages for $m^{\star}$ are delivered
   to every honest party (irrespective of whether adversary sends all the required
   messages). This guarantees that all honest parties eventually obtain some output. To complete the proof,
   we show that this output is $m^{\star}$.
   
    On contrary, let
   $P_{h'}$ be another honest party, different from $P_h$, who outputs
   $m^{\star \star} \neq m^{\star}$. This implies that
   $P_{h'}$ received $\ready$ messages for $m^{\star \star}$ from at least
   $t + 1$ {\it honest} parties during step 5 of the protocol.
   Now from the protocol steps, it follow that an honest party generates a $\ready$
   message for some potential $m$, only if it either receives $n - t$ $\echo$ messages for the $m$ during step 3
   or $t  + 1$ $\ready$ messages for $m$ (one of which has to come from an honest party) during step 4.
   So all in all, in order that $n - t$ $\ready$ messages are eventually generated for some potential $m$ during step 5, it must be
   the case that some honest party has to receive $n - t$ $\echo$ messages for $m$ during step 2 and generate a
   $\ready$ message for $m$. Now since $P_h$ receives $n - t$ $\ready$ messages for $m^{\star}$, 
   some honest party must have received $n - t$ $\echo$ messages for $m^{\star}$, at most $t$ of which could come from the corrupt parties.
   Similarly, since $P_{h'}$ receives $n - t$ $\ready$ messages for $m^{\star \star}$, 
   some honest party must have received $n - t$ $\echo$ messages for $m^{\star \star}$.
   However, since $n - t > 2t$, it follows that in order that $n - t$ $\echo$ messages are produced for both $m^{\star}$ as well as $m^{\star \star}$,
   it must be the case that some honest party must have generated an $\echo$ message, both for $m^{\star}$, as well
   as $m^{\star \star}$ during step 2, which is impossible. This is because an honest party executes step 2 at most once
   and hence generates an $\echo$ message at most once.
   
   The proofs of the properties in the {\it synchronous} network closely follow the proofs of the properties in the {\it asynchronous} network.
   If $\Sender$ is {\it honest}, then it will send the $\init$ message for $m$ to all the parties, which will be delivered within time
   $\Delta$. Consequently, every
   {\it honest} party will send an $\echo$ message for $m$ to all the parties, which will be delivered within time $2\Delta$. 
   Hence every
   {\it honest} party will send a $\ready$ message for $m$ to all the parties, which will be delivered within time
   $3 \Delta$. As there are at least $n - t$ honest parties, every honest
   party will receive $\ready$ messages for $m$ from at least $n - t$ parties within time $3 \Delta$ and output $m$.
   
   If $\Sender$ is {\it corrupt} and some honest party $P_h$ outputs $m^{\star}$ at time $T$, then it implies that
   $P_h$ has received $\ready$ messages for $m^{\star}$ during step $5$ of the protocol at time $T$ from a set $\Honest$ of
   at least $t + 1$ honest parties. 
   These ready messages are guaranteed to be received by every other honest party within time $T + \Delta$. Consequently, 
   every {\it honest} party who has not yet executed step $4$ will do so  and will send a $\ready$ message for $m^{\star}$ at time $T + \Delta$.
   Consequently, by the end of time $T + \Delta$, every honest party would have sent a $\ready$ message for $m^{\star}$ to every other honest party, which will be
   delivered within time $T + 2\Delta$.
   Hence, every honest party will output $m^{\star}$ latest at time $T + 2 \Delta$.
   
   The communication complexity (both in a synchronous as well as asynchronous network) simply follows from the fact that every party may need to send 
   an $\echo$ and  $\ready$ message for $m$ to every other party.
\end{proof}

\section{An Overview of the Existing ABA Protocols \cite{ADH08,BCP20}}
\label{app:ExistingABAAnalysis}
In this section, we give a very high level overview of the existing 
 {\it $t$-perfectly-secure} ABA protocols of \cite{ADH08,BCP20}. Both these protocols 
 are perfectly-secure and can tolerate up to $t < n/3$ corruptions. The protocols
  follow the standard framework of Rabin and Ben-Or \cite{Rab83,Ben83}, which uses two building-blocks to get a BA protocol. 
   The first building-block is a {\it voting} protocol (often called {\it gradecast} or {\it graded consensus} in the literature) 
   and which is a 
  deterministic protocol. 
  The second building-block is a {\it coin-flipping} protocol which is a randomized protocol.
  In the sequel, we review these building blocks and discuss how they are ``combined" to get an ABA protocol. While presenting these building-blocks, unless it is explicitly stated,
   we assume an {\it asynchronous} network. Also, for simplicity, we present these building-blocks {\it without} specifying any {\it termination} criteria and hence, the parties may keep on 
  running these building-blocks (as well as the ABA protocol)
   even after obtaining an output.\footnote{Recall that we do not put any termination criteria for any of our sub-protocols, as the termination of the MPC
   protocol will automatically ensure that all the underlying sub-protocols also get terminated.}
\subsection{The Voting Protocol}
Informally, the voting protocol does ``whatever can be done deterministically'' to reach agreement.
In a voting protocol, every party has
a single bit as input.  The protocol tries to find out whether there is a detectable
majority for some value among the inputs of the parties. In the protocol, each
party's output can have {\it five}
different forms:
\begin{myitemize}
	\item[--] For $\sigma \in \{0,1\}$, the output $(\sigma, 2)$ stands for ``overwhelming majority
	for $\sigma$'';
	\item[--] For $\sigma \in \{0, 1\}$, the output $(\sigma, 1)$ stands for ``distinct majority for $\sigma$'';
	\item[--] The output $(\Lambda, 0)$ stands for ``non-distinct majority''.
\end{myitemize}
The protocol code of the voting protocol taken from \cite{CanettiThesis} is presented in Fig \ref{fig:Vote}. 
\begin{protocolsplitbox}{$\Vote$}{The vote protocol. The above code is executed by every $P_i \in \Partyset$.}{fig:Vote}
\justify
\begin{myitemize}
\item[--] On having the input $x_i$, Acast $(\texttt{input}, P_i, x_i)$.
\item[--] Create a dynamic set ${{\cal X}}_i$ which is initialized to $\emptyset$.
  Add $(P_j, x_j)$ to ${\cal X}_i$ if $(\texttt{input},P_j, x_j)$ is received from the Acast of $P_j$.
\item[--] Wait until $|{\cal X}_i| =n-t$. Then assign $X_i = {\cal X}_i$, set $a_i$ to the majority bit among $\{x_j \; | \; (P_j,x_j) \in X_i\}$.
 Acast $(\texttt{vote}, P_i, X_i, a_i)$.
					\item[--] Create a dynamic set ${\cal Y}_i$, which is initialized to $\emptyset$.
					Add $(P_j, X_j, a_j)$ to ${\cal Y}_i$
					if $(\texttt{vote},P_j, X_j, a_j)$ is received from the Acast  of $P_j$,
					$X_j \subseteq {\cal X}_i$, and $a_j$ is the majority bit of $X_j$.
					\item[--] Wait until $|{\cal Y}_i| =n-t$. Then assign $Y_i = {\cal Y}_i$, set $b_i$
					to the majority bit among $\{a_j \; | \; (P_j, X_j, a_j) \in Y_i\}$ and
					Acast $(\texttt{re-vote}, P_i, Y_i, b_i)$.
					\item[--] Create a  set $Z_i$, which is initialized to $\emptyset$.
					Add $(P_j,Y_j,b_j)$ to $Z_i$
					if $(\texttt{re-vote},P_j,Y_j,b_j)$ is received from the Acast of $P_j$,
					$Y_j \subseteq {\cal Y}_i$, and $b_j$ is the majority bit of $Y_j$.
					\item[--] Wait until $|Z_i| =  n-t$. Then compute the output as follows.
					\begin{myitemize}
					 \item[--] If all the parties $P_j \in Y_i$ have the same
					{\texttt{vote}} $a_j = \sigma$, then output $(\sigma,2)$.
					 \item[--] Else if all the parties $P_j \in Z_i$ have the same {\texttt{re-vote}}
					$b_j = \sigma$, then output $(\sigma,1)$.
					\item[--] Else output $(\Lambda,0)$.
					\end{myitemize}

\end{myitemize}
\end{protocolsplitbox}
The properties of the voting protocol are stated in Lemma \ref{lemma:Vote}. While these properties hold in an {\it asynchronous} network, it automatically implies that they hold even for 
 a {\it synchronous} network. We refer the readers to \cite{CanettiThesis,BCP20} for the proof of these properties.
\begin{lemma}[\cite{CanettiThesis,BCP20}]
\label{lemma:Vote}
Protocol $\Vote$ achieves the following properties, both in the synchronous as well as asynchronous network, if the adversary
 corrupts up to $t < n/3$ parties,
 where all the parties participate with an input bit.
\begin{myitemize}
\item[--] If each honest party has the same input $\sigma$,
	then each honest party outputs $(\sigma, 2)$;
\item[--] If  some honest party outputs $(\sigma, 2)$, then every other honest
	party outputs either $(\sigma,2)$ or $(\sigma,1)$;
\item[--] If some honest party outputs $(\sigma,1)$ and no honest party outputs
	$(\sigma, 2)$ then each honest party outputs either
	$(\sigma, 1)$ or $(\Lambda,0)$.
\item[--] The protocol incurs a communication of $\Order(n^3)$ bits from the honest parties.
\end{myitemize}
\end{lemma}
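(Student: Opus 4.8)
Protocol $\Vote$ is a purely deterministic relay protocol built on top of Bracha's Acast, so the plan is to prove each of the four bullets by reduction to two workhorses: (i) the liveness, validity and consistency guarantees of Acast (Lemma~\ref{lemma:Acast}), which make the broadcast objects $X_j$, $Y_j$, $a_j$, $b_j$ \emph{globally consistent} across honest receivers even though the receiving buffers $\mathcal{X}_i,\mathcal{Y}_i$ grow over time while the frozen snapshots $X_i,Y_i,Z_i$ do not; and (ii) the elementary counting fact that any two subsets of $\Partyset$ of size at least $n-t$ overlap in at least $n-2t\ge t+1$ parties, hence in at least one honest party, and that such overlap dominates the at-most-$t$ adversarial entries of a size-$(n-t)$ set. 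As a preliminary step I would establish \emph{liveness}: since there are at least $n-t$ honest parties, every honest party's $\texttt{input}$, $\texttt{vote}$ and $\texttt{re-vote}$ Acast is eventually delivered to every honest $P_i$ and (using Acast consistency to pass the admissibility checks $X_j\subseteq\mathcal{X}_i$, $Y_j\subseteq\mathcal{Y}_i$) eventually accepted, so each waiting condition $|\mathcal{X}_i|=n-t$, $|\mathcal{Y}_i|=n-t$, $|Z_i|=n-t$ is eventually met and every honest party produces an output. The synchronous case needs no separate treatment: every argument below uses only eventual delivery plus the validity/consistency of Acast, all of which hold verbatim in a synchronous network.

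For the first bullet (validity), assume every honest input equals $\sigma$. Any snapshot $X_i$ has $n-t$ entries, at most $t$ from corrupt parties, so more than $t$ of them carry $\sigma$; hence $a_i=\sigma$ for every honest $P_i$, and the same counting shows that \emph{every} vote message $(\texttt{vote},P_j,X_j,a_j)$ that can ever be accepted (honest or corrupt) must have $a_j=\sigma$, because its $X_j$ is a size-$(n-t)$ set with more than $t$ honest $\sigma$-entries. Therefore each $Y_i$ consists entirely of $\sigma$-votes and every honest party outputs $(\sigma,2)$.

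For the second bullet, suppose honest $P_i$ outputs $(\sigma,2)$, i.e.\ all $n-t$ parties in $Y_i$ voted $\sigma$. For any honest $P_{i'}$, the overlap $|Y_i\cap Y_{i'}|\ge n-2t\ge t+1$ forces at least $t+1$ of the $n-t$ votes in $Y_{i'}$ to be $\sigma$, so $b_{i'}=\sigma$; thus every honest re-vote is $\sigma$. Applying the same overlap to the $Y_k$-component of any accepted re-vote message $(\texttt{re-vote},P_k,Y_k,b_k)$ (using the admissibility conditions on relayed sets, in particular $|Y_k|=n-t$) shows $b_k=\sigma$, so every entry of every honest $Z_{i'}$ carries re-vote $\sigma$. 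Consequently $P_{i'}$ outputs $(\sigma,2)$ if its $Y_{i'}$ is unanimous and $(\sigma,1)$ otherwise, which is exactly the claim. The third bullet is the analogous argument shifted by one round: if honest $P_i$ outputs $(\sigma,1)$ and no honest party outputs $(\cdot,2)$, then $Z_i$ is $n-t$ re-votes equal to $\sigma$, so at least $t+1$ honest parties re-voted $\sigma$; the second bullet then rules out any honest output $(\bar\sigma,2)$ (which would force all honest outputs into $(\bar\sigma,\cdot)$, contradicting $P_i$), and the overlap $|Z_i\cap Z_{i'}|\ge t+1$ rules out any honest output $(\bar\sigma,1)$ as well as $(\Lambda,0)$, leaving only $(\sigma,1)$ and $(\Lambda,0)$. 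The communication bound follows by accounting for the $\Order(n)$ Acast invocations and the sizes of the broadcast sets, together with Lemma~\ref{lemma:Acast}. (Since the detailed case analysis is essentially that of \cite{CanettiThesis,BCP20}, I would reproduce it in full while keeping the above three mechanisms — Acast consistency, the liveness bootstrap, and the $n-2t\ge t+1$ overlap — as the backbone.)

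The step I expect to be the main obstacle is precisely the bookkeeping in the second and third bullets: nailing down which snapshot sets are guaranteed to intersect given that honest parties freeze $X_i,Y_i,Z_i$ at different asynchronous times, and arguing that a corrupt party cannot slip an accepted vote/re-vote message carrying the ``wrong'' majority bit past the size-and-majority admissibility checks on the relayed sets. This is where the $t<n/3$ threshold is used tightly — one needs $n-2t\ge t+1$, i.e.\ $n>3t$, both for every relevant intersection to contain an honest party and for the $\sigma$-entries it contributes to outnumber the at-most-$t$ adversarial entries of a size-$(n-t)$ set.
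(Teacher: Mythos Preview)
The paper does not prove this lemma at all: immediately after stating it, the authors write ``We refer the readers to \cite{CanettiThesis,BCP20} for the proof of these properties.'' So there is no paper-proof to compare against; your proposal is essentially reconstructing the standard argument from those references, and the backbone you identify --- Acast consistency for global agreement on the broadcast objects, the liveness bootstrap from $n-t$ honest participants, and the overlap bound $|A\cap B|\ge n-2t>t$ for any two $(n-t)$-subsets --- is exactly the right one and yields correct proofs of the first three bullets.

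Two small points. First, in your third-bullet sketch the phrase ``rules out any honest output $(\bar\sigma,1)$ as well as $(\Lambda,0)$, leaving only $(\sigma,1)$ and $(\Lambda,0)$'' is self-contradictory as written; the overlap argument rules out $(\bar\sigma,1)$, the hypothesis rules out $(\sigma,2)$, and the second bullet rules out $(\bar\sigma,2)$, so what remains is $(\sigma,1)$ or $(\Lambda,0)$ --- just drop the stray ``as well as $(\Lambda,0)$''. Second, your communication-complexity sentence is too quick: with the protocol as literally written, each party Acasts sets $X_i$ and $Y_i$ whose na\"ive encoding is $\Theta(n)$ and $\Theta(n^2)$ bits respectively, and Lemma~\ref{lemma:Acast} charges $\Order(n^2\ell)$ per Acast, which overshoots $\Order(n^3)$; to meet the stated bound you need to argue (or assume) that only party-index bitmasks and the single majority bit are Acast, with receivers reconstructing $X_j,Y_j$ locally from earlier Acasts. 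Since the paper simply cites the bound, you should either make that encoding explicit or likewise defer to \cite{CanettiThesis,BCP20}.
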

 An additional property which protocol $\Vote$ achieves in a {\it synchronous} network is that all {\it honest} parties will have
  their output by the end of time $9\Delta$.
 Intuitively, this is because the protocol involves three different ``phases" of Acast, each of which will produce an output
  within time $3 \Delta$ for {\it honest} sender parties in a {\it synchronous}
 network. Moreover, from Lemma \ref{lemma:Vote}, this output will be $(\sigma, 2)$, if all the {\it honest} parties have the same input $\sigma$.
 We will require this property later while claiming the properties of the resultant ABA protocol in a {\it synchronous} network. Hence, we prove this property.
\begin{lemma}
\label{lemma:VoteSynProperties}
 If the network is synchronous and if the adversary corrupts up to $t < n/3$ parties, 
  then in protocol $\Vote$, all honest parties obtain their output within time $9 \Delta$.
 Moreover, the output will be $(\sigma, 2)$, if all the {\it honest} parties have the same input $\sigma$.
 \end{lemma}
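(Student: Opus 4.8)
The plan is to establish the two claims separately: the $9\Delta$ time bound follows by tracking the three sequential ``layers'' of Acast inside $\Vote$ using the synchronous guarantees of $\PiACast$ (Lemma~\ref{lemma:Acast}), while the statement about the output value is nothing more than the first property of Lemma~\ref{lemma:Vote}. First I would note that in a synchronous network all honest parties begin the instance of $\Vote$ at the same (local) time, which I normalise to $0$, so every honest $P_i$ Acasts $(\texttt{input}, P_i, x_i)$ at time $0$. By the synchronous $t$-liveness and $t$-validity of $\PiACast$, within time $3\Delta$ every honest party receives the input value of every honest party from the corresponding Acasts; since there are at least $n-t$ honest parties, the dynamic set $\mathcal{X}_i$ of each honest $P_i$ has size at least $n-t$ by time $3\Delta$, so $P_i$ freezes $X_i$, computes $a_i$, and Acasts $(\texttt{vote}, P_i, X_i, a_i)$ by time $3\Delta$.

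Next I would push this one layer further. For an honest $P_j$, its vote Acast reaches every honest $P_i$ within $3\Delta$ of being initiated, hence by time $6\Delta$; moreover the frozen set $X_j$ consists of input-pairs that $P_j$ received from Acasts by time $3\Delta$, so by the synchronous $t$-consistency of $\PiACast$ (the ``$+2\Delta$'' guarantee) every such pair lies in $\mathcal{X}_i$ by time $5\Delta$, making the test $X_j \subseteq \mathcal{X}_i$ succeed, and the embedded majority-bit test on $X_j$ succeeds because $P_j$ is honest. Thus $|\mathcal{Y}_i| \ge n-t$ by time $6\Delta$ and every honest $P_i$ Acasts $(\texttt{re-vote}, P_i, Y_i, b_i)$ by time $6\Delta$. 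Running the identical argument once more — the re-vote Acast of an honest $P_j$ reaches all honest parties by time $9\Delta$, and the inclusion test $Y_j \subseteq \mathcal{Y}_i$ together with its nested majority-bit checks is satisfied by time $8\Delta$, again invoking synchronous $t$-consistency of $\PiACast$ applied both to the re-vote messages and to the input messages underlying them — I obtain that $|Z_i| \ge n-t$ for every honest $P_i$ by time $9\Delta$, so every honest party computes its output within time $9\Delta$.

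Finally, the claim that the common output is $(\sigma,2)$ whenever all honest parties hold input $\sigma$ is precisely the first bullet of Lemma~\ref{lemma:Vote}, which holds in both the synchronous and asynchronous settings, so no additional work is required there. The main obstacle I expect is the careful bookkeeping in the second step: one must verify that none of the subset tests ($X_j \subseteq \mathcal{X}_i$, $Y_j \subseteq \mathcal{Y}_i$) or the embedded majority-bit tests can delay an honest party past the stated deadline. This forces one to use the $t$-consistency of $\PiACast$ rather than only its liveness, and to keep straight that $\mathcal{X}_i$ and $\mathcal{Y}_i$ continue to grow over time while only the capitalised sets $X_i, Y_i$ are frozen — the inclusion checks are against the current (still-growing) sets, which is exactly why the $+2\Delta$ consistency bound suffices.
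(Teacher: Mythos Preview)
Your proposal is correct and follows essentially the same approach as the paper: both arguments track the three sequential Acast layers, invoking synchronous $t$-liveness/$t$-validity to bound when honest parties' messages arrive and synchronous $t$-consistency (the ``$+2\Delta$'' clause) to ensure the inclusion tests $X_j \subseteq \mathcal{X}_i$ and $Y_j \subseteq \mathcal{Y}_i$ succeed in time. The only minor difference is that for the second claim you simply cite the first bullet of Lemma~\ref{lemma:Vote}, whereas the paper reproves it directly by observing that no party can cast a valid \texttt{vote} for $1-\sigma$; your shortcut is perfectly valid since that lemma is stated for both network models.
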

 \begin{proof}
 Consider an arbitrary {\it honest} $P_j$. Party $P_j$ will Acast its input $x _j$ and from the {\it $t$-liveness} and {\it $t$-validity} properties of 
  Acast
  in the {\it synchronous} network, every honest party will receive the output $x_j$, from the corresponding
  Acast instance within time $3\Delta$. As there are at least $n - t$ {\it honest} parties, it implies that every {\it honest} $P_i$ will obtain a set $X_i$ of size $n - t$ within time $3 \Delta$.
  Hence each {\it honest} $P_i$ will Acast a $(\texttt{vote}, P_i, X_i, a_i)$ message latest at time $3 \Delta$ and every honest party receives 
  this message from the corresponding Acast instance within time $6 \Delta$. We also note that if there is a {\it corrupt}
  $P_j$ such that $(P_j, x_j)$ is included by an {\it honest} $P_i$ in its set $X_i$ when $P_i$ Acasts $X_i$, 
   then from the {\it $t$-consistency} property of Acast
   in the {\it synchronous} network, every {\it honest} party $P_k$
  will include $(P_j, x_j)$ in its set ${\cal X}_k$, latest by time $5 \Delta$. This further implies that upon receiving the message $(\texttt{vote}, P_i, X_i, a_i)$
  from the Acast of any {\it honest} $P_i$, all {\it honest} parties $P_k$ will be able to verify this message and include 
  $(P_i, X_i, a_i)$ in their respective ${\cal Y}_k$ sets within time $6 \Delta$.
  
  As there are at least $n - t$ honest parties $P_i$ whose $\texttt{vote}$ messages are received and verified by all honest parties $P_k$ within time $6 \Delta$, 
  it follows that every honest party Acasts a $\texttt{re-vote}$ message, latest at time $6 \Delta$, 
  which is received by every honest party within time $9 \Delta$.
  Moreover, as argued for the case of $\texttt{vote}$ messages, every {\it honest} party will be able to verify these $\texttt{re-vote}$ messages and include in their respective
  $Z_i$ within time $9 \Delta$. Since there are at least $n - t$ honest parties, it follows that the $Z_i$ sets of every honest party will attain the size of
  $n - t$ within time $9 \Delta$ and hence every honest party will obtain an output, latest at time $9 \Delta$.
  
  If all the honest parties have the same input $\sigma$, then there will be at most $t$ {\it corrupt} parties who may Acast $1 - \sigma$. Hence {\it every} party (both honest as well as corrupt)
  will send a $\texttt{vote}$ message only for
   $\sigma$.\footnote{If a {\it corrupt} party $P_j$ sends a $\texttt{vote}$ message for $1 - \sigma$, then it will never be accepted and no honest
  party $P_i$ will ever include $(P_j, X_j, 1 - \sigma)$ in its  ${\cal Y}_i$ set. This is because $1 - \sigma$ will not be the majority among the inputs
  of the honest parties in $X_j$.}  Consequently, every honest party will output $(\sigma, 2)$.  
 \end{proof}
\subsection{Coin-Flipping Protocol}
The {\it coin-flipping} protocol denoted by $\CoinFlip$ (also called as the {\it common-coin} protocol) is an $n$-party (asynchronous) protocol, where the parties have local random inputs
 and the protocol outputs a bit for all the parties. The protocol achieves the following properties in an {\it asynchronous} (and hence {\it synchronous}) network in the
  presence of any $t < n/3$ corruptions.
  \begin{myitemize}
  \item[--] In an {\it asynchronous} network, all honest parties eventually obtain an output, while in a {\it synchronous} network, the honest parties
   obtain an output within some fixed time $c \cdot \Delta$, where $c$ is a {\it publicly-known} constant. 
  \item[--] One of the following holds:
     \begin{myitemize}
      \item[--]  If no party deviates from the protocol, then 
      with probability {\it at least} $p$, the output bits of all the honest parties are same. The probability $p$ where $p< 1$ is often called as the {\it success-probability}
       of the protocol and is a parameter of the protocol.
      \item[--] Else, all honest parties will have the same output bit with probability {\it less than} $p$. But in this case, 
      the protocol allows some {\it honest} party(ies) to {\it locally} identify and shun a (subset) of corrupt
      party(ies) from any future communication. Namely, the protocol {\it locally} outputs ordered pairs of the form $(P_i, P_j)$, 
      where $P_i$ is some {\it honest} party and $P_j$ is some {\it corrupt} party, such that $P_i$ identifies $P_j$ as a corrupt party
      and does not consider any communication from $P_j$ for the rest of the protocol execution. Such pairs are called as {\it local-conflicts}.
      We stress that the local-conflicts are identified only {\it locally}. For instance, if an {\it honest} $P_i$ has shunned a {\it corrupt} $P_j$ during an instance of the
      coin-flipping protocol, then it is {\it not} necessary
      that every other {\it honest} party $P_k$ also shuns $P_j$ during the same instance, as $P_j$ may decide to behave ``honestly" towards $P_k$.

      The coin-flipping protocol of \cite{ADH08} guarantees that at least one {\it new} local-conflict is identified if, during an instance of the coin-flipping protocol,
      the parties obtain the same output bit with probability less than $p$. On the other hand, the coin-flipping protocol of \cite{BCP20} guarantees that 
      $\Theta(n)$ number of {\it new} local-conflicts are identified, if the parties obtain the same output bit with probability less than $p$.
    \end{myitemize}  
  \end{myitemize}
   Protocol $\CoinFlip$ is designed using a {\it weaker} variant of perfectly-secure AVSS called {\it shunning} AVSS (SAVSS), introduced in \cite{ADH08}.
   The SAVSS primitive is weaker than AVSS in the following aspects:
   \begin{myitemize}
   \item[--] It is {\it not} guaranteed that {\it every} honest party obtains a point on $\D$'s sharing-polynomial
   (and hence 
    a share of $\D$'s secret), even if $\D$ is {\it honest};
   \item[--] If $\D$ is {\it corrupt}, then it may not participate with a $t$-degree polynomial and hence, 
   the underlying shared value could be $\bot$, which is different from every element of $\F$;  
   \item[--] Irrespective of $\D$, depending upon the behaviour of the corrupt parties,
    the honest parties later may either reconstruct the same secret as shared by $\D$ or an all-together  different value. However, in the latter case, 
    the protocol ensures that at least one {\it new} local-conflict is identified.
   \end{myitemize}
    In \cite{ADH08}, a perfectly-secure SAVSS protocol is designed with $t < n/3$.
     By executing $n^2$ instances of this protocol in {\it parallel} using the framework of \cite{FM97,CanettiThesis}, a
    coin-flipping protocol is presented in \cite{ADH08}, where the success-probability $p$ is $\frac{1}{4}$.
    The protocol incurs a communication of $\Order(\mbox{poly}(n) \log{|\F|})$ bits from the honest parties.
    
    The coin-flipping protocol of \cite{BCP20} also uses the same framework of  \cite{FM97,CanettiThesis}, but substitutes the SAVSS of \cite{ADH08}
   with a  ``better" and more efficient SAVSS with $t < n/3$. Their SAVSS ensures that 
    $\Theta(n)$ number of {\it new} local-conflicts are identified, if the value reconstructed by the parties is {\it different}
    from the one shared by $\D$. The success-probability $p$ remains $\frac{1}{4}$ and the communication complexity of the protocol is $\Order(\mbox{poly}(n) \log{|\F|})$ bits.
    \subsection{Vote $+$ Coin-Flipping $\Rightarrow$ ABA}
    We now show how to ``combine" protocols $\Vote$ and $\CoinFlip$ to get the protocol $\PiABA$ (see Fig \ref{fig:ABA}). The current description of $\PiABA$ is taken
     from \cite{BKL19}. The protocol consists of several iterations, where each iteration consists of two instances of $\Vote$ protocol and one instance of $\CoinFlip$, which are carefully
     ``stitched" together. 
     
     In the first instance of $\Vote$, the parties participate with their ``current input", which is initialized to their respective bits for ABA in the first iteration.
     Then, independent of the output received from the instance of $\Vote$, the parties participate in an instance of $\CoinFlip$.
     Next, the parties decide their respective inputs for the second instance of $\Vote$ protocol, based on the output they received from the first instance.
     If a party has received the {\it highest} grade (namely $2$) during the first instance of $\Vote$, then the party {\it continues} with 
     the bit received from that $\Vote$ instance for the second $\Vote$ instance. Otherwise, the party {\it switches} to the
     output received from $\CoinFlip$. 
     The output from the second instance of $\Vote$ is
     then set as the modified input for the next iteration, if it is obtained
      with a grade higher than $0$. Otherwise, the output of $\CoinFlip$ is taken as the
     modified input for the next iteration.
     
     If during any iteration a party obtains the highest grade from the second instance of $\Vote$, then it indicates this publicly by
      sending a $\ready$ message to every party, along with the bit received. The $\ready$ message is an indication for the others about the ``readiness" of the
      sender party to consider the corresponding bit as the output.
     Finally, once a party receives this readiness indication for a common bit $b$ from at least $2t + 1$
      parties, then that bit is taken as the output. To ensure that every other party also outputs the same bit, a party upon receiving the $\ready$ message for a common bit from at least
      $t + 1$ honest parties, itself sends a $\ready$ message for the same bit (if it has not done so earlier).
      
      The idea behind the protocol is the following. In the protocol there can be two cases. 
      The {\it first} case is when all the honest parties start with the {\it same} input bit, say $b$. Then, they will obtain the output $b$ from all the instances of $\Vote$ protocol in all
      the iterations and the outputs from $\CoinFlip$ will be never considered. Consequently, each honest party will eventually send a $\ready$ message for $b$. Moreover, 
      there can be at most $t$ corrupt parties who may send a $\ready$ message for $1 - b$ and hence no honest party ever sends a $\ready$ message for $1 - b$.
      Hence, each honest party eventually
      outputs $b$. 
      
      The {\it second} case is when
      the honest parties start the protocol with {\it different} input bits. In this case, the protocol tries to take the help of $\CoinFlip$ to ensure that all honest parties reach an iteration with a common input bit for that iteration. Once such an iteration is reached, this {\it second} case gets ``transformed" to the {\it first} case
       and hence all honest parties will eventually output that common bit. 
       In more detail, in each iteration $k$, it will be ensured that either every honest party have the same input bit for the second instance of
      $\Vote$ with probability at least $p \cdot \frac{1}{2}$ or else certain number of new 
      local-conflicts are identified.\footnote{The number of local-conflicts identified will depend upon the $\CoinFlip$ protocol: while the $\CoinFlip$
      protocol of \cite{ADH08} will ensure that at least $1$ new local-conflict is identified, the $\CoinFlip$ protocol of \cite{BCP20} will ensure that $\Theta(n)$ number of new local-conflicts are
      identified.} This is because the input for second instance of $\Vote$ is either the output bit of the first instance of $\Vote$ or the output of $\CoinFlip$, both of which are {\it independent}
      of each other. Hence if the output of $\CoinFlip$ is same for all the parties with probability $p$, then with probability
       $p \cdot \frac{1}{2}$, this bit will be the same as output bit from the 
      first instance of $\Vote$. 
       If in any iteration $k$, it is {\it guaranteed} that all honest parties have the same inputs for the second instance of $\Vote$, then
       the parties will obtain a common output and with highest grade from the second instance of $\Vote$.
       And then from the next iteration onward, all parties will stick to that common bit and eventually output that common bit.

       One can show that it requires $\Order(\mbox{poly}(n))$ number of iterations in {\it expectation} before a ``good" iteration is reached,
       where an iteration is considered good, if 
       it is
        {\it guaranteed} that all honest parties have the same input for the second instance of $\Vote$. Intuitively, this is because
         there can be $\Order(\mbox{poly}(n))$ number of ``bad" iterations in which the honest parties may have different outputs from the corresponding instances
         of $\CoinFlip$. This follows from the fact that the corrupt parties may deviate from the protocol instructions during the instances of
         $\CoinFlip$. There can be at most $t(n - t)$ local-conflicts which may occur ($t$ potentially corrupt parties getting in conflict with $n - t$ honest parties) {\it overall}
         during various ``failed" instances of $\CoinFlip$ (where a failed instance means that different honest parties obtain different outputs)
         and only after all these local-conflicts are identified, the parties may start witnessing ``clean" instances of $\CoinFlip$ where all honest parties shun communication from
         all corrupt parties and where
         it is ensured that all honest parties obtain the same output bit with probability $p$. Now depending upon the number of new local-conflicts which are revealed
         from a single failed instance of $\CoinFlip$, the parties may witness $\Order(\mbox{poly}(n))$ number of bad 
         iterations.\footnote{Since each failed 
         instance of the $\CoinFlip$ protocol of \cite{ADH08} may reveal only $1$ new local-conflict, the number of bad iterations could be $\Order(n^2)$. On the other hand,
         each failed instance of the $\CoinFlip$ protocol of \cite{BCP20} reveals $\Theta(n)$ new local-conflicts and hence there can be
         $\Order(n)$ number of bad iterations.} 
         Now, once all the bad iterations are over and all potential local-conflicts are identified, in each subsequent iteration, all honest parties will then have the same output from
         $\CoinFlip$ (and hence, same input for the second instance of $\Vote$) with probability at least $\frac{p}{2}$. Consequently, 
         if $p$ is a {\it constant}, then 
          it will take
         $\Theta(1)$ expected number of such iterations before the parties reach a good iteration where it is guaranteed that all honest parties have the same inputs for the second 
         instance of $\Vote$.\footnote{One can show that if one sets $p = \frac{1}{4}$ as done in \cite{CR93,ADH08,BCP20}, then it takes expected $16$ iterations {\it after} all the local-conflicts are identified
         to reach a good iteration.}

    \begin{protocolsplitbox}{$\PiABA$}{The ABA protocol from $\Vote$ and $\CoinFlip$. The above code is executed by every $P_i \in \Partyset$.}{fig:ABA}
	\justify
    {\bf Input}: Party $P_i$ has the bit $b_i$ as input for the ABA protocol.
	\begin{myitemize}
         \item[--] {\bf Initialization}: Set $b = b_i$, $\committed = \false$ and $k = 1$. Then do the following.
            \begin{mydescription}
            \item Participate in an instance of $\Vote$ protocol with input $b$.            
            \item Once an output $(b, g)$ is received from the instance of $\Vote$, participate in an instance of $\CoinFlip$. Let
            $\Coin_k$ denote the output received from $\CoinFlip$.
            \item  If $g < 2$, then set $b = \Coin_k$.
            \item  Participate in an instance of $\Vote$ protocol with input $b$ and let $(b', g')$ be the output received.
                  If $g' > 0$, then set $b = b'$.
            \item  If $g' = 2$ and $\committed = \false$, then set $\committed = \true$ and send 
            $(\ready, b)$ to all the parties.
            \item  Set $k = k + 1$ and repeat from $1$.
            \end{mydescription}
         \item[--] {\bf Output Computation}: 
            \begin{myitemize}
              \item[--] If $(\ready, b)$ is received from at least $t + 1$ parties, then send $(\ready, b)$ to all the parties.
               \item[--] If $(\ready, b)$ is received from at least $2t + 1$ parties, then output $b$.
            \end{myitemize}
          \end{myitemize}
    \end{protocolsplitbox}
    
Lemma \ref{lemma:ABAGuarantees} now follows easily from the above discussion. Let $c \cdot \Delta$ be the time within which 
 the protocol $\CoinFlip$ generates output for the
 honest parties in a {\it synchronous} network, where $c$ is some publicly-known constant. Note  that $c$ is determined by
 the underlying SAVSS protocol and is different for the SAVSS protocols of \cite{ADH08} and \cite{BCP20}.
 If all honest parties have the same input $b$ in a {\it synchronous} network, then at the end of the first iteration itself, every party will send a $\ready$ message
 for $b$ to every other party. Consequently, in this case, all honest parties will obtain their output within time $(c + 18 + 1) \cdot \Delta$.
  This is because each instance of $\Vote$ during the first iteration will take at most $9 \Delta$ time to produce output,
   while the instance of $\CoinFlip$ will take at most  $c \cdot \Delta$ time.
  Additionally, $\Delta$ time will be taken by each party to send a $\ready$ message for $b$ to every other party. 
    Consequently, $\TimeABA$ will be $(c + 19) \cdot \Delta$.

\end{document}